\NewDocumentCommand{\definealphabet}{mmmm}
 {
  \int_step_inline:nnn { `#3 } { `#4 }
   {
    \cs_new_protected:cpx { #1 \char_generate:nn { ##1 }{ 11 } }
     {
      \exp_not:N #2 { \char_generate:nn { ##1 } { 11 } }
     }
   }
 }
\newcommand{\bfzero}{\bm{0}}
\newcommand{\bfone}{\bm{1}}
\newcommand{\bfalpha}{\bm{\alpha}}
\newcommand{\bfbeta}{\bm{\beta}}
\newcommand{\bflambda}{\bm{\lambda}}
\newcommand{\bfmu}{\bm{\mu}}
\newcommand{\bfnu}{\bm{\nu}}
\newcommand{\inparens}[1]{\left( #1 \right)}
\newcommand{\inbraces}[1]{\left\{ #1 \right\}}
\newcommand{\inbraks}[1]{\left[ #1 \right]}
\newcommand{\set}[1]{\inbraces{#1}}
\newcommand{\abs}[1]{\left\lvert #1 \right\rvert}
\newcommand{\norm}[1]{\left\lVert #1 \right\rVert}
\newcommand{\dotprod}[2]{\left\langle #1, #2 \right\rangle}
\DeclareMathOperator*{\argmin}{arg\,min}
\DeclareMathOperator{\spanv}{\mathrm{Span}}
\DeclareMathOperator{\rank}{\mathrm{rank}}
\DeclareMathOperator{\affine}{\mathrm{aff}}
\DeclareMathOperator*{\proj}{\mathbf{proj}}
\DeclareMathOperator{\conv}{\mathrm{Conv}}
\DeclareMathOperator{\cone}{\mathrm{Cone}}
\DeclareMathOperator{\lineality}{\mathrm{lin}}
\newcommand{\round}[2]{{#1}^{(#2)}}
\newcommand{\poly}{{\mathrm{poly}}}
\newcommand{\polylog}{{\mathrm{polylog}}}
\pgfplotsset{compat=1.16}
\definecolor{mygreen}{rgb}{0.1, 0.8, 0.1}
\colorlet{myred}{red!70!black}
\newtheorem{theorem}{Theorem}
\newtheorem{corollary}[theorem]{Corollary}
\newtheorem{lemma}[theorem]{Lemma}
\newtheorem{proposition}[theorem]{Proposition}
\newtheorem{claim}[theorem]{Claim}
\newtheorem*{claim*}{Claim}
\theoremstyle{definition}
\newtheorem{definition}[theorem]{Definition}
\newtheorem{example}[theorem]{Example}
\newtheorem*{assumption*}{Assumption}
\theoremstyle{remark}
\newtheorem{remark}[theorem]{Remark}
\newtheorem*{remark*}{Remark}
\newcommand{\innerproduct}[2]{\dotprod{#1}{#2}}
\newcommand{\stcomplement}[1]{{#1}^{\mathsf{c}}}
\newcommand{\longiff}[1]{\;\xLeftrightarrow{#1}\;}
\newcommand{\longimplies}[1]{\;\xRightarrow{#1}\;}
\newcommand{\longeq}[1]{\;\stackrel{#1}{=\joinrel=}\;}
\DeclareMathOperator{\rowspan}{rowspan}
\let\spanv\relax 
\DeclareMathOperator{\spanv}{span}
\newcommand{\ParetoOpt}{\mathrm{ParetoOpt}}
\newcommand{\ConeRank}{\mathsf{ConeRank}}
\newcommand{\ConeGeneratingRank}{\mathsf{ConeGeneratingRank}}
\newcommand{\ConeSubsetRank}{\mathsf{ConeSubsetRank}}
\newcommand{\CR}{\mathsf{CR}}
\newcommand{\CGR}{\mathsf{CGR}}
\newcommand{\CSR}{\mathsf{CSR}}
\newcommand{\NonNegativeRank}{\mathrm{NonNegativeRank}}
\newcommand{\score}{S}
\newcommand{\restrictionCS}{\text{Res-CS}\xspace}
\newcommand{\restrictionLM}{\text{Res-LM}\xspace}
\newcommand{\restrictionL}{\text{Res-L}\xspace}
\definecolor{Gred}{RGB}{219, 50, 54}
\definecolor{Ggreen}{RGB}{60, 186, 84}
\definecolor{Gblue}{RGB}{72, 133, 237}
\definecolor{Gyellow}{RGB}{247, 178, 16}
\definecolor{ToCgreen}{RGB}{0, 128, 0}
\definecolor{myGold}{RGB}{231,141,20}
\definecolor{myBlue}{rgb}{0.19,0.41,.65}
\definecolor{myPurple}{RGB}{175,0,124}
\providecommand{\tdotoggle}{0}
\newcommand{\mytodo}[1]{\ifnum\tdotoggle=1{#1}\fi}
\newcommand{\unsure}[1]{\mytodo{\todo[linecolor=myGold,backgroundcolor=myGold!25,bordercolor=myGold]{#1}}}
\newcommand{\tableoftodos}{\ifnum\tdotoggle=1 \listoftodos[Comments/To Do's] \fi}
\newcommand{\commentAnmol}[1]{\unsure{Anmol: #1}}
\newcommand{\inbox}[1]{
\vspace*{0.5em}
\noindent\fbox{%
    \parbox{\dimexpr\linewidth-2\fboxsep-2\fboxrule}{%
        #1
    }%
}
}
\newcommand{\WinL}{W_{\calL}}
\newcommand{\WnotinL}{W_{\cdot \setminus \calL}}
\newcommand{\bfWinL}{\bfW_{\calL}}
\newcommand{\bfWnotinL}{\bfW_{\cdot \setminus \calL}}
\newcommand{\VinL}{V_{\calL}}
\newcommand{\VnotinL}{V_{\cdot \setminus \calL}}
\newcommand{\bfVinL}{\bfV_{\calL}}
\newcommand{\bfVnotinL}{\bfV_{\cdot \setminus \calL}}
\newcommand{\boldheading}[1]{\textbf{#1}}
\title{Score Design for Multi-Criteria Incentivization}
\author{%
    Anmol Kabra\thanks{Work done while at TTI-Chicago. A condensed version of this paper appeared at Foundations of Responsible Computing (FORC) 2024.}\\
    \small \texttt{anmol@cs.cornell.edu}\\
    \small Cornell University
    \and
    Mina Karzand\\
    \small \texttt{mkarzand@ucdavis.edu}\\
    \small UC Davis
    \and
    Tosca Lechner\\
    \small \texttt{tlechner@uwaterloo.ca}\\
    \small UWaterloo
    \and
    Nati Srebro\\
    \small \texttt{nati@ttic.edu}\\
    \small TTI-Chicago
    \and
    Serena Wang\\ 
    \small \texttt{serenalwang@berkeley.edu}\\
    \small UC Berkeley and Google
}
\date{}
\begin{document}

\maketitle

\begin{abstract}
We present a framework for designing scores to summarize performance metrics. 
Our design has two multi-criteria objectives: (1) improving on scores should improve all performance metrics, and (2) achieving pareto-optimal scores should achieve pareto-optimal metrics.
We formulate our design to minimize the dimensionality of scores while satisfying the objectives.
We give algorithms to design scores, which are provably minimal under mild assumptions on the structure of performance metrics.
This framework draws motivation from real-world practices in hospital rating systems, where misaligned scores and performance metrics lead to unintended consequences.
\end{abstract}

\section{Introduction}
\label{sec:introduction}

The use of numerical metrics to evaluate performance and guide decision-making is common practice in healthcare, education, business, and public policy.
It is common for agencies to design \textit{surrogate scores} that summarize performance metrics, in a way that aligns incentives with performance metrics.
Often the scored entities strategically optimize surrogates and end up degrading on metrics, a phenomenon commonly known as \textit{unintended consequences} and pithily conveyed by Goodhart's law~\cite{goodhart1984problems,strathern1997improving}:
\begin{center}
    \it ``When a measure becomes a target, it ceases to be a good measure.''
\end{center}
Agencies thus aim to ensure that optimizing scores leads to improved metrics.
As the number of performance metrics can be large in practice~\cite{wadhera2020quality,muller2018tyranny}, agencies must design \textit{succinct} multi-dimensional surrogate scores.
We present a framework to study this \textit{minimal design problem}, and propose score designs that prevent unintended consequences.

Our work is directly motivated by real-world examples in safety-critical domains such as healthcare and education, where manifestations of Goodhart's law exemplify the serious ramifications of unintended consequences.
When Pacificare, a healthcare provider, incentivized hospitals in 2003 to perform certain medical procedures to improve quality of care, several unrepresented metrics deteriorated~\cite{koretz2017}.
Similar misalignment between performance metrics and score-based hospital ratings, used by the Medicare agency (CMS), has been widely critiqued~\cite{nytimes2013the,chicagoboothreview2020hospital,kim2022hospital,aggarwal2021association,schardt2020increase,alexander2020doctors}.
Even so, CMS uses these score-based ratings to incentivize hospital policies~\cite{cms_vbp,conrad2015theory}.
Hence, it aims to design scores so that improving on scores also improves all performance metrics.
This goal motivates the \textit{improvement objective} in our framework.
In a similar vein, rating agencies such as USNews aim to incentivize efficient use of hospital resources through published scores~\cite{usnews_rankings}.
On multi-dimensional metrics, the efficiency goal~\cite{committee2001crossing} naturally translates into the notion of pareto-efficiency, which motivates the \textit{optimality objective} in our framework.

We present a framework for designing scores to summarize performance metrics.
We give three natural design restrictions that align with real-world interpretability desiderata~\cite{cms_starratings,usnews_rankings}, and propose score designs that satisfy the multi-criteria objectives under these restrictions.
Striving for succinct scores, we formulate our design to minimize the dimensionality of scores.
We give polynomial-time algorithms to design these succinct scores, which are provably minimal under mild assumptions on the structure of performance metrics.
While existing work on score design for incentivization studies scalar scores~\cite{kleinberg2020classifiers,haghtalab2021maximizing,rolf2020balancing,wang2023operationalizing}, we design scores of smallest dimensionality to satisfy the multi-criteria objectives.
These objectives are unsatisfiable with scalar scores in general.

\subsection{Designing surrogate scores from performance metrics}
\label{sec:intro_model}

In our model, the agency aims to design a surrogate score function $S : \calF \to \calS$ given a set of performance metrics $\calF$ of hospitals.

Hospitals report to agencies like CMS and USNews on hundreds of performance metrics such as condition-specific death rates, readmission rates, and percentages of patients receiving satisfactory care~\cite{cms_starratings,cms_2024starratings_report,usnews_rankings}.
We can denote the values of $d$ metrics of a hospital with a real-valued vector $\bff \in \calF \subseteq \bbR^d$.
Since $d$ is large and metrics can be related through confounding variables~\cite{ash2012statistical,kurian2021predicting}, the agency wants to summarize the $d$ metrics as $k$ scores with values $\calS \subseteq \bbR^k$, where $k$ is small as possible.
For instance, \Cref{example:1__2d__correlated_metrics} suggests that, to summarize COVID and pneumonia death rate metrics, the agency can choose either of the two metrics as the score, so that $k = 1$.
Whereas for pneumonia death rate and excess antibiotic use metrics, \Cref{example:1__2d__anticorrelated_metrics} argues that selecting both metrics as scores is necessary, and so $k = 2$.

\paragraph*{Surrogate design objectives}

Anticipating that the hospital would target the incentives by optimizing the score function $S$, the agency wants to design $S$ in such a way that optimizing them ensures that the hospital does well on the performance metrics.
We formalize this goal with two design objectives, which utilize an ordering on the sets $\calF$ and $\calS$, denoted by $\succeq_{\calF}$ and $\succeq_{\calS}$.
The two objectives are motivated from CMS and USNews hospital rating agencies~\cite{cms_starratings,usnews_rankings}.

\begin{enumerate}
    \item \boldheading{Improvement objective.}
    Improving on surrogate scores should result in improving on performance metrics.
    In particular,
    \begin{align}
    \label{eqn:definition_improvement_general}
        \text{for } \bff, \bff' \in \calF,\quad \text{ if } S(\bff') \succeq_{\calS} S(\bff) \text{ then } \bff' \succeq_{\calF} \bff.
    \end{align}

    \item \boldheading{Optimality objective.}
    Pareto-optimal points of surrogate scores should be pareto-optimal points of performance metrics.
    In particular,
    \begin{align}
    \label{eqn:definition_optimality_general}
        \ParetoOpt(S) \subseteq \ParetoOpt(\calF).
    \end{align}
\end{enumerate}

Throughout the paper, we analyze the setting $\calF \subseteq \bbR^d$ and $\calS \subseteq \bbR^k$ and use elementwise order of vectors for $\succeq_{\calF}$ and $\succeq_{\calS}$.

\paragraph*{Surrogate design restrictions}
Due to interpretability and public reporting obligations, rating agencies like CMS and USNews design scores by selecting subsets of the list of performance metrics or by taking weighted averages~\cite{cms_2024starratings_report,cms_starratings,cms_mipspayment,cms_2021risk_medicare,usnews_rankings}.
Moreover, monotonicity of scores in performance metrics is a desirable property for CMS, as it ensures that a hospital striving to improve all performance metrics sees improved score values~\cite{cms_2024starratings_report,cms_2021risk_medicare}.

We formulate these requirements as three different restrictions on $S$.
These restrictions impose a linear form on $S : \bff \mapsto \bfA \bff$ with $\bfA \in \bbR^{k \times d}$ satisfying certain structural constraints.

\begin{enumerate}
    \item \boldheading{Coordinate Selection (\restrictionCS).}
    Each of the $k$ coordinates of scores are chosen from $d$ coordinates of performance metrics.
    That is, for all $i \in [k]$ there exists $j \in [d]$ such that $S(\bff)_i = \bff_j$ for all $\bff \in \calF$.
    Equivalently, $S : \bff \mapsto \bfA \bff$ where rows of $\bfA$ are 1-hot vectors.
    
    \item \boldheading{Linear and Monotone (\restrictionLM).}
    The $k$ coordinates of scores are linear combinations of $d$ coordinates of performance metrics, and improving on performance metrics should result in improving on surrogate scores.
    That is, $S : \bff \mapsto \bfA \bff$ where for $\bff, \bff' \in \calF$, if $\bff' \ge \bff$ then $\bfA \bff' \ge \bfA \bff$.
    
    \item \boldheading{Linear (\restrictionL).}
    The coordinates of surrogate scores are linear combinations of coordinates of performance metrics.
    That is, $S : \bff \mapsto \bfA \bff$ without any further constraints on $\bfA$.
\end{enumerate}

\paragraph*{Minimal design problem}
Since the number of performance metrics $d$ can be large~\cite{cms_2024starratings_report,cms_starratings,usnews_rankings},
a natural goal is to \textit{succinctly} summarize metrics with scores that are accessible to patients and policymakers.
This goal of succinctness translates into designing a multi-dimensional function $S : \bbR^d \to \bbR^k$ with \textit{the smallest output dimension} $k$.
For a combination of design objective and design restriction, the \textit{minimal design problem} is determining the smallest dimensionality $k$ and providing an algorithm outputs a surrogate score function $S$ with this $k$.

\subsection{Our contributions}
In this paper, we study the minimal design problem.
Our key contributions are:

\begin{enumerate}
    \item We formalize surrogate score design for incentivizing multiple criteria, motivated from real-world practices of two hospital rating systems, CMS and USNews. 

    \item We fully determine the minimal design problems of all combinations of objectives and restrictions introduced in \Cref{sec:intro_model}, and propose efficient score design algorithms (\Cref{alg:1__design_strategy,alg:2__design_strategy}).
    We summarize our results in \Cref{tab:results__summary}.
    \begin{enumerate}
        \item 
        We show that the smallest dimensionalities $k$ are dictated by structural properties of the affine hull of performance metrics $\calF$.

        \item 
        Identifying a relationship between improvement and optimality objectives (\Cref{thm:improvement_monotone__imply__optimality}), we determine the minimal design problem for simultaneously satisfying both objectives.
    \end{enumerate}

    \item We give polynomial-time algorithms for computing the three matrix ranks (\Cref{sec:computing_matrix_ranks}) used to determine the minimal design for the improvement objective, thus enabling efficient design algorithms.
    In doing so, we develop novel techniques to decompose and enclose polyhedral cones, augmenting research in computational geometry on manipulating polyhedral cones.
\end{enumerate}
\begin{table}[!h]
    \centering
    \begin{tabular}{c c c c}
        \toprule
        Restriction & Improvement (\S\ref{sec:feasibility_improvement}) & Optimality (\S\ref{sec:feasibility_optimality}) & Both (\S\ref{sec:feasibility_both})\\
        \midrule
        \restrictionCS & $\ConeSubsetRank(\bfZ)$ & $r$ & $\ConeSubsetRank(\bfZ)$ \\
        \restrictionLM & $\ConeGeneratingRank(\bfZ)$ & $1$ & $\ConeGeneratingRank(\bfZ)$ \\
        \restrictionL & $\ConeRank(\bfZ)$ & $1$ & $\ConeGeneratingRank(\bfZ)$ \\
        \bottomrule
    \end{tabular}
    \caption{
    We list smallest dimensionalities $k$ for the minimal design problem of all combinations of objectives and restrictions.
    Here columns of $\bfZ$ are an orthonormal basis of the linear subspace associated with $r$-dimensional affine hull of $\calF$.
    We define the three matrix ranks $\ConeSubsetRank, \ConeGeneratingRank, \ConeRank$ in \Cref{thm:1__f_linear__suff}.
    For the improvement objective, the listed dimensionalities are also necessary, when $\calF$ has non-empty relative interior (\Cref{thm:1__f_linear__nonempty_relint_f_with_origin__necsuff}).
    }
    \label{tab:results__summary}
\end{table}

\subsection{Related work}
\label{sec:related_work}

Recent work has highlighted the plight of score-based incentivization when scores that do not align with performance metrics.
In healthcare, design objectives of hospital rating agencies often vary across agencies.
Two popular examples are the Medicare agency (CMS), which incentivizes healthcare investment across care metrics through a five-star score~\cite{cms_starratings,conrad2015theory}, and the USNews agency, which promotes highly-specialized medical departments~\cite{usnews_rankings}.
When hospitals target these score-based ratings, they often degrade on a few performance metrics~\cite{koretz2017}.
For example, CMS's score-based ratings have been found to encourage hospitals to selectively treat patients for minimizing readmission rates~\cite{alexander2020doctors,dranove1994physician,clemens2014physicians}, and have exacerbated unequal access to healthcare~\cite{kim2022hospital,aggarwal2021association,schardt2020increase}.
Such unintended consequences are prevalent in fields that use scores as an incentive mechanism~\cite{bandalos2018measurement}, for instance, in standardized testing~\cite{koretz2017} and financial credit ratings~\cite{holmstrom1994incentive,white2013credit,bar2011credit,grove2015high}.

Our framework extends recent work on score design in principal-agent theory~\cite{kleinberg2020classifiers,haghtalab2021maximizing,rolf2020balancing,wang2023operationalizing,guerdan2023counterfactual,liu2023actionability,hartline2022optimal,hartline2020optimization,alon2020multiagent,ahmadi2022setting} by designing scores for multi-criteria objectives.
Kleinberg and Raghavan~\cite{kleinberg2020classifiers} compare linear with monotone scalar score design for incentivizing \textit{effort} from agents.
On a similar front, Haghtalab et al.~\cite{haghtalab2021maximizing} study scalar score design with a linear threshold restriction.
Score design has also been studied through a causality lens to optimize the average treated outcome~\cite{wang2023operationalizing,guerdan2023counterfactual,liu2023actionability}.
Finally, Rolf et al.~\cite{rolf2020balancing} use noisy score observations to approximate the pareto-frontier of performance metrics.
Our framework's optimality objective and design restrictions capture this line of work on scalar scores.
However, our improvement objective is a novel contribution, and this objective turns to be unsatisfiable with scalar scores (\Cref{thm:1__f_linear__nonempty_relint_f_with_origin__necsuff}).
Hence, our score design problems are inherently multi-criteria.

Technically, our design algorithms utilize novel techniques to decompose and enclose polyhedral cones, building on work in computational geometry on finding frames of polyhedral cones~\cite{dula1998algorithm,lopez2011algorithm,wets1967algorithms} and enclosing convex hulls~\cite{gale1953inscribing,kumar2013fast,orourke1986optimal,toussaint1983solving}.
Our definition of $\ConeRank$ (\Cref{thm:1__f_linear__suff}) is similar to $\NonNegativeRank$, which is extensively studied in the context of non-negative matrix factorization~\cite{gillis2020nonnegative,gillis2013fast,donoho2003does,vavasis2010complexity,kumar2013fast}.

\subsection{Notation}
\label{sec:notation}

We represent scalars as $\lambda, c \in \bbR$, and vectors and matrices as $\bfw \in \bbR^n, \bfW \in \bbR^{m \times n}$.
We denote the nonnegative orthant with $\bbR^n_+$.
We generally write matrices as a stack of rows, $\bfW = [\bfw_1; \dots; \bfw_m]$, often denoting the set of rows with $W$.
We say that matrix $\bfW$ (or set $W$) generates cone $\calK_W$ if $\calK_W = \cone(W) = \set{ \bfx \in \bbR^n \mid \bfx = \bflambda \bfW, \bflambda \in \bbR^m_+ }$.
We denote a vector of zeros (or ones) as $\bfzero_n \in \bbR^n$ (or $\bfone_n$), and the $n$-by-$n$ identity matrix as $\bfI_n$, dropping subscripts when unambiguous.

\subsection{Organization}
The rest of the paper is organized as follows.
We analyze the minimal design problem for improvement objective in \Cref{sec:feasibility_improvement}, and optimality objective in \Cref{sec:feasibility_optimality}.
Subsequently in \Cref{sec:feasibility_both}, we analyze the minimal design problem for simultaneously satisfying both improvement and optimality objectives.
Our minimal design for the improvement objective uses three notions of matrix ranks.
In \Cref{sec:computing_matrix_ranks} we extensively study these matrix ranks and present intuition for algorithms to compute them.
We include proofs of all results in \Cref{app:omitted_proofs}.
In \Cref{app:algorithms} we give full details of these algorithms.
We provide relevant background on convex analysis and geometry in \Cref{app:preliminaries}, and include all technical lemmas in \Cref{app:key_lemmas}.

\section{Minimal design problem for improvement objective}
\label{sec:feasibility_improvement}

We propose a surrogate score design for satisfying the improvement objective under the three design restrictions.
Then we illustrate our design strategy on simple examples of performance metrics $\calF$, highlighting relationships between the geometry of $\calF$ and the succinctness of scores.
Finally, we show that our proposed design is minimal under a mild assumption on $\calF$, implying that score design for improvement objective is inherently multi-criteria.

We first simplify the improvement objective in \Cref{eqn:definition_improvement_general} to identify geometric objects that represent \textit{movement} and \textit{improvement directions}.
Score function $S : \bff \mapsto \bfA \bff$ on domain $\calF$ satisfies improvement when for all $\bff, \bff' \in \calF$, if $\bfA (\bff' - \bff) \ge \bfzero$ then $(\bff' - \bff) \ge \bfzero$.
Denoting the \textit{movement directions at center $\bff$} with $\calF_{\bff} = \set{ \bfg = \bff' - \bff \in \bbR^d \mid \text{ for all } \bff' \in \calF }$, we can rearrange terms to get
\begin{align}
    \text{for all centers } \bff \in \calF, \text{ movement directions } \bfg \in \calF_{\bff}, \quad \text{ if } \bfA \bfg \ge \bfzero \text{ then } \bfI \bfg \ge \bfzero
\end{align}

Here the \textit{set of score improvement directions} is exactly $\calK_A^\ast = \set{ \bfg \in \bbR^d \mid \bfA \bfg \ge \bfzero }$, which is the dual of polyhedral cone $\calK_A$ generated from rows of $\bfA$.
Similarly, the \textit{set of metric improvement directions} is $\calK_I^\ast = \set{ \bfg \in \bbR^d \mid \bfI \bfg \ge \bfzero } = \bbR^d_+$, which is the dual of polyhedral cone $\calK_I = \bbR^d_+$ generated from rows of $\bfI$.
So intuitively, score function $\score : \bff \mapsto \bfA \bff$ satisfies improvement if and only if every movement direction (in $\calF_{\bff}$) that is a score improvement direction (in $\calK_A^\ast$) is also a metric improvement direction (in $\calK_I^\ast$):
\begin{align}
\label{eqn:1__f_linear__restate_cones}
    S \text{ satisfies improvement} &\iff \text{for all } \bff \in \calF,\quad \calF_{\bff} \cap \calK_A^\ast \subseteq \calK_I^\ast.
\end{align}

\subsection{Design proposal for improvement objective}
\label{sec:feasibility_improvement__suff}

When performance metrics $\calF \subseteq \bbR^d$ is a full-dimensional set, score design is trivial where the most succinct score design is $S(\bff) = \bff$.
Note that while performance is measured in many dimensions~\cite{wadhera2020quality,muller2018tyranny}, the number of confounding variables of performance metrics is often smaller due to correlated metrics~\cite{ash2012statistical,kurian2021predicting}.
This typically induces a low-dimensional structure on $\calF$, observed in practice and assumed in theory~\cite{bandalos2018measurement,barclay2022concordance,ash2012statistical,kurian2021predicting}.
We do not assume such low-dimensional structure of $\calF$, but the smallest dimensionality $k$ of score function $S$ is impacted by the intrinsic dimension of $\calF$.
The affine hull of $\calF$ is a natural geometric choice to capture its intrinsic dimension.

\begin{definition}
\label{defn:affine_hull_f}
    Define the affine hull of $\calF$, $\affine(\calF)$, as the intersection of all affine subspaces in $\bbR^d$ containing $\calF$.
    Let $\calL$ be the linear subspace associated with $\affine(\calF)$, i.e. $\calL$ is the translation of $\affine(\calF)$ so that for all centers $\bff \in \calF$, movement directions $\calF_{\bff} \subseteq \calL$.
\end{definition}

By utilizing this subspace $\calL$ containing all possible movement directions $\calF_{\bff}$, we propose a score design in \Cref{alg:1__design_strategy} with dimensionalities given in \Cref{thm:1__f_linear__suff}.
We introduce three \textit{matrix ranks}---$\ConeSubsetRank \, (\CSR)$, $\ConeGeneratingRank \, (\CGR)$, and $\ConeRank \, (\CR)$---to characterize the score design dimensionalities for the three respective design restrictions---Coordinate Selection (\restrictionCS), Linear and Monotone (\restrictionLM), Linear (\restrictionL).
These three matrix ranks capture the geometric properties of performance metrics $\calF$ that dictate the dimensionality of optimal score design for the three restrictions.

\begin{theorem}
    \label{thm:1__f_linear__suff}
    Let columns of $\bfZ$ be an orthonormal basis of linear subspace $\calL$ associated with $\affine(\calF)$, and let $r = \dim \affine(\calF)$.
    For each design restriction, there exists $S : \calF \to \bbR^k$, designed using \Cref{alg:1__design_strategy}, that satisfies the improvement objective with the following dimensionalities.
    
    \setlength{\tabcolsep}{3pt}
    \noindent
    \begin{center}
        \begin{tabular}{l|l@{}l} 
             & \multicolumn{2}{c}{Dimensionality $k \ge$} \\
            \midrule
            \restrictionCS & $\ConeSubsetRank(\bfZ)$ & $\coloneqq \min_q \set{ q \mid \calK_Z = \calK_V \text{ for some } \bfV \in\bbR^{q \times r} \text{ s.t. } \bfV \subseteq \bfZ }$ \\
            \restrictionLM & $\ConeGeneratingRank(\bfZ)$ & $\coloneqq \min_q \set{ q \mid \calK_Z = \calK_V \text{ for some } \bfV \in \bbR^{q \times r} }$ \\
            \restrictionL & $\ConeRank(\bfZ)$ & $\coloneqq \min_q \set{ q \mid \calK_Z \subseteq \calK_V \text{ for some } \bfV \in \bbR^{q \times r} }$ \\
        \end{tabular}
    \end{center}
\end{theorem}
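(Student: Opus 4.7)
The plan is to reformulate the improvement objective as a cone inclusion in the intrinsic space $\bbR^r$, and then translate each design restriction into a structural constraint on the rows of $\bfV \defeq \bfA\bfZ \in \bbR^{k\times r}$ that matches the corresponding rank definition.

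First, since $\calF_{\bff} \subseteq \calL$ for every center $\bff \in \calF$ by \Cref{defn:affine_hull_f}, the characterization \eqref{eqn:1__f_linear__restate_cones} is implied by the stronger statement $\calL \cap \calK_A^\ast \subseteq \bbR^d_+$. Writing $\bfg = \bfZ\bfy$ for $\bfg \in \calL$, this says $\bfA\bfZ\bfy \ge \bfzero \implies \bfZ\bfy \ge \bfzero$ for all $\bfy \in \bbR^r$, i.e.\ $\calK_{\bfA\bfZ}^\ast \subseteq \calK_{\bfZ}^\ast$ as cones inside $\bbR^r$. Both cones are polyhedral, hence closed and convex, so the bipolar theorem converts this into the equivalent inclusion
\[
\calK_{\bfZ} \;\subseteq\; \calK_{\bfV}, \qquad \bfV \defeq \bfA\bfZ.
\]
The task therefore reduces to finding, under each restriction, the smallest $k$ for which such a $\bfV$ is attainable by some admissible $\bfA \in \bbR^{k \times d}$.

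Second, I would decode each restriction as a structural constraint on $\bfV$ and match to the rank definitions. Under \restrictionL, $\bfA$ is unconstrained and every $\bfV$ is realized by $\bfA = \bfV\bfZ^\top$ (using $\bfZ^\top \bfZ = \bfI_r$), so the sole requirement is $\calK_{\bfZ} \subseteq \calK_{\bfV}$, matching $\ConeRank(\bfZ)$. Under \restrictionLM, the condition $\bfA \ge \bfzero$ entrywise forces each row of $\bfV$ to be a conic combination of rows of $\bfZ$, i.e.\ to lie in $\calK_{\bfZ}$; conversely any $\bfV$ with rows in $\calK_{\bfZ}$ can be written $\bfV = \bfB\bfZ$ with $\bfB \ge \bfzero$, giving an admissible $\bfA = \bfB$. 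Combined with $\calK_{\bfZ} \subseteq \calK_{\bfV}$ this forces $\calK_{\bfV} = \calK_{\bfZ}$, matching $\ConeGeneratingRank(\bfZ)$. Under \restrictionCS, each row of $\bfA$ is one-hot so each row of $\bfV$ is literally a row of $\bfZ$, and conversely any such row-subset is realized by the matching one-hot $\bfA$; combined with the cone inclusion this matches $\ConeSubsetRank(\bfZ)$.

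For each restriction, \Cref{alg:1__design_strategy} is invoked to produce a $\bfV$ attaining the corresponding rank, and the explicit construction above recovers the desired $\bfA$. The main technical hurdle is the polar-duality step converting $\calK_{\bfA\bfZ}^\ast \subseteq \calK_{\bfZ}^\ast$ into $\calK_{\bfZ} \subseteq \calK_{\bfV}$: it demands a clean appeal to the bipolar theorem for polyhedral cones inside $\bbR^r$ and a careful check that the $\bfA$ built from $\bfV$ both meets the restriction and satisfies the original sufficient condition $\calL \cap \calK_A^\ast \subseteq \bbR^d_+$. Everything else is routine unpacking of the restriction into row-structure of $\bfV$ and matching to the three rank notions.
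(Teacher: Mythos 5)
Your proposal is correct and follows essentially the same route as the paper's proof: the paper's Lemma \ref{lemma:1__f_linear__cone_subsets_in_subspace} is exactly your polar-duality step (the equivalence $\calL \cap \calK_A^\ast \subseteq \calK_I^\ast \iff \calK_V^\ast \subseteq \calK_Z^\ast \iff \calK_Z \subseteq \calK_V$), and your decoding of each restriction into row-structure on $\bfV$ matches the paper's reasoning. One small imprecision worth flagging: you characterize \restrictionLM as requiring $\bfA \ge \bfzero$ entrywise, but the paper's definition only demands monotonicity of $S$ on $\calF$, which is in general strictly weaker; this does not damage the sufficiency argument here, since exhibiting the nonnegative $\bfA = \bfB$ with $\bfV = \bfB\bfZ$ certainly gives a monotone-on-$\calF$ score, but the ``forces'' direction of your characterization would not carry over verbatim to a necessity argument.
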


\begin{algorithm}[!h]
    \begin{algorithmic}[1]
        \State Given: performance metrics $\calF$ and a design restriction.
        \State Find $\bfZ$ whose columns are an orthonormal basis of subspace $\calL$ associated with $\affine(\calF)$.
        \State Find $\bfV$ that attains\footnotemark~the matrix rank corresponding to the design restriction.
        \State Find $\bfA$ that satisfies $\bfV = \bfA \bfZ$ and design $S : \bff \mapsto \bfA \bff$.
    \end{algorithmic}
    \caption{Design strategy for improvement objective}
    \label{alg:1__design_strategy}
\end{algorithm}

\Cref{thm:1__f_linear__suff} follows from the following key insight of \Cref{eqn:1__f_linear__restate_cones}: ``for $S : \bff \to \bfA \bff$ to satisfy the improvement objective, score improvement directions need to be metric improvement directions \textbf{only} for movement directions $\calF_{\bff}$, which are contained in subspace $\calL$.''
In fact, satisfying the improvement objective boils down to ensuring that score improvement directions are a subset of metric improvement directions \textit{in the coefficient space} w.r.t. subspace $\calL$.
The respective improvement directions $\calK_A^\ast$ and $\calK_I^\ast$ are generated by rows of $\bfA$ and $\bfI$, which have coefficients that are rows of $\bfV = \bfA \bfZ$ and $\bfZ$, where columns of $\bfZ$ are an orthonormal basis of subspace $\calL$.
It turns out that improvement directions in the coefficient space are precisely the duals $\calK_V^\ast$ and $\calK_Z^\ast$ of polyhedral cones generated from rows of $\bfV$ and $\bfZ$.
So to satisfy the improvement objective, we need to ensure $\calK_V^\ast \subseteq \calK_Z^\ast$, or $\calK_Z \subseteq \calK_V$.

With the three matrix ranks, we capture the additional structure on $\bfA$ imposed by the three design restrictions (\Cref{sec:intro_model}).
\restrictionL restriction does not further impose structure on $\bfA$, and so we only need to \textit{enclose} cone $\calK_Z$ with $\calK_V$.
\restrictionLM restriction further requires function $S$ to be monotone in $\calF$, which intuitively means that every metric improvement direction needs to be a score improvement direction, i.e., $\calK_Z^\ast \subseteq \calK_V^\ast$.
So to satisfy \restrictionLM, we must \textit{generate} cone $\calK_Z$ with $\calK_V$.
Finally, \restrictionCS restriction requires selecting the $k$ score function coordinates from $d$ metrics.
In the coefficient space, this requirement means that rows of $\bfV$ are chosen from rows of $\bfZ$ and $\calK_V$ generates $\calK_Z$.
Hence, the three matrix ranks precisely capture structure on $\bfA$ imposed by the improvement objective and the design restrictions.
We include the proof of \Cref{thm:1__f_linear__suff} in \Cref{thm_w_proof:1__f_linear__suff}.

\footnotetext{%
For a matrix rank, e.g. $\CSR$, we say that $\bfV$ ``attains'' it if $\bfV \subseteq \bfZ$ (rows of $\bfV$ are chosen from rows of $\bfZ$), $\calK_Z = \calK_V$, and the number of rows of $\bfV$ equals $\CSR(\bfZ)$.%
}

\subsection{Geometry of metrics dictates succinctness of scores}

We now illustrate \Cref{alg:1__design_strategy} with several examples of metrics $\calF$.
We instantiate performance metrics in our examples with familiar notions of hospital metrics, to intuitively bridge our analysis and algorithm with practical score design.
In doing so, we discuss how the geometry of $\calF$ dictates the shape of polyhedral cone $\calK_Z$, influencing the dimensionality of minimal score design for the three design restrictions.
Finally, we provide high-level descriptions of techniques to to implement \Cref{alg:1__design_strategy} efficiently.

\begin{figure}[!h]
    \centering
    \begin{subfigure}{0.49\textwidth}
        \centering
        \begin{tikzpicture}[thick,scale=0.6]
    \coordinate (O) at (0, 0);
    \coordinate (right) at (2, 0);
    \coordinate (left) at (-2, 0);
    \coordinate (top) at (0, 2);
    \coordinate (bottom) at (0, -2);

    \draw[<->,color=gray,dashed] (left) -- (right);
    \draw[<->,color=gray,dashed] (bottom) -- (top);

    \draw[ultra thick,color=red] (-1,-0) -- (1,1);
    \node[label={[text=red]above:$\calF$}] at (-1,0) {};
    
    \draw[ultra thick,color=Cerulean] (-2,-1) -- (2,1);
    \node[label={[text=Cerulean]below:$\calL$}] at (-1,-0.5) {};
\end{tikzpicture}
        \caption{When the two metrics are correlated (Ex.~\ref{example:1__2d__correlated_metrics}), we can choose either metric in $S : \calF \to \bbR^1$.}
        \label{fig:1example_f_2d__correlated}
    \end{subfigure}
    \hfill
    \begin{subfigure}{0.49\textwidth}
        \centering
        \begin{tikzpicture}[thick,scale=0.6]
    \coordinate (O) at (0, 0);
    \coordinate (right) at (2, 0);
    \coordinate (left) at (-2, 0);
    \coordinate (top) at (0, 2);
    \coordinate (bottom) at (0, -2);

    \draw[<->,color=gray,dashed] (left) -- (right);
    \draw[<->,color=gray,dashed] (bottom) -- (top);

    \draw[ultra thick,color=red] (-1,0) -- (1,-1);
    \node[label={[text=red]below:$\calF$}] at (-1,0) {};
    
    \draw[ultra thick,color=Cerulean] (-2,1) -- (2,-1);
    \node[label={[text=Cerulean]above:$\calL$}] at (-1,0.5) {};
\end{tikzpicture}
        \caption{When the two metrics are anti-correlated (Ex.~\ref{example:1__2d__anticorrelated_metrics}), we must choose both metrics in $S : \calF \to \bbR^2$.}
        \label{fig:1example_f_2d__anticorrelated}
    \end{subfigure}
    \caption{
    To design scores for two metrics ($\calF \subseteq \bbR^2$), we can inspect the correlation between metrics---the correlation dictates the succinctness of $S : \calF \to \bbR^k$ for satisfying improvement.
    }
\end{figure}
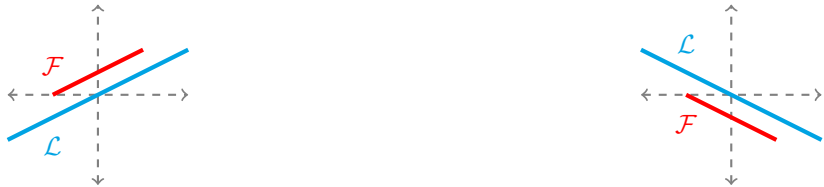

\begin{example}[Two correlated metrics$\implies$choose either for score design]
\label{example:1__2d__correlated_metrics}
    CMS evaluates hospitals on numerous performance metrics like condition-specific death rates, readmission rates, and safety standards~\cite{cms_starratings}.
    Often comorbidities of medical conditions can lead to positive correlations between metrics.
    In the case of two \textit{perfectly} positively correlated metrics, \Cref{alg:1__design_strategy} suggests to choose either of the two metrics to design $S : \calF \to \bbR^1$.

    Consider two metrics---(i) pneumonia death rate and (ii) COVID-19 death rate---that have a positive correlation due to comorbidities.
    Assume that for a hospital, these two death rates take values $\calF = \set{ \bff \in \bbR^2 \mid -f_1 + 2 f_2 = 1, -1 \le f_1 \le 1}$, lying in a 1-dimensional affine subspace of $\bbR^2$ (\Cref{fig:1example_f_2d__correlated}, red).
    As the affine hull $\affine(\calF) = \set{ \bff \mid -f_1 + 2 f_2 = 1 }$ is 1-dimensional, the associated linear subspace $\calL = \set{ \bff \mid -f_1 + 2 f_2 = 0 }$ (\Cref{fig:1example_f_2d__correlated}, blue) containing all movement directions $\calF_{\bff}$ is 1-dimensional.
    Per Line~2 of \Cref{alg:1__design_strategy}, we arrange an orthonormal basis for $\calL$ as columns of $\bfZ \propto \begin{bmatrix}2 \\ 1\end{bmatrix}$, whose rows generate the polyhedral cone $\calK_Z = \set{ 2\lambda_1 + \lambda_2 \mid \lambda_1, \lambda_2 \ge 0 } = \bbR_+$.
    Note that the metric improvement directions in the coefficient space are the dual cone $\calK_Z^\ast = \bbR_+$.
    
    To satisfy improvement objective under a design restriction, we need to find matrix $\bfV$ that attains the corresponding matrix rank.
    For all three matrix ranks, the cone $\calK_V$ generated by rows of $\bfV$ needs to \textit{enclose} cone $\calK_Z$.
    Equivalently, in the coefficient space, score improvement directions $\calK_V^\ast$ need to be a subset of metric improvement directions $\calK_Z^\ast$.
    The choice of $\bfV = [2] \in \bbR^{1 \times 1}$ yields the desired property $\calK_Z \subseteq \calK_V$.
    In fact, we get $\calK_Z = \calK_V$ and $\bfV \subseteq \bfZ$, and so all three matrix ranks have value $1$.

    Finally, we can recover $\bfA = [1, 0]$ such that $\bfV = \bfA \bfZ$, and design $S(\bff) = [1, 0] \cdot \bff = f_1$.
    It is easy to verify that this $S$ satisfies the improvement objective (we could also have chosen $\bfV = [1]$ previously to design $S(\bff) = [0, 1] \cdot \bff = f_2$).
    Hence, when the two metrics are perfectly positively correlated, choosing one for score design suffices.
\end{example}
    
\begin{example}[Two anti-correlated metrics$\implies$must choose both for score design]
\label{example:1__2d__anticorrelated_metrics}
    Performance metrics used by CMS can also be negatively correlated when a hospital must balance its effort to simultaneously improve all metrics.
    In the case of two \textit{perfectly} negative correlated metrics, \Cref{alg:1__design_strategy} suggests to use both metrics to design $S : \calF \to \bbR^2$, as no 1-dimensional score function can satisfy improvement objective.

    Consider two metrics---(i) pneumonia death rate and (ii) excessive antibiotic use---that have a negative correlation as improving on one degrades the other.
    Assume that these two metrics take values $\calF = \set{ \bff \in \bbR^2 \mid -f_1 - 2 f_2 = 1, -1 \le f_1 \le 1 }$, lying in a 1-dimensional affine subspace of $\bbR^2$ (\Cref{fig:1example_f_2d__anticorrelated}, red).
    Similar to \Cref{example:1__2d__correlated_metrics}, the subspace $\calL = \set{ \bff \mid -f_1 + 2f_2 = 0 }$ (\Cref{fig:1example_f_2d__anticorrelated}, blue) associated to $\affine(\calF)$ is 1-dimensional.
    But the rows of orthonormal basis $\bfZ \propto \begin{bmatrix}2 \\ -1 \end{bmatrix}$ generate cone $\calK_Z = \set{ 2 \lambda_1 - \lambda_2 \mid \lambda_1, \lambda_2 \ge 0 } = \bbR$, which contains a linear subspace within.
    This means that the metric improvement directions in the coefficient space are the dual cone $\calK_Z^\ast = \set{\bfzero}$, i.e., there are no non-trivial directions to simultaneously improve both metrics.

    To satisfy improvement objective, score improvement directions in the coefficient space $\calK_V^\ast$ need to be a subset of metric improvement directions $\calK_Z^\ast = \set{\bfzero}$, or equivalently $\calK_Z \subseteq \calK_V$.
    Hence, we choose $\bfV = \begin{bmatrix} 2 \\ -1 \end{bmatrix} \propto \bfZ$ with 2 rows.
    Note that $\bfV$ with just 1 row would generate either cone $\bbR_+$ or cone $-\bbR_+$, and fail to enclose cone $\calK_Z = \bbR$.
    Hence, all three matrix ranks have value 2 even though all movement directions $\calF_{\bff}$ lie in a 1-dimensional subspace $\calL$.

    Finally, we can recover $\bfA = \bfI_2$ such that $\bfV = \bfA \bfZ$ and design the trivial $S(\bff) = \bff$.
    Due to the perfect negative correlation in metrics, we must choose both in the score design.
\end{example}

\begin{example}[Restriction with monotonicity$\implies$higher dimensionality]
\label{example:matrix_rank__3d__circularcone}
    When the number of metrics is large, understanding correlations among them can be unintuitive.
    Hence, we rely on structure of polyhedral cones for score design, specifically improvement directions of scores $\calK_V^\ast$ and metrics $\calK_Z^\ast$ (in the coefficient space).
    We find that score function dimensionality $k$ under \restrictionCS and \restrictionLM restrictions can be much larger than under \restrictionL, as $\CSR, \CGR \gg \CR$.

    Consider the case of four metrics where two of them balance the other two, i.e., a toy example where performance metrics take values $\calF = \affine(\calF) = \set{ \bff \in \bbR^4 \mid [1, -1, 1, -1] \cdot \bff = 0 }$.
    Here the four metrics lie in a 3-dimensional linear subspace of $\bbR^4$ and $\calF = \affine(\calF) = \calL$. 
    Hence, three orthonormal vectors in $\bbR^4$ form a basis of $\calL$ such that the rows of $\bfZ$ generate the ``square'' cone $\calK_Z$ in $\bbR^3$ (\Cref{fig:example_matrix_rank__3d_square_cone}, red):
    \begin{align*}
        \bfZ &= \frac{1}{2} \cdot \begin{bmatrix}
            1 & 1 & 1\\
            1 & -1 & 1\\
            1 & -1 & -1\\
            1 & 1 & -1
        \end{bmatrix} \in \bbR^{4 \times 3}.
    \end{align*}

    For \restrictionCS and \restrictionLM restrictions, we need to find matrix $\bfV$ such that $\calK_V = \calK_Z$.
    As all rows of $\bfZ$ are \textit{extreme rays} of $\calK_Z$, matrix $\bfV$ must have four rows $\bfV = \bfI_4 \bfZ$ (any $\bfV$ with fewer rows would not \textit{generate} the square cone).
    Hence, $\CSR(\bfZ) = \CGR(\bfZ) = 4$.
    But for \restrictionL restriction that does not require monotonicity, rows of $\bfV$ need only ensure $\calK_Z \subseteq \calK_V$.
    The following matrix $\bfV$ with three rows that generates a ``triangular'' cone $\calK_V$ (\Cref{fig:example_matrix_rank__3d_square_cone}, blue) \textit{enclosing} the square cone $\calK_Z$:
    \begin{align*}
        \bfV &= \frac{1}{2} \cdot \begin{bmatrix}
            1 & 0 & 2\\
            1 & 3 & -1\\
            1 & -3 & -1\\
        \end{bmatrix} \qquad \text{and so } \bfV = \bfA \bfZ \text{ with } \bfA = \frac{1}{4} \cdot \begin{bmatrix}
            3 & 3 & -1 & -1\\
            3 & -3 & -1 & 5\\
            -3 & 3 & 5 & -1\\
        \end{bmatrix}.
    \end{align*}
    \begin{figure}[!h]
        \centering
        \begin{subfigure}{0.49\textwidth}
            \centering
            \tdplotsetmaincoords{80}{45}
\tdplotsetrotatedcoords{-90}{180}{-90}
\begin{tikzpicture}[tdplot_main_coords,thick,scale=1.5]
    \coordinate (O) at (0, 0, 0);

    \draw[->,color=gray,dashed] (-1.25,0,0) -- (1.25,0,0);
    \draw[->,color=gray,dashed] (0,-1.25,0) -- (0,1.25,0);
    \draw[->,color=gray,dashed] (0,0,-0.25) -- (0,0,1.25);

    \draw[fill=Cerulean,fill opacity=0.3,draw=none] (O) -- (0,1,1) -- (-1,0,1) -- cycle;
    \draw[draw=Cerulean] (O) -- (0,1,1);
    \draw[draw=Cerulean] (O) -- (-1,0,1);

    \begin{scope}[thick,draw=red]
        \draw[fill=Salmon,fill opacity=0.3] (O) -- (-0.25,0.75,1) -- (-0.75,0.25,1) -- cycle;
        
        \draw[fill=Salmon,fill opacity=0.3] (O) -- (-0.75,0.25,1) -- (0,-0.5,1) -- cycle;
        
        \draw[fill=Salmon,fill opacity=0.3] (O) -- (-0.25,0.75,1) -- (0.5,0,1) -- cycle;
        
        \draw[fill=Salmon,fill opacity=0.3] (O) -- (0,-0.5,1) -- (0.5,0,1) -- cycle;
    \end{scope}

    \node[circle,inner sep=2pt,fill=black] at (-0.25,0.75,1) {};
    \node[circle,inner sep=2pt,fill=black] at (-0.75,0.25,1) {};
    \node[circle,inner sep=2pt,fill=black] at (0,-0.5,1) {};
    \node[circle,inner sep=2pt,fill=black] at (0.5,0,1) {};

    \draw[fill=Cerulean,fill opacity=0.2,draw=none] (O) -- (-1,0,1) -- (1,-1,1) -- cycle;
    \draw[fill=Cerulean,fill opacity=0.2,draw=none] (O) -- (0,1,1) -- (1,-1,1) -- cycle;
    \draw[draw=Cerulean] (O) -- (1,-1,1);

    \node[circle,inner sep=2pt,fill=Cerulean] at (-1,0,1) {};
    \node[circle,inner sep=2pt,fill=Cerulean] at (0,1,1) {};
    \node[circle,inner sep=2pt,fill=Cerulean] at (1,-1,1) {};
    
    \node[text=Cerulean] at (-0.5,-0.5,0.5) {$\calK_V$};
    \node[text=red] at (0.5,0.5,0.5) {$\calK_Z$};
\end{tikzpicture}
\quad
\begin{tikzpicture}[thick,scale=0.75]
    \coordinate (O) at (0, 0);
    \coordinate (right) at (1.25, 0);
    \coordinate (left) at (-1.25, 0);
    \coordinate (top) at (0, 1.25);
    \coordinate (bottom) at (0, -1.25);

    \draw[<->,color=gray,dashed] (left) -- (right);
    \draw[<->,color=gray,dashed] (bottom) -- (top);

    \draw[fill=Cerulean,fill opacity=0.5,draw=Cerulean] (0,1) -- (1.5,-0.5) -- (-1.5,-0.5) -- cycle;

    \draw[fill=Salmon,fill opacity=0.5,draw=red] (0.5,0.5) -- (0.5,-0.5) -- (-0.5,-0.5) -- (-0.5,0.5) -- cycle;

    \node[circle,inner sep=2pt,fill=Cerulean] at (0,1) {};
    \node[circle,inner sep=2pt,fill=Cerulean] at (1.5,-0.5) {};
    \node[circle,inner sep=2pt,fill=Cerulean] at (-1.5,-0.5) {};
    
    \node[circle,inner sep=2pt,fill=black] at (0.5,0.5) {};
    \node[circle,inner sep=2pt,fill=black] at (0.5,-0.5) {};
    \node[circle,inner sep=2pt,fill=black] at (-0.5,-0.5) {};
    \node[circle,inner sep=2pt,fill=black] at (-0.5,0.5) {};
\end{tikzpicture}
            \caption{
            Rows of $\bfZ$ are extreme rays of the generated ``square'' cone $\calK_Z$.
            The square cone can be enclosed by a ``triangular'' cone $\calK_V$.
            }
            \label{fig:example_matrix_rank__3d_square_cone}
        \end{subfigure}\hfill
        \begin{subfigure}{0.49\textwidth}
            \centering
            \tdplotsetmaincoords{80}{45}
\tdplotsetrotatedcoords{-90}{180}{-90}
\begin{tikzpicture}[tdplot_main_coords,thick,scale=1.5]
    \coordinate (O) at (0, 0, 0);

    \draw[->,color=gray,dashed] (-1.25,0,0) -- (1.25,0,0);
    \draw[->,color=gray,dashed] (0,-1.25,0) -- (0,1.25,0);
    \draw[->,color=gray,dashed] (0,0,-0.25) -- (0,0,1.25);

    \draw[fill=Cerulean,fill opacity=0.3,draw=none] (O) -- (0,1,1) -- (-1,0,1) -- cycle;
    \draw[draw=Cerulean] (O) -- (0,1,1);
    \draw[draw=Cerulean] (O) -- (-1,0,1);

    \begin{scope}[thick,draw=red]
        \draw[canvas is xy plane at z=1,fill=Salmon,fill opacity=0.3] (45-15:0.5) arc (45-15:225+15:0.5) -- (O) -- cycle;
        
        \draw[canvas is xy plane at z=1,fill=Salmon,fill opacity=0.3] (45-15:0.5) arc (45-15:-135+15:0.5) -- (O) -- cycle;
    \end{scope}

    \begin{scope}[dotted, ultra thick, draw=black, line width=3pt]
        \draw[canvas is xy plane at z=1] (45-15:0.5) arc (45-15:225+15:0.5);
        \draw[canvas is xy plane at z=1] (45-15:0.5) arc (45-15:-135+15:0.5);
    \end{scope}

    \draw[fill=Cerulean,fill opacity=0.2,draw=none] (O) -- (-1,0,1) -- (1,-1,1) -- cycle;
    \draw[fill=Cerulean,fill opacity=0.2,draw=none] (O) -- (0,1,1) -- (1,-1,1) -- cycle;
    \draw[draw=Cerulean] (O) -- (1,-1,1);

    \node[circle,inner sep=2pt,fill=Cerulean] at (-1,0,1) {};
    \node[circle,inner sep=2pt,fill=Cerulean] at (0,1,1) {};
    \node[circle,inner sep=2pt,fill=Cerulean] at (1,-1,1) {};

    \node[text=Cerulean] at (-0.5,-0.5,0.5) {$\calK_V$};
    \node[text=red] at (0.5,0.5,0.5) {$\calK_Z$};
\end{tikzpicture}
\quad
\begin{tikzpicture}[thick,scale=0.75]
    \coordinate (O) at (0, 0);
    \coordinate (right) at (1.25, 0);
    \coordinate (left) at (-1.25, 0);
    \coordinate (top) at (0, 1.25);
    \coordinate (bottom) at (0, -1.25);

    \draw[<->,color=gray,dashed] (left) -- (right);
    \draw[<->,color=gray,dashed] (bottom) -- (top);

    \draw[fill=Cerulean,fill opacity=0.5,draw=Cerulean] (0,1) -- (1,-1) -- (-1,-1) -- cycle;

    \draw[fill=none,draw=red] (0,0) circle (0.4);
    \draw[fill=Salmon,fill opacity=0.5,dotted,draw=black,ultra thick] (0,0) circle (0.4);

    \node[circle,inner sep=2pt,fill=Cerulean] at (0,1) {};
    \node[circle,inner sep=2pt,fill=Cerulean] at (1,-1) {};
    \node[circle,inner sep=2pt,fill=Cerulean] at (-1,-1) {};
\end{tikzpicture}
            \caption{
            All rows of $\bfZ \in \bbR^{d \times 3}$ are extreme rays of the generated ``circular'' cone $\calK_Z$.
            The circular cone can be enclosed by a ``triangular'' cone $\calK_V$.
            }
            \label{fig:example_matrix_rank__3d_circular_cone}
        \end{subfigure}
        \caption{
        Side and top views of cones $\calK_Z$ (red) generated by rows of $\bfZ$, whose columns are orthonormal basis of 3-dimensional subspace $\calL$.
        As $\CSR$ and $\CGR$ require \textit{generating} $\calK_Z$ with $\calK_V$, the matrix ranks depend on the number of extreme rays of $\calK_Z$, which can be much higher than $\dim \affine(\calF) = 3$.
        On the other hand, $\CR$ only requires \textit{enclosing} $\calK_Z$ with $\calK_V$; and so is independent of the number of extreme rays.
        }
    \end{figure}
    
    Generally, $\CSR$ and $\CGR$ can be much larger than $\CR$ (\Cref{fig:example_matrix_rank__3d_circular_cone}).
    Since these three matrix ranks describe the dimensionality under the three restrictions (\Cref{thm:1__f_linear__suff}), restrictions that require monotonicity (\restrictionCS, \restrictionLM) lead to higher dimensionality in score design compared to \restrictionL.
    In other words, allowing negative values in matrix $\bfA$ can significantly reduce dimensionality of score design.
\end{example}

\begin{remark}[Competing metric improvement directions$\implies$higher dimensionality under \restrictionCS]
    When rows of $\bfZ$ generate cone $\calK_Z$ that is \textit{pointed}%
    \footnote{A cone $\calK$ is pointed if for all nonzero $\bfx \in \calK$, we have $-\bfx \notin \calK$. It is called \textit{non-pointed} otherwise.}%
    , we get $\CSR(\bfZ) = \CGR(\bfZ)$.
    But when cone $\calK_Z$ that is \textit{non-pointed}, we get $\CSR(\bfZ) > \CGR(\bfZ)$.
    $\calK_Z$ can be non-pointed when improving one metric degrades another, i.e., when metric improvement directions compete among themselves.
    In this setting, dimensionality under \restrictionCS is higher than that under \restrictionLM (see \Cref{example:matrix_rank__5d__nonpointed}).
\end{remark}

\paragraph*{Efficiently implementing \Cref{alg:1__design_strategy}}

Our proposed design strategy in \Cref{alg:1__design_strategy} can be efficiently implemented with algorithms that utilize the geometry of metrics $\calF$.
Elementary linear algebra operations can implement Lines~2~and~4 of \Cref{alg:1__design_strategy}, i.e., finding orthonormal basis $\bfZ$ and recovering $\bfA$ from $\bfV = \bfA \bfZ$.
It is also possible to efficiently implement Line~3, to find matrix $\bfV$ that attain the matrix ranks---$\ConeSubsetRank$, $\ConeGeneratingRank$, and $\ConeRank$.
We briefly discuss algorithms for Line~3 here, and point the reader to \Cref{sec:computing_matrix_ranks} for detailed algorithms.
These algorithms ensure that the full \Cref{alg:1__design_strategy} can be efficiently implemented.
We leverage a key property of polyhedral cones, \textit{pointedness}.

When the cone $\calK_Z$ generated from rows of $\bfZ$ is pointed, we can easily find $\bfV$ that attains the matrix ranks.
For $\ConeSubsetRank$, we can keep the rows of $\bfZ$ that are extreme rays of the polyhedral cone $\calK_Z$, as extreme rays minimally generate a pointed cone~\cite[Prop.~26.5.4]{border2022convex}.
$\ConeGeneratingRank$ turns out to be the same as $\ConeSubsetRank$, as every extreme ray of $\calK_Z$ is a row of matrix $\bfZ$~\cite[Prop.~26.5.4]{border2022convex}.
For $\ConeRank$, the matrix $\bfV$ attaining it must generate $\calK_V$ that encloses $\calK_Z$.
An intuitive procedure can find this $\bfV$: can scale rows of $\bfZ$ to lie on a hyperplane, and find a simplex that encloses the convex hull of scaled rows~\cite{gale1953inscribing}.

When the cone $\calK_Z$ is non-pointed, the cone contains a linear subspace within.
Here we can utilize the unique Minkowski decomposition of polyhedral cones into two orthogonal components: the maximal linear subspace within, and a pointed remnant~\cite[Sec.~8.2]{schrijver1998theory}.
Then, for all three matrix ranks, we can generate/enclose non-pointed cone $\calK_Z$, by generating/enclosing the two orthogonal components separately.

\subsection{Proposed design is minimal}
\label{sec:feasibility_improvement__nec}

\Cref{thm:1__f_linear__suff} states that dimensionalities determined by the three matrix ranks---$\ConeSubsetRank$, $\ConeGeneratingRank$, and $\ConeRank$---are sufficient for score design.
It turns out that these dimensionalities are also necessary under a mild assumption on $\calF$ (\Cref{thm:1__f_linear__nonempty_relint_f_with_origin__necsuff}).
Hence, \Cref{thm:1__f_linear__suff,thm:1__f_linear__nonempty_relint_f_with_origin__necsuff} together imply that the \textit{three matrix ranks exactly determine the minimal design problem for improvement objective}.

\begin{theorem}
\label{thm:1__f_linear__nonempty_relint_f_with_origin__necsuff}
    Assume metrics $\calF \subseteq \bbR^d$ have non-empty relative interior with respect to $\affine(\calF)$.
    Then the listed dimensionalities $k$ in \Cref{thm:1__f_linear__suff} are necessary.
\end{theorem}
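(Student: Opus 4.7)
The plan is to prove necessity by showing that any score function $S : \bff \mapsto \bfA\bff$ with $\bfA \in \bbR^{k \times d}$ satisfying the improvement objective (and the relevant restriction) must yield a matrix $\bfV = \bfA \bfZ \in \bbR^{k \times r}$ that attains the corresponding cone rank. Since $\bfV$ has exactly $k$ rows, this forces $k$ to be at least the defined matrix rank. The non-empty relative interior assumption will be used precisely to upgrade the pointwise improvement condition at the interior into a global cone containment in the coefficient space.

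First I would fix a point $\bff^\star \in \relinterior(\calF)$ whose existence is granted by the assumption. By definition of relative interior, there exists $\eps > 0$ such that the $\eps$-ball in $\affine(\calF)$ around $\bff^\star$ lies in $\calF$, so the movement cone $\calF_{\bff^\star}$ contains an $\eps$-ball around the origin in $\calL$. The key consequence, which I would establish as a small lemma, is that for every $\bfg \in \calL$ some scaling $\lambda \bfg$ with $\lambda > 0$ belongs to $\calF_{\bff^\star}$. Combining this with \Cref{eqn:1__f_linear__restate_cones} applied at $\bff = \bff^\star$, together with the fact that $\calK_A^\ast$ and $\calK_I^\ast$ are cones (closed under positive scaling), yields the strengthened containment
\begin{equation*}
    \calL \cap \calK_A^\ast \subseteq \calK_I^\ast = \bbR^d_+.
\end{equation*}

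Next I would translate this containment into the coefficient space with respect to the orthonormal basis $\bfZ$ of $\calL$. Every $\bfg \in \calL$ is uniquely of the form $\bfg = \bfZ \bfc$ for some $\bfc \in \bbR^r$, and $\bfA \bfg = \bfV \bfc$. Hence $\bfg \in \calL \cap \calK_A^\ast$ iff $\bfc \in \calK_V^\ast$, while $\bfg \in \bbR^d_+$ iff $\bfc \in \calK_Z^\ast$. The strengthened containment therefore becomes $\calK_V^\ast \subseteq \calK_Z^\ast$, which by duality of closed polyhedral cones is equivalent to $\calK_Z \subseteq \calK_V$. This already gives necessity for \restrictionL: since $\bfV$ has $k$ rows and encloses $\calK_Z$ in $\bbR^r$, we must have $k \ge \ConeRank(\bfZ)$ by definition.

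For \restrictionLM I would next exploit monotonicity on $\calF$. Again using that $\calF_{\bff^\star}$ contains a full-dimensional ball in $\calL$ around the origin, for every $\bfg \in \calL \cap \bbR^d_+$ some positive scaling is a legal movement direction, and monotonicity forces $\bfA \bfg \ge \bfzero$. Reading this in the coefficient space gives $\calK_Z^\ast \subseteq \calK_V^\ast$, i.e.\ $\calK_V \subseteq \calK_Z$, and combined with the improvement containment we conclude $\calK_V = \calK_Z$, so $k \ge \ConeGeneratingRank(\bfZ)$. For \restrictionCS, the 1-hot row structure of $\bfA$ means that the rows of $\bfV = \bfA \bfZ$ are literally a subset of the rows of $\bfZ$; combining this with $\calK_V = \calK_Z$ (which still holds since \restrictionCS is a strengthening of \restrictionLM) yields $k \ge \ConeSubsetRank(\bfZ)$.

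The main obstacle will be the first step, namely cleanly converting the local, pointwise improvement condition $\calF_{\bff^\star} \cap \calK_A^\ast \subseteq \calK_I^\ast$ into the global subspace statement $\calL \cap \calK_A^\ast \subseteq \bbR^d_+$; the argument rests on the cone-invariance under positive scaling together with the relative-interior ball, and it is exactly here that the assumption $\relinterior(\calF) \neq \empty$ is indispensable (without it, the movement directions at any $\bff \in \calF$ may lie in a proper subcone of $\calL$, making the three matrix ranks overly conservative). Once this translation is done, the rest is a routine duality argument in $\bbR^r$ together with the definitions of $\ConeRank$, $\ConeGeneratingRank$, and $\ConeSubsetRank$.
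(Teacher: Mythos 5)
Your proposal is correct and follows the paper's approach almost exactly: fix $\bff^\ast$ in the relative interior, use the scaling property (the paper's \Cref{lemma:subspace_scaled_to_relint}) to upgrade the pointwise condition $\calF_{\bff^\ast} \cap \calK_A^\ast \subseteq \calK_I^\ast$ to $\calL \cap \calK_A^\ast \subseteq \calK_I^\ast$, translate to coefficient space via \Cref{lemma:1__f_linear__cone_subsets_in_subspace} to obtain $\calK_Z \subseteq \calK_V$, and conclude from the definitions of the three ranks. The paper writes out only the \restrictionCS case and marks the others ``similar,'' whereas you explicitly derive the reverse inclusion $\calK_V \subseteq \calK_Z$ for \restrictionLM by a second application of the relative-interior scaling argument to the monotonicity hypothesis, which is exactly what the paper's abbreviated remark intends.
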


We briefly discuss the implication of metrics $\calF$ having non-empty relative interior on satisying the improvement objective.
Such a set $\calF$ contains a center $\bff^\ast \in \calF$ where every direction in subspace $\calL$ is a positively-scaled movement direction from $\calF_{\bff^\ast}$.
Intuitively, all score improvement directions are movement directions in the coefficient space.
As a result, we get an equivalence between satisfying improvement in the ambient space and the coefficient space, i.e., satisfying improvement in \Cref{eqn:1__f_linear__restate_cones} is equivalent to satisfying $\calK_Z \subseteq \calK_V$.
See \Cref{thm_w_proof:1__f_linear__nonempty_relint_f_with_origin__necsuff} for the proof.

\begin{remark}
\label{remark:relint_discussion}
In \Cref{fig:examples_relint_f} we illustrate examples of $\calF$ and their relative interior.
$\calF$ having non-empty relative interior is a reasonable condition in practice, as performance metrics used by rating agencies are often correlated and not isolated points~\cite{bandalos2018measurement,cms_starratings,usnews_rankings,barclay2022concordance,kurian2021predicting,ash2012statistical}.
For instance, CMS uses percentage-rate-based metrics, such as condition-specific death rates, readmission rates, and screening rates~\cite{cms_starratings,cms_2024starratings_report}.
This leads to real-valued metrics $\calF = [0, 1]^d$, which has non-empty relative interior.
We note that, when the relative interior is \textit{empty}, dimensionality $k$ significantly less than listed values in \Cref{thm:1__f_linear__suff} can suffice (\Cref{prop_w_proof:1__f_linear__empty_relint__counterexample}).
\end{remark}

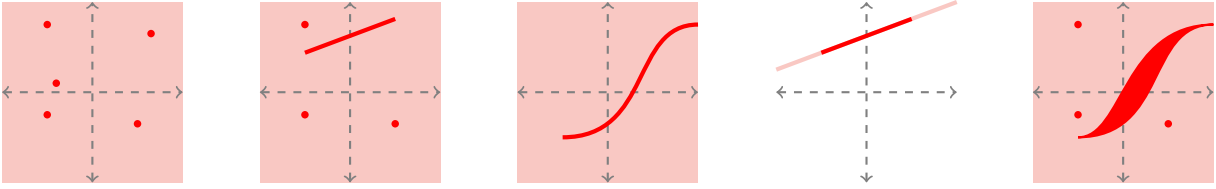
\begin{figure}[!h]
    \centering
    \begin{subfigure}{0.17\textwidth}
        \begin{tikzpicture}[thick,scale=0.6]
    \coordinate (O) at (0, 0);
    \coordinate (right) at (2, 0);
    \coordinate (left) at (-2, 0);
    \coordinate (top) at (0, 2);
    \coordinate (bottom) at (0, -2);

    \draw[draw=none,fill=Salmon, opacity=0.5] (2,2) -- (-2,2) -- (-2,-2) -- (2,-2) -- cycle;
    
    \draw[<->,color=gray,dashed] (left) -- (right);
    \draw[<->,color=gray,dashed] (bottom) -- (top);

    \node[circle,inner sep=1pt,fill=red] at (-1, -0.5) {};
    \node[circle,inner sep=1pt,fill=red] at (1, -0.7) {};
    \node[circle,inner sep=1pt,fill=red] at (-1, 1.5) {};
    \node[circle,inner sep=1pt,fill=red] at (-0.8, 0.2) {};
    \node[circle,inner sep=1pt,fill=red] at (1.3,1.3) {};

\end{tikzpicture}
    \end{subfigure}
    \hfill
    \begin{subfigure}{0.17\textwidth}
        \begin{tikzpicture}[thick,scale=0.6]
    \coordinate (O) at (0, 0);
    \coordinate (right) at (2, 0);
    \coordinate (left) at (-2, 0);
    \coordinate (top) at (0, 2);
    \coordinate (bottom) at (0, -2);

    \draw[draw=none,fill=Salmon, opacity=0.5] (2,2) -- (-2,2) -- (-2,-2) -- (2,-2) -- cycle;

    \draw[<->,color=gray,dashed] (left) -- (right);
    \draw[<->,color=gray,dashed] (bottom) -- (top);

    \draw[ultra thick,color=red] (-1,0.875) -- (1,1.625);
    \node[circle,inner sep=1pt,fill=red] at (-1, -0.5) {};
    \node[circle,inner sep=1pt,fill=red] at (1, -0.7) {};
    \node[circle,inner sep=1pt,fill=red] at (-1, 1.5) {};
\end{tikzpicture}
    \end{subfigure}
    \hfill
    \begin{subfigure}{0.17\textwidth}
        \begin{tikzpicture}[thick,scale=0.6]
    \coordinate (O) at (0, 0);
    \coordinate (right) at (2, 0);
    \coordinate (left) at (-2, 0);
    \coordinate (top) at (0, 2);
    \coordinate (bottom) at (0, -2);

    \draw[draw=none,fill=Salmon, opacity=0.5] (2,2) -- (-2,2) -- (-2,-2) -- (2,-2) -- cycle;

    \draw[<->,color=gray,dashed] (left) -- (right);
    \draw[<->,color=gray,dashed] (bottom) -- (top);

    \draw[ultra thick,color=red] (-1,-1) .. controls (1,-1) and (0.5,1.5) .. (2,1.5);
\end{tikzpicture}
    \end{subfigure}
    \hfill
    \begin{subfigure}{0.17\textwidth}
        \begin{tikzpicture}[thick,scale=0.6]
    \coordinate (O) at (0, 0);
    \coordinate (right) at (2, 0);
    \coordinate (left) at (-2, 0);
    \coordinate (top) at (0, 2);
    \coordinate (bottom) at (0, -2);

    \draw[ultra thick,color=Salmon, opacity=0.5] (-2,0.5) -- (2,2);

    \draw[<->,color=gray,dashed] (left) -- (right);
    \draw[<->,color=gray,dashed] (bottom) -- (top);

    \draw[ultra thick,color=red] (-1,0.875) -- (1,1.625);
\end{tikzpicture}
    \end{subfigure}
    \hfill
    \begin{subfigure}{0.17\textwidth}
        \begin{tikzpicture}[thick,scale=0.6]
    \coordinate (O) at (0, 0);
    \coordinate (right) at (2, 0);
    \coordinate (left) at (-2, 0);
    \coordinate (top) at (0, 2);
    \coordinate (bottom) at (0, -2);

    \draw[draw=none,fill=Salmon, opacity=0.5] (2,2) -- (-2,2) -- (-2,-2) -- (2,-2) -- cycle;

    \draw[<->,color=gray,dashed] (left) -- (right);
    \draw[<->,color=gray,dashed] (bottom) -- (top);

    \filldraw[fill=red,color=red] (-1,-1) .. controls (1,-1) and (0.5,1.5) .. (2,1.5) -- (2, 1.5) .. controls (0,1.5) and (0,-1) .. (-1,-1);

    \node[circle,inner sep=1pt,fill=red] at (-1, -0.5) {};
    \node[circle,inner sep=1pt,fill=red] at (1, -0.7) {};
    \node[circle,inner sep=1pt,fill=red] at (-1, 1.5) {};
\end{tikzpicture}
    \end{subfigure}
    \caption{Examples of $\calF \subseteq \bbR^2$.
    The left three have empty relative interior, whereas the right two have non-empty relative interior with respect to $\affine(\calF)$, which is lightly shaded.
    }
    \label{fig:examples_relint_f}
\end{figure}

\begin{remark}[Choice of affine subspace and orthonormal basis]
    Our design strategy in \Cref{alg:1__design_strategy} can use \textit{any} orthonormal basis $\bfZ$ of the linear subspace $\calL_{\calH}$ associated with \textit{any} affine subspace $\calH$ containing metrics $\calF$.
    To design the \textit{minimal} $S : \calF \to \bbR^k$, we pick \textit{any} orthonormal basis of subspace $\calL$ associated with affine hull $\calH = \affine(\calF)$.
    This follows from \Cref{lemma_w_proof:matrix_rank__smallest_with_affine_invariant_to_rotation}, which states that three matrix ranks are (1) invariant to the choice of orthonormal basis for a fixed subspace $\calL_{\calH}$, and (2) minimized with the choice of $\calH = \affine(\calF)$.
\end{remark}

\section{Minimal design problem for optimality objective}
\label{sec:feasibility_optimality}

We propose a surrogate score design for satisfying the optimality objective and discuss the minimality of our proposed design.
We use the standard definition of pareto-optimality.

\begin{definition}
\label{defn:pareto_opt}
Point $\bff \in \calF$ is pareto-optimal for maximizing $S$ if no other point in $\calF$ both improves $S(\bff)$ in all coordinates and strictly improves $S(\bff)$ in at least one coordinate.
\begin{align*}
    \ParetoOpt(S) &\coloneqq \set{\bff \in \calF \mid \text{for all } \bff' \in \calF, \text{ either } S(\bff') \not\ge S(\bff) \text{ or } S(\bff') = S(\bff) }.
\end{align*}
We write $\ParetoOpt(\calF)$ to denote the pareto-optimal points in $\calF$ w.r.t. the identity map.
\end{definition}

We simplify the optimality objective in \Cref{eqn:definition_optimality_general}---$\ParetoOpt(S) \subseteq \ParetoOpt(\calF)$---using movement directions $\calF_{\bff}$ at center $\bff$, score improvement directions $\calK_A^\ast$, and metric improvement directions $\calK_I^\ast$.
Intuitively, score function $S : \bff \mapsto \bfA \bff$ satisfies optimality if and only if movement directions $\calF_{\bff}$ that are \textit{non-strict score improvement directions} are also \textit{non-strict metric improvement directions}:
\begin{align}
\label{eqn:2__f_linear__restate_cones}
    \text{Optimality} &\iff \set{ \bff \in \calF \mid \calF_{\bff} \subseteq \stcomplement{(\calK_A^\ast)} \cup \ker \bfA } \subseteq \set{ \bff \in \calF \mid \calF_{\bff} \subseteq \stcomplement{(\calK_I^\ast)} \cup \ker \bfI }.
\end{align}

\subsection{Design proposal for optimality objective}
\label{sec:feasibility_optimality__suff}

We propose a score design in \Cref{alg:2__design_strategy} with dimensionalities given in \Cref{thm:2__f_linear__suff}.
We note that dimensionalities for score design are much smaller for the optimality objective than for the improvement objective (\Cref{thm:1__f_linear__suff}).
Specifically, for \restrictionLM and \restrictionL restrictions, a 1-dimensional score function $S : \calF \to \bbR$ suffices to satisfy optimality whereas multi-dimensional function $S$ is necessary for improvement (\Cref{thm:1__f_linear__nonempty_relint_f_with_origin__necsuff}).
This suggests that the optimality objective is significantly weaker than the improvement objective.

\begin{theorem}
    \label{thm:2__f_linear__suff}
    For each design restriction, there exists $S : \calF \to \bbR^k$, designed using \Cref{alg:2__design_strategy}, that satisfies the optimality objective with the following dimensionalities.
    \vspace{-0.2em}
    \noindent
    \begin{center}
        \begin{tabular}{l|c}
             & Dimensionality $k \ge$ \\
            \midrule
            \restrictionCS & $\dim \affine(\calF)$ \\
            \restrictionLM & $1$ \\
            \restrictionL & $1$ \\
        \end{tabular}
    \end{center}
    \vspace{-2em}
\end{theorem}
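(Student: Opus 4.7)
The plan is to handle the three restrictions separately, exhibiting an explicit construction for each.

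For the \restrictionL and \restrictionLM rows ($k = 1$), I would take the scalar score $S(\bff) = \bfone_d^\top \bff = \sum_{j=1}^d f_j$, corresponding to $\bfA = \bfone_d^\top$. This is linear and, since all entries of $\bfA$ are nonnegative, also monotone, so both restrictions hold. For the optimality check, note that for a scalar score, $\bff \in \ParetoOpt(S)$ iff no $\bff' \in \calF$ has $S(\bff') > S(\bff)$. If some $\bff' \in \calF$ satisfied $\bff' \ge \bff$ with $\bff' \neq \bff$, summing coordinates would give $S(\bff') > S(\bff)$, contradicting $\bff \in \ParetoOpt(S)$. Hence every such $\bff$ lies in $\ParetoOpt(\calF)$, giving $\ParetoOpt(S) \subseteq \ParetoOpt(\calF)$.

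For the \restrictionCS row ($k = r \coloneqq \dim \affine(\calF)$), the strategy is to select $r$ coordinates of $\bff$ whose values pin down $\bff$ on $\affine(\calF)$. Concretely, let columns of $\bfZ \in \bbR^{d \times r}$ be an orthonormal basis of $\calL$, so that $\bfZ$ has full column rank $r$; by standard linear algebra, some $r$ rows of $\bfZ$ are linearly independent. Let $J = \{j_1, \dots, j_r\} \subseteq [d]$ index such rows, and define $S(\bff) = (f_{j_1}, \dots, f_{j_r})$, which is a valid \restrictionCS design.

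The one nontrivial step is verifying optimality in the \restrictionCS case. Take $\bff \in \ParetoOpt(S)$ and, for contradiction, suppose there exists $\bff' \in \calF$ with $\bff' \ge \bff$ and $\bff' \neq \bff$. Since $\bff, \bff' \in \calF \subseteq \affine(\calF)$, write $\bff' - \bff = \bfZ \bfalpha$ for some $\bfalpha \in \bbR^r$. If $\bff'|_J \neq \bff|_J$, then the monotonicity $\bff' \ge \bff$ forces $S(\bff') \ge S(\bff)$ componentwise with at least one strict inequality, violating $\bff \in \ParetoOpt(S)$. Otherwise $\bff'|_J = \bff|_J$, so the $|J| \times r$ submatrix of $\bfZ$ indexed by $J$ applied to $\bfalpha$ yields $\bfzero$; since that submatrix is $r \times r$ and invertible by construction, $\bfalpha = \bfzero$ and thus $\bff' = \bff$, contradiction.

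I do not expect any serious obstacles. The \restrictionL/\restrictionLM constructions are essentially immediate once one observes that pareto-optimality of a scalar score coincides with maximization, and the \restrictionCS case reduces to a short affine-hull argument whose only real ingredient is extracting $r$ independent rows from the full-column-rank matrix $\bfZ$. Writing the actual \Cref{alg:2__design_strategy} is then just this selection procedure plus the trivial choice $\bfA = \bfone_d^\top$ for the other two rows.
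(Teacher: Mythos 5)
Your construction is correct and, for \restrictionLM and \restrictionL, is essentially the same as the paper's (you specialize the positive weight vector $\bfa > \bfzero$ to $\bfone_d$, which changes nothing). For \restrictionCS you also pick the same $\bfA$ as the paper (select $r$ rows of $\bfZ$ that are linearly independent so the corresponding metric coordinates determine a point on $\affine(\calF)$). Where you depart is in the \emph{proof} that this choice satisfies optimality: the paper routes through the coefficient-space reformulation of \Cref{eqn:2__f_linear__restate_cones}, invoking \Cref{lemma:2__f_linear__cone_decompose_in_subspace} and a chain of inclusions among $\calK_V^\ast$, $\calK_Z^\ast$, $\ker\bfV$, $\ker\bfZ$, whereas you argue directly from \Cref{defn:pareto_opt}: given $\bff \in \ParetoOpt(S)$ and a hypothetical dominator $\bff'\ge\bff$, $\bff'\neq\bff$, you split on whether $\bff'|_J$ differs from $\bff|_J$, and in the equal case use invertibility of the $r\times r$ submatrix $\bfZ_J$ applied to $\bff'-\bff = \bfZ\bfalpha$ to force $\bff'=\bff$. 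This is more elementary and self-contained than the paper's proof and requires none of the dual-cone bookkeeping, at the cost of not slotting into the unified cone framework the paper reuses for the improvement objective. Both are valid; yours is arguably the cleaner argument for a reader who only cares about \Cref{thm:2__f_linear__suff} in isolation.
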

\begin{algorithm}[!h]
    \begin{algorithmic}[1]
        \State Given: $\calF$ and a design restriction.
        \If{Design restriction is \restrictionLM or \restrictionL}
            \State Design $S(\bff) = \bfa \cdot \bff$ with any positive vector $\bfa$.
        \ElsIf{Design restriction is \restrictionCS}
            \State Find $\bfZ$ whose columns are an orthonormal basis of subspace $\calL$ associated with $\affine(\calF)$.
            \State Let $\bfV$ be linearly independent rows of $\bfZ$.
            \State Find $\bfA$ that satisfies $\bfV = \bfA \bfZ$ and design $S : \bff \mapsto \bfA \bff$.
        \EndIf
    \end{algorithmic}
    \caption{Design strategy for optimality objective}
    \label{alg:2__design_strategy}
\end{algorithm}

For \restrictionLM and \restrictionL restrictions, the minimal design is straightforward: design $S : \bff \mapsto \bfa \cdot \bff$ using any vector $\bfa > \bfzero$~\cite{zadeh1963optimality}.
For \restrictionCS restriction, we utilize an isomorphism between movement directions $\calF_{\bff}$ and their coefficients $\calC_{\bff} \subseteq \bbR^r$ w.r.t. orthonormal basis $\bfZ \in \bbR^{d \times r}$ of subspace $\calL$ associated with $r$-dimensional $\affine(\calF)$.
The columns of $\bfZ$ span subspace $\calL$ and its rows correspond to coordinates of movement directions $\calF_{\bff}$.
Using this isomorphism, choosing $r$ linearly independent rows of $\bfZ$ as rows of $\bfV$ suffices to satisfy the optimality objective.
As $\bfV \subseteq \bfZ$, we can find $\bfA \in \bbR^{r \times d}$ with 1-hot rows such that $\bfV = \bfA \bfZ$, and design $S : \bff \mapsto \bfA \bff$ that satisfies the \restrictionCS restriction.
We include the proof in \Cref{thm_w_proof:2__f_linear__suff}.

\subsection{Discussion of minimality of proposed design}
\label{sec:feasibility_optimality__nec}

While our proposed design for improvement objective is minimal when $\calF$ has non-empty relative interior (\Cref{thm:1__f_linear__nonempty_relint_f_with_origin__necsuff}), our design for the optimality objective is \textit{not necessarily} minimal under the same condition on $\calF$.
The challenge is that $\ParetoOpt(\calF)$, the optimal trade-off surface~\cite{boyd2004convex}, depends on the boundary of $\calF$.
To demonstrate this, we give three examples of $d$-dimensional $\calF$ with non-empty relative interior---for one of the examples dimensionality $k = \dim \affine(\calF)$ is necessary for satisfying optimality under \restrictionCS, whereas for the other two examples, a 1-dimensional $S$ suffices.
See \Cref{prop_w_proof:2__f_linear__necessary_subset_examples} for the proof.

\begin{proposition}
\label{prop:2__f_linear__necessary_subset_examples}
    Consider designing $S : \calF \to \bbR^k$ to satisfy optimality objective.
    \begin{enumerate}
        \item For $\calF = \set{ \bff \in \bbR^d \mid \norm{\bff}_1 \le 1 }$, $k \ge 1$ is necessary and sufficient for all design restrictions.
        
        \item For $\calF = \set{ \bff \in \bbR^d \mid \norm{\bff}_2 \le 1 }$, $k \ge 1$ is necessary and sufficient for all design restrictions.
        
        \item For $\calF = \set{ \bff \in \bbR^d \mid \norm{\bff}_\infty \le 1 }$, $k \ge d$ is necessary and sufficient for \restrictionCS.
        Moreover, $k \ge 1$ is necessary and sufficient for the \restrictionLM and \restrictionL restrictions.
    \end{enumerate}
\end{proposition}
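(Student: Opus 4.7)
The plan is to characterize $\ParetoOpt(\calF)$ explicitly for each of the three sets, and then for each restriction either exhibit a small-dimensional score function witnessing sufficiency or construct a counterexample witnessing necessity. The three characterizations I expect to use are: for the $\ell_1$ ball, $\ParetoOpt(\calF) = \set{\bff \ge \bfzero : \norm{\bff}_1 = 1}$ (the standard simplex); for the $\ell_2$ ball, $\ParetoOpt(\calF) = \set{\bff \ge \bfzero : \norm{\bff}_2 = 1}$; and for the $\ell_\infty$ ball, $\ParetoOpt(\calF) = \set{\bfone}$. Each is obtained by the standard argument: a point with a negative coordinate can be improved by raising that coordinate toward zero (the norm does not increase), a point strictly inside the ball can be pareto-improved in every direction, and for the $\ell_\infty$ case any coordinate strictly below $1$ can be raised to $1$ independently.

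For sufficiency in the $\ell_1$ and $\ell_2$ cases, I would take the \restrictionCS design $S(\bff) = f_1$. Since $f_1 \le \norm{\bff}_p \le 1$ with equality if and only if $\bff = e_1$, we have $\ParetoOpt(S) = \set{e_1} \subseteq \ParetoOpt(\calF)$. Because \restrictionCS is the most restrictive of the three design families, this automatically yields $k = 1$ sufficient for \restrictionLM and \restrictionL as well. For the $\ell_\infty$ case under \restrictionLM and \restrictionL, I would take $S(\bff) = \bfone \cdot \bff$; its unique maximizer on $[-1, 1]^d$ is $\bfone$, giving $\ParetoOpt(S) = \set{\bfone} = \ParetoOpt(\calF)$. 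For \restrictionCS on the $\ell_\infty$ ball, $k = d$ is achieved by the identity map $S(\bff) = \bff$.

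The main technical content is the necessity of $k \ge d$ for the $\ell_\infty$ ball under \restrictionCS, and this is where I expect the real work to be. Suppose for contradiction that $S$ selects only $k < d$ coordinates; without loss of generality these are the first $k$. Pick any index $j > k$ and let $\bff^\ast \in \calF$ be defined by $f^\ast_i = 1$ for $i \le k$, $f^\ast_j = -1$, and arbitrary values in $[-1, 1]$ for the remaining coordinates. Then $S(\bff^\ast) = \bfone_k$ attains the coordinatewise maximum of $S$ over $\calF$, so no $\bff' \in \calF$ can strictly improve any coordinate of $S$. Hence $\bff^\ast \in \ParetoOpt(S)$, yet $\bff^\ast \ne \bfone$, so $\bff^\ast \notin \ParetoOpt(\calF)$, violating optimality.

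The hard part is conceptually the necessity argument in case 3: it hinges on the geometry of the hypercube having a unique pareto-optimal ``corner,'' so any coordinate the design fails to observe admits a free pareto improvement. The other directions are comparatively routine consequences of the pareto characterization and the earlier sufficiency theorem (\Cref{thm:2__f_linear__suff}), which already gives $k = 1$ for \restrictionLM and \restrictionL via any strictly positive linear functional. Finally, $k \ge 1$ is trivially necessary in every case since a zero-dimensional score would trivialize $\ParetoOpt(S) = \calF$, which is not contained in $\ParetoOpt(\calF)$ whenever $\calF$ has non-pareto-optimal points.
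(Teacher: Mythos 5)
Your proof is correct and follows essentially the same route as the paper's: characterize $\ParetoOpt(\calF)$ explicitly for each norm ball, exhibit a one-coordinate selection $S(\bff)=f_j$ (resp.\ a positive linear functional) for sufficiency, and for the $\ell_\infty$ necessity argument show that any Res-CS design selecting $k<d$ coordinates admits a pareto-optimal score point with some unselected coordinate set to $-1$, which falls outside $\ParetoOpt(\calF)=\set{\bfone}$. The only cosmetic difference is that you construct the witness $\bff^\ast$ directly rather than describing $\ParetoOpt(S)$ as a whole, and you prove the $k=1$ sufficiency for Res-LM/Res-L inline rather than citing \Cref{thm:2__f_linear__suff}; both are equivalent to what the paper does.
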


\section{Minimal design problem for both objectives simultaneously}
\label{sec:feasibility_both}

So far we have separately analyzed the minimal design problems for improvement and optimality objectives.
We now give results for simultaneously satisfying both objectives.

First, we establish a relationship between the improvement and optimality objectives.
\commentAnmol{Actually, a stronger result is true: Improvement + Monotone $\implies \ParetoOpt(S) = \ParetoOpt(\calF)$, a stronger objective than current optimality (and what Nati wanted to understand). So I think that satisfying the improvement objective is the important bottleneck! But don't know if this is minimal design for $PO(S) = PO(F)$ objective.}
This result holds even for score functions $S$ that are not linear in $\calF$.

\begin{theorem}
\label{thm:improvement_monotone__imply__optimality}
    Let $S : \calF \to \bbR^k$ be monotone in $\calF$.
    If $S$ satisfies improvement, then $S$ satisfies optimality.
\end{theorem}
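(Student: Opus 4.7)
The plan is to prove the contrapositive-style statement: take any $\bff \in \ParetoOpt(S)$ and show $\bff \in \ParetoOpt(\calF)$, by supposing $\bff$ is \emph{not} pareto-optimal in $\calF$ and deriving a contradiction with $\bff \in \ParetoOpt(S)$. The two hypotheses (monotonicity and improvement) will be used in sequence: monotonicity sends the dominating metric vector to a dominating score vector, and improvement will then force the two metric vectors to be equal.

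Concretely, fix $\bff \in \ParetoOpt(S)$ and suppose for contradiction that $\bff \notin \ParetoOpt(\calF)$. By Definition~\ref{defn:pareto_opt} applied to the identity map, there exists $\bff' \in \calF$ with $\bff' \ge \bff$ and $\bff' \ne \bff$. Monotonicity of $S$ in $\calF$ then yields $S(\bff') \ge S(\bff)$. Since $\bff \in \ParetoOpt(S)$, Definition~\ref{defn:pareto_opt} forces either $S(\bff') \not\ge S(\bff)$ (ruled out) or $S(\bff') = S(\bff)$; hence $S(\bff') = S(\bff)$.

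Now invoke the improvement objective \eqref{eqn:definition_improvement_general} in the ``reverse'' direction: from $S(\bff') = S(\bff)$ we get in particular $S(\bff) \succeq_{\calS} S(\bff')$, so improvement gives $\bff \succeq_{\calF} \bff'$, i.e.\ $\bff \ge \bff'$. Combined with $\bff' \ge \bff$ this yields $\bff = \bff'$, contradicting $\bff' \ne \bff$. Therefore $\bff \in \ParetoOpt(\calF)$, and since $\bff$ was arbitrary, $\ParetoOpt(S) \subseteq \ParetoOpt(\calF)$, which is exactly the optimality objective \eqref{eqn:definition_optimality_general}.

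The argument is short and essentially bookkeeping of the definitions; there is no real obstacle. The only subtle point is ensuring that both ``directions'' of the improvement hypothesis are used correctly: the improvement objective as stated is a one-sided implication ($S(\bff') \succeq S(\bff) \Rightarrow \bff' \succeq \bff$), so one has to apply it with the roles of $\bff$ and $\bff'$ swapped after having used monotonicity to derive $S(\bff') = S(\bff)$. Note also that linearity of $S$ is not needed anywhere, which matches the remark in the excerpt that the result holds even for nonlinear $S$.
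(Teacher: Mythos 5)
Your proof is correct and follows essentially the same route as the paper: contradiction via a dominating $\bff' \in \calF$, monotonicity to lift domination to scores, then the improvement implication (applied with roles swapped) to force $\bff' = \bff$. The paper packages the same logic slightly more tersely by first noting that improvement plus monotonicity make $S$ order-preserving, but the underlying argument is identical.
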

\begin{proof}
    Let score function $S : \calF \to \bbR^k$ be monotone in $\calF$ and satisfy improvement.
    Hence, for all $\bff, \bff' \in \calF$ we have $S(\bff') \ge S(\bff) \iff \bff' \ge \bff$, i.e., the function $S$ preserves the ordering on set $\calF$. 
    We prove by contradiction that such an $S$ satisfies optimality.
    Assume that $\bff^\ast \in \ParetoOpt(S)$ but $\bff^\ast \notin \ParetoOpt(\calF)$.
    That is, there exists $\bff \in \calF$ such that $\bff \ge \bff^\ast$ and $\bff \neq \bff^\ast$.
    Because $S$ preserves the ordering, it must be that $S(\bff) \ge S(\bff^\ast)$ and $S(\bff) \neq S(\bff^\ast)$, which means that $\bff^\ast \notin \ParetoOpt(S)$ and contradicts our assumption.
\end{proof}

We utilize \Cref{thm:improvement_monotone__imply__optimality} to design $S$ that simultaneously satisfies both objectives.
As $S$ is monotone in $\calF$ under \restrictionCS and \restrictionLM restrictions, it suffices to design $S$ that satisfies the improvement objective.
We include the proof in \Cref{cor_w_proof:1_2__f_linear__suff}.

\begin{corollary}
    \label{cor:1_2__f_linear__suff}
    Let columns of $\bfZ$ be an orthonormal basis of linear subspace $\calL$ associated with $\affine(\calF)$.
    For each design restriction, there exists score function $S : \calF \to \bbR^k$ that simultaneously satisfies improvement and optimality objectives with following dimensionalities.
    
    \noindent
    \begin{center}
        \begin{tabular}{l|c}
             & Dimensionality $k \ge$ \\
            \midrule
            \restrictionCS & $\ConeSubsetRank(\bfZ)$ \\
            \restrictionLM & $\ConeGeneratingRank(\bfZ)$ \\
            \restrictionL & $\ConeGeneratingRank(\bfZ)$ \\
        \end{tabular}
    \end{center}
    Moreover, for \restrictionCS and \restrictionLM restrictions, the score design is minimal when $\calF$ has non-empty relative interior.
\end{corollary}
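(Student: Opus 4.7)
The plan is to split the claim into two parts: (i) upper bounds on the sufficient dimensionality, which will follow almost immediately from \Cref{thm:1__f_linear__suff} combined with \Cref{thm:improvement_monotone__imply__optimality}, and (ii) necessity under the relative interior assumption, which will follow from \Cref{thm:1__f_linear__nonempty_relint_f_with_origin__necsuff}.

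For the upper bounds under \restrictionCS and \restrictionLM, I would first invoke \Cref{thm:1__f_linear__suff} to produce a score function $S : \bff \mapsto \bfA \bff$ satisfying improvement with the listed dimensionality. The key observation is that these designs are automatically monotone: under \restrictionCS, $\bfA$ has $1$-hot rows so each coordinate of $S$ simply reads off a coordinate of $\bff$, and under \restrictionLM monotonicity is part of the restriction itself. Applying \Cref{thm:improvement_monotone__imply__optimality} then yields optimality for free.

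For \restrictionL, I would \emph{not} use the $\ConeRank(\bfZ)$-dimensional design from \Cref{thm:1__f_linear__suff}, since an arbitrary linear $\bfA$ need not be monotone and therefore \Cref{thm:improvement_monotone__imply__optimality} does not apply. Instead, since \restrictionLM is strictly more restrictive than \restrictionL, any valid \restrictionLM design is automatically a valid \restrictionL design. Thus the $\ConeGeneratingRank(\bfZ)$-dimensional \restrictionLM design, which I have just shown satisfies both objectives, provides the desired upper bound for \restrictionL.

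For necessity under \restrictionCS and \restrictionLM when $\calF$ has non-empty relative interior, I would appeal directly to \Cref{thm:1__f_linear__nonempty_relint_f_with_origin__necsuff}, which asserts that the listed dimensionalities are already necessary just to satisfy improvement; they remain necessary once we additionally require optimality. The main conceptual subtlety, rather than a technical obstacle, is recognizing that monotonicity is what lets us piggyback on the improvement analysis; the price for \restrictionL is the gap between $\ConeRank(\bfZ)$ and $\ConeGeneratingRank(\bfZ)$, which is what buys us monotonicity. I would not try to close this gap here, since a matching lower bound for \restrictionL would require ruling out clever non-monotone designs that simultaneously satisfy optimality, and the corollary does not claim such a bound.
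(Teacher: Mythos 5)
Your proposal is correct and matches the paper's proof essentially step for step: you combine \Cref{thm:1__f_linear__suff} with \Cref{thm:improvement_monotone__imply__optimality} for \restrictionCS and \restrictionLM using the automatic monotonicity of those designs, reuse the \restrictionLM design for \restrictionL, and cite \Cref{thm:1__f_linear__nonempty_relint_f_with_origin__necsuff} for necessity. The only difference is that you spell out more explicitly why the \restrictionCS design is monotone and why the $\ConeRank$ design cannot be used under \restrictionL, both of which the paper leaves implicit.
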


\begin{remark}
    For simultaneously satisfying both objectives under \restrictionL restriction, dimensionality $k = \CR(\bfZ)$ is necessary, when $\calF$ has non-empty relative interior (\Cref{thm:1__f_linear__nonempty_relint_f_with_origin__necsuff}).
    \Cref{cor:1_2__f_linear__suff} states that $k = \CGR(\bfZ)$ is sufficient, and $\CGR \gg \CR$ in general (\Cref{example:matrix_rank__3d__circularcone}).
    We leave to future work to close this gap between necessary and sufficient dimensionality.
\end{remark}

\section{Algorithms to compute matrix ranks}
\label{sec:computing_matrix_ranks}
We present polynomial-time algorithms that, on input $\bfW$, find $\bfV$ that ``attain'' matrix ranks.

For designing scores that the satisfy improvement objective, \Cref{alg:1__design_strategy} calls these algorithms to find $\bfV$ that attain the matrix ranks of $\bfZ$, whose columns are an orthonormal basis of linear subspace $\calL$ associated with performance metrics $\calF$.
The number of rows of $\bfV$ correspond to values of the matrix ranks and the smallest dimensionality $k$ of the designed score function $S : \calF \to \bbR^k$.
The columns of $\bfV$ span the space of $\calL$.
As described in the proof of \Cref{thm:1__f_linear__suff}, we can recover matrix $\bfA$ from $\bfV = \bfA \bfZ$ to design $S : \bff \mapsto \bfA \bff$.

These algorithms to find $\bfV$ augment work in computational geometry on manipulating polyhedral cones~\cite{dula1998algorithm,lopez2011algorithm,wets1967algorithms,orourke1986optimal,toussaint1983solving}, and so are of independent technical interest.
We end this section by comparing $\ConeRank$ with $\NonNegativeRank$, which is extensively studied in the context of non-negative matrix factorization~\cite{gillis2020nonnegative,arora2012computing,vavasis2010complexity,donoho2003does,gillis2013fast,kumar2013fast}.

\subsection{The three matrix ranks}
\label{sec:matrix_rank_examples}

We first illustrate gaps between the three matrix ranks, and build intuition about geometric properties of polyhedral cones that are crucial for designing and understanding the algorithms in \Cref{sec:algorithm__compute_csr,sec:algorithm__compute_cgr,sec:algorithm__compute_cr}.

\begin{definition}[Three matrix ranks]
    \label{defn:matrix_ranks}
    For a matrix $\bfW \in \bbR^{m \times n}$ whose rows generate cone $\calK_W$, we define $\ConeSubsetRank \, (\CSR), \ConeGeneratingRank \,  (\CGR)$, and $\ConeRank  \, (\CR)$:
    \begin{align}
        \begin{split}
            \CSR(\bfW) &\coloneqq \min_k \set{ k \mid \calK_W = \calK_V \text{ for some } \bfV \in\bbR^{k \times n} \text{ such that } \bfV \subseteq \bfW }.\\
            \CGR(\bfW) &\coloneqq \min_k \set{ k \mid \calK_W = \calK_V \text{ for some } \bfV \in\bbR^{k \times n} }.\\
            \CR(\bfW) &\coloneqq \min_k \set{ k \mid \calK_W \subseteq \calK_V \text{ for some } \bfV \in\bbR^{k \times n} }.
        \end{split}
    \end{align}
\end{definition}

The following relationship between the matrix ranks directly follows from the definitions, which place increasingly fewer restrictions on $\bfV$.

\begin{proposition}
\label{prop:matrix_rank_relationships}
    For $\bfW \in \bbR^{m \times n}$, $m \ge \CSR(\bfW) \ge \CGR(\bfW) \ge \CR(\bfW) \ge \rank \bfW$.
\end{proposition}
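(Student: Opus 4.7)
The plan is to verify the four-term inequality by walking through the chain from left to right, each step following almost immediately from the definitions in \Cref{defn:matrix_ranks} together with the basic fact that a polyhedral cone is contained in the linear span of its generators.

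For $m \ge \CSR(\bfW)$, I would observe that $\bfW$ itself is a feasible choice of $\bfV$ in the definition of $\CSR(\bfW)$: trivially $\bfW \subseteq \bfW$ and $\calK_W = \calK_W$, so the minimum over feasible row counts is at most $m$. For $\CSR(\bfW) \ge \CGR(\bfW)$, note that the feasible set defining $\CGR(\bfW)$ is obtained by dropping the containment constraint $\bfV \subseteq \bfW$; hence every feasible $\bfV$ for $\CSR$ remains feasible for $\CGR$, and the minimum can only shrink. For $\CGR(\bfW) \ge \CR(\bfW)$, note that the equality $\calK_W = \calK_V$ required by $\CGR$ implies the containment $\calK_W \subseteq \calK_V$ required by $\CR$; again the feasible set grows, so the minimum cannot increase.

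The only step that uses anything beyond the nesting of feasible sets is $\CR(\bfW) \ge \rank \bfW$. Here I would argue as follows: suppose $\bfV \in \bbR^{k \times n}$ attains $\CR(\bfW)$, so $\calK_W \subseteq \calK_V$. Every row of $\bfW$ belongs to $\calK_W \subseteq \calK_V \subseteq \spanv(\text{rows of } \bfV)$, so $\spanv(\text{rows of }\bfW) \subseteq \spanv(\text{rows of }\bfV)$. Taking dimensions yields $\rank \bfW \le \rank \bfV \le k = \CR(\bfW)$.

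There is essentially no obstacle here; the proposition is a bookkeeping statement about how the three minimization problems relax one another, terminated by the elementary fact that a cone sits inside the span of its generators. The only mild subtlety is being explicit that $\calK_V \subseteq \spanv(\text{rows of }\bfV)$, which follows because nonnegative combinations are a special case of arbitrary linear combinations.
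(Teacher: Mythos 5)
Your proof is correct and matches the paper's (unstated) reasoning: the paper dismisses this proposition as following "directly from the definitions, which place increasingly fewer restrictions on $\bfV$," which is exactly your nesting-of-feasible-sets argument for the first three inequalities, and your span-dimension argument for $\CR(\bfW) \ge \rank \bfW$ is the standard way to close the chain. No discrepancies.
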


\begin{example}[Matrix ranks in $\bbR^2$]
\label{example:matrix_rank__2d}
    Assume $\bfW\in \bbR^{m \times n}$ with $n = 2$ columns.
    In each example, the $m$ rows of $\bfW$ are $m$ points in $\bbR^2$, generating cone $\calK_W \subseteq \bbR^2$.
    \begin{enumerate}[(a)]
        \item {$\CSR = \CGR = \CR = r =1$.} $\calK_W$ is a 1-dimensional \textit{pointed} cone\footnote{A cone $\calK$ is called \textit{pointed} if for all nonzero $\bfx \in \calK$, we have $-\bfx \notin \calK$. It is called \textit{non-pointed} otherwise.}.
        Only one point inside the cone is enough to generate it, and hence $\CSR(\bfW) = \CGR(\bfW) = \CR(\bfW) = 1$.
        In case of $\CSR(\bfW)$, the chosen point is restricted to be either the two points in $W$. 
     
        \item {$\CSR = \CGR = \CR > r=1$.} $\calK_W$ is a 1-dimensional linear subspace and hence a \textit{non-pointed} cone.
        Two points inside $\calK_W$ are sufficient and necessary to either generate or enclose $\calK_W$.
        Each of the two points should be on the opposite sides of the origin.
    
        \item {$\CSR = \CGR = \CR = r=2$.} $\calK_W$ is a 2-dimensional pointed cone.
        One of the points in $\bfW$ is inside the cone generated by the other two.
        The two points on the boundary of $\calK_W$ are necessary and sufficient to either generate or enclose $\calK_W$. 
    
        \item {$\CSR = \CGR = \CR > r=2$.} $\calK_W$ is a halfspace, a 2-dimensional non-pointed cone.
        This halfspace is the set addition of a 1-dimensional linear subspace and a 1-dimensional pointed cone orthogonal to the subspace.
        To generate or enclose $\calK_W$, it is sufficient and necessary to pick 2 points in the subspace, and a point for the halfspace direction.
    
        \item {$\CSR > \CGR = \CR > r=2$.} $\calK_W$ is the linear subspace $\bbR^2$.
        For $\CSR$, any choice of 3 points of $\bfW$ will generate a halfspace instead of $\calK_W$, and so $\CSR(\bfW) = m = 4$.
        But in the case of $\CGR$ and $\CR$, we can choose any 3 points in $\bbR^2$ to form a triangle containing origin.
        These 3 points generate and enclose $\calK_W$, and so $\CGR(\bfW) = \CR(\bfW) = 3$.
    \end{enumerate}
    \begin{figure}[!h]
        \centering
        \begin{subfigure}{0.18\textwidth}
            \begin{tikzpicture}[thick,scale=0.6]
    \coordinate (O) at (0, 0);
    \coordinate (right) at (2, 0);
    \coordinate (left) at (-2, 0);
    \coordinate (top) at (0, 2);
    \coordinate (bottom) at (0, -2);

    \draw[<->,color=gray,dashed] (left) -- (right);
    \draw[<->,color=gray,dashed] (bottom) -- (top);
    
    \coordinate (A) at (1, 0.5);
    \coordinate (B) at (2, 1);
    
    \coordinate (Aext) at (2, 1);

    \begin{scope}[ultra thick]
        \draw[->,color=red] (O) -- (Aext);
    \end{scope}

    \begin{scope}[font=\tiny]
    \node[circle,inner sep=2pt,fill=black,label=above:{$(2,1)$}] at (A) {};
    \node[circle,inner sep=2pt,fill=black,label=above:{$(4,2)$}] at (B) {};
    \end{scope}
\end{tikzpicture}
            \caption{}
        \end{subfigure}
        \hfill
        \begin{subfigure}{0.18\textwidth}
            \begin{tikzpicture}[thick,scale=0.6]
    \coordinate (O) at (0, 0);
    \coordinate (right) at (2, 0);
    \coordinate (left) at (-2, 0);
    \coordinate (top) at (0, 2);
    \coordinate (bottom) at (0, -2);

    \draw[<->,color=gray,dashed] (left) -- (right);
    \draw[<->,color=gray,dashed] (bottom) -- (top);
    
    \coordinate (A) at (1, 0.5);
    \coordinate (B) at (-1, -0.5);

    \coordinate (Aext) at (2, 1);
    \coordinate (Bext) at (-2, -1);
    
    \begin{scope}[ultra thick]
        \draw[color=red] (Bext) -- (Aext);
    \end{scope}

    \begin{scope}[font=\tiny]
        \node[circle,inner sep=2pt,fill=black,label=above:{$(2,1)$}] at (A) {};
        \node[circle,inner sep=2pt,fill=black,label=below:{$(-2,-1)$}] at (B) {};
    \end{scope}
\end{tikzpicture}
            \caption{}
        \end{subfigure}
        \hfill
        \begin{subfigure}{0.18\textwidth}
            \begin{tikzpicture}[thick,scale=0.6]
    \coordinate (O) at (0, 0);
    \coordinate (right) at (2, 0);
    \coordinate (left) at (-2, 0);
    \coordinate (top) at (0, 2);
    \coordinate (bottom) at (0, -2);

    \draw[<->,color=gray,dashed] (left) -- (right);
    \draw[<->,color=gray,dashed] (bottom) -- (top);
    
    \coordinate (A) at (1, 0.5);
    \coordinate (B) at (-1, 0.5);
    \coordinate (C) at (0.5, 1);
    
    \coordinate (Aext) at (2, 1);
    \coordinate (Bext) at (-2, 1);
    \coordinate (topright) at (2, 2);
    \coordinate (topleft) at (-2, 2);

    \begin{scope}[ultra thick]
        \draw[->,color=red] (O) -- (Aext);
        \draw[->,color=red] (O) -- (Bext);
    \end{scope}

    \draw[fill=Salmon,opacity=0.5,draw=none] (O) -- (Aext) -- (topright) -- (topleft) -- (Bext);

    \begin{scope}[font=\tiny]
    \node[circle,inner sep=2pt,fill=black,label=below:{$(2,1)$}] at (A) {};
    \node[circle,inner sep=2pt,fill=black,label=below:{$(-2,1)$}] at (B) {};
    \node[circle,inner sep=2pt,fill=black,label=above:{$(1,2)$}] at (C) {};
    \end{scope}
\end{tikzpicture}
            \caption{}
        \end{subfigure}
        \hfill
        \begin{subfigure}{0.18\textwidth}
            \begin{tikzpicture}[thick,scale=0.6]
    \coordinate (O) at (0, 0);
    \coordinate (right) at (2, 0);
    \coordinate (left) at (-2, 0);
    \coordinate (top) at (0, 2);
    \coordinate (bottom) at (0, -2);

    \draw[<->,color=gray,dashed] (left) -- (right);
    \draw[<->,color=gray,dashed] (bottom) -- (top);
    
    \coordinate (A) at (1, 0.5);
    \coordinate (B) at (-1, -0.5);
    \coordinate (C) at (0.5, 1);
    \coordinate (D) at (-1, 1);

    \coordinate (Aext) at (2, 1);
    \coordinate (Bext) at (-2, -1);
    \coordinate (topright) at (2, 2);
    \coordinate (topleft) at (-2, 2);

    \begin{scope}[ultra thick]
        \draw[color=red] (Bext) -- (Aext);
    \end{scope}

    \draw[fill=Salmon,opacity=0.5,draw=none] (Bext) -- (Aext) -- (topright) -- (topleft);

    \begin{scope}[font=\tiny]
    \node[circle,inner sep=2pt,fill=black,label=below:{$(2,1)$}] at (A) {};
    \node[circle,inner sep=2pt,fill=black,label=below:{$(-2,-1)$}] at (B) {};
    \node[circle,inner sep=2pt,fill=black,label=above:{$(1,2)$}] at (C) {};
    \node[circle,inner sep=2pt,fill=black,label=above:{$(-2,2)$}] at (D) {};
    \end{scope}
\end{tikzpicture}
            \caption{}
        \end{subfigure}
        \hfill
        \begin{subfigure}{0.18\textwidth}
            \begin{tikzpicture}[thick,scale=0.6]
    \coordinate (O) at (0, 0);
    \coordinate (right) at (2, 0);
    \coordinate (left) at (-2, 0);
    \coordinate (top) at (0, 2);
    \coordinate (bottom) at (0, -2);

    \draw[<->,color=gray,dashed] (left) -- (right);
    \draw[<->,color=gray,dashed] (bottom) -- (top);
    
    \coordinate (A) at (1, 1);
    \coordinate (B) at (-1, 1);
    \coordinate (C) at (-1, -1);
    \coordinate (D) at (1, -1);
    
    \coordinate (topright) at (2, 2);
    \coordinate (topleft) at (-2, 2);
    \coordinate (bottomright) at (2, -2);
    \coordinate (bottomleft) at (-2, -2);


    \draw[fill=Salmon,opacity=0.5,draw=none] (topright) -- (topleft) -- (bottomleft) -- (bottomright);

    \begin{scope}[font=\tiny]
    \node[circle,inner sep=2pt,fill=black,label=above:{$\bfw_2 (1,1)$}] at (A) {};
    \node[circle,inner sep=2pt,fill=black,label=above:{$\bfw_1 (-1,1)$}] at (B) {};
    \node[circle,inner sep=2pt,fill=black,label=below:{$\bfw_4 (-1,-1)$}] at (C) {};
    \node[circle,inner sep=2pt,fill=black,label=below:{$\bfw_3 (1,-1)$}] at (D) {};
    \end{scope}
\end{tikzpicture}
            \caption{}
        \end{subfigure}
    \end{figure}
\end{example}

\paragraph*{Leveraging pointedness of $\calK_W$}

\Cref{example:matrix_rank__2d,example:matrix_rank__3d__circularcone} introduce a key property of $\calK_W$, \textit{pointedness}.
We find that pointedness of $\calK_W$ enables us to intuitively compute the matrix ranks.
Recall that cone $\calK_W$ is non-pointed when there exists nonzero $\bfx \in \calK_W$ such that $-\bfx \in \calK_W$.
Thus, as proved in \Cref{lemma:alg__is_cone_pointed}, $\calK_W$ is non-pointed if and only if \ref{eqn:LP__pointed} is feasible.
\begin{align}
\label{eqn:LP__pointed}
\tag{LP-1}
    \min_{\bflambda \in \bbR^m} 1 \quad \text{s.t. } \bfzero = \bflambda \bfW, \quad \bflambda \ge \bfzero, \quad \bflambda \cdot \bfone = 1.
\end{align}

\paragraph*{Decomposing non-pointed $\calK_W$}
The maximal linear subspace inside a cone $\calK$ is called its lineality space, $\lineality(\calK) \coloneqq \calK \cap (-\calK)$, which is a non-trivial linear subspace when $\calK$ is non-pointed.
So when $\calK_W$ is non-pointed, we compute the matrix ranks by first decomposing $\calK_W$ into its lineality space and a pointed remnant (see \Cref{fig:cone_decomposition_examples} for examples).
We denote $\calL = \lineality(\calK_W)$ and $\ell = \dim \calL$ throughout this section.

\begin{lemma}[{Schrijver~\cite[Sec.~8.2]{schrijver1998theory}}]
\label{lemma:restated__cone_decomposition}
    A cone $\calK$ can be uniquely decomposed as $\calK = \calL + \calK_P$, where $\calL = \lineality(\calK) \coloneqq \calK \cap (-\calK)$ is the maximal linear subspace inside $\calK$, and $\calK_P = \calL^\perp \cap \calK$ is a pointed cone.
\end{lemma}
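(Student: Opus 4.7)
The plan is to establish the decomposition by verifying each piece directly. First, I would check that $\calL = \calK \cap (-\calK)$ is indeed a linear subspace: closure under addition follows because $\calK$ is closed under addition (and so is $-\calK$), and closure under arbitrary real scalar multiplication reduces to the observation that if $\bfx \in \calL$ then both $\bfx$ and $-\bfx$ lie in $\calK$, so $\lambda \bfx$ lies in $\calK$ for every $\lambda \in \bbR$ by the cone axioms. Maximality is similarly routine: for any linear subspace $\calM \subseteq \calK$, every $\bfx \in \calM$ satisfies $-\bfx \in \calM \subseteq \calK$, so $\calM \subseteq \calK \cap (-\calK) = \calL$.

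Next I would address $\calK_P = \calL^\perp \cap \calK$. Since it is the intersection of a convex cone with a linear subspace, it is a convex cone. For pointedness, take $\bfx \in \calK_P \cap (-\calK_P)$. Then $\bfx, -\bfx \in \calK$, so $\bfx \in \calL$; but also $\bfx \in \calL^\perp$, forcing $\bfx = \bfzero$.

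The main step, and the one I expect to take the most care, is showing $\calK = \calL + \calK_P$. The containment $\calL + \calK_P \subseteq \calK$ follows from $\calL, \calK_P \subseteq \calK$ and closure of $\calK$ under addition. For the reverse, I would take any $\bfk \in \calK$ and write the orthogonal decomposition $\bfk = \bfk_\calL + \bfk_{\calL^\perp}$ with respect to the subspace $\calL$. The key trick is that $\bfk_\calL \in \calL \subseteq \calK$ implies $-\bfk_\calL \in \calL \subseteq \calK$, and hence $\bfk_{\calL^\perp} = \bfk + (-\bfk_\calL) \in \calK$ by closure under addition. Since $\bfk_{\calL^\perp}$ also lies in $\calL^\perp$ by construction, it belongs to $\calK_P$, yielding the required decomposition.

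Finally, uniqueness is immediate from $\calL \cap \calL^\perp = \set{\bfzero}$: if $\bfl_1 + \bfp_1 = \bfl_2 + \bfp_2$ with $\bfl_i \in \calL$ and $\bfp_i \in \calK_P \subseteq \calL^\perp$, then $\bfl_1 - \bfl_2 = \bfp_2 - \bfp_1$ lies in $\calL \cap \calL^\perp$, forcing equality componentwise. The only subtle point in the whole argument is the use of the orthogonal projection in the existence step, which implicitly leans on $\calL$ being a linear subspace (so that $-\bfk_\calL \in \calK$) rather than merely a pointed cone; this is why the decomposition is stated in terms of $\calL$ and its orthogonal complement.
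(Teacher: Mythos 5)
Your proof is correct and complete. The paper does not prove this lemma itself---it cites Schrijver's text directly---so there is no in-paper argument to compare against. Your argument is the standard one: you verify $\calL$ is a linear subspace and maximal, verify $\calK_P$ is a pointed cone, establish $\calK = \calL + \calK_P$ by orthogonally splitting $\bfk = \bfk_\calL + \bfk_{\calL^\perp}$ and using that $-\bfk_\calL \in \calL \subseteq \calK$ together with closure of the convex cone under addition, and then get element-wise uniqueness from $\calL \cap \calL^\perp = \set{\bfzero}$. Two small remarks worth making explicit if you were to write this up: (i) the closure-under-addition steps rely on $\calK$ being a \emph{convex} cone, which matches the paper's convention (see its preliminaries), and (ii) your parenthetical observation at the end---that the existence step would break if $\calL$ were only a pointed cone rather than a subspace, since you need $-\bfk_\calL \in \calK$---is exactly the right thing to flag; it is what makes the lineality space the correct object to project onto.
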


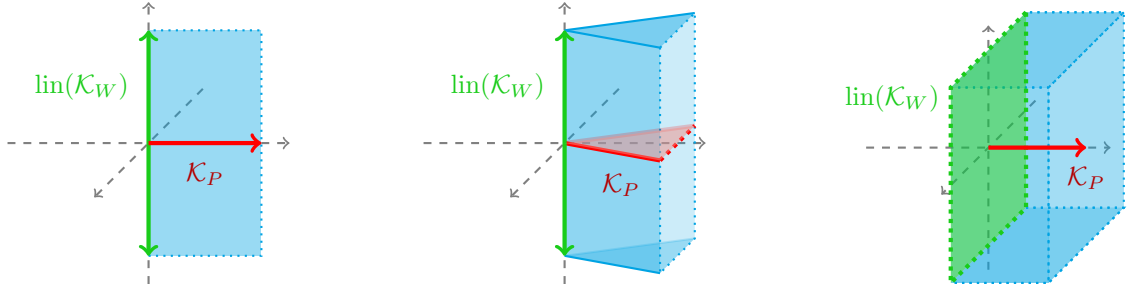
\begin{figure}[!htbp]
    \centering
    \begin{subfigure}{0.33\textwidth}
        \centering
\begin{tikzpicture}[thick,scale=1.5]
    \coordinate (O) at (0, 0, 0);

    \draw[->,color=gray,dashed] (-1.25,0,0) -- (1.25,0,0);
    \draw[->,color=gray,dashed] (0,-1.25,0) -- (0,1.25,0);
    \draw[->,color=gray,dashed] (0,0,-1.25) -- (0,0,1.25);
    
    \draw[fill=Cerulean,opacity=0.4,draw=none] (0,-1,0) -- (1,-1,0) -- (1,1,0) -- (0,1,0);

    \begin{scope}[draw=Cerulean]
        \draw[dotted] (0,-1,0) -- (1,-1,0) -- (1,1,0) -- (0,1,0);
    \end{scope}
    
    \draw[<->,color=mygreen,ultra thick] (0,-1,0) -- (0,1,0);
    \draw[->,color=red,ultra thick] (O) -- (1,0,0);
    
    \node[text=mygreen,label={[text=mygreen]left:$\lineality(\calK_W)$}] at (0,0.5,0) {};
    \node[text=red,label={[text=myred]below:$\calK_P$}] at (0.5,0,0) {};
\end{tikzpicture}
    \end{subfigure}\hfill
    \begin{subfigure}{0.33\textwidth}
        \centering
\begin{tikzpicture}[thick,scale=1.5]
    \coordinate (O) at (0, 0, 0);

    \draw[->,color=gray,dashed] (-1.25,0,0) -- (1.25,0,0);
    \draw[->,color=gray,dashed] (0,-1.25,0) -- (0,1.25,0);
    \draw[->,color=gray,dashed] (0,0,-1.25) -- (0,0,1.25);

    \draw[fill=Cerulean,opacity=0.2,draw=none] (0,1,0) -- (1,1,-0.4) -- (1,-1,-0.4) -- (0,-1,0);
    \draw[fill=Cerulean,opacity=0.4,draw=none] (0,-1,0) -- (1,-1,-0.4) -- (1,-1,0.4);
    \draw[fill=Cerulean,opacity=0.4,draw=none] (0,1,0) -- (1,1,-0.4) -- (1,1,0.4);
    \draw[fill=Cerulean,opacity=0.3,draw=none] (0,1,0) -- (1,1,0.4) -- (1,-1,0.4) -- (0,-1,0);

    \begin{scope}[draw=Cerulean]
        \draw[dotted] (1,1,0.4) -- (1,1,-0.4) -- (1,-1,-0.4) -- (1,-1,0.4) -- cycle;
        \draw (0,1,0) -- (1,1,0.4);
        \draw (0,1,0) -- (1,1,-0.4);
        \draw (0,-1,0) -- (1,-1,0.4);
        \draw[opacity=0.2] (0,-1,0) -- (1,-1,-0.4);
    \end{scope}
    \begin{scope}[ultra thick,draw=red]
        \draw[dotted] (1,0,0.4) -- (1,0,-0.4);
        \draw (O) -- (1,0,0.4);
        \draw[opacity=0.2] (O) -- (1,0,-0.4);
    \end{scope}
    
    \draw[<->,color=mygreen,ultra thick] (0,-1,0) -- (0,1,0);
    \draw[fill=Salmon,opacity=0.5,draw=none] (O) -- (1,0,-0.4) -- (1,0,0.4);

    \node[text=mygreen,label={[text=mygreen]left:$\lineality(\calK_W)$}] at (0,0.5,0) {};
    \node[text=red,label={[text=myred]below:$\calK_P$}] at (0.5,-0.1,0) {};
\end{tikzpicture}
    \end{subfigure}\hfill
    \begin{subfigure}{0.33\textwidth}
        \centering
\begin{tikzpicture}[thick,scale=1.3]
    \coordinate (O) at (0, 0, 0);

    \draw[->,color=gray,dashed] (-1.25,0,0) -- (1.25,0,0);
    \draw[->,color=gray,dashed] (0,-1.25,0) -- (0,1.25,0);
    \draw[->,color=gray,dashed] (0,0,-1.25) -- (0,0,1.25);

    \draw[fill=Cerulean,opacity=0.2,draw=none] (0,-1,-1) -- (1,-1,-1) -- (1,1,-1) -- (0,1,-1);
    \draw[fill=Cerulean,opacity=0.4,draw=none] (0,-1,-1) -- (1,-1,-1) -- (1,-1,1) -- (0,-1,1);
    \draw[fill=Cerulean,opacity=0.4,draw=none] (0,1,-1) -- (0,1,1) -- (1,1,1) -- (1,1,-1);
    \draw[fill=Cerulean,opacity=0.3,draw=none] (0,1,1) -- (1,1,1) -- (1,-1,1) -- (0,-1,1);
    \draw[fill=Cerulean,opacity=0.2,draw=none] (1,1,-1) -- (1,1,1) -- (1,-1,1) -- (1,-1,-1);

    \begin{scope}[draw=Cerulean,dotted]
        \draw (0,-1,-1) -- (1,-1,-1) -- (1,1,-1) -- (0,1,-1);
        \draw (0,-1,-1) -- (1,-1,-1) -- (1,-1,1) -- (0,-1,1);
        \draw (0,1,1) -- (1,1,1) -- (1,1,-1) -- (0,1,-1);
        \draw (0,1,1) -- (1,1,1) -- (1,-1,1) -- (0,-1,1);
        \draw (1,1,-1) -- (1,1,1) -- (1,-1,1) -- (1,-1,-1);
    \end{scope}
    \begin{scope}[ultra thick,dotted,draw=mygreen]
        \draw (0,1,-1) -- (0,1,1) -- (0,-1,1) -- (0,-1,-1) -- cycle;
    \end{scope}

    \draw[fill=mygreen,opacity=0.5,draw=none] (0,1,-1) -- (0,1,1) -- (0,-1,1) -- (0,-1,-1);
    \draw[->,color=red,ultra thick] (O) -- (1,0,0);

    \node[text=mygreen,label={[text=mygreen]left:$\lineality(\calK_W)$}] at (-0.3,0.5,0) {};
    \node[text=red,label={[text=myred]below:$\calK_P$}] at (1,0,0) {};
\end{tikzpicture}
    \end{subfigure}
    \caption{Decomposing non-pointed cones $\calK_W$ into lineality space $\lineality(\calK_W)$ and pointed $\calK_P$.}
    \label{fig:cone_decomposition_examples}
\end{figure}

\subsection{Computing $\ConeSubsetRank$}
\label{sec:algorithm__compute_csr}

This is the minimum number of rows of $\bfW$ that generate the cone $\calK_W$.

\paragraph*{Pointed $\calK_W$}
When $\calK_W$ is pointed, we compute $\CSR(\bfW)$ using a greedy algorithm.
We first make two observations.
First, the rows of $\bfV^\ast$ attaining $\CSR(\bfW)$ are positively independent, i.e., no row $\bfv_i$ of $\bfV^\ast$ can be expressed as a nonnegative combination of other rows of $\bfV^\ast$ (see \Cref{claim:csr__property__pos_independent}).
Second, when $\calK_W$ is pointed, any two matrices $\bfU, \bfV \subseteq \bfW$ with positively independent rows and $\calK_U = \calK_V = \calK_W$ have equal number of rows (see \Cref{lemma:csr__property__pointed_positive_basis_has_same_size}).

We sequentially remove rows of $\bfW$ that can be expressed as a nonnegative combination of other rows (we use a Linear Program to check this condition).
Removing such rows does not alter the generated cone $\calK_W$, and we obtain matrix $\bfV \subseteq \bfW$ with positively independent rows and $\calK_W = \calK_V$ (proved in \Cref{lemma:alg__csr__pointed}).
Both $\bfV^\ast$ and the output $\bfV$ have rows chosen from $\bfW$, generate $\calK_W$, and have positively independent rows.
Hence, $\bfV$ attains $\CSR(\bfW)$.

\inbox{%
\textbf{Sketch of \textsc{GetCSR-Pointed} (details in \Cref{alg:csr__pointed})}
\begin{algorithmic}[1]
    \State Initialize $\bfV = \bfW$
    \While{there exists row $\bfv_i$ of $\bfV$ such that $\bfv_i \in \cone(V \setminus \set{\bfv_i} )$}
    \Comment{Use Alg.~\ref{alg:is_in_cone}}
        \State Remove row $\bfv_i$ from $\bfV$
    \EndWhile
    \State \Return $\bfV$
\end{algorithmic}
}

\paragraph*{Non-pointed $\calK_W$}

When $\calK_W$ is non-pointed, it has decomposition $\calK_W = \calL + \calK_P$ where the lineality space $\calL$ is a non-trivial subspace.
Denote with $\bfWinL$ the rows of $\bfW$ lying inside $\calL$, and with $\bfWnotinL$ those lying outside.
\Cref{lemma:csr__property_of_decomposition} shows that any $\bfV \in \bbR^{k \times n}$ with $\bfV \subseteq \bfW$ and $\calK_W = \calK_V$ has number of rows $k \ge \CSR(\bfWinL) + \CSR(\tilde \bfW)$, where $\tilde \bfW$ are projections of rows of $\bfWnotinL$ onto orthogonal complement $\calL^\perp$.
This is because $\bfWinL$ generates $\calL$ (see \Cref{lemma:cone__lineality_space__lineal_points}) and $\tilde \bfW$ generates the pointed cone $\calK_P$ (see \Cref{lemma:cone__lineality_space__nonlineal_points}).

Following this observation, we find $\bfV$ that attains $\CSR(\bfW)$ by decomposing $\calK_W$ and finding matrices that attain $\CSR$ of $\bfWinL$ and $\tilde \bfW$ separately.
First, we find $\bfVinL$ that attains $\CSR(\bfWinL)$ using \Cref{alg:csr__subspace}, which enumerates over subsets of $\bfWinL$ that positively span $\calL$.
This enumeration takes $\poly(\ell,n) \cdot m^{O(\ell)}$ time where $\ell = \dim \calL$ (proved in \Cref{lemma:alg__csr__subspace}).
Second, as $\tilde \bfW$ generates the pointed cone $\calK_P$, we find $\tilde \bfV$ that attains $\CSR(\tilde \bfW)$ using the greedy algorithm for pointed cones.
And then we recover rows $\bfVnotinL \subseteq \bfWnotinL$, whose projections are rows of $\tilde \bfV$.
Together, $\bfV = [\bfVinL; \bfVnotinL]$ attains $\CSR(\bfW)$, proved in \Cref{lemma:alg__csr}.

\inbox{%
\textbf{Sketch of \textsc{GetCSR} when $\calK_W$ is non-pointed (details in \Cref{alg:csr})}
\begin{algorithmic}[1]
    \State Decompose $\calK_W = \calL + \calK_P$
    \Comment{Use Alg.~\ref{alg:cone_minkowski_decomposition}}
    \State Partition $W$ into $\WinL = \set{ \bfw_i \in W \mid \bfw_i \in \calL }$ and $\WnotinL = \set{ \bfw_i \in W \mid \bfw_i \notin \calL }$.
    \State Find $\bfVinL$ that attains $\CSR(\bfWinL)$
    \Comment{Use Alg.~\ref{alg:csr__subspace}}
    \State Find $\tilde \bfV$ that attains $\CSR(\tilde \bfW)$ where $\tilde W = \proj_{\calL^\perp} (\WnotinL)$
    \Comment{Use Alg.~\ref{alg:csr__pointed}}
    \State Recover $\bfVnotinL \subseteq \bfWnotinL$ such that $\tilde V$ are projections of $\VnotinL$
    \State \Return $\bfV = [\bfVinL; \bfVnotinL]$
\end{algorithmic}
}

\subsection{Computing $\ConeGeneratingRank$}
\label{sec:algorithm__compute_cgr}

This is the minimum number of vectors in $\bbR^n$, also called \textit{frame}, that generate the cone $\calK_W$.

\paragraph*{Pointed $\calK_W$}
It turns out that when $\calK_W$ is pointed, the frame of $\calK_W$ are actually vectors in $W$.
This was observed by Border~\cite[Prop.~26.5.4]{border2022convex} (restated in \Cref{lemma:restated__cone_pointed_extreme_rays_are_subset}) and Nemirovski~\cite[Cor.~2.4.2]{nemirovski2023linear}.
We thus get the following result, proved in \Cref{lemma_w_proof:cgr__pointed}.
As an immediate consequence, when $\calK_W$ is pointed, we can find $\bfV$ that attains $\CGR(\bfW)$ using the algorithm for $\ConeSubsetRank$ in \Cref{sec:algorithm__compute_csr}.

\begin{lemma}
\label{lemma:cgr__pointed}
    When $\calK_W$ is pointed, $\ConeGeneratingRank(\bfW) = \ConeSubsetRank(\bfW)$.
\end{lemma}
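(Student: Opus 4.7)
The plan is to leverage the just-restated structural result (Border's Proposition 26.5.4) that the extreme rays of a finitely generated pointed cone are (positive scalings of) elements of the generating set. The inequality $\CGR(\bfW) \le \CSR(\bfW)$ is immediate from \Cref{prop:matrix_rank_relationships}, so the content of the lemma is the reverse inequality under the pointedness assumption.

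First, I would let $\bfV^\ast \in \bbR^{\CGR(\bfW) \times n}$ attain $\CGR(\bfW)$, so that $\calK_{V^\ast} = \calK_W$, and note that this cone is pointed by hypothesis and polyhedral, so it has a finite set $E$ of extreme rays (identifying rays with any nonzero vector along them). Applying Border's result to the two generating sets $V^\ast$ and $W$, each extreme ray in $E$ must be a positive scaling of some row of $\bfV^\ast$ and of some row of $\bfW$.

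Next, I would argue that any generating matrix of a pointed polyhedral cone must contain a positive scaling of every extreme ray, which is exactly the positive-independence argument used in the $\CSR$ algorithm (see the \Cref{claim:csr__property__pos_independent} type statement): an extreme ray $\bfe$ cannot be written as $\bfe = \sum_j \lambda_j \bfx_j$ with $\lambda_j \ge 0$ and $\bfx_j \in \calK_W$ unless some $\bfx_j$ is itself a positive multiple of $\bfe$. Combined with the previous step, this yields $\CGR(\bfW) \ge |E|$, and since keeping one positively-scaled representative per extreme ray among the rows of $\bfV^\ast$ already suffices to generate $\calK_W$, we have $\CGR(\bfW) = |E|$.

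Finally, since Border's result gives that $E$ is contained (up to positive scaling) in the rows of $\bfW$, I can select a submatrix $\bfV \subseteq \bfW$ consisting of exactly one row per extreme ray in $E$. This $\bfV$ has $|E|$ rows and satisfies $\calK_V = \calK_W$, so $\CSR(\bfW) \le |E| = \CGR(\bfW)$, completing the proof. The main (and essentially only) thing to be careful about is the positive-independence step which ensures every extreme ray must truly appear in any minimal generating set; the rest is bookkeeping on top of the cited structural lemma.
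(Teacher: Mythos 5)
Your proof is correct and takes essentially the same route as the paper's: both rely on Border's Proposition 26.5.4 to identify the extreme rays of the pointed cone as (up to positive scaling) a subset of any generating set, and then argue that these extreme rays both lower-bound $\CGR$ and upper-bound $\CSR$, with the remaining inequality $\CGR \le \CSR$ coming from \Cref{prop:matrix_rank_relationships}. The paper simply states the two inequalities a bit more tersely; your positive-independence remark makes explicit the reason extreme rays must appear in every generating set.
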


\paragraph*{Non-pointed $\calK_W$}
Let $\calK_W = \calL + \calK_P$ be the decomposition of $\calK_W$ with $\calL \neq \set{\bfzero}$.
Analogous to $\ConeSubsetRank$, we observe in \Cref{lemma:cgr__property_of_decomposition} that any $\bfV \in \bbR^{k \times n}$ with $\calK_W = \calK_V$ has number of rows $k \ge \CGR(\bfWinL) + \CGR(\tilde \bfW)$.
Here $\bfWinL$ are rows of $\bfW$ lying inside $\calL$ and $\tilde \bfW$ are projections onto $\calL^\perp$ of those lying outside.

Similar to the approach for $\ConeSubsetRank$, we find $\bfV$ that attains $\CGR(\bfW)$ by decomposing $\calK_W$ and finding matrices that attain $\CGR$ of $\bfWinL$ and $\tilde \bfW$ separately.
Note that rows of $\bfWinL$ generate the $\ell$-dimensional lineality space $\calL$.
A classical result by Davis~\cite{davis1954theory} (restated in \Cref{lemma:conerank__subspace_sufficient}) states that there exist $\ell+1$ vectors to generate $\calL$: choose an orthonormal basis $\bfz_1, \dots, \bfz_\ell$ of $\calL$ and let $\bfz_0 = - (\bfz_1 + \cdots + \bfz_\ell)$ to generate $\calL$.
In fact, $\ell+1$ vectors are necessary to do so, as shown in \Cref{lemma:conerank__nonpointed_necessary}.
Hence, $\tilde \bfZ = [\bfz_0; \bfz_1; \dots; \bfz_\ell]$ attains $\CGR(\bfWinL)$.
On the other hand, as $\tilde W = \proj_{\calL^\perp} (\WnotinL)$ generates the pointed cone $\calK_P$, \Cref{lemma:cgr__pointed} states that $\CGR(\tilde \bfW) = \CSR(\tilde \bfW)$.
So we find $\tilde \bfV$ that attains $\CGR(\tilde \bfW)$ using the greedy algorithm for $\ConeSubsetRank$ in \Cref{sec:algorithm__compute_csr}.
Together, $\bfV = [\tilde \bfZ, \tilde \bfV]$ attains $\CGR(\bfW)$, proved in \Cref{lemma:alg__cgr}.

\inbox{%
\textbf{Sketch of \textsc{GetCGR} when $\calK_W$ is non-pointed (details in \Cref{alg:cgr})}
\begin{algorithmic}[1]
    \State Decompose $\calK_W = \calL + \calK_P$ where $\ell = \dim \calL$
    \Comment{Use Alg.~\ref{alg:cone_minkowski_decomposition}}
    \State Partition $W$ into $\WinL = \set{ \bfw_i \in W \mid \bfw_i \in \calL }$ and $\WnotinL = \set{ \bfw_i \in W \mid \bfw_i \notin \calL }$.
    \State Find $\tilde \bfZ$ with $\ell+1$ rows that attains $\CGR(\bfWinL)$
    \State Find $\tilde \bfV$ that attains $\CSR(\tilde \bfW)$ where $\tilde W = \proj_{\calL^\perp} (\WnotinL)$
    \Comment{Use Alg.~\ref{alg:csr__pointed}}
    \State \Return $\bfV = [\tilde \bfZ; \tilde \bfV]$
\end{algorithmic}
}

\subsection{Computing $\ConeRank$}
\label{sec:algorithm__compute_cr}

This is the minimum number of vectors in $\bbR^n$ whose cone \textit{encloses} the cone $\calK_W$.

\paragraph*{Pointed $\calK_W$}
Let $\calK_W$ be an $r$-dimensional cone, i.e., $r = \rank \bfW$.
As vectors $W$ generate a pointed cone, the convex hull $\conv(W)$ does not contain the origin, as shown in \Cref{lemma:separating_hyperplane_theorem_lower_dim}.
So we can find an $(r-1)$-dimensional hyperplane $\bfw^\ast \cdot \bfx = b$ that strictly separates $\conv(W)$ from the origin.
As $\conv(W)$ and origin lie on opposite sides of this hyperplane, we can positively scale each vector $\bfw_i \in W$ to get $\bfu_i \propto \bfw_i$ on the hyperplane.
A classical result by Gale~\cite{gale1953inscribing} (restated in \Cref{lemma:restated__conerank__pointed__convex_simplex_cover}) says that the $r$-dimensional $\conv(U)$ can be enclosed within an $(r-1)$-simplex in the hyperplane.
We prove in \Cref{lemma:alg__cr__pointed} that the cone generated by the $r$ vertices of this $(r-1)$-simplex encloses $\calK_W$.
\Cref{example:matrix_rank__3d__circularcone} illustrated this construction.

\inbox{%
\textbf{Sketch of \textsc{GetCR-Pointed} (details in \Cref{alg:cr__pointed})}
\begin{algorithmic}[1]
    \State Find $(r-1)$-dimensional (strictly) separating hyperplane between $\conv(W)$ and origin
    \State Scale each $\bfw_i$ to get $\bfu_i \propto \bfw_i$ on the hyperplane
    \State In the hyperplane, find $(r-1)$-simplex enclosing $\conv(U)$
    \State \Return $\bfV$ whose rows are $r$ vertices of the simplex
\end{algorithmic}
}

\paragraph*{Non-pointed $\calK_W$}
After decomposing $\calK_W = \calL + \calK_P$ using \Cref{alg:cone_minkowski_decomposition}, we partition rows of $\bfW$ into $\bfWinL$ and $\bfWnotinL$ so that rows of $\bfWinL$ positively span the $\ell$-dimensional $\calL$.
We find $\ell+1$ vectors to generate $\calL$ by selecting an orthonormal basis $\bfz_1, \dots, \bfz_{\ell}$ of $\calL$ and setting $\bfz_0 = - (\bfz_1 + \cdots + \bfz_\ell)$.
\Cref{lemma:conerank__nonpointed_necessary} states that $\ell+1$ vectors are necessary to enclose $\calK_W$.
Hence, $\tilde \bfZ = [\bfz_0; \bfz_1; \dots; \bfz_\ell]$ attains $\CR(\bfWinL)$.
On the other hand, $\tilde W = \proj_{\calL^\perp} (\WnotinL)$ generates the pointed cone $\calK_P$.
Since $\calK_W$ is an $r$-dimensional cone, $\calL$ is $\ell$-dimensional, and $\calK_P = \calL^\perp \cap \calK_W$, the cone $\calK_P$ is an $(r-\ell)$-dimensional pointed cone.
So we find $\tilde \bfV$ that attains $\CR(\tilde \bfW)$ using \Cref{alg:cr__pointed} sketched above.
Note that $\tilde \bfV$ has $r-\ell$ rows.
Together, $\bfV = [\tilde \bfZ; \tilde \bfV]$ has $r+1$ rows and attains $\CR(\bfW)$, proved in \Cref{lemma:alg__cr}.

\inbox{%
\textbf{Sketch of \textsc{GetCR} when $\calK_W$ is non-pointed (details in \Cref{alg:cr})}
\begin{algorithmic}[1]
    \State Decompose $\calK_W = \calL + \calK_P$ where $\ell = \dim \calL$
    \Comment{Use Alg.~\ref{alg:cone_minkowski_decomposition}}
    \State Partition $W$ into $\WinL = \set{ \bfw_i \in W \mid \bfw_i \in \calL }$ and $\WnotinL = \set{ \bfw_i \in W \mid \bfw_i \notin \calL }$.
    \State Find $\tilde \bfZ$ with $\ell+1$ rows that attains $\CR(\bfWinL)$
    \State Find $\tilde \bfV$ that attains $\CR(\tilde \bfW)$ where $\tilde W = \proj_{\calL^\perp} (\WnotinL)$
    \Comment{Use Alg.~\ref{alg:cr__pointed}}
    \State \Return $\bfV = [\tilde \bfZ; \tilde \bfV]$
\end{algorithmic}
}

\paragraph*{Pointedness exactly determines $\ConeRank$}
When $\calK_W$ is pointed, \Cref{alg:cr__pointed} finds $\bfV \in \bbR^{r \times n}$ such that $\calK_W \subseteq \calK_V$ where $r = \rank \bfW$.
When non-pointed, \Cref{alg:cr} finds $\bfV \in \bbR^{(r+1) \times n}$ such that $\calK_W \subseteq \calK_V$.
In fact, these algorithms find $\bfV$ that attains $\ConeRank(\bfW)$ as shown below.

\begin{lemma}
\label{lemma:cr}
    Let $\calK_W$ be an $r$-dimensional cone.
    That is, $\dim \calK_W = \rank \bfW = r$.
    If $\calK_W$ is pointed, then $\ConeRank(\bfW) = r$.
    If $\calK_W$ is non-pointed, then $\ConeRank(\bfW) = r+1$.
\end{lemma}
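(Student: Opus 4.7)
The plan is to split the statement into an upper bound and a lower bound in each case, and to route the upper bounds through the algorithmic constructions that have already been analyzed.

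For the upper bounds, I would directly invoke the correctness lemmas for Algorithms~\ref{alg:cr__pointed} and~\ref{alg:cr}. When $\calK_W$ is pointed, \Cref{lemma:alg__cr__pointed} produces $\bfV \in \bbR^{r \times n}$ such that $\calK_W \subseteq \calK_V$, so $\ConeRank(\bfW) \le r$. When $\calK_W$ is non-pointed, \Cref{lemma:alg__cr} produces $\bfV \in \bbR^{(r+1) \times n}$ with the same enclosing property, yielding $\ConeRank(\bfW) \le r+1$. Thus no new construction is needed here; the work has been done in the preceding subsection.

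For the pointed lower bound, I would simply apply \Cref{prop:matrix_rank_relationships}, which gives $\ConeRank(\bfW) \ge \rank \bfW = r$. For the non-pointed lower bound, the argument is by contradiction: suppose some $\bfV \in \bbR^{r \times n}$ satisfies $\calK_W \subseteq \calK_V$. Since $\calK_W \subseteq \spanv(\bfV)$ and $\dim \calK_W = r$, the rows $\bfv_1,\dots,\bfv_r$ of $\bfV$ are linearly independent and $\spanv(\bfV)$ is $r$-dimensional. The key observation is that a cone generated by linearly independent vectors must be pointed: if $\bfx = \sum_i \lambda_i \bfv_i$ with $\lambda_i \ge 0$ and $-\bfx = \sum_i \mu_i \bfv_i$ with $\mu_i \ge 0$ both lie in $\calK_V$, then uniqueness of representation in the basis $\{\bfv_i\}$ forces $\lambda_i = -\mu_i$, so $\lambda_i = \mu_i = 0$ for all $i$, giving $\bfx = \bfzero$. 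Hence $\calK_V$ would be pointed, but then its subset $\calK_W$ would also be pointed, contradicting the assumption.

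The only real obstacle is the non-pointed lower bound, and specifically the simplicial-cone-is-pointed step. However, this is exactly the content of \Cref{lemma:conerank__nonpointed_necessary} that was already established earlier in the section (and was used in the discussion of \textsc{GetCR} on non-pointed cones), so I would simply cite that lemma to conclude $\ConeRank(\bfW) \ge r+1$ and complete the proof. Combining upper and lower bounds in each case yields the stated equalities $\ConeRank(\bfW) = r$ (pointed) and $\ConeRank(\bfW) = r+1$ (non-pointed).
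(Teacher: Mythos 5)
Your proposal is correct and takes essentially the same route as the paper: upper bounds via the algorithmic constructions (\Cref{alg:cr__pointed}, \Cref{alg:cr}) and lower bounds via the necessity lemmas, with the non-pointed case bottoming out in \Cref{lemma:conerank__nonpointed_necessary} exactly as the paper does. The only cosmetic differences are that you cite \Cref{prop:matrix_rank_relationships} rather than \Cref{lemma:conerank__pointed_necessary} for the pointed lower bound (equivalent), and you sketch a direct ``simplicial cone is pointed'' argument before falling back to the same lemma, where the paper's \Cref{lemma:conerank__nonpointed_necessary} instead argues via a rank drop forced by \Cref{lemma:nonpointed__zero_in_cone}.
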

\begin{proof}
    When $\calK_W$ is pointed, \Cref{alg:cr__pointed} finds $\bfV \in \bbR^{k \times n}$ with $k = r$ such that $\calK_W \subseteq \calK_V$
    And \Cref{lemma:conerank__pointed_necessary} tells us that, if there exists $\bfV \in \bbR^{k \times n}$ with $\calK_W \subseteq \calK_V$, then $k \ge r$.
    So $\ConeRank(\bfW) = r$ if $\calK_W$ is pointed.
    On the other hand when $\calK_W$ is nonpointed, \Cref{alg:cr} finds $\bfV \in \bbR^{k \times n}$ with $k = r+1$ such that $\calK_W \subseteq \calK_V$.
    And \Cref{lemma:conerank__nonpointed_necessary} tells us that, if there exists $\bfV \in \bbR^{k \times n}$ with $\calK_W \subseteq \calK_V$, then $k \ge r+1$.
    So $\ConeRank(\bfW) = r+1$ if $\calK_W$ is nonpointed.
\end{proof}

So $\ConeRank$ is either $r$ or $r+1$ based on the pointedness of $\calK_W$, and does not depend on the number of rows of $\bfW$.
$\ConeSubsetRank$ and $\ConeGeneratingRank$ depend on the number of rows of $\bfW$ as they count the number of extreme rays of $\calK_W$ (when $\calK_W$ is pointed)
In general, $\ConeSubsetRank$ and $\ConeGeneratingRank$ can be much larger than $\ConeRank$.

\subsection{Comparing $\ConeRank$ and $\NonNegativeRank$}
\label{sec:comparison_conerank_nonnegativerank}
In this section, we contrast $\ConeRank$ with $\NonNegativeRank$ (often denoted as $\rank_+$), which is a widely-studied property of non-negative matrices~\cite{gillis2020nonnegative,kumar2013fast,arora2012computing,gillis2013fast,donoho2003does,vavasis2010complexity}.
While $\NonNegativeRank$ is defined for a nonnegative matrix $\bfW \in \bbR^{m \times n}$, we define $\ConeRank$ for \textit{any} matrix $\bfW$:
\begin{align*}
    \NonNegativeRank(\bfW) &\coloneqq \min_k \set{ k \mid \calK_W \subseteq \calK_V \text{ for some nonnegative matrix } \bfV \in \bbR^{k \times n} }\\
    \ConeRank(\bfW) &\coloneqq \min_k \set{ k \mid \calK_W \subseteq \calK_V \text{ for some matrix } \bfV \in \bbR^{k \times n} }.
\end{align*}

The definitions imply that $\NonNegativeRank(\bfW) \ge \ConeRank(\bfW)$ for a nonnegative matrix $\bfW$.
Importantly, $\ConeRank$ does not constrain $\bfV$ to be nonnegative, resulting in these three main distinctions:

\begin{enumerate}
    \item Geometrically, $\NonNegativeRank$ is the minimum number of vectors (arranged as rows of $\bfV$) that generates cone $\calK_V$ \textit{nested} between $\calK_W$ and the nonnegative orthant $\bbR^n_+$~\cite{gillis2020nonnegative}.
    In contrast, $\ConeRank$ does not constrain $\calK_V$ to lie inside $\bbR^n_+$---it only requires $\calK_V$ to \textit{enclose} $\calK_W$.

    \item For non-negative matrix $\bfW$ of rank $r \le 2$, it is known that $\NonNegativeRank(\bfW) = r$.
    However, for rank $r > 2$, it is NP-hard to determine whether $\NonNegativeRank(\bfW) = r$~\cite{vavasis2010complexity,gillis2020nonnegative}.
    In contrast, \Cref{lemma:cr} states that $\ConeRank(\bfW) = r$ or $r+1$ depending on the pointedness of $\calK_W$, which can be determined in $\poly(m,n)$ time with a Linear Program (\ref{eqn:LP__pointed} in \Cref{alg:is_in_cone}).
    
    \item While $\NonNegativeRank(\bfW) = \NonNegativeRank(\bfW^\top)$, we might not generally have $\ConeRank(\bfW) = \ConeRank(\bfW^\top)$.
    This is because pointedness of cone generated by rows does not imply pointedness of cone generated by columns, or vice versa.
\end{enumerate}

\section{Conclusion}
\label{sec:conclusion}

We propose a framework to design succinct scores to summarize performance metrics $\calF$, and give polynomial-time algorithms that design scores that are provably minimal under mild assumptions on $\calF$.
Two future directions are to design scores: (1) when metrics takes discrete high-dimensional values, (2) using incomplete, noisy high data from historical samples of  metric values, and (3) when metrics have a non-linear structure.
On a technical note, it remains to identify structural properties of $\calF$ and corresponding minimal designs for the optimality objective.
Designing minimal scores for simultaneously satisfying both objectives under linear restriction is also an open direction.

\section*{Acknowledgements}

Anmol Kabra thanks Naren Sarayu Manoj and Max Ovsiankin for pointers on convex analysis and geometry.

\begingroup
\hypersetup{linkcolor=red}
\bibliographystyle{plainnat}
\bibliography{refs}
\endgroup

\appendix

\clearpage
\section{Omitted Proofs}
\label{app:omitted_proofs}

\subsection{Minimal design problem for improvement objective}

\begin{theorem}[\Cref{thm:1__f_linear__suff}]
    \label{thm_w_proof:1__f_linear__suff}
    Let columns of $\bfZ$ be an orthonormal basis of linear subspace $\calL$ associated with $\affine(\calF)$ and let $r = \dim \affine(\calF)$.
    For each design restriction, there exists $S : \calF \to \bbR^k$, designed using \Cref{alg:1__design_strategy}, that satisfies the improvement objective with the following dimensionalities.
    
    \setlength{\tabcolsep}{3pt}
    \noindent
    \begin{center}
        \begin{tabular}{l|l@{}l} 
             & \multicolumn{2}{c}{Dimensionality $k \ge$} \\
            \midrule
            \restrictionCS & $\ConeSubsetRank(\bfZ)$ & $\coloneqq \min_q \set{ q \mid \calK_Z = \calK_V \text{ for some } \bfV \in\bbR^{q \times r} \text{ s.t. } \bfV \subseteq \bfZ }$ \\
            \restrictionLM & $\ConeGeneratingRank(\bfZ)$ & $\coloneqq \min_q \set{ q \mid \calK_Z = \calK_V \text{ for some } \bfV \in \bbR^{q \times r} }$ \\
            \restrictionL & $\ConeRank(\bfZ)$ & $\coloneqq \min_q \set{ q \mid \calK_Z \subseteq \calK_V \text{ for some } \bfV \in \bbR^{q \times r} }$ \\
        \end{tabular}
    \end{center}
\end{theorem}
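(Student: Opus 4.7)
The plan is to reduce the improvement objective to a purely geometric cone-inclusion statement in the $r$-dimensional coefficient space $\bbR^r$, and then read off each bound by matching the structural constraint on $\bfA$ (from each restriction) with the corresponding matrix rank of $\bfZ$. First, I would invoke the reformulation in Equation~\eqref{eqn:1__f_linear__restate_cones}: $S$ satisfies improvement iff $\calF_{\bff}\cap \calK_A^\ast \subseteq \calK_I^\ast$ for every center $\bff\in \calF$. Since every movement direction $\bfg\in \calF_{\bff}$ lies in $\calL$ (by \Cref{defn:affine_hull_f}), and $\calL$ is the column span of $\bfZ$, I would parametrize $\bfg = \bfZ\bfc$ for $\bfc\in\bbR^r$. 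Writing $\bfV = \bfA\bfZ$, the conditions $\bfA\bfg \ge \bfzero$ and $\bfg \ge \bfzero$ become $\bfV\bfc \ge \bfzero$ and $\bfZ\bfc \ge \bfzero$, so in coefficient space the improvement objective is implied by $\calK_V^\ast \subseteq \calK_Z^\ast$. By polar duality for polyhedral cones, this is equivalent to $\calK_Z \subseteq \calK_V$.

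Next, I would match each of the three restrictions with the corresponding cone condition. For \restrictionL, $\bfA$ is arbitrary; since $\bfZ$ has orthonormal columns, for any target $\bfV\in\bbR^{k\times r}$ we can realize $\bfV = \bfA\bfZ$ by taking $\bfA = \bfV\bfZ^\top$, so the requirement reduces to $\calK_Z\subseteq\calK_V$ with $\bfV$ unrestricted, attained at the minimum $k = \ConeRank(\bfZ)$. For \restrictionLM, monotonicity of $\bff \mapsto \bfA\bff$ forces $\bfA\bfg\ge \bfzero$ on nonnegative directions in $\calL$, which in coefficient space gives $\calK_Z^\ast \subseteq \calK_V^\ast$, equivalently $\calK_V\subseteq\calK_Z$; combining with improvement yields $\calK_V=\calK_Z$, attained at $k=\ConeGeneratingRank(\bfZ)$. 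For \restrictionCS, 1-hot rows of $\bfA$ exactly mean that rows of $\bfV = \bfA\bfZ$ are rows of $\bfZ$, so $\bfV\subseteq \bfZ$ automatically gives $\calK_V\subseteq\calK_Z$; combining with improvement yields $\calK_V=\calK_Z$ together with $\bfV\subseteq\bfZ$, attained at $k=\ConeSubsetRank(\bfZ)$.

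Finally, I would verify that \Cref{alg:1__design_strategy} indeed produces such an $S$: Line~2 constructs $\bfZ$, Line~3 finds $\bfV$ attaining the appropriate matrix rank, and Line~4 solves $\bfV = \bfA\bfZ$ via $\bfA = \bfV\bfZ^\top$ (using $\bfZ^\top\bfZ = \bfI_r$). Under \restrictionCS this recovery automatically yields $\bfA$ with 1-hot rows because rows of $\bfV$ are exactly rows of $\bfZ$; under \restrictionLM one further checks monotonicity, which follows from $\calK_V\subseteq\calK_Z$ via the same coefficient-space argument applied to nonnegative $\bfg$.

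The main obstacle is the \restrictionLM case, where one must argue cleanly that linear monotonicity of $\bfA$ on the (possibly proper) subset $\calF\subseteq \affine(\calF)$ is controlled by the full cone inclusion $\calK_V\subseteq \calK_Z$ in coefficient space, rather than by a smaller cone generated only by differences that actually arise inside $\calF$. For sufficiency this direction is the easy one (a stronger condition trivially implies the weaker), so we only need $\calK_V=\calK_Z$, which is exactly what $\ConeGeneratingRank$ captures; the delicate converse is left to \Cref{thm:1__f_linear__nonempty_relint_f_with_origin__necsuff}, where the non-empty relative interior assumption ensures the movement directions indeed fill out $\calL$ and the coefficient-space and ambient-space conditions coincide.
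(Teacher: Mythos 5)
Your proof takes essentially the same route as the paper's: the coefficient-space reduction via $\bfg = \bfZ\bfc$ and $\bfV = \bfA\bfZ$ is exactly what the paper packages as \Cref{lemma:1__f_linear__cone_subsets_in_subspace}, and the matching of each restriction to its cone condition ($\calK_Z \subseteq \calK_V$ for \restrictionL, equality for \restrictionLM, equality plus $\bfV \subseteq \bfZ$ for \restrictionCS) mirrors the paper's argument (the paper only spells out \restrictionCS and says the other two are ``similar''; you fill those in).

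One small error worth fixing: you claim that under \restrictionCS the recovery $\bfA = \bfV\bfZ^\top$ ``automatically yields $\bfA$ with 1-hot rows because rows of $\bfV$ are exactly rows of $\bfZ$.'' This is false in general. If $\bfv = \bfz_i$ is the $i$-th row of $\bfZ$, then the corresponding row of $\bfV\bfZ^\top$ is $\bfz_i \bfZ^\top$, whose $j$-th entry is $\innerproduct{\bfz_i}{\bfz_j}$; the rows of $\bfZ$ are not orthogonal in general (only the columns are orthonormal), so this vector need not be 1-hot (e.g.\ $\bfZ = \tfrac{1}{\sqrt{2}}[1;\,1] \in \bbR^{2\times 1}$ gives $\bfz_1\bfZ^\top = [\tfrac12, \tfrac12]$). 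The formula $\bfA = \bfV\bfZ^\top$ does correctly satisfy $\bfA\bfZ = \bfV$ and is fine for \restrictionL and \restrictionLM, but for \restrictionCS you should instead take $\bfA$ to be the row-selection matrix (with a single $1$ in each row at the position of the corresponding chosen row of $\bfZ$), which is precisely what the paper does when it writes ``$\bfV = \bfA\bfZ$ for some $\bfA$ with 1-hot rows.'' This is a minor repair and does not affect the correctness of the overall argument.
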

\begin{proof}
    We give a proof for the \restrictionCS restriction; proofs for the other two restrictions are similar.
    We show that, if $k \ge \CSR(\bfZ)$, then there exists $S(\bff) = \bfA \bff$ satisfying improvement and \restrictionCS.

    Let columns of $\bfZ \in \bbR^{d \times r}$ be an orthonormal basis of $r$-dimensional linear subspace $\calL$ associated with $\affine(\calF)$.
    The definition of $\CSR$ states that $k \ge \CSR(\bfZ)$ when there exists $\bfV \in \bbR^{k \times r}$ such that (i) $\bfV \subseteq \bfZ$ and (ii) $\calK_Z = \calK_V$.
    Property (i) means that $\bfV = \bfA \bfZ$ for some $\bfA \in \bbR^{k \times d}$ with 1-hot rows, and so $S(\bff) = \bfA \bff$ satisfies the \restrictionCS restriction.
    Property (ii) implies that $\calK_Z \subseteq \calK_V$, and so $S$ satisfies improvement:
    \begin{align}
        \label{eqn:1__f_linear__suff__main_arg}
        \calK_Z \subseteq \calK_V
        &\longiff{\text{Lem.~\ref{lemma:1__f_linear__cone_subsets_in_subspace}}}
        \calL \cap \calK_A^\ast \subseteq \calK_I^\ast
        \longimplies{\text{Def.~\ref{defn:affine_hull_f}}}
        \begin{matrix}
            \text{for all } \bff \in \calF, \quad \\
            \calF_{\bff} \cap \calK_A^\ast \subseteq \calK_I^\ast
        \end{matrix}
        \longiff{\text{Eq.~\ref{eqn:1__f_linear__restate_cones}}}
        \text{Improvement}.
    \end{align}
    
    The proof of \Cref{lemma:1__f_linear__cone_subsets_in_subspace} uses $\bfV = \bfA \bfZ$, and the projection of rows of $\bfA$ and $\bfI_d$ in subspace $\calL$ using orthonormal basis $\bfZ$.
    \qedhere
\end{proof}

\begin{example}[Competing metric improvement directions$\implies$dimensionality for \restrictionCS $>$ \restrictionLM]
\label{example:matrix_rank__5d__nonpointed}
    When cone $\calK_Z$ generated by rows of $\bfZ$ is non-pointed, we have $\CSR(\bfZ) > \CGR(\bfZ)$, implying that the score design dimensionality is higher under \restrictionCS restriction than under \restrictionLM.
    The cone $\calK_Z$ can be non-pointed in the presence of competing metric improvement directions, i.e., when improving on one metric degrades another.
    A non-pointed $\calK_Z$ results in a gap between $\CSR(\bfZ)$ and $\CGR(\bfZ)$.

    Consider 8 metrics lying in a 5-dimensional subspace, which has the following orthonormal basis (arranged as columns of $\bfZ$):
    \begin{align*}
        \bfZ &= \frac{1}{2} \cdot \begin{bmatrix}
            1 & 1 & 0 & 0 & 0\\
            -1 & 1 & 0 & 0 & 0\\
            -1 & -1 & 0 & 0 & 0\\
            1 & -1 & 0 & 0 & 0\\
            0 & 0 & 1 & 1 & 1\\
            0 & 0 & 1 & -1 & 1\\
            0 & 0 & 1 & -1 & -1\\
            0 & 0 & 1 & 1 & -1\\
        \end{bmatrix} \in \bbR^{8 \times 5}.
    \end{align*}

    The rows generate a 5-dimensional cone $\calK_Z$ with two orthogonal parts: (i) a 2-dimensional linear subspace due to the first 4 metrics, and (ii) a 3-dimensional ``square'' pointed cone due to the last 4 metrics, as visualized in \Cref{fig:example_matrix_rank__5d_nonpointed}.
    Since $\calK_Z$ contains a 2-dimensional linear subspace within, it is a non-pointed cone.

    A matrix $\bfV$ that attains $\CSR(\bfZ)$ must have rows of $\bfV$ chosen from rows of $\bfZ$ and $\calK_Z = \calK_V$.
    Excluding any row of $\bfZ$ shrinks the generated cone---excluding any row of the first 4 generates a halfspace rather than the 2-dimensional subspace, and excluding any row of the last 4 does not generate the ``square'' pointed cone.
    So $\CSR(\bfZ) = 8$.
    On the other hand, a matrix $\bfV$ that attains $\CGR(\bfZ)$ need not have rows of $\bfV$ chosen from rows of $\bfZ$; $\bfV$ must only satisfy $\calK_Z = \calK_V$.
    We need all last 4 rows to generate the ``square'' cone, but there exists 3 points (the blue and two bottom black points) whose nonnegative combinations generate the 2-dimensional linear subspace.
    So $\CGR(\bfZ) = 7$.
    \begin{figure}[!h]
        \centering
        \tdplotsetmaincoords{80}{45}
\tdplotsetrotatedcoords{-90}{180}{-90}
\begin{tikzpicture}[tdplot_main_coords,thick,scale=1.5]
    \coordinate (O) at (0, 0, 0);

    \draw[->,color=gray,dashed] (-1.25,0,0) -- (1.25,0,0);
    \draw[->,color=gray,dashed] (0,-1.25,0) -- (0,1.25,0);
    \draw[->,color=gray,dashed] (0,0,-0.25) -- (0,0,1.25);


    \begin{scope}[thick,draw=red]
        \draw[fill=Salmon,fill opacity=0.3] (O) -- (-0.25,0.75,1) -- (-0.75,0.25,1) -- cycle;
        
        \draw[fill=Salmon,fill opacity=0.3] (O) -- (-0.75,0.25,1) -- (0,-0.5,1) -- cycle;
        
        \draw[fill=Salmon,fill opacity=0.3] (O) -- (-0.25,0.75,1) -- (0.5,0,1) -- cycle;
        
        \draw[fill=Salmon,fill opacity=0.3] (O) -- (0,-0.5,1) -- (0.5,0,1) -- cycle;
    \end{scope}

    \node[circle,inner sep=2pt,fill=black] at (-0.25,0.75,1) {};
    \node[circle,inner sep=2pt,fill=black] at (-0.75,0.25,1) {};
    \node[circle,inner sep=2pt,fill=black] at (0,-0.5,1) {};
    \node[circle,inner sep=2pt,fill=black] at (0.5,0,1) {};


    
\end{tikzpicture}
\begin{tikzpicture}[thick,scale=0.6]
    \draw[-,color=black] (-4,0) -- (-3,0);
    \draw[-,color=black] (-3.5,-0.5) -- (-3.5,0.5);

    \coordinate (O) at (0, 0);
    \coordinate (right) at (2, 0);
    \coordinate (left) at (-2, 0);
    \coordinate (top) at (0, 2);
    \coordinate (bottom) at (0, -2);

    \draw[<->,color=gray,dashed] (left) -- (right);
    \draw[<->,color=gray,dashed] (bottom) -- (top);
    
    \coordinate (A) at (1, 1);
    \coordinate (B) at (-1, 1);
    \coordinate (C) at (-1, -1);
    \coordinate (D) at (1, -1);
    
    \coordinate (topright) at (2, 2);
    \coordinate (topleft) at (-2, 2);
    \coordinate (bottomright) at (2, -2);
    \coordinate (bottomleft) at (-2, -2);


    \draw[fill=Salmon,opacity=0.5,draw=none] (topright) -- (topleft) -- (bottomleft) -- (bottomright);

    \begin{scope}[font=\tiny]
    \node[circle,inner sep=2pt,fill=black] at (A) {};
    \node[circle,inner sep=2pt,fill=black] at (B) {};
    \node[circle,inner sep=2pt,fill=black] at (C) {};
    \node[circle,inner sep=2pt,fill=black] at (D) {};
    \node[circle,inner sep=2pt,fill=Cerulean] at (0,1) {};
    \end{scope}
\end{tikzpicture}
        \caption{
        A 5-dimensional non-pointed cone $\calK_Z$ with two orthogonal components: a 2-dimensional linear subspace, and a 3-dimensional ``square'' pointed cone.
        }
        \label{fig:example_matrix_rank__5d_nonpointed}
    \end{figure}
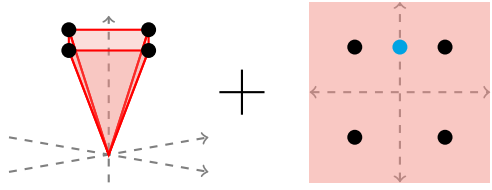
\end{example}

\begin{theorem}[\Cref{thm:1__f_linear__nonempty_relint_f_with_origin__necsuff}]
\label{thm_w_proof:1__f_linear__nonempty_relint_f_with_origin__necsuff}
    Assume metrics $\calF \subseteq \bbR^d$ have non-empty relative interior with respect to $\affine(\calF)$.
    Then the listed dimensionalities $k$ in \Cref{thm:1__f_linear__suff} are necessary.
\end{theorem}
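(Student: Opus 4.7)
The plan is to reverse the chain of implications used in the sufficiency proof of \Cref{thm_w_proof:1__f_linear__suff}, turning the only one-way step
$\calL \cap \calK_A^\ast \subseteq \calK_I^\ast \implies (\forall \bff \in \calF,\ \calF_{\bff} \cap \calK_A^\ast \subseteq \calK_I^\ast)$
into an equivalence under the relative-interior hypothesis. The key geometric fact is that a point $\bff^\ast$ in the relative interior of $\calF$ (w.r.t.\ $\affine(\calF)$) has the property that every $\bfg \in \calL$ admits some $\alpha>0$ with $\alpha \bfg \in \calF_{\bff^\ast}$; equivalently, $\cone(\calF_{\bff^\ast}) = \calL$.

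First I would show that if $S:\bff\mapsto \bfA\bff$ satisfies improvement on $\calF$, then $\calL \cap \calK_A^\ast \subseteq \calK_I^\ast$. Picking $\bfg \in \calL \cap \calK_A^\ast$, the relative-interior fact yields $\alpha>0$ with $\alpha \bfg \in \calF_{\bff^\ast}$, and since $\calK_A^\ast$ is a cone, $\alpha \bfg \in \calF_{\bff^\ast} \cap \calK_A^\ast$. Improvement (Eq.~\ref{eqn:1__f_linear__restate_cones}) then forces $\alpha \bfg \in \calK_I^\ast$, and rescaling by $1/\alpha$ (using that $\calK_I^\ast = \bbR^d_+$ is a cone) puts $\bfg \in \calK_I^\ast$. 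By the iff in \Cref{lemma:1__f_linear__cone_subsets_in_subspace}, this is equivalent to $\calK_Z \subseteq \calK_V$ for $\bfV = \bfA \bfZ \in \bbR^{k \times r}$. Under \restrictionL there are no further constraints on $\bfV$, so by the definition of $\ConeRank$ we conclude $k \ge \CR(\bfZ)$.

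Next, under \restrictionLM I would obtain the reverse inclusion $\calK_V \subseteq \calK_Z$ from monotonicity. Monotonicity gives $\calF_{\bff} \cap \calK_I^\ast \subseteq \calK_A^\ast$ for every $\bff \in \calF$, and the same relative-interior trick applied at $\bff^\ast$ upgrades this to $\calL \cap \calK_I^\ast \subseteq \calK_A^\ast$; passing to the coefficient space (the symmetric counterpart of \Cref{lemma:1__f_linear__cone_subsets_in_subspace}) this reads $\calK_Z^\ast \subseteq \calK_V^\ast$, i.e.\ $\calK_V \subseteq \calK_Z$. Combined with the previous paragraph this yields $\calK_V = \calK_Z$, whence $k \ge \CGR(\bfZ)$ by definition. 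Finally, under \restrictionCS the rows of $\bfA$ are $1$-hot, so the rows of $\bfV = \bfA \bfZ$ are among the rows of $\bfZ$, giving both $\bfV \subseteq \bfZ$ and $\calK_V = \calK_Z$; hence $k \ge \CSR(\bfZ)$.

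The main obstacle is exactly the bridge step in the first paragraph: without the relative-interior hypothesis, $\calF_{\bff}$ may be a narrow wedge inside $\calL$ at every $\bff \in \calF$, so improvement-on-$\calF$ only constrains $\bfA$ on a proper sub-cone of $\calL$ and is strictly weaker than $\calL \cap \calK_A^\ast \subseteq \calK_I^\ast$. In that regime, as \Cref{remark:relint_discussion} and \Cref{prop_w_proof:1__f_linear__empty_relint__counterexample} flag, score functions with $k$ strictly below the three listed ranks can satisfy improvement, so the relative-interior assumption is doing essential work in matching "improvement on $\calF$" with the cone-containment condition that defines the three matrix ranks.
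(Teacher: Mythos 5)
Your proof is correct and follows the paper's approach exactly: use \Cref{lemma:subspace_scaled_to_relint} at a relative-interior point to rescale any $\bfg \in \calL \cap \calK_A^\ast$ into $\calF_{\bff^\ast}$, conclude $\calL \cap \calK_A^\ast \subseteq \calK_I^\ast$, and pass to coefficient space via \Cref{lemma:1__f_linear__cone_subsets_in_subspace} to get $\calK_Z \subseteq \calK_V$. The paper proves this explicitly only for \restrictionCS and says the other two cases are ``similar''; you usefully spell out the missing detail for \restrictionLM, namely the symmetric application of the same relative-interior rescaling to the monotonicity condition $\calF_{\bff} \cap \calK_I^\ast \subseteq \calK_A^\ast$, which gives the reverse inclusion $\calK_V \subseteq \calK_Z$ needed for the $\ConeGeneratingRank$ bound.
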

\begin{proof}
    We give a proof for the \restrictionCS restriction; proof for the other two restrictions are similar.
    We show that, when $\calF$ has non-empty relative interior, we get:
    \begin{align}
        \label{eqn:1__f_linear__necsuff__converse_claim}
        \text{for all } \bff \in \calF,\quad \calF_{\bff} \cap \calK_A^\ast \subseteq \calK_I^\ast
        &\implies \calL \cap \calK_A^\ast \subseteq \calK_I^\ast.
    \end{align}
    By adding this implication to \Cref{eqn:1__f_linear__suff__main_arg}, we prove that, when $\calF$ has non-empty relative interior, a score function $S$ satisfies the improvement objective and \restrictionCS restriction \textit{if and only if} $k \geq \CSR(\bfZ)$.

    We now prove the implication in \Cref{eqn:1__f_linear__necsuff__converse_claim}.
    Let $\bfx \in \calL \cap \calK_A^\ast$.
    Since $\calF$ has non-empty relative interior, there exists $\bff^\ast$ in the relative interior.
    \Cref{lemma:subspace_scaled_to_relint} states that, as $\bfx \in \calL$, there exists $a > 0$ such that $a \bfx \in \calF_{\bff^\ast}$.
    Since $\bfx$ is in cone $\calK_A^\ast$ as well, we have $a \bfx \in \calK_A^\ast$.
    Hence, $a \bfx \in \calF_{\bff^\ast} \cap \calK_A^\ast$.
    According to the premise of  \Cref{eqn:1__f_linear__necsuff__converse_claim}, we know that $\calF_{\bff^\ast} \cap \calK_A^\ast \subseteq \calK_I^\ast$, and so $a \bfx \in \calK_I^\ast$.
    As $a > 0$, we get $\bfx \in \calK_I^\ast$, completing the proof.
\end{proof}

\begin{lemma}
\label{lemma_w_proof:matrix_rank__smallest_with_affine_invariant_to_rotation}
    Given affine subspace $\calH$ containing $\calF$, the matrix ranks are invariant to the choice of orthonormal basis of $\calL_{\calH}$.
    Moreover, among all affine subspaces containing $\calF$, the matrix ranks are smallest for $\calH = \affine(\calF)$. 
\end{lemma}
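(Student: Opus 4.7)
The plan is to reduce both claims to elementary properties of cones under linear maps applied to their generators, with the choice of basis controlled so that the projection setup is transparent.

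For \textbf{invariance}, I would observe that two orthonormal bases $\bfZ, \bfZ'' \in \bbR^{d \times r}$ of the same subspace $\calL_\calH$ satisfy $\bfZ'' = \bfZ \bfR$ for some orthogonal $\bfR \in \bbR^{r \times r}$, so the rows transform as $(\bfZ'')_i = \bfZ_i \bfR$. Right-multiplication by $\bfR$ is a linear bijection of $\bbR^r$, so it bijects candidate matrices $\bfV \subseteq \bfZ$ with candidate matrices $\bfV'' = \bfV \bfR \subseteq \bfZ''$, preserves the row count, and sends $\calK_V$ to $\calK_{V''}$ and $\calK_Z$ to $\calK_{Z''}$. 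Consequently the conditions $\calK_V = \calK_Z$ and $\calK_Z \subseteq \calK_V$ are equivalent to their primed counterparts, so all three matrix ranks of $\bfZ$ equal those of $\bfZ''$.

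For \textbf{minimality}, suppose $\calH \supseteq \affine(\calF)$, so $\calL_\calH \supseteq \calL$ with dimensions $r' \geq r$. Using the invariance just established, I would choose a convenient orthonormal basis of $\calL_\calH$ that extends one of $\calL$: let the first $r$ columns of $\bfZ' \in \bbR^{d \times r'}$ be the columns of $\bfZ$ and the remaining $r' - r$ columns be an orthonormal basis of $\calL_\calH \cap \calL^\perp$. Then the first $r$ coordinates of each row $(\bfZ')_i$ are precisely $\bfZ_i$, so the coordinate projection $\pi : \bbR^{r'} \to \bbR^r$ onto the first $r$ coordinates satisfies $\pi((\bfZ')_i) = \bfZ_i$ for every $i$.

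The central identity is that, because $\pi$ is linear, $\pi(\cone(S)) = \cone(\pi(S))$ for any $S \subseteq \bbR^{r'}$. Given $\bfV'$ attaining $\CSR(\bfZ')$, $\CGR(\bfZ')$, or $\CR(\bfZ')$, I would define $\bfV$ by projecting each row through $\pi$. The row count is preserved; coordinate selection is preserved because $\pi$ maps rows of $\bfZ'$ to the corresponding rows of $\bfZ$, so $\bfV' \subseteq \bfZ'$ implies $\bfV \subseteq \bfZ$; and the cone identities pull through: $\calK_V = \pi(\calK_{V'}) = \pi(\calK_{Z'}) = \calK_Z$ when $\calK_{V'} = \calK_{Z'}$, and $\calK_Z = \pi(\calK_{Z'}) \subseteq \pi(\calK_{V'}) = \calK_V$ when $\calK_{Z'} \subseteq \calK_{V'}$. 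This yields $\CSR(\bfZ) \leq \CSR(\bfZ')$ and likewise for $\CGR$ and $\CR$, so all three matrix ranks are minimized at $\calH = \affine(\calF)$. I do not expect any serious obstacle; the only subtlety is the basis-extension step, which is precisely what guarantees the \restrictionCS coordinate-selection constraint survives the projection.
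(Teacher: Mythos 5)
Your proposal is correct, and for the invariance claim it is essentially the paper's own algebraic argument: you use the orthogonal matrix $\bfR$ with $\bfZ'' = \bfZ \bfR$ to biject candidate $\bfV$'s, exactly as the paper uses $\bfZ_1 = \bfZ_2 \bfQ$. For the minimality claim, your projection argument is the right mechanism, and it is somewhat more explicit than the paper's version. The paper merely asserts in one sentence that ``adding vectors to $\bfZ_1$ only increases the number of constraints to satisfy, and so $\CSR$ can only grow,'' without spelling out the coordinate projection $\pi$ or verifying that $\pi(\cone(S)) = \cone(\pi(S))$, that $\pi$ preserves the \restrictionCS subset constraint, and that it maps $\calK_{Z'}$ onto $\calK_Z$. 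Your write-up supplies exactly these missing steps, including the key observation that the basis-extension step is what keeps the coordinate-selection constraint intact. So the two proofs follow the same route; yours is the more careful rendition of what the paper gestures at.
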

\begin{proof}
    We give a proof for $\CSR$, proofs for the other two matrix ranks are similar.

    \begin{enumerate}
        \item We first give a geometric interpretation for invariance to choice of orthonormal basis of $\calL_{\calH}$.
        Then we give an algebraic proof.\\
        
        \boldheading{Geometric interpretation.}
        For any matrix $\bfW$, note that $\CSR(\bfW)$ is the minimum cardinality of a subset $V$ of $W$ (set of rows of $\bfW$), such that cone $\calK_V$ encloses $\calK_W$.
        By rotating rows of $\bfW$ without altering the column span of $\bfW$, although the row vectors $W$ change, the \textit{relative position of them with respect to each other is the same}.
        So the cone generated by the rotated vectors is just a rotation of cone $\calK_W$.
        As a result, the minimum cardinality of a subset of rotated vectors (to enclose the rotated cone) is unchanged, and so $\CSR(\bfW)$ is unchanged.\\

        \boldheading{Algebraic argument.}
        Let columns of $\bfZ_1$ and $\bfZ_2$ be two sets of orthonormal basis of $r_{\calH}$-dimensional $\calL_{\calH}$.
        We will show that $\CSR(\bfZ_1) = \CSR(\bfZ_2)$.
        The two orthonormal bases have the same column span, and are rotations/reflections of each other.
        So there exists orthogonal matrix $\bfQ \in \bbR^{r_{\calH} \times r_{\calH}}$ such that $\bfZ_1 = \bfZ_2 \bfQ$ and $\bfZ_1 \bfQ^\top = \bfZ_2$.

        We prove that $\CSR(\bfZ_1) \le \CSR(\bfZ_2)$.
        Let $\CSR(\bfZ_2) = k^\ast$.
        Then there exists $\bfV_2 \in \bbR^{k^\ast \times r_{\calH}}$ such that $\bfV_2 \subseteq \bfZ_2$ and $\calK_{Z_2} \subseteq \calK_{V_2}$.
        These two properties mean that $\bfV_2 = \bfA \bfZ_2$ for some $\bfA$ with 1-hot rows, and $\bfZ_2 = \bfB \bfV_2$ for some nonnegative $\bfB$.
        Multiplying with $\bfQ$ on the right, we get $\bfV_2 \bfQ = \bfA \bfZ_2 \bfQ$ and $\bfZ_2 \bfQ = \bfB \bfV_2 \bfQ$.
        Therefore, $\bfV_1 = \bfV_2 \bfQ \in \bbR^{k^\ast \times r_{\calH}}$ has the properties $\bfV_1 \subseteq \bfZ_1$ and $\calK_{Z_1} \subseteq \calK_{V_1}$.
        This proves that $\CSR(\bfZ_1) \le \CSR(\bfZ_2)$.
        With a symmetric argument, we also get $\CSR(\bfZ_1) \ge \CSR(\bfZ_2)$.

        \item Let $\calH_1$ and $\calH_2$ be two non-empty affine subspaces containing $\calF$ such that $\calH_1 \subseteq \calH_2$.
        Let $\calL_1$ and $\calL_2$ be linear subspaces corresponding to $\calH_1$ and $\calH_2$ respectively.
        Since $\calH_1 \subseteq \calH_2$ and for any $\bff \in \calH_1$ we can write $\calL_1 = \calH_1 - \bff$ and $\calL_2 = \calH_2 - \bff$, we find that $\calL_1 \subseteq \calL_2$.
        According to statement (1), $\CSR$ is invariant to the choice of orthonormal basis of linear subspace.
        Hence, pick columns of $\bfZ_1$ and $\bfZ_2$ as orthonormal basis of $\calL_1$ and $\calL_2$ respectively, such that columns of $\bfZ_2$ are a superset of columns of $\bfZ_1$.
        In the definition of $\CSR$, adding vectors to $\bfZ_1$ only increases the number of constraints to satisfy, and so $\CSR$ can only grow.
        Hence, $\CSR(\bfZ_1) \le \CSR(\bfZ_2)$.
        
        Since $\affine(\calF)$ is the unique intersection of all affine subspaces containing $\calF$, we have $\affine(\calF) \subseteq \calH$ for every affine subspace $\calH$ containing $\calF$.
        Thus, $\CSR(\bfZ) \le \CSR(\bfZ_{\calH})$, where columns of $\bfZ$ and $\bfZ_{\calH}$ are orthonormal basis of linear subspaces corresponding to $\affine(\calF)$ and $\calH$ respectively.
        \qedhere
    \end{enumerate}
\end{proof}

\begin{proposition}
\label{prop_w_proof:1__f_linear__empty_relint__counterexample}
    For each design restriction, there exists $\calF \subseteq \bbR^d$ with $\dim \affine(\calF) = d$ and empty relative interior such that there exists function $S : \calF \to \bbR$ that satisfies improvement objective.
\end{proposition}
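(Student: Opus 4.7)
The plan is to exhibit, for each dimension $d$, a single finite set $\calF \subseteq \bbR^d$ together with a scalar score $S \from \calF \to \bbR$ that witnesses the claim for all three restrictions simultaneously. The guiding observation is that whenever $\calF$ is \emph{totally ordered} under the componentwise order, any real-valued $S$ that is strictly monotone along the chain automatically has $S(\bff') \ge S(\bff) \Rightarrow \bff' \ge \bff$ on $\calF$, i.e.\ the improvement objective holds trivially; and if such an $S$ happens to be a coordinate selector, it satisfies \restrictionCS and hence also \restrictionLM and \restrictionL.

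Concretely, I would fix $\epsilon \in (0,1)$, take $\bff_0 = \bfzero$ and $\bff_i = i \bfone_d + \epsilon \bfe_i$ for $i \in \{1, \ldots, d\}$, and set $\calF = \{\bff_0, \bff_1, \ldots, \bff_d\}$. The chain property follows because for $i < j$ every coordinate of $\bff_j - \bff_i = (j-i)\bfone_d + \epsilon(\bfe_j - \bfe_i)$ is at least $1 - \epsilon > 0$. For full affine dimension, the matrix whose columns are $\bff_i - \bff_0 = i\bfone_d + \epsilon \bfe_i$ equals $\bfone_d \bfv^\top + \epsilon \bfI_d$ with $\bfv = (1, 2, \ldots, d)^\top$; the matrix determinant lemma gives determinant $\epsilon^{d-1}\bigl(\epsilon + d(d+1)/2\bigr) > 0$, so these $d$ vectors are linearly independent and $\dim \affine(\calF) = d$. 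Since $\calF$ is a finite set and $\affine(\calF) = \bbR^d$, its relative interior with respect to $\affine(\calF)$ is empty.

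Finally I would define $S \from \bff \mapsto f_1 = \bfe_1^\top \bff$. Then $S(\bff_0) = 0$ and $S(\bff_i) = i + \epsilon \cdot \ind[i = 1]$, which (for $\epsilon < 1$) is strictly increasing in $i$. Thus $S$ is strictly monotone on the totally ordered chain $\calF$, so $S(\bff') \ge S(\bff)$ forces $\bff' \ge \bff$ for every pair $\bff, \bff' \in \calF$, establishing improvement. Since $\bfe_1^\top$ is a $1$-hot row, the same $S$ discharges \restrictionCS, \restrictionLM, and \restrictionL in one stroke.

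The only real content of the argument is the construction itself: the perturbation $\epsilon \bfe_i$ must simultaneously (i) leave the chain intact, (ii) break the affine degeneracy of the line $i \bfone_d$, and (iii) not disturb strict monotonicity in coordinate $1$. The range $\epsilon \in (0,1)$ is what makes all three conditions hold at once; everything else is a routine verification.
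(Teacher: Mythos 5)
Your proof is correct and follows essentially the same strategy as the paper's: construct a finite, totally ordered chain of $d+1$ points whose affine hull is all of $\bbR^d$ (hence empty relative interior), and observe that selecting the first coordinate gives a score under \restrictionCS (hence also \restrictionLM and \restrictionL) that is strictly monotone along the chain, so improvement holds vacuously off the chain order and genuinely along it. The only difference is cosmetic: the paper perturbs via a quadratic recursion $\round{\bff}{i}_j = (\round{\bff}{i-1}_j)^2 + j$ and asserts linear independence, whereas your perturbation $\bff_i = i\bfone_d + \epsilon\bfe_i$ with $\epsilon \in (0,1)$ is cleaner and comes with an explicit matrix-determinant-lemma verification of full affine dimension.
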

\begin{proof}
    We first give an example of $\calF \subseteq \bbR^2$, and show that there exists $S : \calF \to \bbR$ that satisfies improvement and the \restrictionCS restriction.
    So $S$ will also satisfy the other two design restrictions.

    Consider $\calF = \set{ (0, 0), (1, 1), (2, 3) } \subseteq \bbR^2$ and let $\bfA = [1, 0] \in \bbR^{1 \times 2}$.
    We now argue that $S(\bff) = \bfA \bff$ satisfies the improvement objective.
    For metric pairs 
    \begin{align*}
        (\bff', \bff) &\in \set{ ((1, 1), (0, 0)), ((2, 3), (1, 1)), ((2, 3), (0, 0)) }
    \end{align*}
    we have $\bfA \bff' \ge \bfA \bff$ and $\bff' \ge \bff$.
    Hence, improvement objective holds for these pairs.
    Whereas for metric pairs
    \begin{align*}
        (\bff', \bff) &\in \set{ ((0, 0), (1, 1)), ((1, 1), (2, 3)), ((0, 0), (2, 3)) }       
    \end{align*}
    the left-hand side of the implication ($\bfA \bff' \ge \bfA \bff$) is not true.
    And so improvement objective holds for these pairs \textit{vacuously}.
    Thus for all $\bff, \bff' \in \calF$ if $\bfA \bff' \ge \bfA \bff$ then $\bff' \ge \bff$.

    We now give a counterexample of $d+1$ points in $\calF \subseteq \bbR^d$.
    Let $\round{\bff}{0} = \bfzero_d$ and $\round{\bff}{1} = \bfone_d$.
    For $i = 2, \dots, d$, construct $\round{\bff}{i}_j = \inparens{ \round{\bff}{i-1}_j }^2 + j$ for each coordinate $j \in [d]$.
    For example, the construction in $\bbR^4$ is:
    \begin{align*}
        \calF &= \inbraces{
            \begin{pmatrix}
                0\\0\\0\\0
            \end{pmatrix},
            \begin{pmatrix}
                1\\1\\1\\1
            \end{pmatrix},
            \begin{pmatrix}
                2\\3\\4\\5
            \end{pmatrix},
            \begin{pmatrix}
                5\\11\\19\\29
            \end{pmatrix},
            \begin{pmatrix}
                26\\123\\364\\845
            \end{pmatrix}
        }
    \end{align*}
    
    Points $\round{\bff}{1}, \dots, \round{\bff}{d}$ are linearly independent, and so $\dim \spanv(\calF) = d$.
    Let $\bfA = [1, 0, \dots, 0] \in \bbR^{1 \times d}$.
    Following a similar argument as the $d=2$ case, we find that $S(\bff) = \bfA \bff$ satisfies the improvement objective (with dimensionality $k = 1$).
\end{proof}

\subsection{Minimal design problem for optimality objective}

\begin{theorem}[{\Cref{thm:2__f_linear__suff}}]
    \label{thm_w_proof:2__f_linear__suff}
    For each design restriction, there exists $S : \calF \to \bbR^k$, designed using \Cref{alg:2__design_strategy}, that satisfies the optimality objective with the following dimensionalities.
    
    \noindent
    \begin{center}
        \begin{tabular}{l|c}
             & Dimensionality $k \ge$ \\
            \midrule
            \restrictionCS & $\dim \affine(\calF)$ \\
            \restrictionLM & $1$ \\
            \restrictionL & $1$ \\
        \end{tabular}
    \end{center}
\end{theorem}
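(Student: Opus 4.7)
The plan is to split the theorem into two groups by restriction. For \restrictionLM and \restrictionL, a single positively-weighted linear scalarization will suffice, whereas for \restrictionCS, we need to preserve the injectivity of the coordinate-selection map on $\affine(\calF)$ via a careful row-selection from the orthonormal basis $\bfZ$ of $\calL$.

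For \restrictionLM and \restrictionL, I would take any strictly positive $\bfa \in \bbR^d$ and define $S(\bff) = \bfa \cdot \bff$. This is trivially linear, and by positivity of $\bfa$ it is monotone, so both restrictions are satisfied. Since $S$ is scalar, $\ParetoOpt(S) = \argmax_{\bff \in \calF} S(\bff)$. If a maximizer $\bff^\ast$ were not pareto-optimal in $\calF$, some $\bff \in \calF$ would satisfy $\bff \geq \bff^\ast$ with $\bff \neq \bff^\ast$, yielding $\bfa \cdot (\bff - \bff^\ast) > 0$ and contradicting maximality. This is essentially the classical Zadeh scalarization argument.

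For \restrictionCS, let $r = \dim \affine(\calF)$ and let $\bfZ \in \bbR^{d \times r}$ have orthonormal columns spanning $\calL$. Since $\rank \bfZ = r$, I can select $r$ linearly independent rows of $\bfZ$ to form an invertible matrix $\bfV \in \bbR^{r \times r}$. Because $\bfV \subseteq \bfZ$, the relation $\bfV = \bfA \bfZ$ holds with $\bfA \in \bbR^{r \times d}$ having 1-hot rows, so setting $S(\bff) = \bfA \bff$ meets \restrictionCS. The central claim is that $\bfA$ is injective on $\calF$: fixing any $\bff_0 \in \calF$, every $\bff \in \calF$ satisfies $\bff - \bff_0 \in \calL$ and admits a unique $\bfc \in \bbR^r$ with $\bff - \bff_0 = \bfZ \bfc$, whereupon $\bfA(\bff - \bff_0) = \bfV \bfc$, and invertibility of $\bfV$ forces $\bff \mapsto \bfA \bff$ to separate points of $\calF$.

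To close the \restrictionCS case, suppose for contradiction that $\bff^\ast \in \ParetoOpt(S)$ but $\bff^\ast \notin \ParetoOpt(\calF)$, so some $\bff \in \calF$ has $\bff \geq \bff^\ast$ and $\bff \neq \bff^\ast$. Since the rows of $\bfA$ are 1-hot, $\bfA$ is coordinate-wise monotone, so $\bfA \bff \geq \bfA \bff^\ast$; combined with the injectivity above, $\bfA \bff \neq \bfA \bff^\ast$. These two facts contradict $\bff^\ast \in \ParetoOpt(S)$, completing the argument. The main technical obstacle is threading the injectivity argument through the affine (rather than linear) structure of $\calF$; once the row-selection is fixed, the rest is routine, so the substance of the theorem lies in certifying that $r$ well-chosen coordinates of $\bfZ$ already separate points of $\calF$.
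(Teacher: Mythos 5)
Your proof is correct. For \restrictionLM and \restrictionL the argument is the same Zadeh scalarization the paper uses, almost word for word. For \restrictionCS, however, you take a genuinely different route. The paper translates $\ParetoOpt(S) \subseteq \ParetoOpt(\calF)$ into the cone-theoretic identity of \Cref{eqn:2__f_linear__restate_cones}, invokes the isomorphism between movement directions $\calF_{\bff}$ and their coefficients $\calC_{\bff}$, then runs the argument in coefficient space via $\ker \bfV = \ker \bfZ = \set{\bfzero}$ and $\calK_Z^\ast \subseteq \calK_V^\ast$ (which needs \Cref{lemma:2__f_linear__cone_decompose_in_subspace} and \Cref{lemma:polyhedral_cones__duals__inclusion}). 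You instead observe that the same choice of $\bfV$ — $r$ linearly independent rows of $\bfZ$ — makes $\bfA$ injective on $\affine(\calF)$ (hence on $\calF$), and that 1-hot rows make $\bfA$ coordinate-wise monotone, and you finish with a three-line contradiction. This is essentially the observation that monotone $+$ injective implies optimality, which is a tighter and cleaner statement than the paper's \Cref{thm:improvement_monotone__imply__optimality} (monotone $+$ improvement implies optimality, and improvement together with monotonicity does entail injectivity). Your version buys you brevity and avoids the cone decomposition lemmas entirely; the paper's version buys uniformity with the rest of the paper, since the improvement-objective proofs already live in coefficient space with the same $\calK_V$/$\calK_Z$ machinery. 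Both proofs hinge on exactly the same structural facts about $\bfV$ — invertibility and the subset relation $\bfV \subseteq \bfZ$ — just phrased at different levels of abstraction. One small matter of emphasis: the theorem as stated asserts only sufficiency ("there exists $S$ with this $k$"), and your argument, like the paper's, establishes only that; the necessity of $k = r$ is deliberately deferred to \Cref{prop:2__f_linear__necessary_subset_examples}, which you do not need here.
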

\begin{proof}
    For the last two design restrictions, the minimal design is straightforward.
    Using any vector $\bfa > \bfzero$ of positive entries, design $S : \bff \mapsto \bfa \cdot \bff$~\cite{zadeh1963optimality}.
    Clearly, $S$ is linear in $\bff$. 
    To see that $S$ is also monotone, fix $\bff, \bff' \in \calF$ such that $\bff \ge \bff'$.
    Taking inner product with positive vector $\bfa$, we get $\bfa \cdot \bff \ge \bfa \cdot \bff'$.
    To see that optimality objective is satisfied, fix $\bff^\ast \in \ParetoOpt(S)$.
    Since $S$ is 1-dimensional, by definition of $\ParetoOpt(S)$, we have $\bfa \cdot \bff^\ast \ge \bfa \cdot \bff$ for all $\bff \in \calF$.
    Since $\bfa$ only has positive elements, for any $\bff \in \calF$ either $\bff^\ast = \bff$ or there exists $j \in [d]$ such that $\bff^\ast_j > \bff_j$.
    Therefore, $\bff^\ast \in \ParetoOpt(\calF)$.
    \\
    
    \noindent
    \boldheading{\restrictionCS restriction.}
    We now give a design for the \restrictionCS restriction.
    We first simplify the optimality objective---$\ParetoOpt(S) \subseteq \ParetoOpt(\calF)$ using movement directions $\calF_{\bff} = \set{ \bfg = \bff' - \bff \in \bbR^d \mid \text{ for all } \bff' \in \calF }$, definitions of dual cones $\calK_A^\ast$ and $\calK_I^\ast$, and $\ker \bfA = \set{ \bfx \in \bbR^d \mid \bfA \bfx = \bfzero }$.
    We rewrite $\ParetoOpt(S)$ as follows:
    \begin{align*}
        \ParetoOpt(S) &= \set{\bff \in \calF \mid \text{for all } \bfg \in \calF_{\bff}, \text{ either } \bfA \bfg \not\ge \bfzero \text{ or } \bfA \bfg = \bfzero}\\
        &= \set{\bff \in \calF \mid \calF_{\bff} \subseteq  \stcomplement{(\calK_A^\ast)} \cup \ker \bfA }.
    \end{align*}

    Similarly, $\ParetoOpt(\calF) = \set{\bff \in \calF \mid \calF_{\bff} \subseteq \stcomplement{(\calK_I^\ast)} \cup \ker \bfI }$.
    Thus we get:
    \begin{align}
    \tag{Eq.~\ref{eqn:2__f_linear__restate_cones}}
        \text{Optimality} &\iff \set{ \bff \in \calF \mid \calF_{\bff} \subseteq \stcomplement{(\calK_A^\ast)} \cup \ker \bfA } \subseteq \set{ \bff \in \calF \mid \calF_{\bff} \subseteq \stcomplement{(\calK_I^\ast)} \cup \ker \bfI }.
    \end{align}

    We now identify an isomorphism between movement directions $\calF_{\bff}$ in the ambient space and the coefficient space.
    Let columns of $\bfZ \in \bbR^{d \times r}$ be an orthonormal basis of $r$-dimensional linear subspace $\calL$ associated with $\affine(\calF)$.
    Fix any $\bff \in \calF$. 
    Denote with $\calC_{\bff} \in \bbR^r$ the set of coefficients of $\calF_{\bff}$ w.r.t. orthonormal basis $\bfZ$, i.e., $\calC_{\bff} = \bfZ^\top \inparens{ \calF_{\bff} }$.
    This introduces an isomorphism between the sets $\calF_{\bff}$ and $\calC_{\bff}$, i.e., for every $\bfg \in \calF_{\bff}$ these exists unique $\bfd \in \calC_{\bff}$ such that $\bfg = \bfZ \bfd$.
    With $\bfV = \bfA \bfZ$, we have four equivalences:
    \begin{align*}
        \bfA \bfg \geq \bfzero \iff \bfV \bfd \geq \bfzero \quad \qquad &\text{and} \quad \qquad \bfA \bfg = \bfzero \iff \bfV \bfd = \bfzero\,,\\
        \bfg \ge \bfzero \iff \bfZ \bfd \ge \bfzero \quad\qquad &\text{and} \quad \qquad \bfg = \bfzero \iff \bfZ \bfd = \bfzero.
    \end{align*}

    \Cref{lemma:2__f_linear__cone_decompose_in_subspace} uses these equivalences to state that for any $\bff \in \calF$, we have
    \begin{align}
         \calF_{\bff} \subseteq \stcomplement{(\calK_A^\ast)} \cup \ker \bfA &\iff \calC_{\bff} \subseteq \stcomplement{(\calK_V^\ast)} \cup \ker \bfV
         \\
         \calF_{\bff} \subseteq \stcomplement{(\calK_I^\ast)} \cup \ker \bfI &\iff \calC_{\bff} \subseteq \stcomplement{(\calK_Z^\ast)} \cup \ker \bfZ .
    \end{align}
    
    We further simplify the optimality objective (\Cref{eqn:2__f_linear__restate_cones}):
    \begin{align}
        \text{Optimality} &\iff
        \set{ \bff \in \calF \mid \calF_{\bff} \subseteq \stcomplement{(\calK_A^\ast)} \cup \ker \bfA } \subseteq \set{ \bff \in \calF \mid \calF_{\bff} \subseteq \stcomplement{(\calK_I^\ast)} \cup \ker \bfI }\\
        \label{eqn:2__f_linear__suff__main_arg}
        &\iff
        \set{ \bff \in \calF \mid \calC_{\bff} \subseteq \stcomplement{(\calK_V^\ast)} \cup \ker \bfV } \subseteq \set{ \bff \in \calF \mid \calC_{\bff} \subseteq \stcomplement{(\calK_Z^\ast)} \cup \ker \bfZ }
    \end{align}
    where \Cref{eqn:2__f_linear__suff__main_arg} follows from \Cref{lemma:2__f_linear__cone_decompose_in_subspace}.

    Now, we choose $r$ linear independent rows of $\bfZ$ to create $\bfV \in \bbR^{r \times r}$.
    Since $\bfZ$ has orthonormal columns, we have $\ker \bfV = \ker \bfZ = \set{\bfzero}$.
    Moreover, we have $\bfV \subseteq \bfZ$, implying $\calK_V \subseteq \calK_Z$ and $\calK_Z^\ast \subseteq \calK_V^\ast$ (\Cref{lemma:polyhedral_cones__duals__inclusion}).
    This shows that $\calK_Z^\ast \cup \stcomplement{(\ker \bfZ)} \subseteq \calK_V^\ast \cup \stcomplement{(\ker \bfV)}$.
    As a result, $\stcomplement{(\calK_V^\ast)} \cup \ker \bfV \subseteq \stcomplement{(\calK_Z^\ast)} \cup \ker \bfZ$. Hence,  for any $\bff\in\calF$ for which $\calC_{\bff} \subseteq \stcomplement{(\calK_V^\ast)} \cup \ker \bfV$, we also have $\calC_{\bff} \subseteq \stcomplement{(\calK_Z^\ast)} \cup \ker \bfZ$. 
    This shows that \Cref{eqn:2__f_linear__suff__main_arg} holds with the proposed choice of $\bfV$.
    As $\bfV = \bfA \bfZ$ for $\bfA$ with 1-hot rows, this design satisfies optimality and \restrictionCS restriction.
\end{proof}

\begin{proposition}[\Cref{prop:2__f_linear__necessary_subset_examples}]
\label{prop_w_proof:2__f_linear__necessary_subset_examples}
    Consider designing $S : \calF \to \bbR^k$ to satisfy optimality objective.
    \begin{enumerate}
        \item For $\calF = \set{ \bff \in \bbR^d \mid \norm{\bff}_1 \le 1 }$, $k \ge 1$ is necessary and sufficient for all design restrictions.
        
        \item For $\calF = \set{ \bff \in \bbR^d \mid \norm{\bff}_2 \le 1 }$, $k \ge 1$ is necessary and sufficient for all design restrictions.
        
        \item For $\calF = \set{ \bff \in \bbR^d \mid \norm{\bff}_\infty \le 1 }$, $k \ge d$ is necessary and sufficient for \restrictionCS.
        Moreover, $k \ge 1$ is necessary and sufficient for the \restrictionLM and \restrictionL restrictions.
    \end{enumerate}
\end{proposition}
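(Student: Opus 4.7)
The plan is to handle the three cases separately, and within each case split sufficiency from necessity. Sufficiency of $k=1$ under \restrictionLM and \restrictionL is already handled uniformly by \Cref{thm:2__f_linear__suff} via $S(\bff) = \bfone \cdot \bff$, and the lower bound $k \geq 1$ is trivial throughout. What remains is: (i)~sufficiency of $k=1$ under \restrictionCS for the $L_1$ and $L_2$ balls, (ii)~sufficiency of $k=d$ under \restrictionCS for the $L_\infty$ ball, and (iii)~necessity of $k \geq d$ under \restrictionCS for the $L_\infty$ ball.

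For (i), I would take $S(\bff) = f_1$, i.e.\ $\bfA = [1,0,\dots,0]$, which satisfies \restrictionCS with $k=1$. Since the score is scalar, $\ParetoOpt(S) = \argmax_{\bff \in \calF} f_1$, and in both the $L_1$ and $L_2$ balls this maximum is attained uniquely at $\bfe_1$. A short check using the norm inequality (in the $L_1$ case, $\bff' \geq \bfe_1$ forces $f'_1 \geq 1$ and $f'_j \geq 0$, so $\|\bff'\|_1 \geq 1$ with equality iff $\bff' = \bfe_1$; the $L_2$ case is analogous) shows that any $\bff' \in \calF$ with $\bff' \geq \bfe_1$ in fact equals $\bfe_1$, whence $\bfe_1 \in \ParetoOpt(\calF)$ and the containment $\ParetoOpt(S) \subseteq \ParetoOpt(\calF)$ holds. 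For (ii), the identity $S(\bff) = \bff$ is a valid \restrictionCS design with $k = d$ and trivially satisfies $\ParetoOpt(S) = \ParetoOpt(\calF)$.

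The main obstacle is (iii), the $k \geq d$ lower bound for the $L_\infty$ ball under \restrictionCS. I plan to argue by contraposition: suppose $\bfA$ has $k < d$ rows, each $1$-hot, so the set $I \subseteq [d]$ of selected coordinates satisfies $|I| \leq k < d$, and fix some $j \in [d] \setminus I$. The witness I would construct is $\bff^\ast \in \calF$ defined by $f^\ast_i = 1$ for $i \in I$ and $f^\ast_i = -1$ for $i \notin I$. Because $S$ reads only coordinates in $I$, the score $S(\bff^\ast)$ equals the coordinatewise maximum $\bfone_k$ achievable over all of $\calF$, which makes $\bff^\ast \in \ParetoOpt(S)$. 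On the other hand, the point $\bff'$ obtained from $\bff^\ast$ by raising the $j$-th coordinate from $-1$ to $0$ (and leaving all other coordinates unchanged) lies in $\calF$, coordinatewise dominates $\bff^\ast$, and is strictly larger in coordinate $j$, so $\bff^\ast \notin \ParetoOpt(\calF)$. This contradicts the optimality objective, forcing $k \geq d$.

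Combining these three pieces with the trivial lower bound $k \geq 1$ establishes all three clauses of the proposition.
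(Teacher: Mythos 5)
Your proposal is correct and follows essentially the same route as the paper: for the $L_1$ and $L_2$ balls you use the scalar coordinate-selection score $S(\bff)=f_1$, observe that its pareto set is the singleton $\{\bfe_1\}$, and verify $\bfe_1 \in \ParetoOpt(\calF)$ via the norm inequality; for the $L_\infty$ ball you observe $\ParetoOpt(\calF)=\{\bfone_d\}$ and, when $k<d$, exhibit a point that is pareto-optimal under $S$ (all selected coordinates at $1$) but dominated in the unselected coordinate $j$, forcing $k\ge d$. The only cosmetic difference is that you pick a single explicit witness $\bff^\ast$ for the $L_\infty$ lower bound and check membership of $\bfe_1$ directly rather than first characterizing the full sets $\ParetoOpt(\calF)$ and $\ParetoOpt(S)$ as the paper does; the underlying argument is identical.
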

\begin{proof}
    \Cref{thm:2__f_linear__suff} states $k \ge 1$ is sufficient for \restrictionLM and \restrictionL restrictions for any $\calF$; trivially, $k \ge 1$ is necessary.
    So, we prove the claims for the \restrictionCS restriction.
    For the stated sets $\calF$, we determine $\ParetoOpt(\calF)$ and discuss choice of $S$ to satisfy $\ParetoOpt(S) \subseteq \ParetoOpt(\calF)$.

    We denote the $d$ coordinates of metric value $\bff \in \calF$ with $\bff_1, \dots, \bff_d$.
    Let $\bfe_j$ be the $j^{th}$ canonical basis vector of $\bbR^d$.
    We denote the unit $\ell_p$-norm ball with $\bbB_p^d = \set{ \bff \in \bbR^d \mid \norm{\bff}_p \le 1 }$.
    \begin{enumerate}
        \item Let $\calF = \bbB_1^d$, the unit $\ell_1$-norm ball centered at the origin.
        Note that the $j^{th}$ coordinate of metric value $\bff_j$ is maximized when $\bff = \bfe_j$.
        So vectors $\bfe_1, \dots, \bfe_d$ are pareto-optimal w.r.t. $\calF$.
        In fact, all vectors on the surface of $\bbB_1^d$ in the nonnegative orthant are pareto-optimal w.r.t. $\calF$.
        That is, $\ParetoOpt(\calF) = \set{ \bff \in \bbR^d_+ \mid \bfone_d \cdot \bff = 1 }$.

        We choose any coordinate $j \in [d]$ and design 1-dimensional $S(\bff) = \bff_j$.
        Since $\calF$ is the unit $\ell_1$-norm ball, $\ParetoOpt(S) = \set{ \bfe_j }$, which a subset of $\ParetoOpt(\calF)$ as $\bfone_d \cdot \bfe_j = 1$.
        Hence, this design with dimensionality $k = 1$ satisfies the optimality objective under \restrictionCS restriction.

        Trivially, $k \ge 1$ is necessary as well.
        
        \item Let $\calF = \bbB_2^d$, the unit $L_2$-ball centered at the origin.
        Note that the $j^{th}$ coordinate of metric value $\bff_j$ is maximized when $\bff = \bfe_j$.
        So vectors $\bfe_1, \dots, \bfe_d$ are pareto-optimal w.r.t. $\calF$.
        In fact, all vectors on the unit shell in the nonnegative orthant are pareto-optimal w.r.t. $\calF$.
        That is, $\ParetoOpt(\calF) = \bbS_2^{d-1} \cap \bbR_d^+ = \bbS_2^{d-1} \cap \calK_I$ where $\bfI$ is the identity matrix.
        
        We can similarly determine pareto-optimal points w.r.t. $S(\bff) = \bfA \bff$.
        Let $\bfA$ have $k$ rows $\bfA = [\bfa_1; \dots; \bfa_k] \in \bbR^{k \times d}$.
        The $i^{th}$ coordinate of $S$ is maximized when $\bff = \frac{\bfa_i}{\norm{\bfa_i}_2}$.
        So vectors $\frac{\bfa_1}{\norm{\bfa_1}_2}, \dots, \frac{\bfa_k}{\norm{\bfa_k}_2}$ are pareto-optimal w.r.t. $S$.
        In fact, all vectors on the unit shell and cone $\calK_A$ generated by rows of $\bfA$ are pareto-optimal w.r.t. $S$.
        That is, $\ParetoOpt(S) = \bbS_2^{d-1} \cap \calK_A$.

        So $S$ satisfies optimality if $\bbS_2^{d-1} \cap \calK_A \subseteq \bbS_2^{d-1} \cap \calK_I$.
        Any matrix $\bfA \subseteq \bfI_d$ implies $\calK_A \subseteq \calK_I$.
        Hence, we can choose any coordinate $j \in [d]$ and construct 1-dimensional $S(\bff) = \bff_j$.
        This design with dimensionality $k = 1$ satisfies the optimality objective under \restrictionCS restriction.

        Trivially, $k \ge 1$ is necessary as well.

        \item Let $\calF = \bbB_\infty^d$, the unit $L_\infty$-ball centered at the origin.
        It is easy to see that $\ParetoOpt(\calF) = \set{ \bfone_d }$, a singleton set.

        Under the \restrictionCS restriction, $S : \calF \to \bbR^k$ is such that $S(\bff) = [\bff_{i_1}; \dots; \bff_{i_k}]$ where the every index $i_j \in [d]$.
        Let $I$ be the set of unique indices.
        We will now show that if $k < d$, then there does not exist score function $S$ that satisfies optimality.
        Since $k < d$, we have $\abs{I} < d$.
        The point $\bff \in \bbB_\infty^d$ is pareto-optimal w.r.t. $S$ if $\bff_i = 1$ for every $i \in I$.
        Precisely, $\ParetoOpt(S) = \set{ \bff \in [-1, 1]^d \mid \bff_i = 1 \text{ for all } i \in I }$.
        Since there exists $j \in [d]$ that is not in $I$, $\ParetoOpt(S)$ contains points with $\bff_j = -1$.
        Hence, $\ParetoOpt(S)$ is not a subset of $\ParetoOpt(\calF)$.
        Therefore, for $\calF = \bbB_\infty^d$ and $k < d$ it is not possible to design $S : \calF \to \bbR^d$ that satisfies optimality objective under \restrictionCS restriction. 

        Trivially, $k = d$ is sufficient to satisfy the optimality objective under \restrictionCS restriction: design $S(\bff) = \bff$.
        Hence, $k \ge d$ is both necessary and sufficient when $\calF = \bbB_\infty^d$.
        \qedhere
    \end{enumerate}
\end{proof}

\subsection{Minimal design problem for both objectives simultaneously}

\begin{corollary}
    \label{cor_w_proof:1_2__f_linear__suff}
    Let columns of $\bfZ$ be an orthonormal basis of linear subspace $\calL$ associated with $\affine(\calF)$.
    For each design restriction, there exists score function $S : \calF \to \bbR^k$ that simultaneously satisfies improvement and optimality objectives with following dimensionalities.
    
    \noindent
    \begin{center}
        \begin{tabular}{l|c}
             & Dimensionality $k \ge$ \\
            \midrule
            \restrictionCS & $\ConeSubsetRank(\bfZ)$ \\
            \restrictionLM & $\ConeGeneratingRank(\bfZ)$ \\
            \restrictionL & $\ConeGeneratingRank(\bfZ)$ \\
        \end{tabular}
    \end{center}
    Moreover, for \restrictionCS and \restrictionLM restrictions, the score design is minimal when $\calF$ has non-empty relative interior.
\end{corollary}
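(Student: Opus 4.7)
The plan is to combine Theorem \ref{thm:improvement_monotone__imply__optimality} with the improvement-only dimensionalities from Theorem \ref{thm:1__f_linear__suff}, reducing the joint design problem to the improvement design problem whenever the restriction already guarantees monotonicity.

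First I would observe that under \restrictionCS and \restrictionLM, any compliant $S : \bff \mapsto \bfA \bff$ is automatically monotone in $\calF$. For \restrictionCS this is because the rows of $\bfA$ are $1$-hot (hence nonnegative), so $\bff' \ge \bff$ implies $\bfA \bff' \ge \bfA \bff$; for \restrictionLM monotonicity is built into the definition. Applying Theorem \ref{thm:1__f_linear__suff} produces score functions satisfying improvement with $k = \ConeSubsetRank(\bfZ)$ and $k = \ConeGeneratingRank(\bfZ)$, respectively. Since these $S$ are both monotone and satisfy improvement, Theorem \ref{thm:improvement_monotone__imply__optimality} immediately yields that they also satisfy optimality. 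This proves the sufficient upper bounds for the first two restrictions.

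For the \restrictionL restriction, monotonicity is not automatic, so I would simply reuse the \restrictionLM construction: the $k = \ConeGeneratingRank(\bfZ)$ design produced for \restrictionLM is a valid \restrictionL design because monotone-and-linear is a special case of linear, and by the argument above it satisfies both objectives. This proves the \restrictionL entry of the table.

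Finally, for minimality under the non-empty relative interior assumption, Theorem \ref{thm:1__f_linear__nonempty_relint_f_with_origin__necsuff} already states that $k \ge \ConeSubsetRank(\bfZ)$ under \restrictionCS and $k \ge \ConeGeneratingRank(\bfZ)$ under \restrictionLM are necessary just for satisfying improvement. Since the corollary requires improvement as one of the two objectives, these lower bounds carry over verbatim, matching our upper bounds. The main obstacle \emph{would} be the \restrictionL case: here Theorem \ref{thm:1__f_linear__nonempty_relint_f_with_origin__necsuff} only forces $k \ge \ConeRank(\bfZ)$, which can be strictly smaller than $\ConeGeneratingRank(\bfZ)$ (as shown in Example \ref{example:matrix_rank__3d__circularcone}); a non-monotone linear $S$ attaining $\ConeRank(\bfZ)$ might still satisfy improvement, but we do not know how to force it to also satisfy optimality, nor do we have an impossibility argument. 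I would therefore not attempt to close this gap and instead defer it, exactly as noted in the remark following the corollary.
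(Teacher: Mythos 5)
Your proposal is correct and follows essentially the same route as the paper's proof: invoke Theorem~\ref{thm:1__f_linear__suff} for the improvement designs, observe that \restrictionCS and \restrictionLM force monotonicity so Theorem~\ref{thm:improvement_monotone__imply__optimality} upgrades those designs to satisfy optimality as well, reuse the \restrictionLM design for \restrictionL, and cite Theorem~\ref{thm:1__f_linear__nonempty_relint_f_with_origin__necsuff} for the lower bounds. Your explicit justification that 1-hot $\bfA$ is nonnegative hence monotone, and your remark on the unresolved \restrictionL gap, are slightly more detailed than the paper's terse proof but do not change the argument.
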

\begin{proof}
    For the first two restrictions (\restrictionCS and \restrictionLM), $S$ is monotone in $\calF$.
    So, \Cref{thm:1__f_linear__suff,thm:improvement_monotone__imply__optimality} immediately give the design for simultaneously satisfying both objectives with dimensionality $k = \CSR(\bfZ)$ and $\CGR(\bfZ)$ respectively.
    \Cref{thm:1__f_linear__nonempty_relint_f_with_origin__necsuff} proves the minimality of this design.
    The design for \restrictionLM restriction also applies for the \restrictionL restriction.
\end{proof}

\clearpage
\section{Algorithms}
\label{app:algorithms}

\begin{table}[!h]
    \centering
    \begin{tabular}{l l}
        \toprule
        Reference & Description\\
        \midrule
        \Cref{alg:is_in_cone} & Check if a vector is in a polyhedral cone\\
        \Cref{alg:is_cone_pointed} & Determine if cone $\calK_W$ generated by $\bfW$ is pointed\\
        \Cref{alg:cone_minkowski_decomposition} & Decompose a cone $\calK_W$ into its lineality space $\calL$ and pointed $\calK_P$\\
        \midrule
        \Cref{alg:csr} & Find $\bfV$ that attains $\ConeSubsetRank(\bfW)$\\
        \Cref{alg:csr__subspace} & Find $\bfV$ that attains $\ConeSubsetRank(\bfW)$ when $\calK_W$ is a linear subspace\\
        \Cref{alg:csr__pointed} & Find $\bfV$ that attains $\ConeSubsetRank(\bfW)$ when $\calK_W$ is pointed\\
        \midrule
        \Cref{alg:cgr} & Find $\bfV$ that attains $\ConeGeneratingRank(\bfW)$\\
        \midrule
        \Cref{alg:cr} & Find $\bfV$ that attains $\ConeRank(\bfW)$\\
        \Cref{alg:cr__pointed} & Find $\bfV$ that attains $\ConeRank(\bfW)$ when $\calK_W$ is pointed\\
        \bottomrule
    \end{tabular}
    \caption{List of algorithms}
    \label{tab:app_algorithms_list}
\end{table}

\subsection{Properties of polyhedral cones}

\begin{algorithm}[!h]
    \begin{algorithmic}[1]
        \Procedure{IsInCone}{$\bfx$, $\bfW = [\bfw_1; \dots; \bfw_m]$}
            \State Solve LP:
            \begin{align*}
                \min_{\bflambda \in \bbR^m} 1 \text{ s.t. } \bfx = \bflambda \bfW, \bflambda \ge \bfzero.
            \end{align*}
            \State If the LP is feasible \Return True, else \Return False
        \EndProcedure
        \caption{Check if a vector is in a polyhedral cone}
        \label{alg:is_in_cone}
    \end{algorithmic}
\end{algorithm}

\begin{lemma}
\label{lemma:alg__is_in_cone}
    \Cref{alg:is_in_cone} returns True if and only if $\bfx \in \calK_W$.
    The runtime is $\tilde{O} \inparens{ m n^{2.5} }$.
\end{lemma}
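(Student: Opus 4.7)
The correctness claim is essentially definitional. By the definition stated in the Notation section, $\calK_W = \cone(W) = \set{\bfx \in \bbR^n \mid \bfx = \bflambda \bfW, \bflambda \in \bbR^m_+}$. So the plan is to observe that the feasibility constraints of the LP in \Cref{alg:is_in_cone}, namely $\bfx = \bflambda \bfW$ with $\bflambda \ge \bfzero$, are exactly the defining conditions for membership of $\bfx$ in $\calK_W$. The objective is trivially $1$, so the LP is feasible iff a valid nonnegative coefficient vector $\bflambda$ exists. This gives both directions: if the LP is feasible, the witness $\bflambda$ certifies $\bfx \in \calK_W$; conversely, if $\bfx \in \calK_W$, unpacking the definition produces an $\bflambda$ that is LP-feasible. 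This is a one-line argument and should not require a separate lemma.

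For the runtime, the plan is to invoke a modern polynomial-time LP solver on a feasibility LP with $m$ variables and $n$ linear equality constraints (plus $m$ nonnegativity constraints). Using an interior-point method of Lee--Sidford type (or comparable), the stated $\tilde O(m n^{2.5})$ bound is a standard consequence of the LP's size: the solver's complexity depends on $m$ (number of variables) and the dimension $n$ of the constraint space. The plan is to cite this algorithmic result and apply it to our LP formulation, making sure the dependence on bit-complexity is absorbed into the $\tilde O(\cdot)$ notation (under the standard assumption that the entries of $\bfW$ and $\bfx$ have polynomially bounded bit-length).

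\textbf{Main obstacle.} There is no real obstacle for correctness. The only subtlety is pinning down which LP algorithm yields precisely the advertised $\tilde O(m n^{2.5})$ runtime; this is a matter of citing the right reference (e.g., Lee--Sidford, or a textbook interior-point method) rather than a technical difficulty. The proof can be presented in two or three sentences.
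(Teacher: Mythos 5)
Your proposal matches the paper's proof essentially exactly: the paper also treats correctness as immediate from the definition of $\calK_W$ (one sentence: ``The LP is feasible if and only if $\bfx$ can be written as a nonnegative combination of rows of $\bfW$'') and obtains the runtime by plugging the LP's $m$ variables and $n$ equality constraints into the Lee--Sidford bound $\tilde O(mn^{1.5} + n^{2.5}) = \tilde O(mn^{2.5})$, which is precisely the solver you propose to cite.
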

\begin{proof}
    The LP is feasible if and only if $\bfx$ can be written as a nonnegative combination of rows of $\bfW$.\\

    \noindent
    \textit{Runtime:} As $\bfW \in \bbR^{m \times n}$, there are $m$ variables and $n$ constraints.
    The runtime of the LP is thus $\tilde{O} (mn^{1.5} + n^{2.5}) = \tilde O(mn^{2.5})$~\cite{lee2015efficient}, where $\tilde O$ hides $\polylog(m,n)$ factors.
\end{proof}

\begin{algorithm}[!h]
    \begin{algorithmic}[1]
        \Procedure{IsConePointed}{$\bfW = [\bfw_1; \dots; \bfw_m]$}
            \State Require: $\bfW$ does not contain any rows of all zeros
            \State Solve \ref{eqn:LP__pointed}:
            \begin{align*}
                \tag{\ref{eqn:LP__pointed}}
                &\min_{\bflambda \in \bbR^m} 1 \text{ s.t. } \bflambda \bfW = \bfzero, \quad \bflambda \ge \bfzero, \quad \bflambda \cdot \bfone = 1.
            \end{align*}
            \State If the LP is feasible \Return False, else \Return True
        \EndProcedure
        \caption{Determine if cone $\calK_W$ generated by $\bfW$ is pointed}
        \label{alg:is_cone_pointed}
    \end{algorithmic}
\end{algorithm}

\begin{lemma}
\label{lemma:alg__is_cone_pointed}
    \Cref{alg:is_cone_pointed} returns True if and only if $\calK_W$ is pointed.
    The runtime is $\tilde{O} \inparens{ m n^{2.5} }$.
\end{lemma}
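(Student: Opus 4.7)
The plan is to verify each direction of the biconditional by explicitly converting between a feasible point of \ref{eqn:LP__pointed} and a pair of opposing nonzero cone elements witnessing non-pointedness of $\calK_W$. The runtime bound will follow immediately from the same LP-complexity citation used in \Cref{lemma:alg__is_in_cone}.

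For the ``if feasible then non-pointed'' direction, I would take a feasible $\bflambda \ge \bfzero$ with $\bflambda \bfW = \bfzero$ and $\bflambda \cdot \bfone = 1$, pick any index $i$ with $\lambda_i > 0$ (which exists because the coordinates sum to one), and set $\bfx = \lambda_i \bfw_i$. By the algorithm's precondition no row is zero, so $\bfx \neq \bfzero$; clearly $\bfx \in \calK_W$. Rearranging $\bflambda \bfW = \bfzero$ yields $-\bfx = \sum_{j \neq i} \lambda_j \bfw_j$, a nonnegative combination of rows of $\bfW$, hence $-\bfx \in \calK_W$. Therefore $\calK_W$ is non-pointed, so the algorithm (returning False in this case) is correct.

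For the converse, suppose $\calK_W$ is non-pointed, so some nonzero $\bfx \in \calK_W$ has $-\bfx \in \calK_W$. Then there exist $\bfmu, \bfnu \in \bbR^m_+$ with $\bfx = \bfmu \bfW$ and $-\bfx = \bfnu \bfW$. Adding these gives $(\bfmu + \bfnu) \bfW = \bfzero$, and $\bfmu + \bfnu \neq \bfzero$ since $\bfx \neq \bfzero$ forces $\bfmu \neq \bfzero$. Normalizing by its $\ell_1$-norm produces $\bflambda = (\bfmu + \bfnu)/\lVert \bfmu + \bfnu\rVert_1$, which is feasible for \ref{eqn:LP__pointed}. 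Thus the LP is infeasible exactly when $\calK_W$ is pointed, establishing correctness.

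Finally, the runtime analysis parrots that of \Cref{lemma:alg__is_in_cone}: \ref{eqn:LP__pointed} has $m$ nonnegative variables and $n{+}1$ equality constraints, so the interior-point LP solver of \cite{lee2015efficient} runs in $\tilde{O}(m n^{1.5} + n^{2.5}) = \tilde{O}(m n^{2.5})$ time. The only mild subtlety in the whole argument is the affine normalization $\bflambda \cdot \bfone = 1$ used to cut the recession direction of nonnegative kernel vectors down to a bounded polytope; this is needed so that ``infeasibility'' means ``no nontrivial nonnegative dependence,'' and is resolved by the explicit $\ell_1$-normalization shown above.
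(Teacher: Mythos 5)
Your proof is correct and follows essentially the same route as the paper: the paper factors the equivalence ``$\calK_W$ non-pointed $\iff$ there exists nonzero $\bflambda \ge \bfzero$ with $\bflambda\bfW = \bfzero$'' into \Cref{lemma:nonpointed__zero_in_cone} and then simply handles the $\ell_1$-normalization, whereas you inline that lemma's argument directly (pick a positive $\lambda_i$, rearrange $\bflambda\bfW = \bfzero$ to exhibit $\pm\bfx \in \calK_W$; conversely add the two witness coefficient vectors). The runtime analysis is identical.
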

\begin{proof}
    We will prove the contrapositive: algorithm returns False if and only if $\calK_W$ is non-pointed.

    First, we prove that if algorithm returns False, then $\calK_W$ is non-pointed.
    Algorithm returns False if the LP is feasible, i.e., there exists $\bflambda \ge \bfzero$ such that $\bflambda \bfW = \bfzero$ and $\bflambda \cdot \bfone = 1$.
    Hence, there exists nonzero nonnegative $\bflambda$ such that $\bflambda \bfW = \bfzero$.
    \Cref{lemma:nonpointed__zero_in_cone} then tells us that $\calK_W$ is non-pointed.

    Now we prove: if $\calK_W$ is non-pointed, then algorithm returns False.
    \Cref{lemma:nonpointed__zero_in_cone} tells us that there exists nonzero $\bfmu \ge \bfzero$ such that $\bfzero = \bfmu \bfW$.
    Let $\bflambda = \frac{\bfmu}{\norm{\bfmu}_1}$.
    Note that $\bflambda \neq \bfzero, \bflambda \ge \bfzero$, and $\bflambda \cdot \bfone = 1$.
    And finally, $\bflambda \bfW = \bfzero$.
    The LP is thus feasible, so the algorithm returns False.\\

    \noindent
    \textit{Runtime:} We write \ref{eqn:LP__pointed} in the canonical form as $\min_{\bflambda \in \bbR^m} 1$ such that $\bflambda [\bfW; \bfone] = [\bfzero; 1]$ and $\bflambda \ge \bfzero$.
    As $\bfW \in \bbR^{m \times n}$, there are $m$ variables and $n+1$ constraints.
    The runtime of the LP is thus $\tilde {O}(m n^{1.5} + n^{2.5}) = \tilde O(mn^{2.5})$~\cite{lee2015efficient}.
\end{proof}

\begin{algorithm}[!htbp]
    \begin{algorithmic}[1]
        \Procedure{DecomposeCone}{$\bfW = [w_1; \dots; w_m]$}
        \State Require: $\bfw_i \neq \bfzero$ for all $i \in [m]$
        \State Initialize $\round{\bfL}{0} = \emptyset$ and $\round{\bfP}{0} = \bfW$
        \For{$t = 1, \dots, $}
            \State Solve \ref{eqn:LP__pointed}:
            \begin{align}
                \tag{\ref{eqn:LP__pointed}}
                \bfalpha^\ast &= \argmin_{\bfalpha} 1 \text{ s.t. } \bfalpha \round{\bfP}{t-1} = \bfzero,\quad \bfalpha \ge \bfzero, \quad \bfalpha \cdot \bfone = 1.
            \end{align}
            \If{LP is infeasible}
                \State $T \gets t-1$ and break
            \Else
                \State $\round{\bfL}{t} \gets [\round{\bfL}{t-1}; \bfp_1; \bfp_2; \dots]$ for $\bfp_i \in \round{\bfP}{t-1}$ with $\alpha_i^\ast > 0$
                \State $\round{\bfP}{t} \gets \round{\bfP}{t-1} (\bfI - \bfZ \bfZ^\top)$ where columns of $\bfZ$ are orthonormal basis of $\rowspan(\round{\bfL}{t})$
            \EndIf
        \EndFor
        \State \Return $(\bfL = \round{\bfL}{T}, \bfP = \round{\bfP}{T})$
        \EndProcedure
    \end{algorithmic}
    \caption{Decompose a cone $\calK_W$ into its lineality space $\calL$ and pointed cone $\calK_P = \calL^\perp \cap \calK_W$}
    \label{alg:cone_minkowski_decomposition}
\end{algorithm}

\begin{lemma}
\label{lemma:alg__cone_minkowski_decomposition}
    Let cone $\calK_W$ generated by $\bfW \in \bbR^{m \times n}$ have decomposition $\calK_W = \calH + \calQ$, where $\calH = \lineality(\calK_W)$ is the lineality space and $\calQ = \calL^\perp \cap  \calK_W$ is a pointed cone.
    Then \Cref{alg:cone_minkowski_decomposition} returns matrices $(\bfL, \bfP)$ such that the $\calK_L = \calH$ and $\calK_P = \calQ$.
    The runtime is $\tilde O(m^2 n^{2.5})$.
\end{lemma}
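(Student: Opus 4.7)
The plan is to maintain two invariants across the iterations: (a) $\bfP^{(t)} = \bfW \bm{\Pi}_t$, where $\bm{\Pi}_t := \bfI - \bfZ^{(t)}(\bfZ^{(t)})^\top$ is orthogonal projection onto $(\calH^{(t)})^\perp$ with $\calH^{(t)} := \rowspan(\bfL^{(t)})$; and (b) $\calH^{(t)} \subseteq \calH$. Invariant (a) follows by induction on $t$: since $\calH^{(t-1)} \subseteq \calH^{(t)}$ yields $(\calH^{(t)})^\perp \subseteq (\calH^{(t-1)})^\perp$ and hence $\bm{\Pi}_{t-1} \bm{\Pi}_t = \bm{\Pi}_t$, we get $\bfP^{(t)} = \bfP^{(t-1)} \bm{\Pi}_t = \bfW \bm{\Pi}_t$. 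By linearity of the projection, $\calK_{P^{(t)}} = \bm{\Pi}_t \calK_W$.

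The core of invariant (b) is the subclaim $\lineality(\calK_{P^{(t-1)}}) \subseteq \calH$, assuming $\calH^{(t-1)} \subseteq \calH$. I would prove this as follows. Take $\bfx = \bm{\Pi}_{t-1} \bfy_1 = -\bm{\Pi}_{t-1} \bfy_2$ with $\bfy_1, \bfy_2 \in \calK_W$, and decompose $\bfy_j = \bfh_j + \bfq_j$ using $\calK_W = \calH + \calQ$. Because $\calH^{(t-1)} \subseteq \calH$ gives $\calQ \subseteq \calH^\perp \subseteq (\calH^{(t-1)})^\perp$, the projection fixes $\bfq_j$, so $\bfq_j \in \calH^\perp$; and $\bm{\Pi}_{t-1} \bfh_j \in \calH$ via the orthogonal decomposition $\calH = \calH^{(t-1)} \oplus (\calH \cap (\calH^{(t-1)})^\perp)$. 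Summing $\bfx + (-\bfx) = \bfzero$ and separating $\calH$- from $\calH^\perp$-components forces $\bfq_1 + \bfq_2 = \bfzero$, and pointedness of $\calQ$ then yields $\bfq_1 = \bfq_2 = \bfzero$, so $\bfx = \bm{\Pi}_{t-1} \bfh_1 \in \calH$. Invariant (b) follows: at a feasible iteration $t$, rearranging $\sum_i \alpha_i^\ast \bfp_i = \bfzero$ shows each $\bfp_i$ with $\alpha_i^\ast > 0$ lies in $\lineality(\calK_{P^{(t-1)}}) \subseteq \calH$.

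For termination and correctness at halt: the added rows $\bfp_i$ lie in $(\calH^{(t-1)})^\perp$, so any nonzero added row strictly enlarges $\calH^{(t)}$. After a minor preprocessing step that discards identically-zero rows of $\bfP^{(t-1)}$ (equivalently, forcing $\alpha_i^\ast = 0$ at such indices in \ref{eqn:LP__pointed}), feasibility produces at least one nonzero added row, so $\dim \calH^{(t)}$ strictly increases and the loop halts within $T \le \min(m,n)$ rounds. At halt, infeasibility of \ref{eqn:LP__pointed} combined with \Cref{lemma:alg__is_cone_pointed} gives that $\calK_{P^{(T)}} = \bm{\Pi}_T \calK_W$ is pointed. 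To upgrade $\calH^{(T)} \subseteq \calH$ to equality: if some $\bfh \in \calH \setminus \calH^{(T)}$ existed, then $\bm{\Pi}_T \bfh \neq \bfzero$, yet both $\bm{\Pi}_T \bfh$ and $-\bm{\Pi}_T \bfh$ would lie in $\calK_{P^{(T)}}$ (since $\bfh, -\bfh \in \calH \subseteq \calK_W$), contradicting pointedness. Finally, by linearity $\calK_{P^{(T)}} = \bm{\Pi}_T(\calH + \calQ) = \{\bfzero\} + \calQ = \calQ$, so $(\bfL, \bfP)$ satisfy $\calK_L = \calH$ and $\calK_P = \calQ$.

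For the runtime, each iteration solves one \ref{eqn:LP__pointed} of size $m \times (n+1)$ in $\tilde O(m n^{2.5})$ time (as in \Cref{lemma:alg__is_cone_pointed}) and does $O(m n^2)$ linear-algebra work to form $\bfZ^{(t)}$ and apply the projection; with $O(\min(m,n))$ iterations, the total cost is $\tilde O(m^2 n^{2.5})$. I expect the main obstacle to be the subclaim $\lineality(\calK_{P^{(t-1)}}) \subseteq \calH$: its proof must track how nested projections interact with the orthogonal decomposition $\calK_W = \calH + \calQ$, and it hinges crucially on pointedness of $\calQ$ to exclude a spurious $\calQ$-component of $\bfx$.
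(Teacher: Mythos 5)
Your proof is correct, and it takes a genuinely different route from the paper's. The paper verifies the four defining properties of the Minkowski decomposition separately---that $\calK_{L^{(T)}}$ is a linear subspace (via a somewhat intricate inductive argument that shifts coefficients by a multiple of $\bfalpha^\ast$), that $\calK_{P^{(T)}} \perp \calK_{L^{(T)}}$, that $\calK_W = \calK_{L^{(T)}} + \calK_{P^{(T)}}$, and that $\calK_{P^{(T)}}$ is pointed---and then invokes Schrijver's uniqueness to identify the pieces. You instead keep the cleaner invariants $\bfP^{(t)} = \bfW\bm{\Pi}_t$ (so $\calK_{P^{(t)}} = \bm{\Pi}_t \calK_W$) and $\rowspan(\bfL^{(t)}) \subseteq \lineality(\calK_W)$, identify $\rowspan(\bfL^{(T)})$ with the true lineality space \emph{directly} (the $\supseteq$ upgrade via pointedness of $\calK_{P^{(T)}}$ is the key move), and then read off $\calK_{P^{(T)}} = \bm{\Pi}_T(\calH + \calQ) = \calQ$ by linearity. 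Both arguments rely on Schrijver's existence-and-uniqueness statement (\Cref{lemma:restated__cone_decomposition}), but yours uses it once as a structural tool in the subclaim $\lineality(\calK_{P^{(t-1)}}) \subseteq \calH$ rather than as the final step; this bypasses the paper's algebraic verification of properties (1) and (3), which is where most of its technical work lives.

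Your concern about termination is also well-placed: as written, Line~10 of \Cref{alg:cone_minkowski_decomposition} keeps all $m$ rows of $\bfP^{(t)}$, so the rows moved to $\bfL^{(t)}$ become identically zero in $\bfP^{(t)}$; at the next iteration, \ref{eqn:LP__pointed} is then trivially feasible by placing all of $\bfalpha$ on a zero row, and the loop need not make progress. The paper's runtime analysis implicitly assumes the used-up rows are gone (``rows of $\round{\bfP}{t-1}$ get separated into $\round{\bfL}{t}$ and $\round{\bfP}{t}$''), but the pseudocode does not do this. Your fix---discard zero rows before solving the LP, or equivalently force $\alpha_i^\ast = 0$ on them---is exactly what is needed: with it, every feasible LP adds at least one nonzero row to $\bfL^{(t)}$, which lies in $(\calH^{(t-1)})^\perp$ by construction, so $\dim \calH^{(t)}$ strictly increases and the loop halts in at most $\min(m,n) \le m$ rounds. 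The resulting $\tilde O(m^2 n^{2.5})$ runtime bound matches the lemma.
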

\begin{proof}
    \Cref{lemma:restated__cone_decomposition} states that the cone $\calK_W$ can be uniquely decomposed as $\calL + \calK_P$ where $\calL = \lineality(\calK_W) = \calK_W \cap (-\calK_W)$ is the lineality space and $\calK_P = \calL^\perp \cap \calK_W$ is a pointed cone \cite[Sec.~8.2]{schrijver1998theory}. 
    Here $\calL$ is itself a cone.
    Let the algorithm return matrices $\bfL, \bfP$ that generate cones $\calK_L, \calK_P$ respectively.
    Denote $\calL = \rowspan(\bfL)$.
    Due to unique decomposition result, it suffices to show that $\bfL, \bfP$ satisfy properties: (1) $\calK_L = \calL$, (2) $\calK_P \perp \calL$, (3) $\calK_W = \calL + \calK_P$, and (4) $\calK_P$ is a pointed cone.
    We prove each property below.
    
    \begin{enumerate}
        \item We show that the cone $\calK_L$ generated by $\bfL$ is in fact a linear subspace, implying that $\calK_L = \calL$.
        
        We prove by induction on $t$ that $\calK_L$ is a linear subspace.
        The base case $t = 0$ is trivial as $\bfL = \emptyset$.
        For the inductive step $t$, assume the inductive hypothesis for step $t-1$ that $\calK_{\round{L}{t-1}}$ is a linear subspace.
        We show that $\calK_{\round{L}{t}}$ is a linear subspace by showing that it contains every linear combination of its constituent vectors.
        
        In round $t$ of the algorithm, let rows $\bfp_i$ of $\round{\bfP}{t-1}$ with $\alpha_i^\ast > 0$ be arranged as rows of $\bfV$.
        Let $\bfx, \bfy \in \calK_{\round{L}{t}}$, i.e., $\bfx = \bflambda \round{\bfL}{t-1} + \bflambda' \bfV$ and $\bfy = \bfmu \round{\bfL}{t-1} + \bfmu' \bfV$ for some $\bflambda, \bflambda', \bfmu, \bfmu' \ge \bfzero$.
        For $c_1, c_2 \in \bbR$,
        \begin{align}
            c_1 \bfx + c_2 \bfy &= (c_1 \bflambda + c_2 \bfmu) \round{\bfL}{t-1} + (c_1 \bflambda' + c_2 \bfmu') \bfV.
        \end{align}

        The inductive hypothesis says that cone $\calK_{\round{L}{t-1}}$ is a linear subspace.
        Hence, $(c_1 \bflambda + c_2 \bfmu) \round{\bfL}{t-1} = \bfnu \round{\bfL}{t-1}$ for some $\bfnu \ge \bfzero$.
        Using the fact that $\bfalpha^\ast \bfV = \bfzero$ for $\bfalpha^\ast > \bfzero$ from the LP's feasibility, we get
        
        \begin{align}
            c_1 \bfx + c_2 \bfy &= (c_1 \bflambda + c_2 \bfmu) \round{\bfL}{t-1} + (c_1 \bflambda' + c_2 \bfmu') \bfV\\
            &= \bfnu \round{\bfL}{t-1} + (c_1 \bflambda' + c_2 \bfmu') \bfV\\
            &= \bfnu \round{\bfL}{t-1} + (c_1 \bflambda' + c_2 \bfmu') \bfV - \min_i \inbraces{ \frac{ c_1 \lambda'_i + c_2 \mu'_i }{\alpha_i^\ast} } \bfalpha^\ast \bfV\\
            &= \bfnu \round{\bfL}{t-1} + \underbrace{ \inparens{ c_1 \bflambda' + c_2 \bfmu' - \min_i \inbraces{ \frac{ c_1 \lambda'_i + c_2 \mu'_i }{\alpha_i^\ast} } \bfalpha^\ast } }_{\bfbeta} \bfV.
        \end{align}

        Now observe that $\bfbeta$ is a nonnegative vector, as $\bfalpha^\ast > \bfzero$ and for each coordinate $j$ we have
        \begin{align}
            \beta_j &= c_1 \lambda'_j + c_2 \mu'_j - \min_i \inbraces{ \frac{c_1 \lambda'_i + c_2 \mu'_j}{\alpha^\ast_i} } \alpha^\ast_j\\
            &= \alpha^\ast_j \inbraks{ \frac{c_1 \lambda'_j + c_2 \mu'_j}{\alpha^\ast_i} - \min_i \inbraces{ \frac{c_1 \lambda'_i + c_2 \mu'_j}{\alpha^\ast_i} } }\\
            &\ge \alpha^\ast_j > 0.
        \end{align}
        
        Hence, every linear combination of $\bfx$ and $\bfy$ is in $\calK_{\round{L}{t}}$, proving that it is a linear subspace.

        \item As rows of $\round{\bfP}{t}$ are projections of rows of $\round{\bfP}{t-1}$ onto orthogonal complement $\inparens{ \round{\calL}{t} }^\perp$, the generated cone $\calK_{\round{P}{t}} \perp \round{\calL}{t}$ for all $t$.

        \item To prove that $\calK_W = \calL + \calK_P$, we will show that $\round{\calL}{t-1} + \calK_{\round{P}{t-1}} = \round{\calL}{t} + \calK_{\round{P}{t}}$ for all rounds $t \ge 1$.
        Using the initialization $\round{\calL}{0} = \set{\bfzero}$ and $\calK_{\round{P}{0}} = \calK_W$, we get the desired result $\calK_W = \round{\calL}{T} + \calK_{\round{P}{T}}$.

        In round $t$ of the algorithm, let rows $\bfp_i$ of $\round{\bfP}{t-1}$ with $\alpha_i^\ast > 0$ be arranged as rows of $\bfV$.
        We make two observations.
        First, the proof of property (1) notes that both $\round{\bfL}{t-1}$ and $\round{\bfL}{t}$ generate linear subspaces.
        Second, property (2) notes that rows of $\round{\bfP}{t-1} \subseteq \inparens{ \round{\calL}{t-1} }^\perp$, implying that rows of $\bfV$ are orthogonal to $\round{\calL}{t-1}$.
        Since $\round{\bfL}{t} = [\round{\bfL}{t-1}; \bfV]$, these two observations imply that $\bfV$ generates a linear subspace $\calV$.
        Moreover, this linear subspace satisfies $\calV \perp \round{\calL}{t-1}$ and $\round{\calL}{t} = \calV + \round{\calL}{t-1}$.
        As $\bfV$ are among the rows of $\round{\bfP}{t-1}$, the subspace $\calV$ lies inside the cone $\calK_{\round{P}{t-1}}$.
        So we can decompose this cone as follows:
        \begin{align}
            \calK_{\round{P}{t-1}} &= \proj_{\calV} \inparens{ \calK_{\round{P}{t-1}} } + \proj_{\calV^\perp} \inparens{ \calK_{\round{P}{t-1}} }\\
            &= \calV + \proj_{\calV^\perp} \inparens{ \calK_{\round{P}{t-1}} }\\
            &= \calV + \proj_{\calV^\perp \cap \inparens{ \round{\calL}{t-1} }^\perp } \inparens{ \calK_{\round{P}{t-1}} }\\
            &= \calV + \proj_{\inparens{ \calV + \round{\calL}{t-1} }^\perp } \inparens{ \calK_{\round{P}{t-1}} }
        \end{align}
        where we used the facts $\calK_{\round{P}{t-1}} \subseteq \inparens{ \round{\calL}{t-1} }^\perp$ and $(\calU_1 + \calU_2)^\perp = \calU_1^\perp + \calU_2^\perp$ for any two subspaces $\calU_1, \calU_2$.
        We substitute $\round{\calL}{t} = \calV + \round{\calL}{t-1}$ to simplify as follows
        \begin{align}
            \round{\calL}{t-1} + \calK_{\round{P}{t-1}} &= \round{\calL}{t-1} + \calV + \proj_{\inparens{ \calV + \round{\calL}{t-1} }^\perp } \inparens{ \calK_{\round{P}{t-1}} }\\
            &= \round{\calL}{t} + \proj_{ \inparens{\round{\calL}{t}}^\perp } \inparens{ \calK_{\round{P}{t-1}} }\\
            \label{eqn:cone_minkowski_decomposition__prop4__a}
            &= \round{\calL}{t} + \cone \inparens{ \proj_{ \inparens{\round{\calL}{t}}^\perp } ( \round{P}{t-1} ) }\\
            &= \round{\calL}{t} + \cone (\round{P}{t})
        \end{align}
        where \Cref{eqn:cone_minkowski_decomposition__prop4__a} follows from \Cref{lemma:project_cone_interchangeable}.
        
        \item We will show that $\calK_P = \calK_{\round{P}{T}}$ is a pointed cone by contradiction.
        Assume that it is non-pointed.
        \Cref{lemma:nonpointed__zero_in_cone} tells us that there exists nonzero $\bflambda \ge \bfzero$ such that $\bfzero = \bflambda \round{\bfP}{T}$.
        So there exists $\bfalpha = \frac{\bflambda}{\norm{\bflambda}_1} \ge \bfzero$ with $\norm{\bfalpha}_1 = 1$ and $\bfalpha \round{\bfP}{T} = \bfzero$.
        Hence, LP in the algorithm is feasible, which is a contradiction because the algorithm terminated in round $T$.
        Therefore, $\calK_P$ must be a pointed cone.
    \end{enumerate}

    Using properties (1)--(4) and the unique decomposition result due to \cite{schrijver1998theory}, we get that output $\bfL$ generates $\lineality(\calK_W)$ and output $\bfP$ generates the pointed cone $\calK_W \cap \lineality(\calK_W)^\perp$.\\

    \noindent
    \textit{Runtime:}
    In any round $t$, the number of rows of $\round{\bfL}{t}, \round{\bfP}{t}$ is at most $m$.
    Let $\round{\bfP}{t-1}$ have $k \le m$ rows in round $t$.
    \Cref{lemma:alg__is_cone_pointed} states that solving \ref{eqn:LP__pointed} for $\round{\bfP}{t-1} \in \bbR{k \times n}$ has runtime $\tilde O (kn^{2.5})$.
    After populating $\round{\bfL}{t}$, we find an orthonormal basis of $\rowspan(\round{\bfL}{t})$ in time $O(mn^2)$ and compute $\round{\bfP}{t}$ in time $O(k n^2)$.
    So the runtime of each iteration is $\tilde O(kn^{2.5} + mn^2 + kn^2) = \tilde O(mn^{2.5})$.
    As rows of $\round{\bfP}{t-1}$ get separated into $\round{\bfL}{t}$ and $\round{\bfP}{t}$, there are at most $m$ iterations.
    So the total runtime is $\tilde O(m^2 n^{2.5})$.
\end{proof}

\subsection{Computing $\ConeSubsetRank$}
\label{app:algorithms__csr}

We give an algorithm in \Cref{app:algorithms__csr__general} to find $\bfV$ that attains $\ConeSubsetRank(\bfW)$.
In \Cref{app:algorithms__csr__pointed} we handle the case when the cone $\calK_W$ generated by $\bfW$ is pointed.
In \Cref{app:algorithms__csr__subspace} we handle the case when $\calK_W$ is a linear subspace.

We often write $\CSR(W)$ for set $W = \set{\bfw_1, \dots, \bfw_m } \subseteq \bbR^n$ to denote $\CSR(\bfW)$ of the matrix $\bfW = [\bfw_1; \dots; \bfw_m] \in \bbR^{m \times n}$.
That is, $\CSR$ can be equivalently defined for sets as follows:
\begin{align*}
    \CSR(W) &= \min_k \set{ k \mid \calK_W = \calK_V  \text{ for some } V \subseteq \bbR^n \text{ such that } V \subseteq W, |V| = k }.
\end{align*}

We say that $V^\ast$ \textit{attains} $\CSR(W)$ if $V^\ast \subseteq W, \calK_W = \calK_{V^\ast}$, and $\abs{V^\ast} = \CSR(W)$.

\subsubsection{Computing $\ConeSubsetRank$ for general $\calK_W$}
\label{app:algorithms__csr__general}

\Cref{alg:csr} finds $\bfV$ that attains $\CSR(\bfW)$ (proof in \Cref{lemma:alg__csr}). 
The algorithm first decomposes the $\calK_W$ into its lineality space $\calL$ and a pointed cone. 
It turns out that to find $\bfV$, we can focus on rows of $\bfW$ inside $\calL$ and outside $\calL$ separately, as shown in \Cref{lemma:csr__property_of_decomposition}.
We attain the $\CSR$ of rows of $\bfW$ inside $\calL$ using \Cref{alg:csr__subspace}.
Then we project the rows of $\bfW$ outside $\calL$ onto $\calL^\perp$, and attain their $\CSR$ using \Cref{alg:csr__pointed}.

\begin{algorithm}[!htbp]
    \begin{algorithmic}[1]
        \Procedure{GetCSR}{$\bfW = [\bfw_1; \dots; \bfw_m]$}
        \State $\bfL, \bfP \gets \textsc{DecomposeCone}(\bfW)$ and denote $\calL = \rowspan(\bfL)$
        \Comment{Use \Cref{alg:cone_minkowski_decomposition}}
        \State Partition $W$ into $\WinL = \set{ \bfw_i \in W \mid \bfw_i \in \calL }$ and $\WnotinL = \set{ \bfw_i \in W \mid \bfw_i \notin \calL }$.
        \If{$\calL = \set{\bfzero}$}
            \Comment{$\calK_W$ is pointed}
            \State $\bfVinL \gets \emptyset$ and $\tilde \bfW \gets \bfW$
        \Else
            \Comment{$\calK_W$ is non-pointed}
            \State $\bfVinL \gets \textsc{GetCSR-Subspace}(\bfWinL)$
            \Comment{Use \Cref{alg:csr__subspace}. $\CSR$ of $\calK_W \cap \calL$}
            \State $\tilde \bfW \gets \bfWnotinL (\bfI - \bfZ \bfZ^\top)$ where columns of $\bfZ$ are orthonormal basis of $\calL$
        \EndIf
        \State $\tilde \bfV \gets \textsc{GetCSR-Pointed}(\tilde \bfW)$
        \Comment{Use \Cref{alg:csr__pointed}}
        \State Find rows $\bfVnotinL \subseteq \bfWnotinL$ that correspond to rows of $\tilde \bfV$
        \State \Return $\bfV = [\bfVinL; \bfVnotinL]$
        \EndProcedure
    \end{algorithmic}
    \caption{Find $\bfV$ that attains $\ConeSubsetRank(\bfW)$}
    \label{alg:csr}
\end{algorithm}

\begin{remark}
\label{remark:csr__general_calls_pointed}
    When $\calK_W$ is pointed, its lineality space is $\calL = \set{\bfzero}$.
    So \Cref{alg:csr} sets $\bfVnotinL = \emptyset$ and $\tilde \bfW = \bfW$.
    Then the algorithm finds $\tilde \bfV$ that attains $\CSR(\tilde \bfW) = \CSR(\bfW)$ using \textsc{GetCSR-Pointed} in \Cref{alg:csr__pointed}.
    Since \textsc{GetCSR-Pointed} ensures $\tilde \bfV \subseteq \bfW$ and rows of $\tilde \bfV$ are nonzero, the recovery step in Line~10 becomes redundant.
    \Cref{alg:csr} thus returns $\bfV$ that is the output of \textsc{GetCSR-Pointed}$(\bfW)$.
\end{remark}

\begin{lemma}
\label{lemma:alg__csr}
    Given $\bfW \in \bbR^{m \times n}$, \Cref{alg:csr} finds $\bfV$ that attains $\CSR(\bfW)$ in $m^3 n^{2.5} \cdot \polylog(m,n) \cdot m^{O(\ell)}$ time where $\ell = \dim \lineality(\calK_W)$.
\end{lemma}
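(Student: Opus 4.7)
The plan is to establish correctness—that the returned $\bfV$ satisfies $\bfV \subseteq \bfW$, $\calK_V = \calK_W$, and $\abs{V} = \CSR(\bfW)$—and then sum up the runtimes of the three invoked subroutines. The containment $\bfV \subseteq \bfW$ is immediate, since $\bfVinL \subseteq \bfWinL$ by correctness of \textsc{GetCSR-Subspace} (Lemma~\ref{lemma:alg__csr__subspace}), and $\bfVnotinL \subseteq \bfWnotinL$ by construction of the recovery step.

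For $\calK_V = \calK_W$, I would use the Minkowski decomposition $\calK_W = \calL + \calK_P$ from Lemma~\ref{lemma:restated__cone_decomposition}, together with the expected auxiliary facts that $\bfWinL$ generates $\calL$ (Lemma~\ref{lemma:cone__lineality_space__lineal_points}) and that the projections $\tilde \bfW = \bfWnotinL(\bfI - \bfZ\bfZ^\top)$ generate the pointed cone $\calK_P$ (Lemma~\ref{lemma:cone__lineality_space__nonlineal_points}). Correctness of \textsc{GetCSR-Subspace} and \textsc{GetCSR-Pointed} then gives $\cone(\bfVinL) = \calL$ and $\cone(\tilde \bfV) = \calK_P$. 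Each recovered row $\bfv_i \in \bfVnotinL$ decomposes as $\bfv_i = \tilde \bfv_i + \bfl_i$ with $\tilde \bfv_i \in \calK_P$ and $\bfl_i \in \calL$, so $\cone(\bfV) \subseteq \calL + \calK_P = \calK_W$. Conversely, rewriting $\tilde \bfv_i = \bfv_i + (-\bfl_i)$ with $-\bfl_i \in \calL = \cone(\bfVinL)$ yields $\calK_P \subseteq \cone(\bfV)$, and $\calL = \cone(\bfVinL) \subseteq \cone(\bfV)$ directly, so $\calK_W \subseteq \cone(\bfV)$. Minimality of $\abs{V}$ then follows from Lemma~\ref{lemma:csr__property_of_decomposition}, which lower-bounds any feasible $\abs{V'}$ by $\CSR(\bfWinL) + \CSR(\tilde \bfW)$—matched exactly by the algorithm's output.

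For the runtime, I would add the cost of each step: \textsc{DecomposeCone} takes $\tilde O(m^2 n^{2.5})$ time by Lemma~\ref{lemma:alg__cone_minkowski_decomposition}; partitioning $W$ and projecting rows onto $\calL^\perp$ together cost $O(m n^2)$; \textsc{GetCSR-Subspace} costs $\poly(\ell, n) \cdot m^{O(\ell)}$ by Lemma~\ref{lemma:alg__csr__subspace}; \textsc{GetCSR-Pointed} applied to $\tilde \bfW$ performs at most $m$ feasibility LPs of size $\tilde O(m n^{2.5})$ each (via \Cref{alg:is_in_cone}), yielding $\tilde O(m^2 n^{2.5})$; and the recovery of $\bfVnotinL$ from $\tilde \bfV$ is linear. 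Summing these gives the claimed bound $m^3 n^{2.5} \cdot \polylog(m, n) \cdot m^{O(\ell)}$.

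The main obstacle I anticipate is the reverse inclusion $\calK_W \subseteq \cone(\bfV)$—specifically, justifying that the lineality correction $\bfl_i = \bfv_i - \tilde \bfv_i$, required to express each $\tilde \bfv_i$ in terms of $\bfV$, is itself recoverable from $\bfVinL$. This relies on the crucial (but non-trivial) fact that $\cone(\bfVinL)$ is the entire subspace $\calL$, not merely a pointed sub-cone within it, and in particular is closed under additive inverses so that both $\bfl_i$ and $-\bfl_i$ are nonnegative combinations of rows of $\bfVinL$. A secondary technical point is checking that the projected matrix $\tilde \bfW$ indeed generates a pointed cone (so that \textsc{GetCSR-Pointed} applies), which I would again derive from Lemma~\ref{lemma:cone__lineality_space__nonlineal_points}.
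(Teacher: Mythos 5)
Your proposal follows the paper's proof closely: correctness via the Minkowski decomposition $\calK_W = \calL + \calK_P$, the facts that $\bfWinL$ generates $\calL$ and $\tilde \bfW$ generates $\calK_P$, the two-sided inclusion argument for $\calK_V = \calK_W$, and minimality via Lemma~\ref{lemma:csr__property_of_decomposition}. Your explicit vector-decomposition argument ($\bfv_i = \tilde\bfv_i + \bfl_i$) is a mild rephrasing of the paper's step via Lemma~\ref{lemma:project_cone_interchangeable}, and you correctly identify that the crux is $\cone(\bfVinL) = \calL$ being a full subspace (hence closed under negation), which is exactly what justifies absorbing the $\pm\bfl_i$ corrections. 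One small accounting slip: you bound \textsc{GetCSR-Pointed} at $\tilde O(m^2 n^{2.5})$ by counting ``at most $m$ feasibility LPs,'' but the algorithm may test up to $m$ candidate rows per iteration over up to $m$ iterations, giving $O(m^2)$ LP calls and hence $\tilde O(m^3 n^{2.5})$ as stated in Lemma~\ref{lemma:alg__csr__pointed}. This does not change the final bound you state, but it is what makes the $m^3$ factor appear in the lemma's claim.
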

\begin{proof}
    When $\calK_W$ is pointed, the algorithm outputs \textsc{GetCSR-Pointed}$(\bfW)$ using \Cref{alg:csr__pointed}, as noted in \Cref{remark:csr__general_calls_pointed}.
    We give the proof for the pointed case in \Cref{lemma:alg__csr__pointed}.
    Here we prove for the case of non-pointed $\calK_W$.
    We first show that the output of the algorithm $V \subseteq \bbR^n$ has the properties $V \subseteq W$ and $\calK_W = \calK_V$.
    Then we show that, when $\calK_W$ is non-pointed, $|V| = \CSR(W)$.

    Cone $\calK_W$ has unique decomposition $\calK_W = \calL + \calK_P$ where $\calL = \lineality(\calK_W)$ is the lineality space of $\calK_W$ and $\calK_P = \calL^\perp \cap \calK_W$ is a pointed cone (see \Cref{lemma:restated__cone_decomposition}).
    \Cref{alg:cone_minkowski_decomposition} outputs matrices $\bfL$ and $\bfP$ such that $\rowspan(\bfL) = \calL$ and $\bfP$ generates the pointed cone $\calK_P$.
    After decomposing, the algorithm partitions $W$ into $\WinL$ and $\WnotinL$, and projects $\WnotinL$ onto $\calL^\perp$ to get $\tilde W = \proj_{\calL^\perp} (\WnotinL)$.
    To prove that the output $V$ has desired properties, we note the properties of $\VinL$ and $\VnotinL$.

    \begin{itemize}
        \item \boldheading{Properties of $\VinL$.}
        According to \Cref{lemma:cone__lineality_space__lineal_points}, vectors $\WinL$ positively span $\calL$, which is a linear subspace.
        The algorithm then uses \Cref{alg:csr__subspace} to find $\VinL$ that attains $\CSR(\WinL)$.
        That is, $\VinL$ has properties $\VinL \subseteq \WinL$, $\calK_{\VinL} = \calK_{\WinL} = \calL$, and $|\VinL| = \CSR(\WinL)$.

        \item \boldheading{Properties of $\VnotinL$.}
        According to \Cref{lemma:cone__lineality_space__nonlineal_points}, vectors $\tilde W$ generate the pointed cone $\calK_P$.
        The algorithm then uses \Cref{alg:csr__pointed} to find $\tilde V$ that attains $\CSR(\tilde W)$.
        So $\tilde V$ has properties $\tilde V \subseteq \tilde W$, $\calK_{\tilde V} = \calK_{\tilde W} = \calK_P$, and $|\tilde V| = \CSR(\tilde W)$.
        As $\tilde W$ are projections of $\WnotinL$ onto $\calL^\perp$ and $\tilde V \subseteq \tilde W$, the algorithm can recover $\VnotinL$ with properties: $\VnotinL \subseteq \WnotinL$, $\cone(\proj_{\calL^\perp} (\VnotinL)) = \calK_{\tilde W}$, and $|\VnotinL| = \CSR(\tilde W)$.
    \end{itemize}

    Clearly, we have $V = \VinL \cup \VnotinL \subseteq W$.
    We now show that $\calK_V = \calK_W$.
    Below, `$+$' is the Minkowski sum.
    \begin{align}
        \calK_V &= \calK_{\VinL \cup \VnotinL} = \calK_{\VinL} + \calK_{\VnotinL}\\
        \label{eqn:lemma__alg__csr__a}
        &= \calL + \cone \inparens{ \proj_{\calL} (\VnotinL) + \proj_{\calL^\perp} (\VnotinL) }\\
        \label{eqn:lemma__alg__csr__b}
        &= \calL + \proj_{\calL} \inparens{ \cone(\VnotinL) } + \cone (\tilde V)\\
        \label{eqn:lemma__alg__csr__c}
        &= \calL + \calK_P = \calK_W.
    \end{align}
    
    \Cref{eqn:lemma__alg__csr__a} uses $\calL = \calK_{\VinL}$ and the decomposition of vectors $\VnotinL = \proj_{\calL} (\VnotinL) + \proj_{\calL^\perp} (\VnotinL)$.
    \Cref{eqn:lemma__alg__csr__b} is due to \Cref{lemma:project_cone_interchangeable} and $\tilde V = \proj_{\calL^\perp} (\VnotinL)$.
    And \Cref{eqn:lemma__alg__csr__c} follows from the fact that $\proj_{\calL} (\VnotinL) \subseteq \calL$, and so $\proj_{\calL} (\cone(\VnotinL)) \subseteq \calL$ as $\calL$ is a linear subspace.

    \Cref{lemma:csr__property_of_decomposition} shows that any $V \subseteq W$ such that $\calK_V = \calK_W$ must have size $|V| \ge \CSR(\WinL) + \CSR(\tilde W)$.
    Above, we showed the output $V$ of \Cref{alg:csr} has the properties: $V \subseteq W$, $\calK_V = \calK_W$, and $|V| = |\VinL| + |\VnotinL| = \CSR(\WinL) + \CSR(\tilde W)$.
    Hence, $V$  attains $\CSR(W)$ and $\CSR(W)=\CSR(\WinL) + \CSR(\tilde W)$.\\

    \noindent
    \textit{Runtime:}
    \Cref{lemma:alg__cone_minkowski_decomposition} states that \textsc{DecomposeCone} in \Cref{alg:cone_minkowski_decomposition} has runtime $\tilde O(m^2 n^{2.5})$.
    We can check with Gaussian elimination if a row $\bfw_i$ of $\bfW$ is inside $\calL$ or not---this has runtime $O(mn \min(m,n))$ as $\bfL$ has atmost $m$ rows.
    So we can partition $W$ into $\WinL, \WnotinL$ in time $O(m^2 n \min(m,n))$.

    Let $\dim \calL = \ell$.
    \Cref{lemma:alg__csr__subspace} states that \textsc{GetCSR-Subspace} in \Cref{alg:csr__subspace} has runtime $\ell^2 n^{2.5} \cdot \polylog(\ell,n) \cdot m^{O(\ell)}$.
    We find an orthonormal basis of $\calL$ in time $O(mn^2)$ and compute $\tilde \bfW$ in time $O(mn^2)$.

    The matrix $\tilde \bfW$ has at most $m$ rows.
    Then \Cref{lemma:alg__csr__pointed} states that \textsc{GetCSR-Pointed} in \Cref{alg:csr__pointed} has runtime $\tilde O(m^3 n^{2.5})$.
    We can finally recover $\bfVnotinL$ by inspection in time $O(mn)$.

    Adding all runtimes so far, the total runtime is:
    \begin{align}
        &\tilde O(m^2 n^{2.5}) + O(m^2 n \min(m,n))\\
        &\quad + \ell^2 n^{2.5} \cdot \polylog(\ell,n) \cdot m^{O(\ell)} + O(mn^2) + O(mn^2)\\
        &\quad + \tilde O(m^3 n^{2.5}) + O(mn)\\
        &= m^3 n^{2.5} \cdot \polylog(m,n) \cdot m^{O(\ell)}.
        \qedhere
    \end{align}
\end{proof}

\begin{lemma}
\label{lemma:csr__property_of_decomposition}
    Let $W = \set{\bfw_1, \dots, \bfw_m} \subseteq \bbR^n$ generate cone $\calK_W$ and have lineality space $\calL = \lineality (\calK_W)$.
    Denote $\WinL = \set{ \bfw_i \in W \mid \bfw_i \in \calL }$, and $\tilde W  = \proj_{\calL^\perp} \inparens{  W \setminus \WinL }$.
    For any $V$ with properties $V \subseteq W$ and $\calK_{V} = \calK_W$, we have:
    \begin{align*}
        |V| &\ge \CSR(\WinL) + \CSR(\tilde W).
    \end{align*}
\end{lemma}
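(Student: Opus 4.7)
The plan is to partition $V$ along the lineality space $\calL$ and bound each piece separately by the corresponding $\CSR$. Concretely, I set $\VinL = V \cap \calL$ and $\VnotinL = V \setminus \VinL$, so that $|V| = |\VinL| + |\VnotinL|$, and aim to show $|\VinL| \ge \CSR(\WinL)$ together with $|\VnotinL| \ge \CSR(\tilde W)$. Summing will then give the bound. The key structural input is that $\calK_V = \calK_W$ forces $\calK_V$ to have the same Minkowski decomposition $\calL + \calK_P$ as $\calK_W$, so the lemmas about which elements of a generating set must lie in $\calL$ versus project to $\calK_P$ apply to $V$ exactly as they do to $W$.

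For the lineal part, I would invoke \Cref{lemma:cone__lineality_space__lineal_points} with generating set $V$: since $\calK_V$ has lineality space $\calL$, the members of $V$ that lie inside $\calL$—namely $\VinL$—must positively span $\calL$. Combining $\VinL \subseteq V \cap \calL \subseteq W \cap \calL = \WinL$ with $\calK_{\VinL} = \calL = \calK_{\WinL}$, the set $\VinL$ is a subset of $\WinL$ whose cone equals $\calK_{\WinL}$, so by definition of $\CSR(\WinL)$ we obtain $|\VinL| \ge \CSR(\WinL)$.

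For the transversal part, let $\tilde V = \proj_{\calL^\perp}(\VnotinL)$. Since $\VnotinL \subseteq W \setminus \WinL$, we have $\tilde V \subseteq \tilde W$ as sets. Applying \Cref{lemma:cone__lineality_space__nonlineal_points} to $V$ shows that $\tilde V$ generates the pointed cone $\calK_P = \calL^\perp \cap \calK_W$, and applying the same lemma to $W$ identifies $\calK_{\tilde W} = \calK_P$. Thus $\tilde V$ is a subset of $\tilde W$ whose cone equals $\calK_{\tilde W}$, so $|\tilde V| \ge \CSR(\tilde W)$. Since projection is a function we have $|\VnotinL| \ge |\tilde V|$, and therefore $|\VnotinL| \ge \CSR(\tilde W)$. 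Adding the two bounds concludes the proof.

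The only subtle point is the step $\calK_{\tilde V} = \calK_P$: this uses that projecting the full cone $\calK_V = \calL + \calK_P$ onto $\calL^\perp$ annihilates the lineal component and leaves exactly $\calK_P$, so the non-lineal generators alone suffice—this is precisely what \Cref{lemma:cone__lineality_space__nonlineal_points} encodes. A minor concern is that the projection could collapse distinct elements of $\VnotinL$, but this only strengthens $|\VnotinL| \ge |\tilde V|$, so it poses no obstacle.
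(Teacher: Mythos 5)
Your proof is correct and follows essentially the same route as the paper's: partition $V$ into $\VinL$ and its complement, apply \Cref{lemma:cone__lineality_space__lineal_points} to bound $|\VinL|$ against $\CSR(\WinL)$, apply \Cref{lemma:cone__lineality_space__nonlineal_points} to bound $|\tilde V|$ against $\CSR(\tilde W)$, and chain through $|\VnotinL|\ge|\tilde V|$. The observation that $\calK_V=\calK_W$ transfers the lineality space and Minkowski decomposition to $V$, and the note that projection can only shrink cardinality, match the paper's argument exactly.
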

\begin{proof}
    Denote $\VinL = \set{ \bfv_i \in V \mid \bfv_i \in \calL }$, and $\tilde V  = \proj_{\calL^\perp} \inparens{  V \setminus \VinL }$.
    We will show that (1) $|\VinL| \ge \CSR(\WinL)$, and (2) $|\tilde V| \ge \CSR(\tilde W)$.
    As $\tilde V$ is the projection of $V \setminus \VinL$ onto $\calL^\perp$, we have $|V \setminus \VinL| \ge |\tilde V|$.
    Joining these results, we get the desired result:
    \begin{align*}
        |V| &= |\VinL| + |V \setminus \VinL| \ge |\VinL| + |\tilde V| \ge \CSR(\WinL) + \CSR(\tilde W).
    \end{align*}

    Let $\calK_W = \calL + \calK_P$ be the unique decomposition where $\calL = \lineality(\calK_W)$ and $\calK_P = \calL^\perp \cap \calK_W$ is a pointed cone (see \Cref{lemma:restated__cone_decomposition}).
    To prove that $|\VinL| \ge \CSR(\WinL)$, we show that $\VinL$ satisfies properties $\VinL \subseteq \WinL$ and $\calK_{\VinL} = \calK_{\WinL}$, and so the first statement follows from the definition of $\CSR$.
    We analogously prove that $|\tilde V| \ge \CSR(\tilde W)$.
    \begin{enumerate}
        \item
        Recall that ${\VinL}$ and ${\WinL}$ are the elements of $V$ and $W$ respectively in $\calL$, the lineality space of $\calK_{V} = \calK_W$. 
        As $V \subseteq W$, we have $\VinL \subseteq \WinL$. \Cref{lemma:cone__lineality_space__lineal_points} states that $\calK_{\WinL} = \calL$ and $\calK_{\VinL} = \calL$, and so $\calK_{\VinL} = \calK_{\WinL}$.
        By definition of $\CSR$, we have $|\VinL| \ge \CSR(\WinL)$.

        \item
        As $V \subseteq W$, we have $(V \setminus \VinL) \subseteq (W \setminus \WinL)$.
        Since $\tilde V$ and $\tilde W$ are the respective projections, we have $\tilde V \subseteq \tilde W$.
        As $\calK_P$ is the pointed cone from decomposing $\calK_W = \calK_V$, \Cref{lemma:cone__lineality_space__nonlineal_points} states that $\cone(\tilde W) = \calK_P$ and $\cone(\tilde V) = \calK_P$, and so $\cone(\tilde W) = \cone(\tilde V)$.
        By definition of $\CSR$, we have $|\tilde V| \ge \CSR(\tilde W)$.
        \qedhere
    \end{enumerate}
\end{proof}

\subsubsection{Computing $\ConeSubsetRank$ when $\calK_W$ is a linear subspace}
\label{app:algorithms__csr__subspace}

When $\calK_W$ is a linear subspace, \Cref{alg:csr__subspace} finds $\bfV$ that attains $\CSR(\bfW)$ (proof in \Cref{lemma:alg__csr__subspace}).
This algorithm iterates over all subsets of $\bfW$ that are linearly independent and positively span a linear subspace, and outputs the smallest subset.
In the proof of \Cref{lemma:alg__csr__subspace}, we argue that the $\bfV^\ast$ that attains $\CSR(\bfW)$ has number of rows between $\rank \bfW + 1$ and $2 \rank \bfW$.
Hence, the algorithm only searches over subsets of size $\set{ \rank \bfW + 1, \dots, 2 \rank \bfW}$.

\begin{algorithm}[!htbp]
    \begin{algorithmic}[1]
        \Procedure{GetCSR-Subspace}{$\bfW = [\bfw_1; \dots; \bfw_m]$}
        \State Let $t = \rank \bfW = \dim \rowspan (\bfW)$
        \For{$k = t+1, \dots, 2t$}
            \For{Subsets $\bfU \subseteq \bfW$ of $k$ rows}
                \If{$\rank \bfU = t$ and \textsc{IsInCone}$\inparens{-\sum_{\bfu_i \in \bfU} \bfu_i, \bfU} =$ True}
                \Comment{Use \Cref{alg:is_in_cone}}
                    \State \Return $\bfV = \bfU$
                \EndIf
            \EndFor
        \EndFor
        \EndProcedure
    \end{algorithmic}
    \caption{Find $\bfV$ that attains $\ConeSubsetRank(\bfW)$ when $\calK_W = \rowspan(\bfW)$ is a linear subspace}
    \label{alg:csr__subspace}
\end{algorithm}

\begin{lemma}
\label{lemma:alg__csr__subspace}
    Let cone $\calK_W$ generated by $\bfW \in \bbR^{m \times n}$ be such that $\calK_W = \rowspan(\bfW)$.
    Then \Cref{alg:csr__subspace} finds $\bfV$ that attains $\CSR(\bfW)$ in $t^2 n^{2.5} \cdot \polylog(t,n) \cdot m^{O(t)}$ time where $t = \rank \bfW$.
\end{lemma}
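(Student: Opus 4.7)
The plan is to establish three facts in sequence: (i) the algorithm's subset check correctly tests whether $\bfU \subseteq \bfW$ positively spans $\rowspan(\bfW)$; (ii) the search range $k \in \{t+1,\ldots,2t\}$ contains $\CSR(\bfW)$; and (iii) the total cost is dominated by brute-force enumeration of subsets of size at most $2t$. Having these three pieces, correctness is immediate because the algorithm returns the \emph{first} subset it finds and enumerates $k$ from small to large.

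For (i), I would prove that for any $\bfU \subseteq \bfW$ with $\rank \bfU = t$, we have $\calK_U = \rowspan(\bfW)$ if and only if $-\sum_{\bfu_i \in U} \bfu_i \in \cone(U)$. The forward direction is immediate since $\rowspan(\bfW)$ is a linear subspace and therefore closed under negation. For the reverse direction, the in-cone condition yields $\sum_i \lambda_i \bfu_i = -\sum_i \bfu_i$ with $\lambda_i \ge 0$, equivalently $\sum_i (1+\lambda_i) \bfu_i = \bfzero$ with strictly positive coefficients. Then any $\bfx \in \rowspan(\bfU) = \rowspan(\bfW)$ admits an expansion $\bfx = \sum_i \alpha_i \bfu_i$, which can be rewritten as $\bfx = \sum_i \bigl(\alpha_i + c(1+\lambda_i)\bigr) \bfu_i$ for $c$ large enough to make every coefficient nonnegative, placing $\bfx \in \cone(U)$.

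Step (ii) is where I expect the real work to lie. The lower bound $\CSR(\bfW) \ge t+1$ follows by applying \Cref{lemma:conerank__nonpointed_necessary} to the $t$-dimensional non-pointed cone $\rowspan(\bfW)$, chained with $\CSR(\bfW) \ge \CR(\bfW) \ge t+1$. For the upper bound $\CSR(\bfW) \le 2t$, I would select a basis $\bfu_1,\ldots,\bfu_t$ of $\rowspan(\bfW)$ from the rows of $\bfW$; then $-\sum_j \bfu_j \in \rowspan(\bfW) = \cone(\bfW)$, and Carath\'eodory's theorem for cones expresses it as a nonnegative combination of at most $t$ additional rows $\bfw_{i_1},\ldots,\bfw_{i_s}$ of $\bfW$, where I discard rows whose coefficient is zero so the surviving coefficients are strictly positive. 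The union $\bfU = \{\bfu_1,\ldots,\bfu_t,\bfw_{i_1},\ldots,\bfw_{i_s}\}$ has size at most $2t$, rank $t$, and satisfies a strictly positive linear dependence, so by (i) it positively spans $\rowspan(\bfW)$. This is the crucial ingredient: without the $2t$ ceiling the enumeration would have no a priori termination bound.

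For (iii), I would bound the work at iteration $k$ by $\binom{m}{k} \le m^k$ subsets, each requiring a rank check in $O(kn \min(k,n))$ time and a call to \textsc{IsInCone} at cost $\tilde O(kn^{2.5})$ per \Cref{lemma:alg__is_in_cone}. Summing over $k$ from $t+1$ to $2t$ yields the claimed bound $t^2 n^{2.5} \cdot \polylog(t,n) \cdot m^{O(t)}$, since the dominant term is at $k = 2t$ and the factor $t$ absorbs the loop length.
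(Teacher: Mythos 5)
Your proof is correct and reaches the same conclusion, but it replaces both of the paper's black-box citations with elementary self-contained arguments. For step (i), the paper delegates the equivalence ``$\calK_U = \spanv(U)$ iff $-\sum_{\bfu_i \in U} \bfu_i \in \calK_U$'' to a cited result (Regis, Theorem 2.5), whereas you derive it directly: the in-cone condition produces a strictly positive linear dependence $\sum_i (1+\lambda_i)\bfu_i = \bfzero$, which lets any linear combination be shifted into the nonnegative orthant by adding a large multiple of the dependence. For the upper bound $\CSR(\bfW) \le 2t$, the paper invokes Regis/Audet (the cited Lemma stating a positively spanning set of an $r$-dimensional subspace contains a positively spanning subset of size $\le 2r$, applied to the minimizer $V^\ast$ which is positively independent by \Cref{claim:csr__property__pos_independent}), whereas you construct an explicit witness $U \subseteq W$ of size at most $2t$: a linear basis $\{\bfu_1,\dots,\bfu_t\}$ drawn from the rows of $\bfW$, augmented by at most $t$ further rows obtained from conic Carath\'eodory applied to $-\sum_j \bfu_j \in \cone(W)$; the resulting strictly positive dependence over $U$ then certifies $\cone(U) = \spanv(U)$ by the reasoning of your step (i). One small slip in the exposition: you invoke ``(i)'' to conclude the constructed $U$ positively spans, but what you actually need is the intermediate fact inside (i)'s reverse direction (strictly positive dependence plus full rank implies $\cone(U) = \spanv(U)$), not the stated biconditional itself; the biconditional then gives back that $U$ passes the algorithm's check. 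The lower bound via \Cref{lemma:conerank__nonpointed_necessary} and the runtime accounting match the paper's. Overall your route trades two literature citations for a short constructive argument, which makes the $2t$ window more transparent at the cost of being slightly longer.
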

\begin{proof}
    We denote the set of rows of $\bfW$ with $W$, and so $\rowspan(\bfW) = \spanv(W)$ and $\rank \bfW = \dim \spanv(W)$.
    The algorithm searches over subsets of $W$ to find $V^\ast \subseteq W$ attaining $\CSR(W)$. 
    The proof has two parts.
    First, we show that $V^\ast$ satisfies the condition in Line~5 of \Cref{alg:csr__subspace}, and $t+1 \le |V^\ast| \le 2 t$ where $t = \dim \spanv(W)$.
    Second, we show that any $U \subseteq W$ satisfying this condition has the property $\calK_U = \calK_W$.
    So by iterating over all subsets $U$ of size $\set{t+1, \dots, 2t}$ and checking whether they satisfy this condition, the algorithm finds $V$ that attains $\CSR(W)$.

    \begin{itemize}
        \item Let $V^\ast$ attain $\CSR(W)$, i.e., $V^\ast \subseteq W$, $\calK_W = \calK_{V^\ast}$, and $|V^\ast| = \CSR(W)$.
        We first show that when $\calK_W = \spanv(W)$ is a linear subspace, $V^\ast$ satisfies the condition in Line~5.
        That is, $\dim \spanv(V^\ast) = t$ and $- \sum_{\bfv_i \in V^\ast} \bfv_i \in \calK_{V^\ast}$.

        As $\calK_W$ is a $t$-dimensional linear subspace, we have $\spanv(W) = \calK_W$, which is equal to $\calK_{V^\ast}$.
        Since $V^\ast \subseteq W$, we get that $\calK_{V^\ast} = \spanv(V^\ast) = \spanv(W)$, implying that $\dim \spanv(V^\ast) = t$.
        Vectors $V^\ast$ positively span the linear subspace $\spanv(V^\ast)$, and so $-\sum_{\bfv_i \in V^\ast} \bfv_i$ is in $\spanv(V^\ast) = \calK_{V^\ast}$.
        So $V^\ast$ satisfies the condition in Line~5.
    
        Now we show that $t+1 \leq |V^\ast| \leq 2t$.
        As vectors $V^\ast$ positively span the $t$-dimensional linear subspace, \Cref{lemma:conerank__nonpointed_necessary} implies that $|V^\ast| \ge t+1$.
        Moreover, as vectors $V^\ast$ are positively independent (see \Cref{claim:csr__property__pos_independent}), Regis~\cite[Lemma~6.6]{regis2016properties} and Audet~\cite{audet2011short} state that $|V^\ast| \le 2t$ (restated in \Cref{lemma:restated__csr__subspace__max_size}).

        \item 
        We show that any $U \subseteq W$ satisfying the condition in Line~5 has the property $\calK_U = \calK_W$.
        Regis~\cite[Thm.~2.5--(v) implies (i)]{regis2016properties} states that for a given set of nonzero vectors $U$, if $-\sum_{\bfu_i \in U} \bfu_i$ is in $\calK_U$, then $\calK_U = \spanv(U)$.
        Hence, for any $U \subseteq W$ satisfying the condition in Line~5, we have $\calK_U = \spanv(U)$.
        Moreover, properties $U \subseteq W$ and $\dim \spanv(U) = t = \dim \spanv(W)$ imply that $\spanv(U) = \spanv(W)$.
        Hence, $\calK_U = \calK_W$.
    \end{itemize}
    
    The algorithm searches over all possible subsets $U \subseteq W$ with $\calK_U = \calK_W$, and the search space includes $V^\ast$ that attains $\CSR(W)$.
    The algorithm outputs $V = U$ of the smallest size with the desired properties, and so $V$ attains $\CSR(W)$.\\

    \noindent
    \textit{Runtime:} 
    In each inner loop (searching over subsets $V \subseteq W$), we can use Gaussian elimination to check if a set of $n$-dimensional vectors of size $k$ linearly spans a $t$-dimensional space---this has runtime $O (kn \min(k,n))$.
    To check if $-\sum_{\bfv_i \in V} \bfv_i \in \calK_V$, we use \Cref{alg:is_in_cone}, which checks if an LP is feasible---this has runtime $\tilde O(kn^{2.5})$ where $|V| = k \le m$.
    There are $\binom{m}{k} = O(m^k)$ subsets of size $k$.
    The total runtime is thus
    \begin{align*}
        \sum_{k=t+1}^{2t} \tilde O(m^k \cdot ( kn \min(k,n) + kn^{2.5} )) &= \sum_{k=t+1}^{2t} \tilde O( k^2 n^{2.5} m^k ) \\
        &= \tilde O( (t+1)^2 n^{2.5} \cdot m^{t+1} + \cdots + (2t)^2 n^{2.5} \cdot m^{2t})\\
        &= t^2 n^{2.5} \cdot \polylog(t,n) \cdot m^{O(t)}.
        \qedhere
    \end{align*}
\end{proof}

\subsubsection{Computing $\ConeSubsetRank$ for pointed $\calK_W$}
\label{app:algorithms__csr__pointed}

When $\calK_W$ is pointed, \Cref{alg:csr__pointed} finds $\bfV$ that attains $\CSR(\bfW)$ (proof in \Cref{lemma:alg__csr__pointed}).
This algorithm iteratively removes the vectors in the interior of the cone $\calK_W$, i.e., the rows of $\bfW$ that can be written as a nonnegative combination of other rows of $\bfW$.
Regardless of the order of removing vectors in the interior of the cone, this greedy algorithm outputs a set of rows that attains $\CSR(\bfW)$.

\begin{algorithm}[!h]
    \begin{algorithmic}[1]
        \Procedure{GetCSR-Pointed}{$\bfW = [\bfw_1; \dots; \bfw_m]$}
        \State Let $\bfU = \bfW$ and $k$ be the number of rows of $\bfU$. Initially $k = m$.
        \While{true}
            \State Denote $\bfU_{-i} = [\bfu_1; \dots; \bfu_{i-1}; \bfu_{i+1}; \dots; \bfu_k]$ for every $i \in [k]$
            \If{for some $i \in [k]$, \textsc{IsInCone}$(\bfu_i, \bfU_{-i}) =$ True}
            \Comment{Use \Cref{alg:is_in_cone}}
                \State Remove row $\bfu_i$ from the matrix $\bfU$
            \Else
                \State Break
            \EndIf
        \EndWhile
        \State \Return $\bfV = \bfU$
        \EndProcedure
    \end{algorithmic}
    \caption{Find $\bfV$ that attains $\ConeSubsetRank(\bfW)$ when $\calK_W$ is pointed}
    \label{alg:csr__pointed}
\end{algorithm}

\begin{lemma}
\label{lemma:alg__csr__pointed}
    Let cone $\calK_W$ generated by $\bfW \in \bbR^{m \times n}$ be pointed.
    Then \Cref{alg:csr__pointed} finds $\bfV$ that attains $\CSR(\bfW)$ in $\tilde O(m^3 n^{2.5})$ time.
\end{lemma}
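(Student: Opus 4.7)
The plan is to establish three properties of the output $\bfV$ of \Cref{alg:csr__pointed}: (i) $\bfV\subseteq\bfW$, (ii) $\calK_V=\calK_W$, and (iii) the rows of $\bfV$ are positively independent, meaning no $\bfv_i\in V$ lies in $\cone(V\setminus\set{\bfv_i})$. Once these three properties are in hand, I would invoke \Cref{claim:csr__property__pos_independent} and \Cref{lemma:csr__property__pointed_positive_basis_has_same_size} (referenced in the sketch preceding \Cref{alg:csr__pointed}) to conclude $|\bfV|=\CSR(\bfW)$: the former says that any matrix $\bfV^\ast$ attaining $\CSR(\bfW)$ also has positively independent rows, and the latter states that for a pointed cone, every subset of rows of $\bfW$ that both generates $\calK_W$ and is positively independent has the same cardinality. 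Applying this with $\bfV^\ast$ and our output $\bfV$ then forces $|\bfV|=|\bfV^\ast|=\CSR(\bfW)$.

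Property (i) is immediate, since the algorithm only deletes rows. For property (ii), I would maintain the invariant $\calK_U=\calK_W$ throughout the \textbf{while} loop. This holds initially when $\bfU=\bfW$. Whenever the algorithm deletes a row $\bfu_i$ for which $\textsc{IsInCone}(\bfu_i,\bfU_{-i})=\text{True}$, \Cref{lemma:alg__is_in_cone} gives $\bfu_i\in\cone(\bfU_{-i})$, so every nonnegative combination of rows of $\bfU$ is also a nonnegative combination of rows of $\bfU_{-i}$ (substitute the expansion of $\bfu_i$); hence $\calK_{\bfU_{-i}}=\calK_{\bfU}$ and the invariant is preserved. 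Property (iii) is the exit condition: the loop terminates only when no row of the current $\bfU$ lies in the cone of the others, which is exactly positive independence of $V=U$.

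The termination and runtime go together. Each iteration that does not break strictly decreases the number of rows by one, so there are at most $m$ outer iterations. In each outer iteration, we invoke \textsc{IsInCone} at most $m$ times, and each call runs in $\tilde O(mn^{2.5})$ by \Cref{lemma:alg__is_in_cone} (the row count of $\bfU_{-i}$ is at most $m$). This yields a total runtime of $m\cdot m\cdot\tilde O(mn^{2.5})=\tilde O(m^3n^{2.5})$, as claimed.

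The only nontrivial step is the minimality argument in property (iii) combined with the size-matching lemma; this is where pointedness of $\calK_W$ is used crucially. Without pointedness, different positively independent generating subsets of $\bfW$ could have different cardinalities, and the greedy order of removals would affect the output size. Pointedness guarantees that the extreme rays of $\calK_W$ are exactly the positively independent generators that must appear (up to positive scaling) in any subset of $\bfW$ that generates $\calK_W$, so the greedy procedure is automatically optimal regardless of which row is removed first at each step.
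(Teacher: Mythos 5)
Your proof is correct and follows essentially the same route as the paper's: establish that the output $\bfV$ satisfies $\bfV\subseteq\bfW$, $\calK_V=\calK_W$ (via the loop invariant preserved under row deletion), and positive independence (the exit condition), then invoke \Cref{claim:csr__property__pos_independent} together with \Cref{lemma:csr__property__pointed_positive_basis_has_same_size} to conclude $|\bfV|=\CSR(\bfW)$, with the same $\tilde O(m^3 n^{2.5})$ runtime bound from at most $m$ iterations each making at most $m$ calls to \textsc{IsInCone}. The closing paragraph on extreme rays is accurate intuition but not needed beyond what the cited size-matching lemma already supplies.
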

\begin{proof}
    We first show that the output of the algorithm $V \subseteq \bbR^n$ has the properties $V \subseteq W$, $\calK_W = \calK_V$, and vectors $V$ are positively independent.
    \Cref{lemma:csr__property__pointed_positive_basis_has_same_size} shows that when $\calK_W$ is pointed, any set of vectors with these properties has the same number of vectors in it.
    Since the set of vectors that attains $\CSR(W)$ has the same properties (see \Cref{claim:csr__property__pos_independent}), we know that $|V| = \CSR(W)$.

    At initialization, $U = W$ and so $\calK_U = \calK_W$.
    In an iteration, the algorithm removes a vector $\bfu_i \in U$ if $\bfu_i \in \calK_{U_{-i}}$, which is the cone generated by the other vectors $U_{-i} = U \setminus \set{\bfu_i}$.
    Since $\bfu_i$ is a nonnegative combination of $U_{-i}$, removing row $\bfu_i$ does not change the generated cone, i.e. $\calK_U = \calK_{U_{-i}}$.
    Hence, the algorithm ensures the following invariant at the end of each iteration: $U$ has the properties $U \subseteq W$ and $\calK_U = \calK_W$.
    The algorithm terminates when no row of $U$ is a nonnegative combination of the other rows of $U$. Hence, the output of the algorithm $V$ has positively independent vectors. 

    \Cref{claim:csr__property__pos_independent} guarantees that the set of vectors attaining the $\CSR(W)$ have the same properties as $V$.
    Consequently, \Cref{lemma:csr__property__pointed_positive_basis_has_same_size} guarantees that $|V| = \CSR(W)$.\\

    \noindent
    \textit{Runtime:} 
    Since there are $m$ vectors at initialization in $U$ and the algorithm removes 1 row in each iteration until no rows can be removed, the algorithm terminates in at most $m$ iterations.
    To check if $\bfv_i \in \calK_{V_{-i}}$, we use \Cref{alg:is_in_cone}, which checks if an LP is feasible---this has runtime $\tilde O (kn^{2.5})$ where $|V| = k \le m$.
    There are $k \le m$ such checks in each iteration of the algorithm, and there are at most $m$ iterations.
    The total runtime is thus $\tilde O(m^2 \cdot mn^{2.5}) = \tilde O(m^3 n^{2.5})$.
\end{proof}

\begin{claim}
\label{claim:csr__property__pos_independent}
    Let cone $\calK_W$ generated by $W = \set{\bfw_1, \dots, \bfw_m}$, and let $V^\ast$ attain $\CSR(W)$.
    Then vectors of $V^\ast$ are positively independent.
\end{claim}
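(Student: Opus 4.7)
The plan is to prove the claim by contradiction. Suppose $V^\ast$ attains $\CSR(W)$ but is not positively independent. Then there exists some $\bfv_i \in V^\ast$ that can be written as a nonnegative combination of the remaining vectors $V^\ast \setminus \{\bfv_i\}$; equivalently, $\bfv_i \in \cone(V^\ast \setminus \{\bfv_i\})$.

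Consider the candidate $V' = V^\ast \setminus \{\bfv_i\}$. Clearly $V' \subseteq V^\ast \subseteq W$, and $|V'| = |V^\ast| - 1 < \CSR(W)$. The main step is to argue that $\calK_{V'} = \calK_W$, which then contradicts the minimality of $|V^\ast|$ in the definition of $\CSR(W)$. The inclusion $\calK_{V'} \subseteq \calK_{V^\ast} = \calK_W$ is immediate. For the reverse inclusion, fix any $\bfx \in \calK_W = \calK_{V^\ast}$ and write $\bfx = \sum_{\bfv_j \in V^\ast} \lambda_j \bfv_j$ with $\lambda_j \ge 0$. Using the assumed nonnegative combination $\bfv_i = \sum_{\bfv_j \in V^\ast \setminus \{\bfv_i\}} \mu_j \bfv_j$ with $\mu_j \ge 0$, substitute to obtain $\bfx = \sum_{\bfv_j \in V^\ast \setminus \{\bfv_i\}} (\lambda_j + \lambda_i \mu_j) \bfv_j$, which is a nonnegative combination of vectors in $V'$. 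Hence $\bfx \in \calK_{V'}$, and so $\calK_W \subseteq \calK_{V'}$.

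This yields $\calK_{V'} = \calK_W$ with $V' \subseteq W$ and $|V'| < |V^\ast|= \CSR(W)$, contradicting the definition of $\CSR(W)$ as the minimum such cardinality. Therefore no such $\bfv_i$ exists, and $V^\ast$ must be positively independent.

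No step looks genuinely hard; the only point to be careful about is the substitution argument, which uses nonnegativity of both sets of coefficients to guarantee that the resulting coefficients $\lambda_j + \lambda_i \mu_j$ are nonnegative. This is what makes the redundancy of $\bfv_i$ a strict reduction compatible with the conic (rather than just linear) setting.
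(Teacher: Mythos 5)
Your proof is correct and takes essentially the same contradiction-by-removal route as the paper, merely spelling out the substitution argument that the paper leaves implicit when asserting $\calK_{\tilde V} = \calK_W$ after dropping the redundant vector.
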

\begin{proof}
    We prove by contradiction.
    Assume that vectors of $V^\ast$ are not positively independent, i.e., there exists $\bfv_i \in V^\ast$ that can be expressed as a nonnegative combination of vectors $V^\ast \setminus \set{\bfv_i}$.
    We can remove $\bfv_i$ from $V^\ast$ to obtain $\tilde V \subseteq W$ with $|\tilde V| = |V^\ast| - 1$ and $\calK_W = \calK_{\tilde{V}}$.
    But this contradicts the assumption that $V^\ast$ attains $\CSR(W)$, i.e., $|V^\ast|$ is minimal.
    So vectors of $V^\ast$ must be positively independent.
\end{proof}

\subsection{Computing $\ConeGeneratingRank$}
\label{app:algorithms__cgr}

We give an algorithm in \Cref{app:algorithms__cgr__general} to find $\bfV$ that attains $\ConeGeneratingRank(\bfW)$.
This algorithm uses a submodule to find the $\ConeSubsetRank$ for the pointed cones.
In \Cref{app:algorithms__cgr__pointed} we show that when the cone $\calK_W$ generated by $\bfW$ is pointed, we have $\ConeGeneratingRank(\bfW) = \ConeSubsetRank(\bfW)$.

We often write $\CGR(W)$ for set $W = \set{\bfw_1, \dots, \bfw_m } \subseteq \bbR^n$ to denote $\CGR(\bfW)$ of the matrix $\bfW = [\bfw_1; \dots; \bfw_m] \in \bbR^{m \times n}$.
That is, $\CGR$ can be equivalently defined for sets as follows:
\begin{align*}
    \CGR(W) &= \min_k \set{ k \mid \calK_W = \calK_V  \text{ for some } V \subseteq \bbR^n \text{ such that } |V| = k }.
\end{align*}

We say that $V^\ast$ \textit{attains} $\CGR(W)$ if $\calK_W = \calK_{V^\ast}$, and $\abs{V^\ast} = \CGR(W)$.

\subsubsection{Computing $\ConeGeneratingRank$ for general $\calK_W$}
\label{app:algorithms__cgr__general}

\Cref{alg:cgr} finds $\bfV$ that attains $\CGR(\bfW)$ (proof in \Cref{lemma:alg__cgr}). 
The algorithm first decomposes the $\calK_W$ into its lineality space $\calL$ and a pointed cone. 
To find $\bfV$, again we can focus on rows of $\bfW$ inside $\calL$ and outside $\calL$ separately. 
We attain the $\CGR$ of rows of $\bfW$ inside $\calL$ using an orthonormal basis of $\calL$.
Then we project the rows of $\bfW$ outside $\calL$ onto $\calL^\perp$, and attain their $\CGR$ using \Cref{alg:csr__pointed}.
It turns out that this decomposition into $\calL$ and $\calL^\perp$ and generating the cone in $\calL$ and $\calL^\perp$ separately does indeed give the minimal frame (generating set) as a consequence of \Cref{lemma:cgr__property_of_decomposition}.

\begin{algorithm}[!h]
    \begin{algorithmic}[1]
        \Procedure{GetCGR}{$\bfW = [\bfw_1; \dots; \bfw_m]$}
            \State $\bfL, \bfP \gets \textsc{DecomposeCone}(W)$ and denote $\calL = \rowspan(\bfL)$, $\ell = \dim \calL$
            \Comment{Use \Cref{alg:cone_minkowski_decomposition}}
            \If{$\calL = \set{\bfzero}$}
                \Comment{$\calK_W$ is pointed}
                \State $\tilde \bfW \gets \bfW$ and $\tilde \bfZ \gets \emptyset$
            \Else
                \Comment{$\calK_W$ is non-pointed}
                \State Partition $W$ into $\WinL = \set{ \bfw_i \in W \mid \bfw_i \in \calL }$ and $\WnotinL = \set{ \bfw_i \in W \mid \bfw_i \notin \calL }$
                \State $\tilde \bfW \gets \bfWnotinL (\bfI - \bfZ \bfZ^\top)$ where columns of $\bfZ$ are orthonormal basis of $\calL$
                \State $\bfz_0 \gets - (\bfz_1 + \cdots + \bfz_\ell)$
                \State $\tilde \bfZ \gets [\bfz_0^\top; \bfZ^\top]$
                \Comment{Frame of $\calK_W \cap \calL$}
            \EndIf
            \State $\tilde \bfV \gets \textsc{GetCSR-Pointed}(\tilde \bfW)$
            \Comment{Use \Cref{alg:csr__pointed}. Frame of $\calK_W \cap \calL^\perp$}
            \State \Return $\bfV = [\tilde \bfZ; \tilde \bfV]$
        \EndProcedure
    \end{algorithmic}
    \caption{Find $\bfV$ that attains $\ConeGeneratingRank(\bfW)$}
    \label{alg:cgr}
\end{algorithm}

\begin{remark}
\label{remark:cgr__general_calls_pointed}
    When $\calK_W$ is pointed, its lineality space is $\calL = \set{\bfzero}$.
    So \Cref{alg:cgr} sets $\tilde \bfZ = \emptyset$ and $\tilde \bfW = \bfW$.
    Then the algorithm finds $\tilde \bfV$ that attains $\CSR(\tilde \bfW) = \CSR(\bfW)$ using \textsc{GetCSR-Pointed} in \Cref{alg:csr__pointed}.
    \Cref{alg:cgr} thus returns $\bfV$ that is the output of \textsc{GetCSR-Pointed}$(\bfW)$.
\end{remark}

\begin{lemma}
\label{lemma:alg__cgr}
    Given $\bfW \in \bbR^{m \times n}$, \Cref{alg:cgr} finds $\bfV$ that attains $\CGR(\bfW)$ in $\tilde O(m^3 n^{2.5})$ time.
\end{lemma}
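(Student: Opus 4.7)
The plan is to mirror the structure of the proof of \Cref{lemma:alg__csr}, splitting into the pointed and non-pointed cases. In the pointed case, \Cref{remark:cgr__general_calls_pointed} tells us the algorithm simply returns \textsc{GetCSR-Pointed}$(\bfW)$, and correctness is immediate by combining \Cref{lemma:alg__csr__pointed} (which shows the returned set attains $\CSR(\bfW)$) with \Cref{lemma:cgr__pointed} (which states $\CGR(\bfW) = \CSR(\bfW)$ whenever $\calK_W$ is pointed).

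For non-pointed $\calK_W$, I would first verify that the output $\bfV = [\tilde \bfZ; \tilde \bfV]$ generates $\calK_W$, and then verify $|\bfV| = \CGR(\bfW)$. The $\ell+1$ rows of $\tilde \bfZ$ are $\bfz_0 = -(\bfz_1+\cdots+\bfz_\ell)$ together with an orthonormal basis $\bfz_1,\ldots,\bfz_\ell$ of $\calL$; by the classical construction of Davis (\Cref{lemma:conerank__subspace_sufficient}), this yields $\calK_{\tilde Z} = \calL$. By \Cref{lemma:cone__lineality_space__nonlineal_points}, the projected rows $\tilde \bfW$ generate the pointed cone $\calK_P = \calL^\perp \cap \calK_W$, so \Cref{lemma:alg__csr__pointed} guarantees that \textsc{GetCSR-Pointed}$(\tilde \bfW)$ returns $\tilde \bfV$ with $\calK_{\tilde V} = \calK_P$. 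Taking Minkowski sums gives $\calK_V = \calK_{\tilde Z} + \calK_{\tilde V} = \calL + \calK_P = \calK_W$ by the unique decomposition of \Cref{lemma:restated__cone_decomposition}.

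The main obstacle is the lower bound $|\bfV| = \CGR(\bfW)$, which I would establish by invoking the $\CGR$-analogue of \Cref{lemma:csr__property_of_decomposition}, namely \Cref{lemma:cgr__property_of_decomposition}: any $V$ with $\calK_V = \calK_W$ satisfies $|V| \ge \CGR(\WinL) + \CGR(\tilde W)$. Since $\WinL$ positively spans the $\ell$-dimensional subspace $\calL$ (a non-pointed cone), \Cref{lemma:conerank__nonpointed_necessary} forces $\CGR(\WinL) \ge \ell+1$, and this is tight since $\tilde \bfZ$ attains it with exactly $\ell+1$ vectors. Because $\tilde W$ generates the pointed cone $\calK_P$, \Cref{lemma:cgr__pointed} gives $\CGR(\tilde W) = \CSR(\tilde W) = |\tilde \bfV|$. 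Hence $|\bfV| = (\ell+1) + |\tilde \bfV| = \CGR(\WinL) + \CGR(\tilde W)$, matching the lower bound, so $\bfV$ attains $\CGR(\bfW)$.

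For the runtime, I would sum the costs of each step along the same lines as the runtime analysis of \Cref{lemma:alg__csr}: \textsc{DecomposeCone} runs in $\tilde O(m^2 n^{2.5})$ by \Cref{lemma:alg__cone_minkowski_decomposition}; partitioning $W$ via Gaussian elimination costs $O(m^2 n \min(m,n))$; computing an orthonormal basis of $\calL$ and forming $\tilde \bfW$ takes $O(mn^2)$; constructing $\tilde \bfZ$ is $O(\ell n)$; and \textsc{GetCSR-Pointed}$(\tilde \bfW)$ on at most $m$ rows costs $\tilde O(m^3 n^{2.5})$ by \Cref{lemma:alg__csr__pointed}. The last term dominates, giving total runtime $\tilde O(m^3 n^{2.5})$.
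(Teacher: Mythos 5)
Your proposal is correct and follows essentially the same route as the paper's proof: split on pointedness via \Cref{remark:cgr__general_calls_pointed} and \Cref{lemma:cgr__pointed}, use the decomposition $\calK_W = \calL + \calK_P$, establish $\calK_{\tilde Z} = \calL$ via Davis (\Cref{lemma:conerank__subspace_sufficient}) and the lower bound via \Cref{lemma:conerank__nonpointed_necessary}, handle the pointed part with \Cref{lemma:cone__lineality_space__nonlineal_points}, \Cref{lemma:alg__csr__pointed}, and \Cref{lemma:cgr__pointed}, then combine via Minkowski sums and close with the lower bound from \Cref{lemma:cgr__property_of_decomposition}. The runtime accounting also matches the paper's step-by-step analysis with \textsc{GetCSR-Pointed} as the dominating term.
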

\begin{proof}
    When $\calK_W$ is pointed, the algorithm outputs \textsc{GetCSR-Pointed}$(\bfW)$ using \Cref{alg:csr__pointed}, as noted in \Cref{remark:cgr__general_calls_pointed}.

    To get the frame of the 
    $\calK_W$ in general, we look at the union of the frame of the lineality space and the pointed part (Line~11).
    It turns out that this decomposition is minimal as proved below.
    We will also show that $\tilde{Z}$ is the frame for lineality space of $\calK_W$ (Line~9).
    \Cref{lemma:cgr__pointed} shows that $\CGR(\tilde W) = \CSR(\tilde W)$ as $\tilde W$ generates the pointed part of $\calK_W$ (Line~10).
    
    Cone $\calK_W$ has unique decomposition $\calK_W = \calL + \calK_P$ where $\calL = \lineality(\calK_W)$ is the lineality space of $\calK_W$ and $\calK_P = \calL^\perp \cap \calK_W$ is a pointed cone (see \Cref{lemma:restated__cone_decomposition}).
    \Cref{alg:cone_minkowski_decomposition} outputs matrices $\bfL$ and $\bfP$ such that $\rowspan(\bfL) = \calL$ and $\bfP$ generates the pointed cone $\calK_P$.
    After decomposing, the algorithm partitions $W$ into $\WinL$ and $\WnotinL$, and projects $\WnotinL$ onto $\calL^\perp$ to get $\tilde W = \proj_{\calL^\perp} (\WnotinL)$.
    To prove that the output $V$ has desired properties, we note the properties of $\tilde Z$ and $\tilde V$.

    \begin{itemize}
        \item \boldheading{Properties of $\tilde Z$.}
        According to \Cref{lemma:cone__lineality_space__lineal_points}, vectors $\WinL$ positively span $\calL$, which is a linear subspace.
        According to \Cref{lemma:cone__lineality_space__lineal_points}, vectors $\WinL$ positively span $\calL$, which is a linear subspace.
        As columns of $\bfZ = [\bfz_1, \dots, \bfz_\ell]$ are an orthonormal basis of $\calL$ and $\bfz_0 = -(\bfz_1 + \cdots + \bfz_\ell)$, the set $\tilde Z = \set{\bfz_0, \bfz_1, \dots, \bfz_\ell}$ generates $\calK_{\tilde Z} = \calL$ (due to Davis~\cite{davis1954theory}, see \Cref{lemma:conerank__subspace_sufficient}).
        \Cref{lemma:conerank__nonpointed_necessary} states that $\ell+1$ vectors are necessary to generate an $\ell$-dimensional linear subspace.
        Hence, $\tilde Z$ attains $\CGR(\WinL)$.
        That is, $\tilde Z$ satisfies $\calK_{\tilde Z} = \calK_{\WinL} = \calL$, and $|\tilde Z| = \CGR(\WinL)$.

        \item \boldheading{Properties of $\tilde V$.}
        According to \Cref{lemma:cone__lineality_space__nonlineal_points}, vectors $\tilde W$ generate $\calK_P$, which is a pointed cone.
        The algorithm then uses \Cref{alg:csr__pointed} to find $\tilde V$ that attains $\CSR(\tilde W)$.
        That is, $\tilde V$ satisfies $\calK_{\tilde V} = \calK_{\tilde W} = \calK_P$ and $|\tilde V| = \CSR(\tilde W)$.
        As $\calK_P$ is pointed, \Cref{lemma:cgr__pointed} implies that $\tilde V$ also attains $\CGR(\tilde W)$.
    \end{itemize}

    We now show that $\calK_V = \calK_W$.
    Below, `$+$' is the Minkowski sum.
    \begin{align}
        \calK_V &= \calK_{\tilde Z \cup \tilde V} = \calK_{\tilde Z} + \calK_{\tilde V} = \calL + \calK_P = \calK_W.
    \end{align}
    
    \Cref{lemma:cgr__property_of_decomposition} shows that any $V$ such that $\calK_V = \calK_W$ must have size $|V| \ge \CGR(\WinL) + \CGR(\tilde W)$.
    Above, we showed the output $V$ of \Cref{alg:cgr} has the properties: $\calK_V = \calK_W$ and $|V| = |\tilde Z| + |\tilde V| = \CGR(\WinL) + \CGR(\tilde W)$.
    Hence, $V$ attains $\CGR(W)$ and $\CGR(W)=\CGR(\WinL) + \CGR(\tilde W)$.\\

    \noindent
    \textit{Runtime:}
    \Cref{lemma:alg__cone_minkowski_decomposition} states that \textsc{DecomposeCone} in \Cref{alg:cone_minkowski_decomposition} has runtime $\tilde O(m^2 n^{2.5})$.
    We can check with Gaussian elimination if a row $\bfw_i$ of $\bfW$ is inside $\calL$ or not---this has runtime $O(mn \min(m,n))$ as $\bfL$ has atmost $m$ rows.
    So we can partition $W$ into $\WinL, \WnotinL$ in time $O(m^2 n \min(m,n))$.
    We find an orthonormal basis of $\calL$ in time $O(mn^2)$ and compute $\tilde \bfW$ in time $O(mn^2)$.
    Computing $\bfz_0$ takes time $O(\ell n) = O(mn)$.
    The matrix $\tilde \bfW$ has at most $m$ rows.
    Then \Cref{lemma:alg__csr__pointed} states that \textsc{GetCSR-Pointed} in \Cref{alg:csr__pointed} has runtime $\tilde O(m^3 n^{2.5})$.
    Adding all runtimes so far, the total runtime is:
    \begin{align}
        \tilde O(m^2 n^{2.5}) + O(m^2 n \min(m,n)) + O(mn^2) + O(mn^2) + O(mn) 
 + \tilde O(m^3 n^{2.5}) &= \tilde O(m^3 n^{2.5}).
        \qedhere
    \end{align}
\end{proof}

\begin{lemma}
\label{lemma:cgr__property_of_decomposition}
    Let $W = \set{\bfw_1, \dots, \bfw_m} \subseteq \bbR^n$ generate cone $\calK_W$ and have lineality space $\calL = \lineality (\calK_W)$.
    Denote $\WinL = \set{ \bfw_i \in W \mid \bfw_i \in \calL }$, and $\tilde W  = \proj_{\calL^\perp} \inparens{  W \setminus \WinL }$.
    For any $V$ satisfying $\calK_{V} = \calK_W$, we have:
    \begin{align*}
        |V| &\ge \CGR(\WinL) + \CGR(\tilde W).
    \end{align*}
\end{lemma}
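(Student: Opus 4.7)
The plan is to mirror the structure of \Cref{lemma:csr__property_of_decomposition}, exploiting the unique Minkowski decomposition $\calK_W = \calL + \calK_P$ with $\calK_P = \calL^\perp \cap \calK_W$ pointed (\Cref{lemma:restated__cone_decomposition}), and the key fact that $\CGR$ depends only on the cone that rows generate, not on the particular set of generators. The idea is to split an arbitrary generating set $V$ of $\calK_W$ into its piece inside the lineality space and its piece outside, and then argue that each piece must be large enough to generate, respectively, $\calL$ and $\calK_P$.

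Concretely, let $V$ satisfy $\calK_V = \calK_W$ and partition $V$ into $V_\calL = \{\bfv \in V \mid \bfv \in \calL\}$ and its complement $V \setminus V_\calL$, and let $\tilde V = \proj_{\calL^\perp}(V \setminus V_\calL)$. Since $\calK_V = \calK_W$, the two cones share the same lineality space $\calL$ and the same pointed part $\calK_P$. Applying \Cref{lemma:cone__lineality_space__lineal_points} to the generating set $V$ yields $\calK_{V_\calL} = \calL = \calK_{\WinL}$. Applying \Cref{lemma:cone__lineality_space__nonlineal_points} to $V$ yields $\calK_{\tilde V} = \calK_P = \calK_{\tilde W}$. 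Because $\CGR$ is an invariant of the cone generated, $\CGR(V_\calL) = \CGR(\WinL)$ and $\CGR(\tilde V) = \CGR(\tilde W)$.

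From the definition of $\CGR$, $|V_\calL| \geq \CGR(V_\calL) = \CGR(\WinL)$ and $|\tilde V| \geq \CGR(\tilde V) = \CGR(\tilde W)$. Since orthogonal projection is a map, $|V \setminus V_\calL| \geq |\tilde V|$, and the additivity of the partition gives
\begin{align*}
|V| \;=\; |V_\calL| + |V \setminus V_\calL| \;\geq\; \CGR(\WinL) + \CGR(\tilde W),
\end{align*}
which is the claim.

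The only delicate point, and the step I expect to require the most care, is invoking \Cref{lemma:cone__lineality_space__lineal_points,lemma:cone__lineality_space__nonlineal_points} for the abstract generating set $V$ rather than for the ``canonical'' set $W$: one must check that these lemmas are stated at the level of any generator of $\calK_W$ and not just the specific matrix rows used to define the cone. Granted that, the rest is bookkeeping, and no constraint like $V \subseteq W$ (which was the extra ingredient in the $\CSR$ analog) is needed, reflecting that $\CGR$ is strictly a property of the cone.
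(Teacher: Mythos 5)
Your proof is correct and follows essentially the same route as the paper: partition $V$ into its lineal part and complement, apply \Cref{lemma:cone__lineality_space__lineal_points,lemma:cone__lineality_space__nonlineal_points} (which are indeed stated for an arbitrary generating set of the cone), and sum the two lower bounds. The only cosmetic difference is that you route through the observation that $\CGR$ is an invariant of the generated cone, whereas the paper applies the definition of $\CGR$ to $\WinL$ and $\tilde W$ directly once the cone equalities are established; these are the same step.
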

\begin{proof}
    Denote $\VinL = \set{ \bfv_i \in V \mid \bfv_i \in \calL }$, and $\tilde V  = \proj_{\calL^\perp} \inparens{  V \setminus \VinL }$.
    We will show that (1) $|\VinL| \ge \CGR(\WinL)$, and (2) $|\tilde V| \ge \CGR(\tilde W)$.
    As $\tilde V$ is the projection of $V \setminus \VinL$ onto $\calL^\perp$, we have $|V \setminus \VinL| \ge |\tilde V|$.
    Joining these results, we get the desired result:
    \begin{align*}
        |V| &= |\VinL| + |V \setminus \VinL| \ge |\VinL| + |\tilde V| \ge \CGR(\WinL) + \CGR(\tilde W).
    \end{align*}

    Let $\calK_W = \calL + \calK_P$ be the unique decomposition where $\calL = \lineality(\calK_W)$ and $\calK_P = \calL^\perp \cap \calK_W$ is a pointed cone (see \Cref{lemma:restated__cone_decomposition}).
    To prove that $|\VinL| \ge \CGR(\WinL)$, we show that $\VinL$ satisfies  $\calK_{\VinL} = \calK_{\WinL}$, and so the first statement follows from the definition of $\CGR$.
    We analogously prove that $|\tilde V| \ge \CGR(\tilde W)$.
    \begin{enumerate}
        \item
        Since $\calL$ is the lineality space of $\calK_{V} = \calK_W$, \Cref{lemma:cone__lineality_space__lineal_points} states that $\calK_{\WinL} = \calL$ and $\calK_{\VinL} = \calL$, and so $\calK_{\VinL} = \calK_{\WinL}$.
        By definition of $\CGR$, we have $|\VinL| \ge \CGR(\WinL)$.

        \item
        As $\calK_P$ is the pointed cone from decomposing $\calK_W = \calK_V$, \Cref{lemma:cone__lineality_space__nonlineal_points} states that $\cone(\tilde W) = \calK_P$ and $\cone(\tilde V) = \calK_P$, and so $\cone(\tilde W) = \cone(\tilde V)$.
        By definition of $\CGR$, we have $|\tilde V| \ge \CGR(\tilde W)$.
        \qedhere
    \end{enumerate}
\end{proof}

\subsubsection{Computing $\ConeGeneratingRank$ for pointed $\calK_W$}
\label{app:algorithms__cgr__pointed}

When $\calK_W$ is pointed, \Cref{lemma:cgr__pointed}, proved in \Cref{lemma_w_proof:cgr__pointed}, states that $\CGR(\bfW) = \CSR(\bfW)$.
Hence, we can use \textsc{GetCSR-Pointed} in \Cref{alg:csr__pointed} to find $\bfV$ that attains $\CGR(\bfW) = \CSR(\bfW)$.

\begin{lemma}[\Cref{lemma:cgr__pointed}]
\label{lemma_w_proof:cgr__pointed}
    Let $\calK_W$ be a pointed cone.
    Then, $\CGR(\bfW) = \CSR(\bfW)$.
\end{lemma}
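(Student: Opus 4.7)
The plan is to establish both inequalities $\CGR(\bfW) \le \CSR(\bfW)$ and $\CSR(\bfW) \le \CGR(\bfW)$. The first direction is immediate from the definitions: any $\bfV \subseteq \bfW$ with $\calK_V = \calK_W$ is in particular a matrix with $\calK_V = \calK_W$, so the feasible set for $\CGR$ contains that for $\CSR$, giving $\CGR(\bfW) \le \CSR(\bfW)$.

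For the reverse direction, I would leverage the structural characterization of pointed cones. The key fact I plan to use is Border's result (restated earlier in the paper as \Cref{lemma:restated__cone_pointed_extreme_rays_are_subset}), which says that when $\calK_W$ is pointed, every extreme ray of $\calK_W$ is spanned by some row of $\bfW$. Let $\bfV^\ast \in \bbR^{k^\ast \times n}$ attain $\CGR(\bfW)$, so $\calK_{V^\ast} = \calK_W$ and $k^\ast = \CGR(\bfW)$. I would first argue that each row of $\bfV^\ast$ must lie on an extreme ray of $\calK_W$: otherwise some row could be expressed as a nonnegative combination of the others (by a minimality / positive independence argument analogous to \Cref{claim:csr__property__pos_independent}), contradicting the minimality of $k^\ast$.

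Since $\calK_W$ is pointed, each row $\bfv_i^\ast$ of $\bfV^\ast$ lies on some extreme ray of $\calK_W$, and by the cited fact that extreme ray is spanned by some row $\bfw_{\sigma(i)}$ of $\bfW$. Hence there exist positive scalars $c_i > 0$ such that $\bfv_i^\ast = c_i \bfw_{\sigma(i)}$. Define $\bfV$ by stacking the corresponding rows $\bfw_{\sigma(1)}, \dots, \bfw_{\sigma(k^\ast)}$ of $\bfW$. Then $\bfV \subseteq \bfW$, and because scaling rows by positive constants does not change the generated cone, $\calK_V = \calK_{V^\ast} = \calK_W$. Therefore $\CSR(\bfW) \le k^\ast = \CGR(\bfW)$.

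The main obstacle is making precise the claim that each row of $\bfV^\ast$ must be on an extreme ray: one must rule out that multiple rows of $\bfV^\ast$ share the same extreme ray (which would make some row redundant and contradict minimality) and that any row lies strictly in the interior of a two-dimensional face (again contradicting minimality via a convex-combination argument). Both follow from the pointedness of $\calK_W$ together with the fact that a pointed cone has a unique minimal generating set up to positive scaling, as established by \Cref{lemma:csr__property__pointed_positive_basis_has_same_size}. Combining the two inequalities yields $\CGR(\bfW) = \CSR(\bfW)$.
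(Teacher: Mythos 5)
Your proof is correct and rests on the same key fact as the paper's own argument, namely Border's result (\Cref{lemma:restated__cone_pointed_extreme_rays_are_subset}) that the extreme rays of a pointed cone appear in every generating set and themselves generate the cone. The paper bounds both $\CSR$ and $\CGR$ against the number of extreme rays directly, while you route through a minimal generating set $\bfV^\ast$ and replace its rows by the corresponding rows of $\bfW$; this is a slightly more constructive presentation of the same argument, not a genuinely different approach.
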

\begin{proof}
This result follows from standard properties of extreme rays of pointed cones~\cite{border2022convex,nemirovski2023linear}, namely \textit{all extreme rays of a pointed cone come from its generating set}.

In Border~\cite{border2022convex}, an \textit{extreme ray} of a convex cone is defined as a vector in the cone (unique up to positive scaling) such that it cannot be written as nonnegative linear combination of other vectors in the cone. 
A pointed cone can be generated by different sets of generators. 
Border~\cite[Prop.~26.5.4]{border2022convex}, restated in \Cref{lemma:restated__cone_pointed_extreme_rays_are_subset}, shows that the set of extreme rays of the pointed cone are always included in any of the generating sets of the cone.
As a result, the number of extreme rays $k$ of $\calK_W$ is such that $k \le \CGR(W)$. 

According to the definition of $\CSR$, set $V$ attains $\CSR(W)$ if it is the smallest subset of $W$ that generates $\calK_W$. 
Also, since $W$ generates $\calK_W$, the set of extreme rays of the pointed cone are always included in any of the generating sets of the cone.
That is, the set of extreme rays of $\calK_W$ is a subset of $W$. 
Border~\cite[Prop.~26.5.4]{border2022convex} also states that the set of extreme rays generates the cone $\calK_W$. 
Hence, $\CSR(W) \le k$.

These two statements imply $\CSR(W) \le \CGR(W)$.
\Cref{prop:matrix_rank_relationships} states that $\CSR(W) \ge \CGR(W)$, implying that the set of extreme rays attain $\CSR(W)$.
\end{proof}

\subsection{Computing $\ConeRank$}
\label{app:algorithms__cr}

We give an algorithm in \Cref{app:algorithms__cr__general} to find $\bfV$ that attains $\ConeRank(\bfW)$.
This algorithm uses a submodule to find the $\ConeRank$ for the pointed cones, described in \Cref{app:algorithms__cr__pointed}.

We often write $\CR(W)$ for set $W = \set{\bfw_1, \dots, \bfw_m } \subseteq \bbR^n$ to denote $\CR(\bfW)$ of the matrix $\bfW = [\bfw_1; \dots; \bfw_m] \in \bbR^{m \times n}$.
That is, $\CR$ can be equivalently defined for sets as follows:
\begin{align*}
    \CR(W) &= \min_k \set{ k \mid \calK_W \subseteq \calK_V  \text{ for some } V \subseteq \bbR^n \text{ such that } |V| = k }.
\end{align*}

We say that $V^\ast$ \textit{attains} $\CR(W)$ if $\calK_W \subseteq \calK_{V^\ast}$, and $\abs{V^\ast} = \CR(W)$.

\subsubsection{Computing $\ConeRank$ for general $\calK_W$}
\label{app:algorithms__cr__general}

\Cref{alg:cr} finds $\bfV$ that attains $\CR(\bfW)$ (proof in \Cref{lemma:alg__cr}). 
The algorithm first decomposes the $\calK_W$ into its lineality space $\calL$ and a pointed cone. 
To find $\bfV$, again we can focus on rows of $\bfW$ inside $\calL$ and outside $\calL$ separately.
We attain the $\CR$ of rows of $\bfW$ inside $\calL$ using an orthonormal basis of $\calL$.
Then we project the rows of $\bfW$ outside $\calL$ onto $\calL^\perp$, and attain their $\CR$ using \Cref{alg:cr__pointed}.

\begin{algorithm}[!h]
    \begin{algorithmic}[1]
        \Procedure{GetCR}{$\bfW = [\bfw_1; \dots; \bfw_m]$}
            \State $\bfL, \bfP \gets \textsc{DecomposeCone}(W)$ and denote $\calL = \rowspan(\bfL)$, $\ell = \dim \calL$
            \Comment{Use \Cref{alg:cone_minkowski_decomposition}}
            \If{$\calL = \set{\bfzero}$}
                \Comment{$\calK_W$ is pointed}
                \State $\tilde \bfW \gets \bfW$ and $\tilde \bfZ \gets \emptyset$
            \Else
                \Comment{$\calK_W$ is non-pointed}
                \State Partition $W$ into $\WinL = \set{ \bfw_i \in W \mid \bfw_i \in \calL }$ and $\WnotinL = \set{ \bfw_i \in W \mid \bfw_i \notin \calL }$
                \State $\tilde \bfW \gets \bfWnotinL (\bfI - \bfZ \bfZ^\top)$ where columns of $\bfZ$ are orthonormal basis of $\calL$
                \State $\bfz_0 \gets - (\bfz_1 + \cdots + \bfz_\ell)$
                \State $\tilde \bfZ \gets [\bfz_0^\top; \bfZ^\top]$
                \Comment{Frame of $\calK_W \cap \calL$}
            \EndIf
            \State $\tilde \bfV \gets \textsc{GetCR-Pointed}(\tilde \bfW)$
            \Comment{Use \Cref{alg:cr__pointed}. Encloses $\calK_W \cap \calL^\perp$}
            \State \Return $\bfV = [\tilde \bfZ; \tilde \bfV]$
        \EndProcedure
    \end{algorithmic}
    \caption{Find $\bfV$ that attains $\ConeRank(\bfW)$}
    \label{alg:cr}
\end{algorithm}

\begin{remark}
\label{remark:cr__general_calls_pointed}
    When $\calK_W$ is pointed, its lineality space is $\calL = \set{\bfzero}$.
    So \Cref{alg:cr} sets $\tilde \bfZ = \emptyset$ and $\tilde \bfW = \bfW$.
    Then the algorithm finds $\tilde \bfV$ that attains $\CR(\tilde \bfW) = \CR(\bfW)$ using \textsc{GetCR-Pointed} in \Cref{alg:cr__pointed}.
    \Cref{alg:cr} thus returns $\bfV$ that is the output of \textsc{GetCR-Pointed}$(\bfW)$.
\end{remark}

\begin{lemma}
\label{lemma:alg__cr}
    Given $\bfW \in \bbR^{m \times n}$, \Cref{alg:cr} finds $\bfV$ that attains $\CR(\bfW)$ in $\tilde O(m^3 n^3)$ time.
\end{lemma}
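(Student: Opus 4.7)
The plan is to mirror the structure of the proofs of \Cref{lemma:alg__csr} and \Cref{lemma:alg__cgr}, splitting into the pointed and non-pointed cases. When $\calK_W$ is pointed, \Cref{remark:cr__general_calls_pointed} tells us the algorithm reduces to \textsc{GetCR-Pointed}$(\bfW)$, whose correctness and runtime I would defer to the companion lemma for \Cref{alg:cr__pointed} (giving $\bfV$ with $|\bfV| = r$ and $\calK_W \subseteq \calK_V$, which matches $\CR(\bfW) = r$ from \Cref{lemma:cr}). So the substantive content lies entirely in the non-pointed case.

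For non-pointed $\calK_W$, I would first invoke \Cref{lemma:restated__cone_decomposition} to obtain the unique decomposition $\calK_W = \calL + \calK_P$, and use \Cref{lemma:alg__cone_minkowski_decomposition} to certify that \textsc{DecomposeCone} correctly returns $\bfL, \bfP$ with $\rowspan(\bfL) = \calL$ and $\cone(\bfP) = \calK_P$. After partitioning $W = \WinL \sqcup \WnotinL$ and projecting the latter onto $\calL^\perp$ to obtain $\tilde W$, I verify the two blocks of the returned $\bfV = [\tilde \bfZ; \tilde \bfV]$ separately: (i) Davis's construction recorded in \Cref{lemma:conerank__subspace_sufficient} shows that $\tilde Z = \set{\bfz_0, \bfz_1, \ldots, \bfz_\ell}$ with $\bfz_0 = -(\bfz_1 + \cdots + \bfz_\ell)$ satisfies $\cone(\tilde Z) = \calL$ using exactly $\ell + 1$ vectors; (ii) \Cref{lemma:cone__lineality_space__nonlineal_points} shows $\tilde W$ generates the $(r - \ell)$-dimensional pointed cone $\calK_P$, so \textsc{GetCR-Pointed}$(\tilde \bfW)$ returns $\tilde V$ with $|\tilde V| = r - \ell$ and $\calK_P \subseteq \calK_{\tilde V}$.

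Combining these, any $\bfx \in \calK_W$ decomposes as $\bfx = \bfx_L + \bfx_P$ with $\bfx_L \in \calL = \cone(\tilde Z)$ and $\bfx_P \in \calK_P \subseteq \cone(\tilde V)$; summing the two nonnegative expansions witnesses $\bfx \in \cone(\tilde Z \cup \tilde V) = \calK_V$, establishing $\calK_W \subseteq \calK_V$. For minimality, I invoke \Cref{lemma:cr}: since $\calK_W$ is non-pointed and $r$-dimensional, $\CR(\bfW) = r + 1$, which matches $|V| = (\ell + 1) + (r - \ell) = r + 1$. Hence $V$ attains $\CR(\bfW)$.

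For the runtime, the cost decomposes as: \textsc{DecomposeCone} in $\tilde O(m^2 n^{2.5})$ by \Cref{lemma:alg__cone_minkowski_decomposition}; partitioning $W$ via Gaussian elimination in $O(m^2 n \min(m,n))$; building an orthonormal basis of $\calL$ and the projection $\tilde \bfW$ in $O(mn^2)$; assembling $\tilde \bfZ$ in $O(\ell n) = O(mn)$; and finally the call to \textsc{GetCR-Pointed}$(\tilde \bfW)$, which will dominate. The main obstacle I anticipate is bounding this last call at $\tilde O(m^3 n^3)$: the sketch in \Cref{sec:algorithm__compute_cr} relies on finding a strictly separating hyperplane between $\conv(\tilde W)$ and the origin, then enclosing a scaled convex hull of $m$ points in an $(r-1)$-simplex via Gale's construction. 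I would cite an LP for the separating hyperplane (cost $\tilde O(m n^{2.5})$) and a Gale-type simplex-cover subroutine whose cost is polynomial in $m$ and $n$; the stated $\tilde O(m^3 n^3)$ bound should absorb all of these, with the cubic $n$ factor coming from the linear-algebraic steps for the simplex construction in dimension $r \le n$.
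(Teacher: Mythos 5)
Your overall route matches the paper's proof almost step for step: pointed case deferred to \textsc{GetCR-Pointed}, Minkowski decomposition via \Cref{alg:cone_minkowski_decomposition} and \Cref{lemma:restated__cone_decomposition}, Davis's $(\ell+1)$-vector construction (\Cref{lemma:conerank__subspace_sufficient}) for the lineality-space block, \Cref{lemma:cone__lineality_space__nonlineal_points} to certify that $\tilde W$ generates the $(r-\ell)$-dimensional pointed cone $\calK_P$, and then the Minkowski-sum inclusion $\calK_W = \calL + \calK_P \subseteq \calK_{\tilde Z} + \calK_{\tilde V} = \calK_V$.

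There is one genuine flaw: invoking \Cref{lemma:cr} to obtain $\CR(\bfW) = r+1$ in the non-pointed case is circular. The non-pointed half of \Cref{lemma:cr} is itself justified by the statement that ``\Cref{alg:cr} finds $\bfV$ with $k = r+1$,'' which is exactly what \Cref{lemma:alg__cr} is supposed to establish. The fix is to do what the paper does: cite \Cref{lemma:conerank__nonpointed_necessary} directly, which gives the lower bound $|V| \ge r+1$ for any $V$ with $\calK_W \subseteq \calK_V$, without passing through \Cref{lemma:cr}. Combined with your count $|V| = (\ell+1) + (r-\ell) = r+1$, that closes the minimality argument cleanly.

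On the runtime you left a gap you acknowledge: you sketch the separating hyperplane as an LP at $\tilde O(mn^{2.5})$ and wave at the simplex cover as ``polynomial,'' but the paper pins down \textsc{GetCR-Pointed} at $O(m^2 n^3)$ via a quadratic (SVM/IPM) program for the separating hyperplane plus an explicit $O(mnr)$ affine-simplex construction (\Cref{remark:cr__pointed_separating_hyperplane}, \Cref{remark:cr__pointed_simplex_alg}). Your bound is looser where it should be tighter, though since $O(m^2 n^3) \le \tilde O(m^3 n^3)$ the claimed total still holds once you fill in those subroutine costs; the $n^3$ factor actually comes from the interior-point solve, not from the simplex construction as you suggest.
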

\begin{proof}
    When $\calK_W$ is pointed, the algorithm outputs \textsc{GetCR-Pointed}$(\bfW)$ using \Cref{alg:cr__pointed}, as noted in \Cref{remark:cr__general_calls_pointed}.
    We give the proof for the pointed case in \Cref{lemma:alg__cr__pointed}.

    To find a cone enclosing $\calK_W$ in general, we look a the union of the cones enclosing the lineality space and the pointed part (Line~11).
    It turns out that this decomposition is minimal as proved below.
    We will show that $\tilde Z$ encloses the lineality space of $\calK_W$ (Line~9), and $\tilde V$ encloses the pointed part of $\calK_W$ (Line~10).
    
    Let $r = \dim \calK_W = \rank \bfW$.
    Cone $\calK_W$ has unique decomposition $\calK_W = \calL + \calK_P$ where $\calL = \lineality(\calK_W)$ is the lineality space of $\calK_W$ and $\calK_P = \calL^\perp \cap \calK_W$ is a pointed cone (see \Cref{lemma:restated__cone_decomposition}).
    \Cref{alg:cone_minkowski_decomposition} outputs matrices $\bfL$ and $\bfP$ such that $\rowspan(\bfL) = \calL$ and $\bfP$ generates the pointed cone $\calK_P$.
    After decomposing, the algorithm partitions $W$ into $\WinL$ and $\WnotinL$, and projects $\WnotinL$ onto $\calL^\perp$ to get $\tilde W = \proj_{\calL^\perp} (\WnotinL)$.
    To prove that the output $V$ has desired properties, we note the properties of $\tilde Z$ and $\tilde V$.

    \begin{itemize}
        \item \boldheading{Properties of $\tilde Z$.}
        According to \Cref{lemma:cone__lineality_space__lineal_points}, vectors $\WinL$ positively span $\calL$, which is a linear subspace.
        As columns of $\bfZ = [\bfz_1, \dots, \bfz_\ell]$ are an orthonormal basis of $\calL$ and $\bfz_0 = -(\bfz_1 + \cdots + \bfz_\ell)$, the set $\tilde Z = \set{\bfz_0, \bfz_1, \dots, \bfz_\ell}$ generates $\calK_{\tilde Z} = \calL$ (due to Davis~\cite{davis1954theory}, see \Cref{lemma:conerank__subspace_sufficient}).
        \Cref{lemma:conerank__nonpointed_necessary} states that $\ell+1$ vectors are necessary to positively span an $\ell$-dimensional linear subspace.
        Hence, $\tilde Z$ attains $\CR(\WinL)$.
        That is, $\tilde Z$ satisfies $\calK_{\tilde Z} = \calK_{\WinL} = \calL$, and $|\tilde Z| = \CR(\WinL) = \ell+1$.
    
        \item \boldheading{Properties of $\tilde V$.}
        According to \Cref{lemma:cone__lineality_space__nonlineal_points}, vectors $\tilde W$ generate $\calK_P$, which is a pointed cone.
        Note that $\calK_{\tilde W} = \calK_P$ is an $(r-\ell)$-dimensional pointed cone, since $\calL$ is an $\ell$-dimensional linear subspace inside $r$-dimensional $\calK_W$ and $\calK_P = \calL^\perp \cap \calK_W$.
        The algorithm uses \Cref{alg:cr__pointed} to find $\tilde V$ that attains $\CR(\tilde W)$.
        That is, $\calK_{\tilde W} \subseteq \calK_{\tilde V}$ and $|\tilde V| = \CR(\tilde W)$.
        According to \Cref{lemma:alg__cr__pointed}, we have $\CR(\tilde W) = r-\ell$ and so $|\tilde V| = r-\ell$.
    \end{itemize}

    We now show that $\calK_W \subseteq \calK_V$.
    Below, `$+$' is the Minkowski sum.
    \begin{align}
        \calK_W &= \calL + \calK_P = \calK_{\tilde Z} + \calK_P \subseteq \calK_{\tilde Z} + \calK_{\tilde V} = \calK_{\tilde Z \cup \tilde V} = \calK_V.
    \end{align}

    \Cref{lemma:conerank__nonpointed_necessary} states that any $V$ such that $\calK_W \subseteq \calK_V$ must have size $|V| \ge r+1$ where $r = \dim \calK_W = \rank \bfW$.
    Above, we show that the output $V$ of \Cref{alg:cr} has the properties:
    $\calK_W \subseteq \calK_V$ and $|V| = |\tilde Z| + |\tilde V| = (\ell+1) + (r-\ell) = r+1$.
    Hence, $V$ attains $\CR(W)$ and $\CR(W) = r+1$.\\

    \noindent
    \textit{Runtime:}
    \Cref{lemma:alg__cone_minkowski_decomposition} states that \textsc{DecomposeCone} in \Cref{alg:cone_minkowski_decomposition} has runtime $\tilde O(m^2 n^{2.5})$.
    We can check with Gaussian elimination if a row $\bfw_i$ of $\bfW$ is inside $\calL$ or not---this has runtime $O(mn \min(m,n))$ as $\bfL$ has atmost $m$ rows.
    So we can partition $W$ into $\WinL, \WnotinL$ in time $O(m^2 n \min(m,n))$.
    We find an orthonormal basis of $\calL$ in time $O(mn^2)$ and compute $\tilde \bfW$ in time $O(mn^2)$.
    Computing $\bfz_0$ takes time $O(\ell n) = O(mn)$.
    The matrix $\tilde \bfW$ has at most $m$ rows.
    Then \Cref{lemma:alg__cr__pointed} states that \textsc{GetCR-Pointed} in \Cref{alg:cr__pointed} has runtime $O(m^2 n^3)$.
    Adding all runtimes so far, the total runtime is:
    \begin{align}
        \tilde O(m^2 n^{2.5}) + O(m^2 n \min(m,n)) + O(mn^2) + O(mn^2) + O(mn) 
 + O(m^2 n^3) &= \tilde O(m^3 n^3).
        \qedhere
    \end{align}
\end{proof}

\subsubsection{Computing $\ConeRank$ for pointed $\calK_W$}
\label{app:algorithms__cr__pointed}

When $\calK_W$ is pointed, \Cref{alg:cr__pointed} finds $\bfV$ that attains $\CR(\bfW)$ when $\rank \bfW = r$ (proof in \Cref{lemma:alg__cr__pointed}).
This algorithm first finds an $(r-1)$-dimensional hyperplane that strictly separates the origin from the convex hull of rows of $\bfW$.
Then the algorithm scales rows of $\bfW$ to lie on the hyperplane, and finds an $(r-1)$-simplex that encloses the convex hull of the scaled rows.
It turns out that the cone generated by the $r$ vertices of the simplex enclose $\calK_W$.
We discuss how to find the separating hyperplane in \Cref{remark:cr__pointed_separating_hyperplane}, and the simplex in \Cref{remark:cr__pointed_simplex_alg}.

\begin{algorithm}[!h]
    \begin{algorithmic}[1]
        \Procedure{GetCR-Pointed}{$\bfW = [\bfw_1; \dots; \bfw_m]$}
            \State Let $r = \rank \bfW$
            \State Find $(r-1)$-dimensional hyperplane $\bfw^\ast \cdot \bfx = b$ with $b > 0$ such that $\bfw^\ast \cdot \bfw_i = b_i > b$ for all $i \in [m]$
            \State Scale each $\bfw_i$ to get $\bfu_i = \frac{b}{b_i} \bfw_i$ lying on the hyperplane
            \State Find $(r-1)$-simplex in the hyperplane (with vertices $V = \set{\bfv_1, \dots, \bfv_r}$) so that $\conv(U) \subseteq \conv(V)$
            \State \Return $\bfV$
        \EndProcedure
    \end{algorithmic}
    \caption{Find $\bfV$ that attains $\ConeRank(\bfW)$ when $\calK_W$ is pointed}
    \label{alg:cr__pointed}
\end{algorithm}

\begin{remark}[Separating hyperplane]
\label{remark:cr__pointed_separating_hyperplane}
    \Cref{alg:cr__pointed} relies on finding an $(r-1)$-dimensional hyperplane that strictly separates $\conv(W)$ from the origin.
    We can find such a hyperplane using a Support Vector Machine (SVM).
    W.l.o.g. let vectors $\bfw_1, \dots, \bfw_m$ be nonzero.
    To do so, we assign $y=+1$ label to vectors $\bfw_1, \dots, \bfw_m$ and $y = -1$ label to vector $\bfzero$.
    In the proof of \Cref{lemma:alg__cr__pointed}, we argue that a strictly separating hyperplane exists.
    Hence, the two classes $(W, +1)$ and $(\set{\bfzero}, -1)$ are linearly separable by some hyperplane $\bfw^\ast \cdot \bfx = b$ with unit $\ell_2$-norm $\bfw^\ast$ and $b > 0$.
    The SVM algorithm thus optimizes a quadratic objective given linear constraints~\cite[Lec.~9]{weinberger2018ml}, and can be optimized with Interior-Point Methods in $O(m^2 n^3)$ time~\cite[Sec.~10.2]{nemirovski2023ipm}.
\end{remark}
\begin{remark}[Numerical stability]
    We comment on the numerical stability of finding the separating hyperplane in Line~3.
    While a strictly separating hyperplane always exists as $\calK_W$ is pointed, the parameters determining this hyperplane can be numerically unstable.
    In particular, two issues could cause numerical instability:
    \begin{enumerate}
        \item 
        When there are vectors in $W$ that are too close to the origin, the parameter $b$ of the hyperplane is very small.
        Scaling vectors $\bfw_i$ by $b/b_i$ to get $\bfu_i$ in Line~4 could lead to numerical instability in finding a simplex to the vectors $\bfu_i$ (Line~5).
        This issue can be improved by a preprocessing step to scale all vectors $W$ to be unit norm.
        This rescaling does not change the generated cone. 

        \item 
        When the pointed cone is close to being nonpointed, the numerical instability becomes unavoidable.
        For instance, in \Cref{example:matrix_rank__2d}~(c), $\calK_W$ is a pointed cone.
        With a larger angle between the extreme rays, the cone is still pointed as long as angle $<\pi$.
        But with a larger angle between the extreme rays, the cone 
        But with larger angle between the extreme rays, the cone is still pointed (if the angle $<\pi$), but is getting close to a half space which is a nonpointed cone. Theoretically, as long as the angle $<\pi$, the $\CR$ is two. But finding the two points which enclose the cone requires higher numerical precision. This issue is unavoidable and is due to the \textit{almost} nondegeneracy of the cones. 
    \end{enumerate}
\end{remark}

\begin{remark}[Enclosing Simplex]
\label{remark:cr__pointed_simplex_alg}
    \Cref{alg:cr__pointed} relies on finding an $(r-1)$-simplex in the $(r-1)$-dimensional hyperplane so that the simplex encloses the convex hull of scaled vectors $U = \set{\bfu_1, \dots, \bfu_m}$.
    We briefly sketch the procedure for finding the vertices of such a simplex.
    A crude approach is to circumscribe $\conv(U)$ with a sphere and find a simplex that inscribes this sphere.
    So we circumscribe $\conv(U)$ with a $(r-1)$-dimensional sphere, which lies in $\affine(U)$, centered at $\overline{\bfu} = \frac{1}{m} \sum_{i \in [m]} \bfu_i$ of radius $R = \max_{i \in [m]} \norm{\bfu_i - \overline{\bfu}}_2$.
    
    We now find a simplex centered at $\overline{\bfu}$ with inradius equal to $R$ to enclose this sphere.
    A regular $(r-1)$-simplex of side length $a$ centered has inradius $\nicefrac{a}{\sqrt{2 (r-1) r}}$~\cite{nevskii2019some}.
    \footnote{\url{https://math.stackexchange.com/questions/2222844/inradius-of-regular-simplex-in-mathbbrn}}
    Conversely, a regular $(r-1)$-simplex of inradius $R$ has side length $R \sqrt{2(r-1)r}$.
    So, we can construct the simplex in $\bbR^{r-1}$ with $r$ vertices $\set{ \bfzero, (R \sqrt{2(r-1)r} \cdot \bfe_1), \dots, (R \sqrt{2(r-1)r} \cdot \bfe_{r-1}) }$ where $\bfe_i$ is the $i^{th}$ canonical basis vector of $\bbR^{r-1}$.
    And then we apply an affine transformation so that this simplex is centered at $\overline{\bfu}$ and lies in the $(r-1)$-dimensional hyperplane of $\bbR^n$ containing $\conv(U)$.
    The transformed simplex encloses the sphere containing $\conv(U)$.
    
    Computing the desired center and inradius take $O(mn)$ time.
    For each vertex of the simplex, applying the affine transformation (mapping $\bbR^{r-1}$ to $\bbR^n$) takes $O(nr)$ time.
    Thus the total runtime is $O(mnr)$.
\end{remark}

\begin{lemma}
\label{lemma:alg__cr__pointed}
    Let cone $\calK_W$ generated by $\bfW \in \bbR^{m \times n}$ be pointed.
    Then \Cref{alg:cr__pointed} finds $\bfV$ that attains $\CR(\bfW)$ in $O(m^2 n^3)$ time.
\end{lemma}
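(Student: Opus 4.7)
The plan is to establish correctness (that the algorithm is well-defined and that its output $\bfV$ satisfies $\calK_W \subseteq \calK_V$ with $|V| = \CR(\bfW) = r$) and then account for the runtime. I would proceed in four steps.

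First I would verify that Line~3 is well-defined, i.e., that a separating hyperplane with $b > 0$ actually exists. Because $\calK_W$ is pointed, the origin cannot be written as a nonnegative combination of $W$ with weights summing to one (else the cone would contain a nonzero line), so $\bfzero \notin \conv(W)$. Applying \Cref{lemma:separating_hyperplane_theorem_lower_dim} inside $\affine(W)$ yields a strictly separating hyperplane $\bfw^\ast \cdot \bfx = b$ with $b_i \coloneqq \bfw^\ast \cdot \bfw_i > b > 0$ for every $i$. This positivity guarantees the rescaling $\bfu_i = (b/b_i)\,\bfw_i$ in Line~4 is by a positive scalar, so $\cone(U) = \cone(W) = \calK_W$, and each $\bfu_i$ lies in the $(r-1)$-dimensional affine subspace $H = \{\bfx : \bfw^\ast \cdot \bfx = b\} \cap \affine(W)$.

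Second, I would invoke Gale's theorem (\Cref{lemma:restated__conerank__pointed__convex_simplex_cover}) inside $H$: since $\conv(U)$ is a bounded set in the $(r-1)$-dimensional affine space $H$, it is contained in some $(r-1)$-simplex $\conv(V)$ with $r$ vertices $V = \{\bfv_1,\dots,\bfv_r\}$. I would then argue $\calK_W \subseteq \calK_V$ directly: any nonzero $\bfx \in \calK_W = \cone(U)$ can be written as $\bfx = \sum_i \alpha_i \bfu_i$ with $\alpha_i \ge 0$ and $\sum_i \alpha_i > 0$, so $\bfx/\sum_i \alpha_i \in \conv(U) \subseteq \conv(V)$, giving $\bfx$ as a nonnegative combination of $V$; the zero vector is trivially in $\calK_V$.

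Third, for minimality I would apply \Cref{lemma:conerank__pointed_necessary}, which states that any $V'$ with $\calK_W \subseteq \calK_{V'}$ requires $|V'| \ge \rank \bfW = r$. Since the algorithm outputs exactly $r$ vectors with $\calK_W \subseteq \calK_V$, the output attains $\CR(\bfW) = r$. For the runtime, as described in \Cref{remark:cr__pointed_separating_hyperplane}, the strictly separating hyperplane is the solution of a linearly-separable hard-margin SVM, which is a quadratic program solvable by interior-point methods in $O(m^2 n^3)$ time. Scaling vectors costs $O(mn)$, and \Cref{remark:cr__pointed_simplex_alg} constructs the enclosing simplex in $O(mnr) = O(mn^2)$ time. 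The dominant cost is $O(m^2 n^3)$.

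The main obstacle I anticipate is the first step: carefully justifying the existence of a strictly separating hyperplane between $\conv(W)$ and the origin using only the hypothesis that $\calK_W$ is pointed (and not, say, that $\bfW$ has no zero rows), and ensuring Gale's inscription works in the lower-dimensional affine hull $H$ rather than in $\bbR^n$ itself. The rest is bookkeeping and invocations of cited lemmas.
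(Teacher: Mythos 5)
Your proposal is correct and follows essentially the same route as the paper's proof: show $\bfzero \notin \conv(W)$ via pointedness, apply the lower-dimensional separating hyperplane lemma, rescale onto the hyperplane, invoke Gale's inscription to obtain an enclosing $(r-1)$-simplex, and then combine the containment argument with the lower bound $\CR(\bfW) \ge r$ from \Cref{lemma:conerank__pointed_necessary}. The one caveat you flagged — that pointedness alone does not preclude $\bfW$ having zero rows, in which case $\bfzero \in \conv(W)$ — is real, and the paper resolves it exactly as you would expect, by declaring ``w.l.o.g.\ we remove any zero vectors from $W$'' before arguing separation; discarding zero rows changes neither $\calK_W$ nor $\rank \bfW$, so this is harmless.
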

\begin{proof}
    Let $r = \rank \bfW = \dim \spanv(W)$.

    Consider the convex hull of $W$, the set $\calC_W = \conv(\set{\bfw_1, \dots, \bfw_m})$. 
    W.l.o.g. we remove any zero vectors from $W$.
    We claim that $\bfzero \notin \calC_W$.
    To see this, we give a brief proof by contradiction. 
    Assume that $\bfzero \in \calC_W$.
    Then there exists some nonzero $\alpha_1, \dots, \alpha_m \ge 0$ with $\sum_i |\alpha_i| = 1$ such that $\bfzero = \sum_{i=1}^m \alpha_i \bfw_i$.
    The origin can thus be written as a nonzero, nonnegative combination of vectors $W$.
    This contradicts the fact that $\calK_W$ is pointed, as stated in \Cref{lemma:nonpointed__zero_in_cone}.

    The convex hull $\calC_W$ lies in an $r$-dimensional linear subspace.
    This is because $\dim \spanv(\calC_W) = \dim \spanv(\calK_W) = r$, where $\spanv$ is the linear span.
    As $\bfzero$ is a point in $\spanv(\calC_W)$ but not in $\calC_W$, \Cref{lemma:separating_hyperplane_theorem_lower_dim} tells us that $\bfzero$ and $\calC_W$ are strictly separated by an $(r-1)$-dimensional hyperplane $\bfw^\ast \cdot \bfx = b$.
    W.l.o.g. let $\bfw^\ast$ have unit $\ell_2$-norm and $b > 0$, i.e., origin lies on the negative side of the hyperplane.
    As each $\bfw_i$ lies on the positive side of the hyperplane, we have $\bfw^\ast \cdot \bfw_i = b_i \ge b > 0$.
    We scale each $\bfw_i$ onto this hyperplane to get $\bfu_i = \frac{b}{b_i} \bfw_i$.
    Note that $\bfu_i$ lie on the hyperplane as $\bfw^\ast \cdot \bfu_i = \frac{b}{b_i} \bfw^\ast \cdot \bfw_i = b$.
    Each $\bfu_i$ is thus a positive scaling of $\bfw_i$.

    Let $\calC_U = \conv(\set{\bfu_1, \dots, \bfu_m})$ be the convex hull of the scaled points.
    This convex set is $(r-1)$-dimensional as its affine hull is the hyperplane $\bfw^\ast \cdot \bfx = b$ itself.
    Gale~\cite{gale1953inscribing} tells us that this $(r-1)$-dimensional bounded set $\calC_U$ can be enclosed within an $(r-1)$-simplex having $(r-1)+1 = r$ vertices (see  \Cref{lemma:restated__conerank__pointed__convex_simplex_cover} and \Cref{remark:cr__pointed_simplex_alg}).
    Let the vertices be $V = \set{\bfv_1, \dots, \bfv_r}$.
    Each $\bfv_i$ thus lies in the hyperplane $\bfw^\ast \cdot \bfx = b$ and we have $\calC_U \subseteq \calC_V$.
    
    We finally show that $\calK_W \subseteq \calK_V$.
    Let $\bfx \in \calK_W$, i.e. $\bfx = \sum_{i=1}^m \lambda_i \bfw_i$ for nonzero $\lambda_1, \dots, \lambda_m \ge 0$.
    Because $\calC_U \subseteq \calC_V$, we have that for all $i \in [m]$, the scaled point $\bfu_i = \sum_{j=1}^r \alpha_{i,j} \bfv_i$ for some nonzero $\alpha_{i,1}, \dots, \alpha_{i,r} \ge 0$ with $\sum_{j=1}^r \alpha_{i,j} = 1$.
    We can rewrite $\bfx$ as:
    \begin{align}
        \bfx &= \sum_{i=1}^m \lambda_i \bfw_i = \sum_{i=1}^m \lambda_i \cdot \frac{b_i}{b} \cdot \bfu_i
        = \sum_{i=1}^m \lambda_i \cdot \frac{b_i}{b} \cdot \sum_{j=1}^r \alpha_{i,j} \bfv_j
        = \sum_{j=1}^r \underbrace{ \inparens{ \sum_{i=1}^m \lambda_i \cdot \frac{b_i}{b} \cdot \alpha_{i,j} } }_{\mu_j} \bfv_j
    \end{align}
    where coefficients $\mu_j \ge 0$ as $\lambda_i, \alpha_{i,j}, b_i, b \ge 0$ for all $i \in [m], j \in [r]$.
    So $\bfx \in \calK_V$, implying that $\calK_W \subseteq \calK_V$.

    Since $\calK_W$ is an $r$-dimensional pointed cone, \Cref{lemma:conerank__pointed_necessary} states that $r$ vectors are necessary to generate a cone enclosing $\calK_W$, i.e., $\CR(\bfW) \ge r$.
    In fact, the output of the algorithm $V$ has exactly $r$ vectors, so it must attain $\CR(W)$.\\

    \noindent
    \textit{Runtime:}
    We can compute $\rank \bfW$ using Gaussian elimination in time $O(mn \min(m,n))$.
    \Cref{remark:cr__pointed_separating_hyperplane} notes that we can find the separating hyperplane using Support Vector Machines in $O(m^2 n^3)$ time.
    After scaling each $\bfw_i$ to get $\bfu_i$, we can find the simplex enclosing $\calC_U$ in time $O(mnr)$ as noted in \Cref{remark:cr__pointed_simplex_alg}.
    Thus the total runtime is $O(mn \min(m,n) + m^2 n^3 + mnr) = O(m^2n^3)$.
\end{proof}

\clearpage
\section{Preliminaries}
\label{app:preliminaries}

\subsection{Background on convex analysis}
\label{app:background_convex_analysis}

A convex set $\calK \subseteq \bbR^n$ is a cone if for all $\bfx \in \calK, \lambda \ge 0$, we have $\lambda \bfx \in \calK$.
The Minkowski-Weyl theorem states that a cone is polyhedral if and only if it is finitely generated.
The cone generated by vectors in $W \subseteq \bbR^n$ is denoted as $\calK_W$.
Representing the elements of a cone as a conic combination of the generating vectors, as in $\calK_W$, is called the Vertex-representation (V-representation).
Cones can equivalently be defined in terms of inequalities, $\set{\bfx \in \bbR^n \mid \bfA \bfx \ge \bfzero}$ for some $\bfA \in \bbR^{m \times n}$.
This representation is referred to as the Halfspace-representation (H-representation).

The dimension of a convex set $\calC$ is defined as the dimension of the affine hull of $\calC$, i.e., $\dim \calC = \dim \affine(\calC)$ where the affine hull is simply the set of all affine combinations of elements of a set.
The affine hull of a cone $\calK$ is the linear subspace spanned by the cone, and so $\dim \calK = \dim \spanv(\calK)$.
It then follows that, for cone $\calK_W$ generated by rows of $\bfW$, we have $\dim \calK_W = \dim \spanv(\calK_W) = \rank \bfW$.

The dual cone of $\calK$ is $\calK^\ast = \set{ \bfy \in \bbR^n \mid \innerproduct{\bfy}{\bfx} \ge 0 \text{ for all } \bfx \in \calK }$.
Thus, for cone $\calK_W$ generated by rows of $\bfW = [\bfw_1; \dots; \bfw_m]$, the dual cone is:
\begin{align*}
    \calK_W^\ast = \set{\bfy \in \bbR^n \mid \text{for all } \bfx \in \calK_W, \innerproduct{\bfx}{\bfy} \ge 0} = \set{\bfy \in \bbR^n \mid \bfW \bfy \ge \bfzero}
\end{align*}
where the last equality is true due to \Cref{lemma:cones__dual_cone_eq_aux_cone}.

\subsection{Auxiliary Lemmas}

\begin{lemma}
\label{lemma:cones__dual_cone_eq_aux_cone}
    Let rows of $\bfW \in \bbR^{m \times n}$ generate cone $\calK_W$.
    Then $\calK_W^\ast = \set{ \bfy \in \bbR^n \mid \bfW \bfy \ge \bfzero }$.
\end{lemma}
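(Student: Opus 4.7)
The plan is to prove set equality by showing containment in both directions, using the standard fact that $\calK_W$ consists precisely of all nonnegative combinations of the rows of $\bfW$.

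For the forward direction ($\calK_W^\ast \subseteq \{\bfy \mid \bfW\bfy \ge \bfzero\}$), I would fix $\bfy \in \calK_W^\ast$ and observe that every row $\bfw_i$ of $\bfW$ is itself in $\calK_W$ (take $\bflambda = \bfe_i$). By the definition of the dual cone, $\langle \bfw_i, \bfy\rangle \ge 0$ for all $i \in [m]$. Stacking these inner products gives $\bfW\bfy \ge \bfzero$.

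For the reverse direction ($\{\bfy \mid \bfW\bfy \ge \bfzero\} \subseteq \calK_W^\ast$), I would fix $\bfy$ with $\bfW\bfy \ge \bfzero$ and take an arbitrary $\bfx \in \calK_W$. By the V-representation of $\calK_W$, there exists $\bflambda \ge \bfzero$ with $\bfx = \bflambda \bfW = \sum_{i=1}^m \lambda_i \bfw_i$. Then
\begin{align*}
    \langle \bfx, \bfy\rangle = \sum_{i=1}^m \lambda_i \langle \bfw_i, \bfy\rangle = \bflambda (\bfW\bfy) \ge 0,
\end{align*}
since $\bflambda \ge \bfzero$ and $\bfW\bfy \ge \bfzero$. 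This shows $\bfy \in \calK_W^\ast$.

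This is a completely routine argument with no real obstacle — the only subtlety is keeping row/column conventions consistent, since the paper writes $\bfW = [\bfw_1; \dots; \bfw_m]$ as a stack of rows and $\bfx = \bflambda \bfW$ for $\bflambda \in \bbR^m_+$. Combining the two inclusions yields the claimed identity $\calK_W^\ast = \{\bfy \in \bbR^n \mid \bfW\bfy \ge \bfzero\}$.
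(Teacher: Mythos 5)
Your proof is correct and follows the same route as the paper: both directions are shown by the same two observations, namely that each row $\bfw_i$ lies in $\calK_W$ (giving the forward inclusion) and that any $\bfx \in \calK_W$ is a nonnegative combination $\bflambda \bfW$, so $\innerproduct{\bfx}{\bfy} = \bflambda(\bfW\bfy) \ge 0$ (giving the reverse). No meaningful difference from the paper's argument.
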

\begin{proof}
    Denote $\calK_W' = \set{ \bfy \in \bbR^n \mid \bfW \bfy \ge \bfzero }$.
    First, we prove that $\calK_W^\ast \subseteq \calK_W'$.
    Let $\bfy \in \calK_W^\ast$. 
    Then for all $\bfx \in \calK_W, \innerproduct{\bfx}{\bfy} \ge 0$. 
    Since for all $i \in [m], \bfw_i \in \calK_W$, we have that $\innerproduct{\bfw_i}{\bfy} \ge 0$.
    Therefore, $\bfy \in \calK_W'$.
    Finally, we prove that $\calK_W' \subseteq \calK_W^\ast$.
    Let $\bfy \in \calK_W'$, so $\bfW \bfy \ge \bfzero$.
    Let $\bfx \in \calK_W$, i.e. $\bfx = \bflambda \bfW$ for some nonnegative $\bflambda$.
    Therefore, we have $\innerproduct{\bfx}{\bfy} = \innerproduct{\bflambda \bfW}{\bfy} = \innerproduct{\bflambda}{\bfW \bfy} \ge 0$.
    This implies that $\bfy \in \calK_Z^\ast$.
\end{proof}

\begin{lemma}
\label{lemma:nonpointed__zero_in_cone}
    Let rows of $\bfW \in \bbR^{m \times n}$ generate cone $\calK_W$.
    Cone $\calK_W$ is nonpointed if and only if there exists nonzero $\bflambda \ge \bfzero$ such that $\bfzero = \bflambda \bfW$.
\end{lemma}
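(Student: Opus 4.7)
The plan is to prove both directions of the biconditional using the V-representation of $\calK_W$, i.e., that every element of $\calK_W$ is expressible as $\bfmu \bfW$ for some $\bfmu \ge \bfzero$. As in the paper's convention for cone generators, I will assume w.l.o.g. that $\bfW$ has no zero rows (zero rows contribute nothing to $\calK_W$ and can be discarded without altering pointedness).

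For the forward direction, suppose $\calK_W$ is nonpointed. By definition there is a nonzero $\bfx \in \calK_W$ with $-\bfx \in \calK_W$. Using V-representation I can write $\bfx = \bfalpha \bfW$ and $-\bfx = \bfbeta \bfW$ for some $\bfalpha, \bfbeta \ge \bfzero$. Adding these two identities yields $\bfzero = (\bfalpha + \bfbeta) \bfW$, with $\bfalpha + \bfbeta \ge \bfzero$ as a sum of nonnegative vectors. Finally, $\bfx \ne \bfzero$ forces $\bfalpha \ne \bfzero$ (else $\bfx = \bfalpha \bfW = \bfzero$), so $\bflambda \coloneqq \bfalpha + \bfbeta$ is the desired nonzero nonnegative vector.

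For the reverse direction, suppose $\bflambda \ge \bfzero$ is nonzero and satisfies $\bflambda \bfW = \bfzero$. Pick any index $i$ with $\lambda_i > 0$; since $\bfW$ has no zero rows, $\bfw_i \ne \bfzero$. Rearranging the identity $\lambda_i \bfw_i + \sum_{j \ne i} \lambda_j \bfw_j = \bfzero$ and dividing by $\lambda_i > 0$ gives
\begin{align*}
    -\bfw_i &= \sum_{j \ne i} \frac{\lambda_j}{\lambda_i}\, \bfw_j,
\end{align*}
exhibiting $-\bfw_i$ as a nonnegative combination of rows of $\bfW$, hence $-\bfw_i \in \calK_W$. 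Since $\bfw_i \in \calK_W$ trivially and $\bfw_i \ne \bfzero$, this witnesses that $\calK_W$ is nonpointed.

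The argument is essentially a two-line linear-algebra manipulation in each direction, so there is no genuine obstacle; the only subtlety worth flagging is the zero-row edge case, which is handled by the standing (and standard) assumption that the generating set contains no zero vectors — an assumption that matches the hypothesis of \Cref{alg:is_cone_pointed} where this lemma is invoked.
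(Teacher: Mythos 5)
Your proof is correct and mirrors the paper's argument in both directions: adding the two V-representations $\bfx = \bfalpha\bfW$ and $-\bfx = \bfbeta\bfW$ for the forward direction, and rearranging $\bflambda\bfW = \bfzero$ (under the same no-zero-rows convention) to exhibit a nonzero $\bfx$ with $\pm\bfx \in \calK_W$ for the reverse. The only cosmetic difference is that the paper uses $\bfx = \lambda_1\bfw_1$ as the witness while you normalize by $\lambda_i$ to get $\bfx = \bfw_i$; your version also makes explicit the (true but unstated in the paper) observation that $\bfalpha \neq \bfzero$ because $\bfx \neq \bfzero$.
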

\begin{proof}
    We first prove that if $\calK_W$ is nonpointed then there exists nonzero $\bflambda \ge \bfzero$ such that $\bfzero = \bflambda \bfW$.
    Cone $\calK_W$ is nonpointed when there exists nonzero $\bfx \in \calK_W$ such that $-\bfx \in \calK_W$.
    That is, there exists nonzero and nonnegative $\bfalpha, \bfbeta$ such that $\bfx = \bfalpha \bfW$ and $-\bfx = \bfbeta \bfW$.
    Adding the two equations, we get $\bfzero = (\bfalpha + \bfbeta) \bfW$.
    Hence, there exists nonzero $\bflambda = (\bfalpha + \bfbeta) \ge \bfzero$ such that $\bfzero = \bflambda \bfW$.

    We now prove that if there exists nonzero $\bflambda \ge \bfzero$ such that $\bfzero = \bflambda \bfW$ then cone $\calK_W$ is nonpointed.
    W.l.o.g. assume $\bfW$ does not contain any rows of all zeros and $\lambda_1 > 0$.
    Let $\bfx = \lambda_1 \bfw_1$.
    Since $\bfzero = \bflambda \bfW$, we have $\bfx = \lambda_1 \bfw_1 = - \sum_{i=2}^m \lambda_i \bfw_i$.
    Therefore, there exists nonzero $\bfx = \lambda_1 \bfw_1 \in \calK_W$ such that $-\bfx = \sum_{i=2}^m \lambda_i \bfw_i \in \calK_W$.
    So cone $\calK_W$ is nonpointed.
\end{proof}

\begin{lemma}
\label{lemma:polyhedral_cones__duals__inclusion}
    For two polyhedral cones $\calK_1$ and $\calK_2$, we have $\calK_1 \subseteq \calK_2 \iff \calK_2^\ast \subseteq \calK_1^\ast$.
\end{lemma}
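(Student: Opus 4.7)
The plan is to prove the two directions separately, with the forward direction being essentially immediate from the definition of dual cones, and the reverse direction following by applying the forward direction to the duals and invoking the bi-duality theorem $(\calK^\ast)^\ast = \calK$ for closed convex cones (which covers polyhedral cones, since they are automatically closed).

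First I would handle the forward implication. Assume $\calK_1 \subseteq \calK_2$ and pick any $\bfy \in \calK_2^\ast$. By definition, $\innerproduct{\bfx}{\bfy} \ge 0$ for all $\bfx \in \calK_2$. Since every $\bfx \in \calK_1$ also lies in $\calK_2$ by assumption, the same inequality holds for all $\bfx \in \calK_1$, which is precisely the statement $\bfy \in \calK_1^\ast$. Hence $\calK_2^\ast \subseteq \calK_1^\ast$. This step is a two-line chase through definitions.

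For the reverse implication, suppose $\calK_2^\ast \subseteq \calK_1^\ast$. Treating $\calK_1^\ast$ and $\calK_2^\ast$ as (polyhedral, hence closed convex) cones in their own right, the forward direction applied to this inclusion yields $(\calK_1^\ast)^\ast \subseteq (\calK_2^\ast)^\ast$. Invoking the bipolar/double-dual theorem for closed convex cones, namely $(\calK^\ast)^\ast = \calK$, this simplifies to $\calK_1 \subseteq \calK_2$, as desired. The only non-trivial ingredient is the bipolar theorem; I would either cite it from a standard reference (e.g.\ Rockafellar or Schrijver) or note that for a polyhedral cone $\calK_W$ generated by rows of $\bfW$, combining \Cref{lemma:cones__dual_cone_eq_aux_cone} with Farkas' lemma gives $(\calK_W^\ast)^\ast = \calK_W$ directly.

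The main (and only) subtle point is the use of the bipolar theorem in the reverse direction; everything else is direct from the definition of the dual cone. Since the lemma restricts attention to polyhedral cones, closedness is automatic and no separation-theorem technicalities are needed.
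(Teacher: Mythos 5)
Your proposal is correct and takes essentially the same route as the paper: prove the easy direction that inclusion reverses under taking duals, then obtain the converse by applying this to $\calK_1^\ast, \calK_2^\ast$ and invoking the bipolar identity $\calK^{\ast\ast} = \calK$, valid because polyhedral cones are closed and convex. The paper simply cites the forward direction from Boyd and Vandenberghe rather than writing out the two-line definition chase, but the structure of the argument is identical.
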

\begin{proof}
    Since the two cones are polyhedral, they are closed and convex.
    For any closed and convex cone $\calK$, the dual of its dual cone is the cone itself: $\calK^{\ast\ast} = \calK$.
    The result then follows from the fact that for any two convex cones $\calK_1 \subseteq \calK_2 \implies \calK_2^\ast \subseteq \calK_1^\ast$~\cite[Sec.~2.6.1]{boyd2004convex}.
\end{proof}

\begin{lemma}
\label{lemma:affine_subspace_invariant_origin_choice}
    Let $\affine(X)$ be the affine hull of set $X \subseteq \bbR^n$.
    Then for any $\bfx, \bfx' \in X, \affine(X)_{\bfx} = \affine(X)_{\bfx'}$.
\end{lemma}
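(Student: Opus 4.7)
The plan is to unfold the definition of $\affine(X)_{\bfx}$ and show equality of the two translated sets by an elementary double-inclusion argument, exploiting the fact that an affine hull is closed under affine combinations of its elements.

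First I would make the notation explicit: following the convention used throughout the paper (e.g.\ $\calF_{\bff} = \set{ \bff' - \bff \mid \bff' \in \calF }$), $\affine(X)_{\bfx} = \set{ \bfy - \bfx \mid \bfy \in \affine(X) }$, i.e.\ the translate of $\affine(X)$ that passes through the origin when we subtract $\bfx$. Fix $\bfx, \bfx' \in X$; since $X \subseteq \affine(X)$, both are in $\affine(X)$.

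The key step is to show $\affine(X)_{\bfx} \subseteq \affine(X)_{\bfx'}$; the reverse inclusion then follows by symmetry. Let $\bfg \in \affine(X)_{\bfx}$, so $\bfg = \bfy - \bfx$ for some $\bfy \in \affine(X)$. Define $\bfy' \defeq \bfy - \bfx + \bfx'$. Then $\bfy' = 1 \cdot \bfy + (-1) \cdot \bfx + 1 \cdot \bfx'$ is an affine combination of points in $\affine(X)$ since the coefficients sum to $1$. Because affine hulls are closed under affine combinations, $\bfy' \in \affine(X)$. Hence $\bfg = \bfy' - \bfx' \in \affine(X)_{\bfx'}$, establishing one inclusion. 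Swapping the roles of $\bfx$ and $\bfx'$ yields the other inclusion, completing the proof.

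There is no real obstacle here; the only thing to be careful about is to invoke closure of $\affine(X)$ under affine (not merely convex) combinations, which is what permits the coefficient $-1$ on $\bfx$. The lemma in fact holds for any $\bfx, \bfx' \in \affine(X)$, not just $\bfx, \bfx' \in X$, but the weaker version stated suffices for the paper's use in justifying that $\calL$ in \Cref{defn:affine_hull_f} is well-defined.
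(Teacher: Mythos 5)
Your proof is correct and uses essentially the same double-inclusion-plus-symmetry strategy as the paper. The one stylistic difference is that the paper first observes that $\affine(X)_{\bfx}$ is a linear subspace and then argues via translation and sign-flipping inside that subspace, whereas you invoke closure of $\affine(X)$ under affine combinations directly (writing $\bfy' = \bfy - \bfx + \bfx'$ with coefficients $1, -1, 1$); your route is marginally more self-contained since it avoids having to separately justify the linear-subspace claim.
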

\begin{proof}
    We will prove that $\affine(X)_{\bfx} \subseteq \affine(X)_{\bfx'}$.
    Inclusion in the other direction will follow from a symmetry argument.
    We first note that for any $\bfx \in X$, the set $\affine(X)_{\bfx}$ is a linear subspace.
    Pick arbitrary $\bfx, \bfx' \in X$, and let $\bfy \in \affine(X)_{\bfx}$.
    Then $\bfy + \bfx \in \affine(X)$.
    By centering $\affine(X)$ at $\bfx'$ and rearranging terms, we get that $\bfy \in (\bfx' - \bfx) + \affine(X)_{\bfx'}$.
    By definition, we have that $\bfx - \bfx' \in \affine(X)_{\bfx'}$.
    Since $\affine(X)_{\bfx'}$ is a linear subspace, the vector $-(\bfx - \bfx')$ is also in $\affine(X)_{\bfx'}$.
    Hence, $\bfy \in \affine(X)_{\bfx'}$.
\end{proof}

\begin{lemma}[{Border~\cite[Prop.~26.5.4]{border2022convex}}]
\label{lemma:restated__cone_pointed_extreme_rays_are_subset}
    Let $W = \set{\bfw_1, \dots, \bfw_m} \subseteq \bbR^n$ be nonzero vectors that generate cone $\calK_W$.
    If $\calK_W$ is pointed, then it has nondegenerate extreme rays, and each is of the form $\langle \bfw_i \rangle$ for some $i \in [m]$.
    That is, every extreme ray is one of the generators $\bfw_i$.
    (But not every every $\bfw_i$ need be extreme.)
    Moreover, the cone $\calK_W$ is the convex hull of its extreme rays.
\end{lemma}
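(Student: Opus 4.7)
\textbf{Proof proposal for Lemma~\ref{lemma:restated__cone_pointed_extreme_rays_are_subset}.} The plan is to reduce the claim to the classical Minkowski theorem for bounded polytopes by intersecting $\calK_W$ with a cleverly chosen hyperplane. Since this is a restatement of a standard result (Border~\cite[Prop.~26.5.4]{border2022convex}), the goal is to give a transparent self-contained sketch rather than optimize constants.

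First, I would use pointedness of $\calK_W$ to produce a vector $\bfc \in \bbR^n$ such that $\bfc \cdot \bfx > 0$ for every nonzero $\bfx \in \calK_W$. The existence of such a $\bfc$ is equivalent to the interior of the dual cone $\calK_W^\ast$ being nonempty, which in turn is equivalent to pointedness. Concretely, one can take $\bfc = \sum_{i=1}^m \bfa_i$, where $\bfa_i$ are any strictly supporting functionals at each $\bfw_i$; alternatively, apply \Cref{lemma:separating_hyperplane_theorem_lower_dim} as in the proof of \Cref{lemma:alg__cr__pointed} to strictly separate $\bfzero$ from $\conv(W)$. In either case, $\bfc \cdot \bfw_i > 0$ for all $i \in [m]$, so I can rescale each generator to $\bfu_i = \bfw_i / (\bfc \cdot \bfw_i)$, which all lie on the affine hyperplane $H = \set{\bfx : \bfc \cdot \bfx = 1}$.

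Next I would identify the cross-section $C \defeq \calK_W \cap H$ with the convex hull $\conv(\bfu_1, \dots, \bfu_m)$. The inclusion $\conv(\set{\bfu_i}) \subseteq C$ is immediate, and the reverse inclusion follows because any $\bfx \in C$ can be written $\bfx = \sum_i \lambda_i \bfw_i$ with $\lambda_i \ge 0$, and taking inner product with $\bfc$ forces $\sum_i \lambda_i (\bfc \cdot \bfw_i) = 1$, so $\bfx = \sum_i \lambda_i (\bfc \cdot \bfw_i) \bfu_i$ is a convex combination. Hence $C$ is a bounded polytope. I would then set up a bijection between rays in $\calK_W$ and points in $C$: each nonzero $\bfx \in \calK_W$ is a positive multiple of the unique point $\bfx / (\bfc \cdot \bfx) \in C$. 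Under this bijection, $\bfx$ generates an extreme ray of $\calK_W$ if and only if $\bfx / (\bfc \cdot \bfx)$ is an extreme point (vertex) of $C$: a nontrivial decomposition $\bfx = \bfy + \bfz$ with $\bfy, \bfz \in \calK_W$ scales, via multiplication by $1/(\bfc \cdot \bfx)$, to a proper convex decomposition of $\bfx / (\bfc \cdot \bfx)$ in $C$, and conversely.

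Given this correspondence, the classical Minkowski theorem for bounded polytopes finishes the argument. Since $C = \conv(\bfu_1, \dots, \bfu_m)$, every extreme point of $C$ is one of the $\bfu_i$'s, so every extreme ray of $\calK_W$ equals $\langle \bfw_i \rangle$ for some $i$; and since $C$ equals the convex hull of its extreme points, $\calK_W$ equals the conic hull of its extreme rays. Nondegeneracy (each extreme ray is one-dimensional, not all of $\bbR \bfw_i$) is automatic from pointedness, since $\bfw_i$ and $-\bfw_i$ cannot both lie in $\calK_W$. The main subtlety in carrying this plan out is the equivalence between an extreme ray in $\calK_W$ and a vertex of $C$; specifically, verifying that if $\bfu = \bfx / (\bfc \cdot \bfx)$ admits $\bfu = \tfrac{1}{2}(\bfu' + \bfu'')$ with $\bfu', \bfu'' \in C$ distinct from $\bfu$, then the corresponding cone decomposition $\bfx = \alpha \bfu' + \beta \bfu''$ (with $\alpha, \beta > 0$ chosen to match $\bfc \cdot \bfx$) exhibits a nontrivial splitting of $\bfx$ in $\calK_W$. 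This is a short but delicate scaling computation, and aside from it the rest of the argument is routine.
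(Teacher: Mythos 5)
The paper does not prove this lemma at all: it is a verbatim citation of Border~\cite[Prop.~26.5.4]{border2022convex}, so there is no ``paper's own proof'' to compare against. Your argument is correct and self-contained. You produce a vector $\bfc$ with $\bfc \cdot \bfw_i > 0$ for all $i$, slice the cone by the hyperplane $\bfc \cdot \bfx = 1$ to obtain the polytope $C = \conv(\bfu_1, \dots, \bfu_m)$, set up the ray--vertex correspondence, and transfer Minkowski's theorem back to the cone. The ray--vertex equivalence you flag as the ``main subtlety'' does in fact go through cleanly in both directions: if $\bfx = \bfy + \bfz$ with $\bfy, \bfz \in \calK_W$ nonzero, then $\bfy/(\bfc\cdot\bfy)$ and $\bfz/(\bfc\cdot\bfz)$ lie in $C$ and witness that $\bfx/(\bfc\cdot\bfx)$ is a proper convex combination (and if one of $\bfy,\bfz$ is zero the decomposition is trivial); conversely, any proper convex decomposition in $C$ scales to a nontrivial conical decomposition.

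Two small remarks. First, the construction ``$\bfc = \sum_i \bfa_i$ where $\bfa_i$ are strictly supporting functionals at each $\bfw_i$'' is vague and does not obviously yield $\bfc \cdot \bfw_j > 0$ for \emph{all} $j$; your fallback via \Cref{lemma:separating_hyperplane_theorem_lower_dim} (strictly separating the origin from $\conv(W)$, which is possible because pointedness and \Cref{lemma:nonpointed__zero_in_cone} give $\bfzero \notin \conv(W)$) is the clean route and also handles the case $\rank\bfW < n$ correctly, so I would drop the first construction. Second, it is worth noting that this slicing argument is already implicit in the paper's own proof of \Cref{lemma:alg__cr__pointed}, which uses the same hyperplane-plus-rescaling device followed by Gale's simplex-enclosure result~\cite{gale1953inscribing}; so your argument sits naturally alongside the paper's toolkit even though the paper outsources this particular fact to Border.
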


\begin{lemma}[{Regis~\cite[Lemma~6.6]{regis2016properties}, Audet~\cite{audet2011short}}]
\label{lemma:restated__csr__subspace__max_size}
    Let $S$ be a finite set of vectors that positively span the linear subspace $V$ of $\bbR^n$.
    Then $S$ contains a subset that positively spans $V$ and that contains at most $2 \dim V$ elements.
\end{lemma}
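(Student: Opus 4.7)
The plan is a two-step reduction. First I would pass from $S$ to a positively independent subset $T \subseteq S$ that still positively spans $V$, by greedy pruning: while some $t \in T$ satisfies $t \in \cone(T \setminus \set{t})$, remove it. Each removal preserves $\cone(T) = V$, and the process terminates at a positively independent $T \subseteq S$ with $\cone(T) = V$. The whole statement then reduces to the classical Davis bound: every positively independent positive spanning set of a $d$-dimensional subspace has cardinality at most $2d$.

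For the Davis bound I would induct on $d = \dim V$. The base case $d = 1$ is immediate, since positive independence forces exactly one strictly positive and one strictly negative vector. For the inductive step, fix any $t_1 \in T$, let $V' = t_1^{\perp} \cap V$ (so $\dim V' = d-1$), and let $\pi \colon V \to V'$ denote orthogonal projection along $t_1$. The image $\pi(T \setminus \set{t_1})$ positively spans $V'$: any $v' \in V' \subseteq V$ admits a nonnegative expansion $v' = \sum_j c_j t_j$, and applying $\pi$ annihilates the $t_1$-term while preserving $v'$. Pruning this image to positive independence and applying the inductive hypothesis yields a subset $T' \subseteq T \setminus \set{t_1}$ of size at most $2(d-1)$ whose projections positively span $V'$. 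Augmenting $T'$ with $t_1$ and one vector $t_{-1} \in T \setminus \set{t_1}$ satisfying $\langle t_{-1}, t_1 \rangle < 0$ (which must exist because $-t_1 \in \cone(T \setminus \set{t_1})$ and taking inner product with $t_1$ on both sides of that expansion forces some summand to correlate negatively with $t_1$) gives a candidate subset of $T \subseteq S$ of size at most $2(d-1) + 2 = 2d$.

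The main obstacle is verifying that $T' \cup \set{t_1, t_{-1}}$ genuinely positively spans $V$ rather than merely controlling the $V'$-component. Writing a target $v = \alpha t_1 + v'$ as a nonnegative combination requires simultaneously matching its $V'$-component and its $t_1$-component, and using $t_{-1}$ to supply a negative $t_1$-contribution injects a spurious $V'$-piece $\pi(t_{-1})$ that must be absorbed into the nonnegative expansion via $T'$. The cleanest way to close this gap, I expect, is to set up the final coefficients as an explicit linear program parameterized by the multiplier on $t_{-1}$ and to use the fact that $\cone(\pi(T')) = V'$ (not merely positive spanning) to absorb the correction. Alternatively, one can select $t_{-1}$ more carefully from the explicit decomposition of $-t_1$ and iterate the projection argument. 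This bookkeeping---making the $2d$ bound tight rather than merely giving some exponential-in-$d$ upper bound---is the technical heart of the Davis/Audet argument and the step where I would expect to spend most of the effort.
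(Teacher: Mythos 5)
The paper does not prove this lemma; it is stated as a restatement of Regis~\cite{regis2016properties} and Audet~\cite{audet2011short}, and no proof appears in the paper, so there is no in-paper argument to compare against. On its own merits, your Step~1 (greedy pruning to a positively independent subset of $S$) is correct and reduces the claim to the Davis-type bound that a positively independent positive spanning set of a $d$-dimensional space has at most $2d$ elements; the base case $d=1$ is also fine.

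The inductive step is where the real difficulty lies, and the gap you flag is genuine rather than bookkeeping. After projecting along $t_1$ and applying the inductive hypothesis you have $T' \subseteq T \setminus \{t_1\}$ with $|T'| \le 2(d-1)$ and $\cone(\pi(T')) = V'$, and the remaining claim is that $T' \cup \{t_1, t_{-1}\}$ positively spans $V$. The coefficient-matching you describe reduces to a question about the net $t_1$-drift of a nonnegative expansion of $-\pi(t_{-1})$ over $\pi(T')$: you need a strictly negative drift so that increasing the multiplier on $t_{-1}$ eventually makes the residual $t_1$-coefficient nonnegative, but positive independence of $T$ only rules out strictly positive drift, leaving the zero-drift boundary case unresolved. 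Choosing $t_{-1}$ (and $t_1$) ``more carefully,'' as you suggest, is the right instinct, but the standard proofs of the $2d$ bound are structural: they show a positive basis decomposes as a union of positively dependent blocks spanning a direct sum of subspaces, from which the count falls out, rather than proceeding one orthogonal direction at a time. Your outline identifies the correct reduction and the correct obstacle, but as written it does not establish the bound.
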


\begin{lemma}
\label{lemma:separating_hyperplane_theorem_lower_dim}
    Let $\calC$ be a convex set lying in an $r$-dimensional linear subspace in $\bbR^n$, i.e. $\dim \spanv(\calC) = r$, and let $\bfx_0$ be a point in $\spanv(\calC)$ but not in $\calC$.
    Then there exists an $(r-1)$-dimensional affine hyperplane $\bfw^\star \cdot \bfx = b$ strictly separating $\calC$ and $\bfx_0$.
    That is, $\bfw^\star \cdot \bfy > b$ for all $\bfy \in \calC$ and $\bfw^\star \cdot \bfx_0 < b$.
\end{lemma}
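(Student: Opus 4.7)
The plan is to reduce the claim to the standard strict separating hyperplane theorem in $\bbR^r$ by identifying the $r$-dimensional subspace $\calV = \spanv(\calC)$ with $\bbR^r$ via an orthonormal basis, and then lifting the separating functional back to $\bbR^n$.

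First, I would fix an orthonormal basis $\bfz_1, \dots, \bfz_r$ of $\calV$ and collect it as columns of $\bfZ \in \bbR^{n \times r}$. The linear map $T : \calV \to \bbR^r$ defined by $T(\bfx) = \bfZ^\top \bfx$ is an isometric isomorphism, with inverse $\bftheta \mapsto \bfZ \bftheta$. Setting $\tilde \calC \defeq T(\calC) \subseteq \bbR^r$ and $\tilde \bfx_0 \defeq T(\bfx_0)$, the convexity of $\calC$ is preserved (since $T$ is linear), and injectivity of $T$ on $\calV$ combined with $\bfx_0 \in \calV \setminus \calC$ gives $\tilde \bfx_0 \notin \tilde \calC$. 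In particular $\tilde \calC$ is a convex subset of $\bbR^r$ of full dimension $r$ (its linear span is all of $\bbR^r$), and $\tilde \bfx_0$ is a point of $\bbR^r$ not in $\tilde \calC$.

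Next, I would invoke the standard strict separating hyperplane theorem in $\bbR^r$: there exist a nonzero $\tilde \bfw \in \bbR^r$ and a scalar $\tilde b \in \bbR$ such that $\tilde \bfw \cdot \tilde \bfy > \tilde b$ for every $\tilde \bfy \in \tilde \calC$ and $\tilde \bfw \cdot \tilde \bfx_0 < \tilde b$. I would then lift by setting $\bfw^\star \defeq \bfZ \tilde \bfw \in \calV \subseteq \bbR^n$ and $b \defeq \tilde b$. For any $\bfy \in \calC \subseteq \calV$, $\bfw^\star \cdot \bfy = \tilde \bfw^\top \bfZ^\top \bfy = \tilde \bfw \cdot T(\bfy) > b$, and analogously $\bfw^\star \cdot \bfx_0 < b$. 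Because $\bfw^\star$ is a nonzero vector lying inside $\calV$, its intersection with $\calV$ cuts out a codimension-one affine flat of $\calV$, i.e.\ an $(r-1)$-dimensional affine subspace, which is the separating hyperplane promised by the lemma.

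The one subtlety, which I expect to be the main obstacle, is the word \emph{strictly}: strict separation of a point from a convex set in $\bbR^r$ requires either that $\calC$ be closed (so that $\tilde \bfx_0 \notin \overline{\tilde \calC}$) or that $\bfx_0$ already lie outside $\overline{\calC}$. In the places where this lemma is invoked (notably the proof of \Cref{lemma:alg__cr__pointed}, applied to $\conv(W)$ with $\bfzero \notin \conv(W)$), $\calC$ is a polytope and hence closed, so the standard theorem applies directly. I would therefore state the hypothesis precisely as ``$\bfx_0 \notin \overline{\calC}$'' (or assume $\calC$ closed) and then the lifting argument above yields the result without further work.
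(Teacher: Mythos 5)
Your proposal is correct, but it takes a genuinely different route from the paper. The paper first applies the strict-separation theorem in the \emph{ambient} space $\bbR^n$ to obtain an $(n-1)$-dimensional hyperplane $H = \{\bfx : \bfw \cdot \bfx = b\}$, and then argues that $H$ can be neither parallel to $A = \spanv(\calC)$ nor contain $A$ (either would force $\bfw\cdot\bfy = \bfw\cdot\bfx_0$ for all $\bfy\in\calC$, contradicting strict separation), so $H \cap A$ is a non-empty proper affine flat of $A$, hence $(r-1)$-dimensional. You instead push $\calC$ and $\bfx_0$ \emph{into} $\bbR^r$ via the coordinate map $T = \bfZ^\top(\cdot)$, separate there, and lift the functional back with $\bfw^\star = \bfZ\tilde\bfw$. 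The two arguments are equally short; the main thing your route buys is that $\bfw^\star$ lands inside $\spanv(\calC)$ by construction, so the restriction of the separating functional to $\spanv(\calC)$ is automatically nonzero — you do not need the paper's ``not parallel, not containing'' case analysis. (One small point worth keeping in mind when lifting: since $T$ is a linear homeomorphism between $\calV$ and $\bbR^r$, closedness of $\calC$ transfers to closedness of $\tilde\calC$, so the hypotheses of the strict-separation theorem in $\bbR^r$ are indeed met.) Your observation about strictness is also apt and applies equally to the paper's own proof: the cited reference (Boyd--Vandenberghe, Example 2.20) assumes a \emph{closed} convex set, which the lemma as stated does not require; in the only place the lemma is invoked, $\calC = \conv(W)$ is a polytope and hence compact, so the gap is harmless, but stating the hypothesis as ``$\bfx_0 \notin \closure(\calC)$'' would be cleaner.
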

\begin{proof}
    Since $\bfx_0 \notin \calC$, by the separating hyperplane theorem~\cite[Example~2.20]{boyd2004convex}, we get that $\bfx_0$ and $\calC$ are strictly separated by a hyperplane $\bfw \cdot \bfx = b$.
    This means that $\bfw \cdot \bfy > b$ for all $\bfy \in \calC$ and $\bfw \cdot \bfx_0 < b$.
    This affine hyperplane is $(n-1)$-dimensional.
    
    We will now show that an $(r-1)$-dimensional affine hyperplane exists that separates $\bfx_0$ and $\calC$.
    Denote by $H$ the affine hyperplane $\bfw \cdot \bfx = b$ and $A = \spanv(\calC)$ the linear subspace in which $\calC$ lies.
    Note that the hyperplane $H$ neither is parallel to $A$ nor includes $A$.
    Why?
    If $H$ were parallel to $A$, then $\bfw \cdot \bfy = \bfw \cdot \bfx_0 = 0$ for all $\bfy \in \calC$, contradicting strict separation.
    If $H$ included $A$, then $\bfw \cdot \bfy = \bfw \cdot \bfx_0 = b$ for all $\bfy \in \calC$, again contradicting strict separation.
    Therefore, $H \cap A$ is neither empty nor equal to $A$.
    Along with the fact that $A$ is an $r$-dimensional subspace, we get that $H \cap A$ is an $(r-1)$ hyperplane in $A$ and strictly separates $\bfx_0$ and $\calC$.
\end{proof}

\begin{lemma}[Gale~\cite{gale1953inscribing}]
\label{lemma:restated__conerank__pointed__convex_simplex_cover}
    Let $S$ be a closed subset of $\bbR^n$ of diameter 1.
    Then $S$ can be inscribed in a regular $n$-simplex of diameter $d \le \sqrt{n(n+1)/2}$.
\end{lemma}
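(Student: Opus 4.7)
The plan is to invoke Jung's classical theorem to reduce the statement to enclosing a ball, then circumscribe that ball by a regular simplex. First, I would observe that we may assume $S$ is convex, since $\conv(S)$ has the same diameter as $S$ and any simplex containing $\conv(S)$ contains $S$. I would then apply Jung's theorem to find a closed ball $B(\bfc, \rho)$ with $\rho = \sqrt{n/(2(n+1))}$ such that $S \subseteq B(\bfc, \rho)$, and translate so that $\bfc = \bfzero$. This isolates the geometric core: produce a regular $n$-simplex $T$ containing $B(\bfzero, \rho)$ with a controlled edge length.

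Next I would construct $T$ as the regular $n$-simplex centered at the origin whose inscribed ball has radius exactly $\rho$. Using the standard identity that a regular $n$-simplex of edge length $a$ has inradius $a/\sqrt{2n(n+1)}$ (see the inradius computation already used in Remark~\ref{remark:cr__pointed_simplex_alg}), setting the inradius equal to $\rho$ forces $a = \rho\sqrt{2n(n+1)} = n$. The diameter of a regular $n$-simplex, viewed as a subset of $\bbR^n$, equals its edge length, so this crude argument would already yield $d \le n$. Since the lemma claims the sharper bound $\sqrt{n(n+1)/2}$, which is strictly smaller than $n$ for $n \ge 2$, this Jung-plus-inscribed-ball approach must be refined.

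The main obstacle is obtaining the sharp constant, because a set of diameter $1$ is far more restricted than a Jung ball of the same enclosing radius. Gale's sharper argument replaces the ball enclosure with a direct support-function construction: pick unit vectors $\bfu_0, \dots, \bfu_n$ pointing to the vertices of a reference regular simplex centered at the origin, so that $\sum_i \bfu_i = \bfzero$ and $\bfu_i \cdot \bfu_j = -\tfrac{1}{n}$ for $i \neq j$. Define the candidate enclosing simplex via the support function $h_S$ of $S$:
\begin{equation*}
    T \;=\; \bigcap_{i=0}^{n} \inbraces{ \bfx \in \bbR^n \;\mid\; \bfu_i \cdot \bfx \le h_S(\bfu_i) }.
\end{equation*}
Each facet of $T$ supports $S$, so $S \subseteq T$, and $T$ is a (possibly non-regular) $n$-simplex. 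The crux of the proof is then to bound the edge length of $T$ using only the diameter constraint on $S$, which gives $h_S(\bfu) + h_S(-\bfu) \le \|\bfu\|_2$ for every $\bfu$. Combining this width inequality with the symmetric identity $-\bfu_i = \tfrac{1}{n}\sum_{j \neq i} \bfu_j$ and the sublinearity of $h_S$, one derives relations among the offsets $h_S(\bfu_i)$ that control the pairwise distances between the vertices of $T$. Centering $S$ by translation so that the offsets $h_S(\bfu_i)$ are balanced (e.g.\ so that their sum is extremal, exploiting $\sum_i \bfu_i = \bfzero$), and carrying out the resulting symmetric bookkeeping over the $\binom{n+1}{2}$ edge directions, produces the sharp constant $\sqrt{n(n+1)/2}$. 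Executing this symmetric width-bookkeeping cleanly is the hard part; everything else is routine simplex geometry.
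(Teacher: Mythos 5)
The paper does not prove this lemma: it is stated as an imported result and attributed directly to Gale's 1953 paper, with no in-text proof, so there is nothing in the paper's own argument to compare against. Evaluating your sketch on its own terms: the first half is sound. Jung's theorem plus an inscribed-ball regular simplex indeed yields only the weaker bound $d \le n$, and you correctly recognize that this does not reach $\sqrt{n(n+1)/2}$.

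The second half has a gap that would prevent the argument from closing. You propose to center $S$ by a \emph{translation} so that the offsets $h_S(\bfu_i)$ are balanced, ``exploiting $\sum_i \bfu_i = \bfzero$.'' This is backwards: precisely because $\sum_i \bfu_i = \bfzero$, the quantity $\sum_i h_S(\bfu_i)$ is translation-\emph{invariant} --- translating $S$ by $\bfv$ changes $h_S(\bfu_i)$ by $\bfu_i \cdot \bfv$, and the correction sums to $\inparens{\sum_i \bfu_i}\cdot\bfv = 0$. Since the edge length of your candidate simplex $T$ equals $\sqrt{2n/(n+1)}\,\sum_i h_S(\bfu_i)$, translation cannot shrink $T$ at all. (Also, $-\bfu_i = \sum_{j\neq i}\bfu_j$, without the $\tfrac{1}{n}$ factor.) What actually closes the argument is an \emph{antipodal} comparison, not a translation: alongside $T$, form the regular simplex $T'$ determined by the normals $\set{-\bfu_i}$ with offsets $h_S(-\bfu_i)$; the set $\set{-\bfu_i}$ is again a regular-simplex vertex frame, so $T'$ is regular and also contains $S$. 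Summing your width inequality $h_S(\bfu_i) + h_S(-\bfu_i) \le 1$ over $i$ gives $\sum_i h_S(\bfu_i) + \sum_i h_S(-\bfu_i) \le n+1$, so one of the two simplices has offset-sum at most $\tfrac{n+1}{2}$ and hence edge length at most $\sqrt{2n/(n+1)}\cdot\tfrac{n+1}{2} = \sqrt{n(n+1)/2}$. Without this orientation-flip (or an equivalent rotation-averaging step) the ``symmetric bookkeeping'' you gesture at cannot produce the sharp constant --- in a fixed orientation the regular-simplex vertex set of diameter $1$ already forces $\sum_i h_S(\bfu_i) = \sqrt{n(n+1)/2} > \tfrac{n+1}{2}$ for $n \ge 2$, and direct estimates on $\sum_i h_S(\bfu_i)$ from the diameter constraint alone only recover $d \le n$ again.
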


\clearpage
\section{Technical Lemmas}
\label{app:key_lemmas}

\subsection{Relationships of cones in a subspace}

\begin{lemma}
\label{lemma:1__f_linear__cone_subsets_in_subspace}
    Let $\calL \subseteq \bbR^d$ be an $r$-dimensional linear subspace, and let columns of $\bfZ \in \bbR^{d \times r}$ be an orthonormal basis of $\calL$.
    Let $\calK_{A_1}$ and $\calK_{A_2}$ be cones in $\bbR^d$ generated by rows of matrices $\bfA_1 \in \bbR^{m_1 \times d}$ and $\bfA_2 \in \bbR^{m_2 \times d}$ respectively.
    With $\bfV_1 = \bfA_1 \bfZ$ and $\bfV_2 = \bfA_2 \bfZ$, we have,
    \begin{align*}
        \calL \cap \calK_{A_1}^\ast \subseteq \calK_{A_2}^\ast \iff \calK_{V_1}^\ast \subseteq \calK_{V_2}^\ast \iff \calK_{V_2} \subseteq \calK_{V_1}.
    \end{align*}
\end{lemma}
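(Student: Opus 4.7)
The plan is to handle the second equivalence $\calK_{V_1}^\ast \subseteq \calK_{V_2}^\ast \iff \calK_{V_2} \subseteq \calK_{V_1}$ by directly invoking \Cref{lemma:polyhedral_cones__duals__inclusion} (duality for polyhedral cones is an involution, so inclusion reverses under the dual map). The substance of the lemma therefore lies in the first equivalence
\[
\calL \cap \calK_{A_1}^\ast \subseteq \calK_{A_2}^\ast \iff \calK_{V_1}^\ast \subseteq \calK_{V_2}^\ast,
\]
which I would establish through the natural coefficient isomorphism between $\calL$ and $\bbR^r$ induced by $\bfZ$.

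Concretely, since the columns of $\bfZ$ are orthonormal, $\bfZ^\top \bfZ = \bfI_r$, so the linear map $\phi\colon \bbR^r \to \calL$, $\phi(\bfd) = \bfZ \bfd$ is a bijection with inverse $\bfg \mapsto \bfZ^\top \bfg$. The key computation is that for either $i\in\{1,2\}$,
\[
\bfV_i \bfd = \bfA_i \bfZ \bfd = \bfA_i \, \phi(\bfd).
\]
Combined with the characterization $\calK_{V_i}^\ast = \{\bfd \in \bbR^r : \bfV_i \bfd \ge \bfzero\}$ and $\calK_{A_i}^\ast = \{\bfg \in \bbR^d : \bfA_i \bfg \ge \bfzero\}$ (both from \Cref{lemma:cones__dual_cone_eq_aux_cone}), this identity shows that $\bfd \in \calK_{V_i}^\ast$ iff $\phi(\bfd) \in \calK_{A_i}^\ast$. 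Because $\phi(\bfd) \in \calL$ automatically, $\phi$ restricts to a bijection between $\calK_{V_i}^\ast$ and $\calL \cap \calK_{A_i}^\ast$.

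Finally, I would note the trivial reformulation $\calL \cap \calK_{A_1}^\ast \subseteq \calK_{A_2}^\ast \iff \calL \cap \calK_{A_1}^\ast \subseteq \calL \cap \calK_{A_2}^\ast$, valid because the left-hand side already lies inside $\calL$. Transporting this inclusion through the bijection $\phi$ and its inverse (both of which preserve and reflect inclusions) yields $\calK_{V_1}^\ast \subseteq \calK_{V_2}^\ast$, completing the first equivalence. The argument is essentially mechanical; the only place to take care is to confirm that $\phi$ is a genuine bijection between the relevant subsets (rather than merely a surjection or injection), which is immediate from $\bfZ^\top \bfZ = \bfI_r$ and the fact that the columns of $\bfZ$ span $\calL$. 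No step presents a real obstacle once the isomorphism is fixed.
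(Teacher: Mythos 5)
Your proof is correct and follows essentially the same route as the paper: both exploit the coordinate isomorphism $\bfd \mapsto \bfZ\bfd$ between $\bbR^r$ and $\calL$ (with $\bfZ^\top\bfZ = \bfI_r$) together with the characterization of dual cones from \Cref{lemma:cones__dual_cone_eq_aux_cone}, and both finish the second equivalence by invoking \Cref{lemma:polyhedral_cones__duals__inclusion}. The paper phrases the change of variables as a chain of iffs and passes through a parallel/perpendicular decomposition of the rows of $\bfA_i$, which your direct identity $\bfV_i\bfd = \bfA_i\bfZ\bfd$ makes unnecessary, but the substance is identical.
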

\begin{proof}
    We can simplify this condition $\calL \cap \calK_{A_1}^\ast \subseteq \calK_{A_2}^\ast$ further by expressing vectors in the basis $\bfZ$.
    
    First, every $\bfx \in \calL$ has a unique representation in the basis $\bfZ$.
    That is, $\bfx = \bfZ \bfc$ for some $\bfc \in \bbR^r$.
    Second, every $d$-dimensional row $\bfa$ of $\bfA_1$ and $\bfA_2$ can be written as $\bfa^\parallel + \bfa^\perp$, where $\bfa^\parallel = \bfa \bfZ \bfZ^\top \in \calL$ and $\bfa^\perp = \bfa (\bfI - \bfZ \bfZ^\top) \in \calL^\perp$.
    Therefore, $\bfA_1 = \bfA_1^\parallel + \bfA_1^\perp$ where $\bfA_1^\parallel = \bfA_1 \bfZ \bfZ^\top$ and $\bfA_1^\perp = \bfA_1 (\bfI - \bfZ \bfZ^\top)$.
    Note that $\bfA_1^\perp \bfZ = \bfzero_{m_1 \times r}$.
    Similarly we can decompose the matrix $\bfA_2 = \bfA_2^\parallel + \bfA_2^\perp$.
    Denote the coefficients as $\bfV_1 = \bfA_1 \bfZ$ and $\bfV_2 = \bfA_2 \bfZ$.
    Using these simplifications, we get:
    \begin{align}
        \calL \cap \calK_{A_1}^\ast \subseteq \calK_{A_2}^\ast &\iff \text{for all } \bfx \in \calL, \bfA_1 \bfx \ge \bfzero \implies \bfA_2 \bfx \ge \bfzero\\
        &\iff \text{for all } \bfc \in \bbR^r, \bfA_1 \bfZ \bfc \ge \bfzero \implies \bfA_2 \bfZ \bfc \ge \bfzero\\
        &\iff \text{for all } \bfc, (\bfA_1^\parallel + \bfA_1^\perp) \bfZ \bfc \ge \bfzero \implies (\bfA_2^\parallel + \bfA_2^\perp) \bfZ \bfc \ge \bfzero\\
        &\iff \text{for all } \bfc, \bfV_1 \bfZ^\top \bfZ \bfc \ge \bfzero \implies \bfV_2 \bfZ^\top \bfZ \bfc \ge \bfzero\\
        &\iff \text{for all } \bfc, \bfV_1 \bfc \ge \bfzero \implies \bfV_2 \bfc \ge \bfzero\\
        &\iff \calK_{V_1}^\ast \subseteq \calK_{V_2}^\ast\\
        &\iff \calK_{V_2} \subseteq \calK_{V_1}.
    \end{align}
    where the last equivalence follows from \Cref{lemma:polyhedral_cones__duals__inclusion}.
\end{proof}

\begin{lemma}
\label{lemma:subspace_scaled_to_relint}
    Let $\calL$ be the linear subspace corresponding to $\affine(X)$.
    For any $\bfx^\ast$ in the relative interior of $X$ and any $\bfx \in \calL$, there exists $a > 0$ such that $a \bfx \in X_{\bfx^\ast}$.
\end{lemma}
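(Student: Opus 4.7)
The plan is to use the standard characterization of the relative interior: since $\bfx^\ast$ lies in the relative interior of $X$ with respect to $\affine(X)$, there exists some $\eps > 0$ such that
\[
B(\bfx^\ast, \eps) \cap \affine(X) \subseteq X,
\]
where $B(\bfx^\ast, \eps)$ is the open Euclidean ball of radius $\eps$ centered at $\bfx^\ast$. This is the definition of relative interior that I will invoke directly.

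First, I would handle the trivial case $\bfx = \bfzero$, where any $a > 0$ works since $\bfx^\ast \in X$ gives $\bfzero = \bfx^\ast - \bfx^\ast \in X_{\bfx^\ast}$. Then for nonzero $\bfx \in \calL$, I would propose the scaling $a = \eps / (2 \norm{\bfx}_2) > 0$ and verify two properties of the candidate point $\bfx^\ast + a\bfx$: (i) it lies in $\affine(X)$, and (ii) it lies in the ball $B(\bfx^\ast, \eps)$. Property (ii) is immediate because $\norm{a\bfx}_2 = \eps/2 < \eps$. Property (i) follows from the fact that $\calL$ is the translation of $\affine(X)$ to the origin (\Cref{defn:affine_hull_f}): since $\bfx \in \calL$ and $\calL$ is a linear subspace, $a\bfx \in \calL$, and so $\bfx^\ast + a\bfx \in \bfx^\ast + \calL = \affine(X)$.

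Combining (i) and (ii) with the relative interior containment yields $\bfx^\ast + a\bfx \in X$, which rewrites as $a\bfx = (\bfx^\ast + a\bfx) - \bfx^\ast \in X_{\bfx^\ast}$, completing the argument. There is no real obstacle here; the only subtlety worth stating carefully is the identification of $\calL$ with $\affine(X) - \bfx^\ast$, which I would justify either via \Cref{defn:affine_hull_f} or by noting (as in \Cref{lemma:affine_subspace_invariant_origin_choice}) that this translation is independent of the choice of base point in $X$.
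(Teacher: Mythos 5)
Your proof is correct and follows essentially the same route as the paper's: both invoke the ball-intersection characterization of relative interior, translate to the base point $\bfx^\ast$ so that $\affine(X) - \bfx^\ast = \calL$, handle $\bfx = \bfzero$ trivially, and then rescale a nonzero $\bfx$ to fit inside the ball. The only cosmetic difference is the choice of scaling constant ($\eps/(2\norm{\bfx})$ versus the paper's $R/\norm{\bfx}$), which just reflects using an open versus closed ball.
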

\begin{proof}
    We use the definition of relative interior.
    Since $\bfx^\ast$ is in relative interior of $X$, there exists $R > 0$ such that $(\bfx^\ast + R \cdot \bbB_2^d) \; \cap \; \affine(X) \subseteq X$.
    Centering the sets at $\bfx^\ast$, there exists $R > 0$ such that $R \cdot \bbB_2^d \; \cap \; \affine(X)_{\bfx^\ast} \subseteq X_{\bfx^\ast}$.
    We note that $\calL = \affine(X)_{\bfx^\ast}$.

    Let $\bfx \in \calL$.
    If $\bfx = \bfzero$ then we are done as $a \bfx = \bfzero \in X_{\bfx^\ast}$ for any $a > 0$.
    If $\bfx$ is nonzero, then we can normalize it so that $\tilde{\bfx} = R \cdot \frac{\bfx}{\norm{\bfx}} \in R \cdot \bbB_2^d \cap \calL$.
    From the definition of relative interior, we get that $\tilde{\bfx} \in X_{\bfx^\ast}$.
    Thus for any nonzero $\bfx \in \calL$ there exists $a = R / \norm{\bfx}$ such that $a \bfx \in X_{\bfx^\ast}$.
\end{proof}

\begin{lemma}
\label{lemma:2__f_linear__cone_decompose_in_subspace}
    Let $\calL$ be the linear subspace corresponding to $r$-dimensional $\affine(X) \subseteq \bbR^d$, and let columns of $\bfZ \in \bbR^{d \times r}$ be an orthonormal basis of $\calL$.
    For any $\bfx \in X$, denote with $\calC_{\bfx} \subseteq \bbR^r$ the preimage of $X_{\bfx}$ under the orthonormal basis $\bfZ$.
    Let $\calK_A \subseteq \bbR^d$ be generated by rows of $\bfA \in \bbR^{m \times d}$, and let $\bfV = \bfA \bfZ$.
    Then for every $\bff \in \calF$,
    \begin{align*}
        X_{\bfx} \cap \calK_A^\ast \cap \stcomplement{(\ker \bfA)} = \emptyset &\iff \calC_{\bfx} \cap \calK_V^\ast \cap \stcomplement{(\ker \bfV)} = \emptyset.
    \end{align*}
\end{lemma}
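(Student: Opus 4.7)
The plan is to reduce the claimed equivalence to a straightforward change of variables via the linear isomorphism encoded by $\bfZ$. The key observation is that every movement direction in $X_{\bfx}$ lies inside $\calL$, so it admits a unique coordinate representation in the basis $\bfZ$, and this representation translates membership in $\calK_A^\ast$ and $\ker \bfA$ into membership in $\calK_V^\ast$ and $\ker \bfV$ respectively.

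First I would verify that $X_{\bfx} \subseteq \calL$. This is immediate from \Cref{defn:affine_hull_f}: since $\calL$ is the translation of $\affine(X)$ to the origin, movement directions $\bfg = \bfx' - \bfx$ with $\bfx, \bfx' \in X \subseteq \affine(X)$ all lie in $\calL$. Because the columns of $\bfZ$ are an orthonormal basis of $\calL$, the map $\Phi: \bbR^r \to \calL$ defined by $\Phi(\bfd) = \bfZ \bfd$ is a linear isomorphism with inverse $\Phi^{-1}(\bfg) = \bfZ^\top \bfg$. By the definition of $\calC_{\bfx}$ as the preimage of $X_{\bfx}$ under $\bfZ$, we have $\bfg \in X_{\bfx}$ if and only if $\bfd = \bfZ^\top \bfg \in \calC_{\bfx}$, and $\bfg = \bfZ \bfd$.

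Next I would use $\bfV = \bfA \bfZ$ to translate the defining conditions. For any $\bfg = \bfZ \bfd \in \calL$ we have $\bfA \bfg = \bfA \bfZ \bfd = \bfV \bfd$, which yields the two equivalences
\begin{align*}
    \bfg \in \calK_A^\ast &\iff \bfA \bfg \ge \bfzero \iff \bfV \bfd \ge \bfzero \iff \bfd \in \calK_V^\ast,\\
    \bfg \in \ker \bfA &\iff \bfA \bfg = \bfzero \iff \bfV \bfd = \bfzero \iff \bfd \in \ker \bfV.
\end{align*}
Combining these with the bijection $\Phi$, a vector $\bfg$ lies in $X_{\bfx} \cap \calK_A^\ast \cap \stcomplement{(\ker \bfA)}$ exactly when its coordinate vector $\bfd = \bfZ^\top \bfg$ lies in $\calC_{\bfx} \cap \calK_V^\ast \cap \stcomplement{(\ker \bfV)}$. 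Since $\Phi$ is a bijection, one set is empty if and only if the other is, which is the claimed equivalence.

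I do not expect a substantive obstacle here: the entire proof is a change of variables along $\bfZ$, and the only point requiring care is that $X_{\bfx} \subseteq \calL$ so that the inverse $\Phi^{-1}$ is defined on all of $X_{\bfx}$, which is built into the definition of $\calL$. The argument mirrors (and reuses) the four equivalences already displayed in the proof of \Cref{thm_w_proof:2__f_linear__suff}, so essentially the lemma is formalizing the bijection invoked there.
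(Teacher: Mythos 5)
Your proof is correct and follows essentially the same approach as the paper: both hinge on the fact that $X_{\bfx}\subseteq\calL$ so that the coordinate map $\bfd\mapsto\bfZ\bfd$ puts $\calC_{\bfx}$ and $X_{\bfx}$ in bijection, and on the identity $\bfA\bfg=\bfV\bfd$ when $\bfg=\bfZ\bfd$, which transports the conditions $\bfA\bfg\ge\bfzero$ and $\bfA\bfg\neq\bfzero$ into $\bfV\bfd\ge\bfzero$ and $\bfV\bfd\neq\bfzero$. The paper writes out both directions as contradiction arguments, whereas you collapse them into a single bijection statement; the substance is identical.
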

\begin{proof}
    Note that for every $\bfx \in X$, the linear subspace spanned by the set $X_{\bfx}$ is $\calL$, and columns of $\bfZ$ are an orthonormal basis of $\calL$.
    That is, for every $\bfy \in X_{\bfx}$ these exists unique $\bfd \in \calC_{\bfx}$ such that $\bfy = \bfZ \bfd$.
    Moreover, we can decompose rows of $\bfA$ in the linear subspace $\calL$ and its orthogonal complement $\calL^\perp$, as in proof of \Cref{lemma:1__f_linear__cone_subsets_in_subspace}.
    We decompose $\bfA = \bfA \bfZ \bfZ^\top + \bfA (\bfI_d - \bfZ \bfZ^\top)$.

    We use these decomposition results to prove the desired result.
    We first prove the forward direction by contradiction.
    Let $\bfx \in X$ and assume that $X_{\bfx} \cap \calK_A^\ast \cap \stcomplement{(\ker \bfA)} = \emptyset$.
    Now assume that there exists $\bfd \in \calC_{\bfx} \cap \calK_V^\ast \cap \stcomplement{(\ker \bfV)}$.
    So $\bfV \bfd \ge \bfzero$ and $\bfV \bfd \neq \bfzero$, implying that $\bfA \bfZ \bfd \ge \bfzero$ and $\bfA \bfZ \bfd \neq \bfzero$.
    Hence, there exists $\bfy = \bfZ \bfd \in X_{\bfx}$ such that $\bfy \in \calK_A^\ast$ and $\bfy \in \stcomplement{(\ker \bfA)}$.
    This contradicts our assumption that $X_{\bfx} \cap \calK_A^\ast \cap \stcomplement{(\ker \bfA)} = \emptyset$, and so we must have $\calC_{\bfx} \cap \calK_V^\ast \cap \stcomplement{(\ker \bfV)} = \emptyset$.

    We also prove the backward direction by contradiction.
    Let $\bfx \in X$ and assume that $\calC_{\bfx} \cap \calK_V^\ast \cap \stcomplement{(\ker \bfV)} = \emptyset$.
    Now assume that there exists $\bfy \in X_{\bfx} \cap \calK_A^\ast \cap \stcomplement{(\ker \bfA)}$.
    So $\bfA \bfy \ge \bfzero$ and $\bfA \bfy \neq \bfzero$.
    Using decomposition of rows of $\bfA$ and $\bfy$ in the basis $\bfZ$, we get that $\bfA \bfy = \bfA \bfZ \bfd$ where $\bfy = \bfZ \bfd$ for $\bfd \in \calC_{\bfx}$.
    So there exists $\bfd \in \calC_{\bfx}$ such that $\bfA \bfZ \bfd \ge \bfzero$ and $\bfA \bfZ \bfd \neq \bfzero$.
    Since $\bfV = \bfA \bfZ$, we get that there exists $\bfd \in \calC_{\bfx} \cap \calK_V^\ast \cap \stcomplement{(\ker \bfV)}$.
    This contradicts our assumption that $\calC_{\bfx} \cap \calK_V^\ast \cap \stcomplement{(\ker \bfV)} = \emptyset$, and so we must have $X_{\bfx} \cap \calK_A^\ast \cap \stcomplement{(\ker \bfA)} = \emptyset$.
\end{proof}

\subsection{Properties of polyhedral cones}

\begin{lemma}
\label{lemma:conerank__pointed_necessary}
    Let $\calK$ be an $r$-dimensional polyhedral cone in $\bbR^n$ such that $\calK$ is pointed.
    If there exists $\bfV \in \bbR^{k \times n}$ such that $\calK \subseteq \calK_V$, then $k \ge r$.
\end{lemma}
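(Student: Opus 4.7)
The plan is to prove this by a straightforward dimension-counting argument, without actually needing to invoke the pointedness hypothesis for this direction of the bound. Intuitively, since $\calK_V$ must contain all of the $r$-dimensional cone $\calK$, the linear span of $\calK_V$ must contain the $r$-dimensional span of $\calK$, and the span of $\calK_V$ is at most $k$-dimensional since $\calK_V$ is generated by $k$ rows of $\bfV$.

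First I would observe that $\calK \subseteq \calK_V$ implies the containment of their linear spans, $\spanv(\calK) \subseteq \spanv(\calK_V)$. Then I would invoke the fact from the preliminaries (\Cref{app:background_convex_analysis}) that for any cone $\calK'$ we have $\dim \calK' = \dim \spanv(\calK')$, since the affine hull of a cone is the linear subspace it spans. Applying this to $\calK$ gives $\dim \spanv(\calK) = r$, and applying it to $\calK_V$ gives $\dim \spanv(\calK_V) = \rank \bfV$. Since $\calK_V$ is generated by the $k$ rows of $\bfV$, we have $\rank \bfV \le k$. Chaining these together yields
\begin{align*}
    r \;=\; \dim \spanv(\calK) \;\le\; \dim \spanv(\calK_V) \;=\; \rank \bfV \;\le\; k,
\end{align*}
which is the desired bound.

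There is no real obstacle here: the argument is a two-line dimension count, and no step requires $\calK$ to be pointed. The pointedness hypothesis in the lemma statement is used only in pairing with the companion result \Cref{lemma:conerank__nonpointed_necessary} for non-pointed cones, where the tighter bound $k \ge r+1$ holds (and genuinely does need non-pointedness, since any additional generator must account for the lineality space). In the pointed case, the bound $k \ge r$ is tight and matches the output size of \Cref{alg:cr__pointed}, completing the necessary-direction half of \Cref{lemma:cr}.
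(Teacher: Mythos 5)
Your proof is correct and follows essentially the same dimension-counting argument as the paper's; the paper simply phrases it as a contradiction ($k < r$ would force $\dim\calK_V \le k < r = \dim\calK$, contradicting $\calK \subseteq \calK_V$), while you chain the inequalities directly. Your observation that pointedness is not actually invoked here is accurate and also holds for the paper's proof — the hypothesis is stated only to pair this lemma with the $k \ge r+1$ bound for the non-pointed case.
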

\begin{proof}
    We prove by contradiction.
    Let $\bfV \in \bbR^{k \times n}$ be such that $\calK \subseteq \calK_V$.
    Assume $k < r$.
    Hence, we have $\dim \calK_V = \dim \spanv(\calK_V) \le k < r$.
    But $\calK$ is an $r$-dimensional cone and $r = \dim \calK \le \dim \calK_V$, which is a contradiction.
    Therefore, it must be that $k \ge r$.
\end{proof}

\begin{lemma}
\label{lemma:conerank__nonpointed_necessary}
    Let $r > 0$ and $\calK$ be an $r$-dimensional polyhedral cone in $\bbR^n$ such that $\calK$ is nonpointed.
    If there exists $\bfV \in \bbR^{k \times n}$ such that $\calK \subseteq \calK_V$, then $k \ge r+1$.
\end{lemma}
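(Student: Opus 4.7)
The plan is to mirror the strategy of \Cref{lemma:conerank__pointed_necessary} but exploit nonpointedness to squeeze out one extra row. Since $\calK$ is $r$-dimensional and nonpointed, its lineality space $\lineality(\calK) = \calK \cap (-\calK)$ is nontrivial, so there exists a nonzero $\bfx \in \calK$ with $-\bfx \in \calK$. The containment $\calK \subseteq \calK_V$ immediately transfers both vectors into $\calK_V$, making $\calK_V$ nonpointed as well.

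By \Cref{lemma:nonpointed__zero_in_cone}, nonpointedness of $\calK_V$ yields a nonzero nonnegative vector $\bflambda \in \bbR^k_+$ with $\bflambda \bfV = \bfzero$. This is a nontrivial linear dependence among the rows of $\bfV$, so $\rank \bfV \le k - 1$. On the other hand, $\calK \subseteq \calK_V$ forces $\spanv(\calK) \subseteq \spanv(\calK_V) = \rowspan(\bfV)$, and since $\dim \calK = r$ we get $\rank \bfV \ge r$. Chaining these two inequalities gives $r \le \rank \bfV \le k - 1$, i.e., $k \ge r+1$.

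The main steps are therefore: (i) extract a nonzero element of $\lineality(\calK)$ from nonpointedness, (ii) use $\calK \subseteq \calK_V$ to promote $\calK_V$ to a nonpointed cone, (iii) invoke \Cref{lemma:nonpointed__zero_in_cone} to produce a positive dependence among the rows of $\bfV$, and (iv) combine this with the dimension bound $\rank \bfV \ge r$. I do not foresee a real obstacle: the pointed analog (\Cref{lemma:conerank__pointed_necessary}) already supplies the $k \ge r$ half, and the $+1$ comes cleanly from turning the nonpointedness of $\calK$ into a linear (not just affine) dependence among the generators of $\calK_V$. The only small care needed is to confirm that $\lineality(\calK) \ne \set{\bfzero}$ exactly when $\calK$ is nonpointed, which is immediate from the definitions.
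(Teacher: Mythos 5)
Your argument is essentially identical to the paper's: both deduce that $\calK_V$ inherits nonpointedness from $\calK$, invoke \Cref{lemma:nonpointed__zero_in_cone} to obtain a nonzero $\bflambda \ge \bfzero$ with $\bflambda \bfV = \bfzero$ (hence $\rank\bfV \le k-1$), and combine this with $\rank\bfV \ge r$ from $\calK \subseteq \calK_V$ to conclude $k \ge r+1$. The paper phrases it as a contradiction while you argue directly, but the reasoning and the key lemma are the same.
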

\begin{proof}
    We prove by contradiction.
    Let $\bfV \in \bbR^{k \times n}$ be such that $\calK \subseteq \calK_V$.
    Assume $k \le r$.
    Note that $r = \dim \calK \le \dim \calK_V$ implies that $\rank (\bfV) \ge r$.

    Since $\calK$ is nonpointed, $\calK_V$ is nonpointed.
    \Cref{lemma:nonpointed__zero_in_cone} tells us that there exists nonzero $\bflambda \ge \bfzero$ such that $\bfzero = \bflambda \bfV$.
    This implies that $\rank(\bfV) < k \le r$.
    This a contradiction, and so it must be that $k \ge r+1$.
\end{proof}

\begin{lemma}[{Davis~\cite{davis1954theory}}]
\label{lemma:conerank__subspace_sufficient}
    Let $r > 0$ and $\calL$ be an $r$-dimensional subspace in $\bbR^n$.
    If $k \ge r+1$, then there exists $\bfV \in \bbR^{k \times n}$ such that $\calL = \calK_V$.
\end{lemma}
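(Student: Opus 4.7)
The plan is to give an explicit construction. First I would fix an orthonormal basis $\bfz_1, \dots, \bfz_r$ of $\calL$ and define the additional vector $\bfz_0 = -(\bfz_1 + \cdots + \bfz_r)$, which lies in $\calL$. Letting $\bfV_0 \in \bbR^{(r+1)\times n}$ be the matrix whose rows are $\bfz_0, \bfz_1, \dots, \bfz_r$, the goal is to show $\calL = \calK_{V_0}$, and then to extend to $k > r+1$ by appending any vector already in $\calK_{V_0}$ (e.g.\ a repeat of $\bfz_0$, or $\bfzero$) without changing the generated cone.

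The containment $\calK_{V_0} \subseteq \calL$ is immediate since each $\bfz_i \in \calL$ and $\calL$ is a linear subspace, so any nonnegative combination of the $\bfz_i$ lies in $\calL$. For the reverse containment, take any $\bfx \in \calL$ and expand it in the basis as $\bfx = \sum_{i=1}^r c_i \bfz_i$ for some $c_1, \dots, c_r \in \bbR$. Set $c = \min_{i \in [r]} c_i$ and define $\lambda_0 = \max(-c, 0) \ge 0$ and $\lambda_i = c_i + \lambda_0 \ge 0$ for $i \in [r]$. Then
\begin{align*}
    \sum_{i=0}^r \lambda_i \bfz_i
    &= \lambda_0 \bfz_0 + \sum_{i=1}^r (c_i + \lambda_0) \bfz_i
    = -\lambda_0 \sum_{i=1}^r \bfz_i + \lambda_0 \sum_{i=1}^r \bfz_i + \sum_{i=1}^r c_i \bfz_i
    = \bfx,
\end{align*}
so $\bfx \in \calK_{V_0}$. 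This proves $\calL = \calK_{V_0}$, settling the $k = r+1$ case.

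For $k > r+1$, take $\bfV \in \bbR^{k \times n}$ whose first $r+1$ rows are those of $\bfV_0$ and whose remaining $k - (r+1)$ rows are any vectors in $\calL$ (for concreteness, copies of $\bfz_0$). Adding rows that already lie in $\calK_{V_0} = \calL$ does not enlarge the cone, and cannot shrink it either, so $\calK_V = \calL$ as required. There is no real obstacle here; the only subtlety is the case split on the sign of $c$ when defining $\lambda_0$, which is handled uniformly by taking $\lambda_0 = \max(-c, 0)$.
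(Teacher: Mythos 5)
Your construction is identical to the paper's: an orthonormal basis $\bfz_1,\dots,\bfz_r$ of $\calL$ together with $\bfz_0 = -(\bfz_1+\cdots+\bfz_r)$, with the same case analysis on $\min_i c_i$ (you merely fold the paper's two cases into one via $\lambda_0 = \max(-c,0)$), and the reverse inclusion argued the same way. You also explicitly dispatch the $k > r+1$ case by padding with redundant generators, which the paper leaves implicit; otherwise the proofs coincide.
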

\begin{proof}
    We include the proof for completion here.
    We will show that there exists $\bfV \in \bbR^{k \times n}$ with $k = r+1$ such that $\calL = \calK_V$.
    Let the columns of $\bfZ = [\bfz_1, \dots, \bfz_r]$ be an orthonormal basis of $\calL$.
    Consider vectors $\bfv_1 = \bfz_1, \dots, \bfv_r = \bfz_r$, and $\bfv_{r+1} = - (\bfz_1 + \cdots + \bfz_r)$. 
    We now show that $\calL = \calK_V$ where $\bfV = [\bfv_1; \dots; \bfv_{r+1}]$.
    \begin{itemize}
        \item $\calL \subseteq \calK_V$.
        Let $\bfx \in \calL$.
        Then there exists $\bfc \in \bbR^r$ such that $\bfx = \bfc \bfZ$.
        Let $c^\ast = \min_{i \in [r]} c_i$. If $c^\ast \ge 0$ then we are done as $\bfx$ is a conic combination of $\bfv_1, \dots, \bfv_r$. Otherwise, we can rewrite $\bfx$ as:
        \begin{align}
            \bfx &= \inparens{ \sum_{i=1}^r (c_i - c^\ast) \bfz_i} + (- c^\ast) \cdot \inparens{ - \sum_{i=1}^r \bfz_i } = \inparens{ \sum_{i=1}^r (c_i - c^\ast) \bfv_i } + (- c^\ast) \bfv_{r+1}
        \end{align}
        noting that the coefficients $(c_1 - c^\ast), \dots, (c_r - c^\ast)$ and $(-c^\ast)$ are nonnegative.
        Hence, $\bfx \in \calK_V$.

        \item $\calK_V \subseteq \calL$.
        Let $\bfx \in \calK_V$.
        Then there exists $\bflambda \in \bbR^k_+$ such that $\bfx = \bflambda \bfV$.
        We can rewrite $\bfx$ as:
        \begin{align}
            \bfx &= \inparens{ \sum_{i=1}^r \lambda_i \bfv_i } + \lambda_{r+1} \cdot \inparens{- \sum_{i=1}^r \bfz_i} = \sum_{i=1}^r (\lambda_i - \lambda_{r+1}) \bfz_i.
        \end{align}

        Therefore, $\bfx$ is a linear combination of $\bfz_1, \dots, \bfz_r$, and $\bfx \in \calL$.
    \end{itemize}

    Therefore, $k = r+1$ vectors are sufficient to generate a cone equalling $r$-dimensional subspace $\calL$.
\end{proof}

\begin{lemma}
\label{lemma:project_cone_interchangeable}
    Let $V \subseteq \bbR^n$ be a vector space and $W = \set{\bfw_1, \dots, \bfw_m}$ be a set of vectors in $\bbR^n$. Then
    \begin{align*}
        \cone \inparens{ \proj_V (W) } = \proj_V (\cone(W)).
    \end{align*}
\end{lemma}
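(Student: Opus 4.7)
The plan is to prove both inclusions $\cone(\proj_V(W)) \subseteq \proj_V(\cone(W))$ and $\proj_V(\cone(W)) \subseteq \cone(\proj_V(W))$ using nothing more than the linearity of the orthogonal projection $\proj_V : \bbR^n \to \bbR^n$. The key observation is that for any scalars $\lambda_1, \dots, \lambda_m \in \bbR$, linearity gives $\proj_V(\sum_i \lambda_i \bfw_i) = \sum_i \lambda_i \proj_V(\bfw_i)$, and the constraint $\lambda_i \ge 0$ is preserved on both sides of this identity. So the proof should be a direct unwinding of the definitions.

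For the forward direction, I would take an arbitrary $\bfx \in \cone(\proj_V(W))$, write it as $\bfx = \sum_i \lambda_i \proj_V(\bfw_i)$ with $\lambda_i \ge 0$, and use linearity to rewrite this as $\proj_V(\sum_i \lambda_i \bfw_i)$. Since $\sum_i \lambda_i \bfw_i \in \cone(W)$ by definition, this lies in $\proj_V(\cone(W))$. For the reverse direction, I would take $\bfy \in \proj_V(\cone(W))$, so $\bfy = \proj_V(\bfz)$ for some $\bfz = \sum_i \lambda_i \bfw_i \in \cone(W)$ with $\lambda_i \ge 0$. Again by linearity, $\bfy = \sum_i \lambda_i \proj_V(\bfw_i)$, which is in $\cone(\proj_V(W))$ since the coefficients remain nonnegative.

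There is no real obstacle here: the lemma is essentially a restatement of the fact that a linear map commutes with nonnegative combinations. The only thing worth being a bit careful about is that $\proj_V$ is a linear map on $\bbR^n$ (it is the orthogonal projector onto $V$), so distributing it over a finite nonnegative combination is justified without any convergence concerns. This lemma is used in the proof of \Cref{lemma:alg__cone_minkowski_decomposition} precisely to interchange the projection onto $\inparens{\round{\calL}{t}}^\perp$ with the cone operation applied to $\round{P}{t-1}$, and the argument above suffices for that application.
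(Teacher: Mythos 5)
Your proof is correct and takes essentially the same approach as the paper: both directions follow from commuting the linear map $\proj_V$ with finite nonnegative combinations. The paper just spells out the linearity by expanding $\proj_V(\bfw_i)$ in an orthonormal basis of $V$ and reordering sums, whereas you invoke linearity directly; this is a stylistic difference, not a substantive one.
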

\begin{proof}
    Let $\bfv_1, \dots, \bfv_k$ be an orthonormal basis of $V$. 
    We prove set inclusion in both directions.

    $(\subseteq)$. Let $\bfx \in \cone (\proj_V (W))$. Following definitions of projection, for some nonnegative $\lambda_1, \dots, \lambda_m$,
    \begin{align}
        \bfx &= \sum_{i=1}^m \lambda_i \inparens{ \sum_{j=1}^k \innerproduct{\bfw_i}{\bfv_j} \bfv_j } = \sum_{j=1}^k \innerproduct{ \sum_{i=1}^m \lambda_i \bfw_i }{\bfv_j} \bfv_j \in \proj_V (\cone(W)).
    \end{align}

    $(\supseteq)$. Let $\bfx \in \proj_V (\cone(W))$. Again following definitions, for some nonnegative $\lambda_1, \dots, \lambda_m$,
    \begin{align}
        \bfx &= \sum_{j=1}^k \innerproduct{ \sum_{i=1}^m \lambda_i \bfw_i }{\bfv_j} \bfv_j = \sum_{i=1}^m \lambda_i \inparens{ \sum_{j=1}^k \innerproduct{\bfw_i}{\bfv_j} \bfv_j } \in \cone \inparens{ \proj_V (W) }. \qedhere
    \end{align}
\end{proof}

\begin{lemma}
\label{lemma:csr__property__pointed_positive_basis_has_same_size}
    Let cone $\calK_W$ generated by $W = \set{\bfw_1, \dots, \bfw_m}$ be pointed.
    Let $U$ be such that $U \subseteq W, \calK_U = \calK_W$, and the vectors $U$ are positively independent.
    Let $V$ have the same properties.
    Then $|U| = |V|$.
\end{lemma}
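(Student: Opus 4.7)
The plan is to show that both $U$ and $V$ must consist of exactly one representative from each extreme ray of $\calK_W$, from which $|U| = |V|$ follows immediately. Because $\calK_W$ is pointed and polyhedral, its set of extreme rays is finite. Denote these extreme rays by $\calR_1, \dots, \calR_k$ (each $\calR_i$ is a ray from the origin).

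The first step is to invoke \Cref{lemma:restated__cone_pointed_extreme_rays_are_subset}, which says that every extreme ray of the pointed cone $\calK_W$ is of the form $\langle \bfw \rangle$ for some $\bfw \in W$, and that $\calK_W$ equals the conic hull of its extreme rays. In particular, any generating subset of $\calK_W$ drawn from $W$ must include at least one vector from each $\calR_i$: if some extreme ray $\calR_i$ had no representative in $U$, then a nonzero vector $\bfx \in \calR_i \subseteq \calK_W = \calK_U$ would be expressible as a nonnegative combination of other elements of $\calK_W$, contradicting that $\calR_i$ is an extreme ray.

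The second step is to use positive independence to show $U$ cannot contain two vectors from the same ray $\calR_i$. Two such vectors would be positive scalar multiples, say $\bfu' = c \bfu$ with $c > 0$, making $\bfu' \in \cone(U \setminus \{\bfu'\})$ and contradicting positive independence. Combining both observations, $U$ contains exactly one vector from each of the $k$ extreme rays, so $|U| = k$. The identical argument applied to $V$ gives $|V| = k$, and hence $|U| = |V|$.

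The main (and essentially only) obstacle is invoking the correct characterization of extreme rays; once we have \Cref{lemma:restated__cone_pointed_extreme_rays_are_subset} available, both directions (every extreme ray must appear; no ray can appear twice) are short pigeonhole-style arguments. No finer geometric analysis or computation is needed.
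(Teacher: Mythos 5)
Your proof takes a genuinely different route from the paper's. You appeal to the extreme-ray structure of pointed cones (via \Cref{lemma:restated__cone_pointed_extreme_rays_are_subset}), whereas the paper gives a self-contained algebraic argument: it writes $\bfv_1 = \sum_i a_i \bfu_i$ and $\bfu_i = \sum_j b_{i,j} \bfv_j$, substitutes, and performs a case analysis on the resulting scalar $\rho = 1 - \sum_i a_i b_{i,1}$ (with $\rho > 0$ contradicting positive independence of $V$, $\rho < 0$ contradicting pointedness, and $\rho = 0$ yielding the desired matching between $U$ and $V$). Your route is conceptually cleaner and shorter, though it leans on the frame characterization that the paper only invokes separately (for showing $\CGR = \CSR$ on pointed cones).

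There is, however, a gap in the argument as written. Your two observations establish (i) every extreme ray of $\calK_W$ has a representative in $U$, and (ii) no two vectors of $U$ lie on the same ray. Together these only yield $|U| \ge k$, where $k$ is the number of extreme rays; they do not yet give $|U| = k$, because $U$ could \emph{a priori} contain an element lying on no extreme ray. You need a third observation to finish: if $\bfu \in U$ lies on no extreme ray, then since $U$ already contains a representative of each extreme ray and the extreme rays generate the pointed cone $\calK_W$ (the second half of Border's Prop.~26.5.4), we get $\bfu \in \calK_W \subseteq \cone\bigl(U \setminus \set{\bfu}\bigr)$, contradicting positive independence of $U$. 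With that addition, the argument is complete and correct.
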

\begin{proof}
    We will show that (i) every vector in $V$ is a positive multiple of some vector in $U$, and (ii) every vector in $U$ is a positive multiple of some vector in $V$.
    As $U, V$ are sets of positively independent vectors, no $\bfu_i \in U$ can be expressed as a nonnegative combination of $U \setminus \set{\bfu_i}$ (and similarly for $V$).
    These statements together imply that $V$ and $U$ are the same set upto positive scaling of vectors, implying that $|U| = |V|$.

    Note that $U, V$ do not contain the zero vector, as the zero vector would be a (trivial) nonnegative combination of other vectors in the set.
    We now show that every vector in $V$ is a positive multiple of some vector in $U$.
    We prove this statement for vector $\bfv_1 \in V$, the proof is the same for the other vectors in $\bfV$.

    Let $|U| = s$ and $|V| = t$.
    Since $\calK_U = \calK_V$, let $\bfv_1 = \sum_{i=1}^s a_i \bfu_i$ for some $a_i \ge 0$ that are not all zero.
    Also, for all $i \in [s]$ let $\bfu_i = \sum_{j=1}^{t} b_{i,j} \bfv_j$ for some $b_{i, j} \ge 0$ such that $b_{i, 1}, \dots, b_{i, t}$ are not all zero.
    So,
    \begin{align}
        \bfv_1 &= \sum_{i=1}^{s} a_i \sum_{j=1}^{t} b_{i,j} \bfv_j\\
        \label{eqn:csr__minimality_proof__v1_in_cone_others}
        \implies \underbrace{\inbraks{ 1 - \sum_{i=1}^{s} a_i b_{i,1} }}_{\rho} \bfv_1 &= \sum_{j=2}^{t} \underbrace{\inparens{ \sum_{i=1}^{s} a_i b_{i,j} }}_{c_j} \bfv_j.
    \end{align}

    Observe that the coefficient $c_j \ge 0$ as $a_i, b_{i, j} \ge 0$ for all $i \in [s], j \in \set{2, \dots, t}$.
    Moreover, when $\rho \neq 0$, not all coefficients $c_j$ are zero simultaneously.
    This is because $c_j = 0$ for all $j \in \set{2, \dots, t}$ would mean that $\bfv_1 = \bfzero$, which contradicts our assumption about vectors in $V$.
    Hence, if $\rho \neq 0$, then there exists $c_j > 0$.

    There are three cases: $\rho > 0$, $\rho < 0$, and $\rho = 0$.
    
    \begin{enumerate}
        \item Case $\rho > 0$.
        We can divide both sides of \Cref{eqn:csr__minimality_proof__v1_in_cone_others} by $\rho$ to find that $\bfv_1$ is a nonnegative combination of vectors $\bfv_2, \dots, \bfv_t$.
        This contradicts our assumption that vectors $V$ are positively independent, and hence this case cannot happen.

        \item Case $\rho < 0$.
        We again divide both sides of \Cref{eqn:csr__minimality_proof__v1_in_cone_others} by $\rho$ to find that $-\bfv_1$ is a nonnegative combination of vectors $\bfv_2, \dots, \bfv_t$.
        This implies that $-\bfv_1 \in \calK_W$ and $\bfv_1 \in \calK_W$, contradicting our assumption that $\calK_W$ is pointed.
        So this case cannot happen.

        \item Case $\rho = 0$.
        \Cref{eqn:csr__minimality_proof__v1_in_cone_others} simplifies to $0 = \sum_{j=2}^{t} c_j \bfv_j$ where $c_j = \sum_{i=1}^{s} a_i b_{i,j}$.
        Since $\calK_W$ is pointed and $\bfv_j \neq \bfzero$, we must have $c_j = \sum_{i=1}^s a_i b_{i, j} = 0$ for all $j \in \set{2, \dots, t}$.
        
        Coefficient $c_j$ is a sum of nonnegative terms $a_i b_{i, j}$, and for $c_j$ to be exactly zero, each term must be exactly zero.
        At least some $a_i > 0$, as otherwise, $a_i = 0$ for all $i \in [s]$ implies $\bfv_1 = \bfzero$.
        W.l.o.g. let $a_1, \dots, a_{\ell} > 0$ for some $1 < \ell \le s$ and $a_{\ell+1}, \dots, a_s = 0$.
        This observation about $a_1, \dots, a_s$ says that $c_j = \sum_{i=1}^{\ell} a_i b_{i, j} = 0$.
        Hence, for all $j \in \set{2, \dots, t}, i \in [\ell]$, we have $b_{i, j} = 0$.

        Using observation about $a_i$ and $b_{i, j}$, we find that $\bfv_1 = \sum_{i=1}^{\ell} a_i \bfu_i$.
        Moreover for $i \in [\ell]$, we have $\bfu_i = \sum_{j=1}^t b_{i, j} \bfv_j = b_{i, 1} \bfv_1$.
        Therefore, $\bfu_1, \dots, \bfu_{\ell}$ are positive multiples of $\bfv_1$.
        If $\ell > 1$ then $\bfu_1, \dots, \bfu_{\ell}$ would be positive multiples of $\bfv_1$, and thus of each other, contradicting assumption that vectors $U$ are positively independent.
        For the same reason, $\bfu_{\ell+1}, \dots, \bfu_s$ cannot be positive multiples of $\bfv_1$.
        Thus, there exists exactly one vector $\bfu_{i_1}$ that is a positive multiple of $\bfv_1$.
    \end{enumerate}

    So, every vector $\bfv_i$ in $V$ is a positive multiple of some vector in $U$.
    In fact, two vectors $\bfv_i, \bfv_j$ where $i \neq j$ are positively multiples of different vectors in $U$, as otherwise $\bfv_i$ would be a positive multiple of $\bfv_j$.
    This means that $|V| \le |U|$.
    By a symmetric argument we can prove that $|V| \ge |U|$, implying that $|U| = |V|$.
\end{proof}

\subsection{Decomposition of polyhedral cones}

\begin{lemma}
\label{lemma:cone__decomposition_pointed_part_is_projection}
    Let cone $\calK \subseteq \bbR^n$ have lineality space $\calL = \calK \cap (-\calK)$.
    Then $\calK \cap \calL^\perp = \proj_{\calL^\perp} (\calK)$.
\end{lemma}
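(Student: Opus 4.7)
The plan is to prove both set inclusions directly using the orthogonal decomposition with respect to $\calL$ and the fact that $\calK$ is a convex cone (so closed under addition and nonnegative scaling). Throughout, for any $\bfx \in \bbR^n$ write $\bfx = \bfx_\calL + \bfx_{\calL^\perp}$ with $\bfx_\calL \in \calL$ and $\bfx_{\calL^\perp} = \proj_{\calL^\perp}(\bfx) \in \calL^\perp$.

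For the inclusion $\calK \cap \calL^\perp \subseteq \proj_{\calL^\perp}(\calK)$, I would simply note that if $\bfx \in \calK \cap \calL^\perp$, then $\proj_{\calL^\perp}(\bfx) = \bfx$ since $\bfx$ already lies in $\calL^\perp$, so $\bfx \in \proj_{\calL^\perp}(\calK)$.

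The reverse inclusion $\proj_{\calL^\perp}(\calK) \subseteq \calK \cap \calL^\perp$ is the substantive half. Take $\bfy \in \proj_{\calL^\perp}(\calK)$, so $\bfy = \proj_{\calL^\perp}(\bfx) = \bfx_{\calL^\perp}$ for some $\bfx \in \calK$. Clearly $\bfy \in \calL^\perp$, so it remains only to show $\bfy \in \calK$. Write $\bfy = \bfx + (-\bfx_\calL)$. By the definition $\calL = \calK \cap (-\calK)$, the component $\bfx_\calL$ lies in $\calL$ and hence both $\bfx_\calL$ and $-\bfx_\calL$ lie in $\calK$. Since $\calK$ is a convex cone it is closed under addition, so $\bfy = \bfx + (-\bfx_\calL) \in \calK$. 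This gives $\bfy \in \calK \cap \calL^\perp$, completing the proof.

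The only subtle point is the step $\bfx_\calL \in \calL \subseteq \calK$ and $-\bfx_\calL \in \calL \subseteq \calK$, which uses the precise definition of the lineality space $\calL = \calK \cap (-\calK)$ rather than treating $\calL$ as an abstract subspace; I would state this explicitly. The only implicit fact is that a convex cone is closed under addition, which I would note in one line (if $\bfa, \bfb \in \calK$, convexity gives $\tfrac{1}{2}(\bfa+\bfb) \in \calK$, and homogeneity then gives $\bfa+\bfb \in \calK$). No obstacle is anticipated.
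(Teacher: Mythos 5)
Your proof is correct, and it takes a slightly different (and arguably tighter) route than the paper's. The paper first argues $\calK = (\calK \cap \calL) + (\calK \cap \calL^\perp) = \calL + (\calK \cap \calL^\perp)$ (i.e., it invokes the Minkowski decomposition of a cone by its lineality space), and then applies the linear operator $\proj_{\calL^\perp}(\cdot)$ to both sides to read off the result. You instead prove the two set inclusions directly. The easy inclusion ($\calK \cap \calL^\perp \subseteq \proj_{\calL^\perp}(\calK)$, since $\proj_{\calL^\perp}$ fixes $\calL^\perp$) is identical; for the substantive inclusion, you take $\bfx \in \calK$ with $\bfy = \bfx_{\calL^\perp} = \bfx - \bfx_\calL$, note that $-\bfx_\calL \in \calL = \calK \cap (-\calK) \subseteq \calK$, and use closure of the convex cone under addition to conclude $\bfy \in \calK$. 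This is exactly the argument needed to justify the paper's display $\calK \cap (\calL + \calL^\perp) = (\calK \cap \calL) + (\calK \cap \calL^\perp)$, which the paper asserts as though it were a distributivity fact but which is actually not a general set identity (the nontrivial containment $\calK \subseteq (\calK \cap \calL) + (\calK \cap \calL^\perp)$ relies on $\calL$ being a linear subspace contained in $\calK$, which is precisely your addition step). So your proof buys a more self-contained argument that does not lean on the decomposition lemma, at the cost of nothing.

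One small stylistic point: you write ``By the definition $\calL = \calK \cap (-\calK)$, the component $\bfx_\calL$ lies in $\calL$.'' The fact that $\bfx_\calL \in \calL$ is just the definition of the orthogonal projection onto $\calL$; what the definition $\calL = \calK \cap (-\calK)$ is actually needed for is the next clause, $\bfx_\calL, -\bfx_\calL \in \calK$. You might rephrase so that the logical dependence is clear, e.g.\ ``$\bfx_\calL = \proj_\calL(\bfx) \in \calL$, and since $\calL = \calK \cap (-\calK)$ both $\bfx_\calL$ and $-\bfx_\calL$ lie in $\calK$.''
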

\begin{proof}
    The lineality space $\calL$ is a linear subspace and so $\calL + \calL^\perp = \bbR^n$, which is the Minkowski sum of $\calL$ and its orthogonal complement $\calL^\perp$.
    We can thus write $\calK$ as a Minkowski sum of two sets: $\calK = \calK \cap (\calL + \calL^\perp) = (\calK \cap \calL) + (\calK \cap \calL^\perp)$.
    As $\calL$ is a linear subspace inside $\calK$, we have $\calK \cap \calL = \calL$, implying that $\calK = \calL + (\calK \cap \calL^\perp)$.
    We now project $\calK$ onto $\calL^\perp$, and use the fact that $\proj_{\calL^\perp}(\cdot)$ is a linear operator.
    So,
    \begin{align}
        \proj_{\calL^\perp} (\calK) &= \proj_{\calL^\perp} (\calL + (\calK \cap \calL^\perp))\\
        &= \proj_{\calL^\perp} (\calL) + \proj_{\calL^\perp} (\calK \cap \calL^\perp)\\
        &= \set{\bfzero} + (\calK \cap \calL^\perp).
    \end{align}
    where the last equality follows from $\calL^\perp$ being orthogonal complement of $\calL$, and $\calK \cap \calL^\perp$ lying inside $\calL^\perp$ so that projecting $\calK \cap \calL^\perp$ onto $\calL^\perp$ has no effect.
    We note that $\bfzero \in \calK \cap \calL^\perp$, which is the intersection of a cone and a linear subspace, and so $\set{\bfzero} + (\calK \cap \calL^\perp) = \calK_W \cap \calL^\perp$.
    We thus get the desired result $\proj_{\calL^\perp} (\calK) = \calK \cap \calL^\perp$.
\end{proof}

\begin{lemma}
\label{lemma:cone__lineality_space__lineal_points}
    Let $W = \set{\bfw_1, \dots, \bfw_m} \subseteq \bbR^n$ generate cone $\calK_W$.
    Let $\calL = \lineality (\calK_W)$ be the lineality space of $\calK_W$, and denote $\WinL = \set{ \bfw_i \in W \mid \bfw_i \in \calL }$.
    Then $\calL = \calK_{\WinL}$.
\end{lemma}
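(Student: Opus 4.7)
The plan is to prove the two set inclusions separately. The easy direction $\calK_{\WinL} \subseteq \calL$ follows because $\calL$ is a linear subspace (as the intersection $\calK_W \cap (-\calK_W)$ of a cone with its negation, closed under addition and scalar multiplication) and every generator of $\calK_{\WinL}$ lies in $\calL$ by definition; hence every nonnegative combination of them also lies in $\calL$.

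The substantive direction is $\calL \subseteq \calK_{\WinL}$. First I would fix $\bfx \in \calL$ and use that both $\bfx$ and $-\bfx$ lie in $\calK_W$ to write $\bfx = \sum_i \alpha_i \bfw_i$ and $-\bfx = \sum_i \beta_i \bfw_i$ with $\alpha_i, \beta_i \geq 0$. Adding these two identities yields $\sum_i (\alpha_i + \beta_i)\bfw_i = \bfzero$. The crucial observation I would then invoke is a sub-claim: whenever $\sum_i \gamma_i \bfw_i = \bfzero$ with $\gamma_i \geq 0$ and $\gamma_j > 0$ for some particular $j$, the vector $\bfw_j$ must lie in $\calL$. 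This is because $\gamma_j \bfw_j = -\sum_{i \neq j} \gamma_i \bfw_i \in \calK_W$ (a nonnegative combination of generators), so $-\bfw_j \in \calK_W$, and together with $\bfw_j \in \calK_W$ this gives $\bfw_j \in \calK_W \cap (-\calK_W) = \calL$.

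Applying this sub-claim to the combination $\sum_i (\alpha_i + \beta_i)\bfw_i = \bfzero$, every index $i$ with $\alpha_i + \beta_i > 0$ satisfies $\bfw_i \in \calL$, i.e.\ $\bfw_i \in \WinL$. In particular every index $i$ with $\alpha_i > 0$ has $\bfw_i \in \WinL$. Therefore the expression $\bfx = \sum_i \alpha_i \bfw_i$ collapses to a nonnegative combination of elements of $\WinL$, which shows $\bfx \in \calK_{\WinL}$ and completes the proof.

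The potential obstacle is ensuring the sub-claim is applied correctly: the identity $\sum_i (\alpha_i + \beta_i)\bfw_i = \bfzero$ need not be nontrivial (it is trivially zero when $\bfx = \bfzero$), but in that case there is nothing to prove. For nonzero $\bfx$ at least one $\alpha_i$ is positive, and the implication then delivers the desired conclusion. No further machinery beyond the definition of lineality space and basic cone arithmetic is needed, so the argument is self-contained.
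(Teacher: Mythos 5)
Your proof is correct, and it takes a cleaner, more direct route than the paper's. Both arguments ultimately hinge on the same observation — that a generator $\bfw_j$ must lie in $\calL$ once $-\bfw_j$ can be expressed as a nonnegative combination of cone elements — but the packaging differs. The paper first splits $\calK_W$ as the Minkowski sum $\calK_{\WinL} + \calK_{\WnotinL}$, decomposes $\bfx = \bfy + \bfz$ with $\bfy \in \calK_{\WinL}$ and $\bfz \in \calK_{\WnotinL}$, shows $\bfz \in \calL \cap \calK_{\WnotinL}$, and then argues by contradiction that $\bfz = \bfzero$ (any $\bfw_i \in \WnotinL$ carrying positive weight in $\bfz$ would be forced into $\calL$). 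You instead take the two conic representations $\bfx = \sum_i \alpha_i \bfw_i$ and $-\bfx = \sum_i \beta_i \bfw_i$, add them to get the homogeneous relation $\sum_i (\alpha_i + \beta_i)\bfw_i = \bfzero$, and apply a general sub-claim about which generators can appear with positive coefficient in such a relation. This avoids the Minkowski-sum bookkeeping entirely and gives a constructive conclusion — the explicit representation $\bfx = \sum_i \alpha_i\bfw_i$ is seen to use only generators from $\WinL$ — rather than a proof by contradiction.

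One small presentation glitch: in the sub-claim you write ``$\gamma_j\bfw_j = -\sum_{i\neq j}\gamma_i\bfw_i \in \calK_W$ (a nonnegative combination of generators),'' but $-\sum_{i\neq j}\gamma_i\bfw_i$ is the \emph{negative} of a nonnegative combination, so it is in $-\calK_W$, not (in general) in $\calK_W$. The step you actually want is the rearrangement $-\gamma_j\bfw_j = \sum_{i\neq j}\gamma_i\bfw_i \in \calK_W$, from which $-\bfw_j \in \calK_W$ and hence $\bfw_j \in \calK_W \cap (-\calK_W) = \calL$. The logic survives — your ``so $-\bfw_j \in \calK_W$'' is the right conclusion — but the intermediate sentence as written is a sign slip worth fixing.
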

\begin{proof}
    This lemma is based on the following observation: since $\calL$ is the maximal linear subspace inside $\calK_W$, for any point $\bfw_i$ of $W$ not in $\calL$, we have $-\bfw_i \notin \calK_W$. 
    Hence, the lineality space $\calL$ can only be generated by the points of $W$ that are inside $\calL$. 

    As $\WinL \subseteq \calL$ and $\calL$ is a linear subspace, we have $\calK_{\WinL} \subseteq \calL$.

    We now show that $\calL \subseteq \calK_{\WinL}$.
    Here $\calL$ is the lineality space of $\calK_W$, i.e., $\calL = \calK_W \cap (-\calK_W)$.
    Denote $\WnotinL = \set{ \bfw_i \in W \mid \bfw_i \notin \calL }$, and so $\calK_W = \calK_{(\WinL \cup \WnotinL)} = \calK_{\WinL} + \calK_{\WnotinL}$, which is the Minkowski sum of two cones.
    Let $\bfx \in \calL$.
    As $\calL \subseteq \calK_W = \calK_{\WinL} + \calK_{\WnotinL}$, we have $\bfx = \bfy + \bfz$ for some $\bfy \in \calK_{\WinL}$ and $\bfz \in \calK_{\WnotinL}$.
    From earlier we know that $\calK_{\WinL} \subseteq \calL$, and so $\bfx - \bfy = \bfz \in \calL$.
    So $\bfz \in \calL \cap \calK_{\WnotinL}$.
    We next show that $\bfz$ is exactly zero, implying that $\bfx = \bfy \in \calK_{\WinL}$ and completing the proof.

    Suppose that $\bfz \in \calL \cap \calK_{\WnotinL}$ is such that $\bfz \neq \bfzero$.
    Then $\bfz = \sum_{\bfw_i \in \WnotinL} \lambda_i \bfw_i$ for some $\lambda_i \ge 0$ that are not all zero.
    W.l.o.g. let $\lambda_1 \neq 0, \bfw_1 \neq \bfzero$ and rearrange to get:
    \begin{align}
        -\bfw_1 &= -\frac{\bfz}{\lambda_1} + \sum_{\bfw_i \in \WnotinL, i \neq 1} \frac{\lambda_i}{\lambda_1} \bfw_i.
    \end{align}

    Note that as $\bfz \in \calL$, which is a linear subspace inside $\calK_W$, we have $-\bfz / \lambda_1 \in \calL \subseteq \calK_W$.
    As every $\bfw_i \in \WinL$ is also in $W$, we get that $-\bfw_1 \in \calK_W$.
    Hence, $\bfw_1 \in \calK_W \cap (-\calK_W) = \calL$.
    This contradicts the choice of $\bfw_1$, which is an element in the set $\WnotinL = \set{ \bfw_i \in W \mid \bfw_i \notin \calL }$.
    So our assumption that $\bfz \neq \bfzero$ was wrong.
\end{proof}

\begin{lemma}
\label{lemma:cone__lineality_space__nonlineal_points}
    Let $W = \set{\bfw_1, \dots, \bfw_m} \subseteq \bbR^n$ generate cone $\calK_W$.
    Let $\calK_W = \calL + \calK_P$ be the decomposition such that $\calL = \lineality(\calK_W)$ is the lineality space of $\calK_W$ and $\calK_P = \calL^\perp \cap \calK_W$ is a pointed cone.
    Denote $\WnotinL = \set{ \bfw_i \in W \mid \bfw_i \notin \calL }$.
    Then $\cone(\tilde W) = \calK_P$ where $\tilde W = \proj_{\calL^\perp} (\WnotinL)$.
\end{lemma}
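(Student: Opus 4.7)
The plan is to chain together the three pieces we already have: (i) \Cref{lemma:project_cone_interchangeable} which lets us swap $\cone$ and $\proj_{\calL^\perp}$, (ii) \Cref{lemma:cone__lineality_space__lineal_points} which identifies $\calL$ with $\calK_{\WinL}$, and (iii) \Cref{lemma:cone__decomposition_pointed_part_is_projection} which identifies $\calK_W \cap \calL^\perp$ with $\proj_{\calL^\perp}(\calK_W)$.

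First I would rewrite $\cone(\tilde W)$ as a projection of a cone. By definition $\tilde W = \proj_{\calL^\perp}(\WnotinL)$, so \Cref{lemma:project_cone_interchangeable} gives
\[
\cone(\tilde W) = \cone\inparens{\proj_{\calL^\perp}(\WnotinL)} = \proj_{\calL^\perp}\inparens{\cone(\WnotinL)} = \proj_{\calL^\perp}(\calK_{\WnotinL}).
\]
Second I would rewrite $\calK_W$ in terms of $\calL$ and $\calK_{\WnotinL}$. Since $W = \WinL \cup \WnotinL$, we have $\calK_W = \calK_{\WinL} + \calK_{\WnotinL}$, and \Cref{lemma:cone__lineality_space__lineal_points} gives $\calK_{\WinL} = \calL$, so $\calK_W = \calL + \calK_{\WnotinL}$.

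Third, using linearity of the projection operator,
\[
\proj_{\calL^\perp}(\calK_W) \;=\; \proj_{\calL^\perp}(\calL) + \proj_{\calL^\perp}(\calK_{\WnotinL}) \;=\; \set{\bfzero} + \proj_{\calL^\perp}(\calK_{\WnotinL}) \;=\; \proj_{\calL^\perp}(\calK_{\WnotinL}).
\]
Combining this with \Cref{lemma:cone__decomposition_pointed_part_is_projection}, which says $\proj_{\calL^\perp}(\calK_W) = \calK_W \cap \calL^\perp = \calK_P$, yields $\proj_{\calL^\perp}(\calK_{\WnotinL}) = \calK_P$. Together with the first step, this gives $\cone(\tilde W) = \calK_P$.

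There is essentially no obstacle here beyond bookkeeping: everything nontrivial is packaged into the three cited lemmas. The one thing to double-check is that the vacuous case $\WnotinL = \empty$ is handled correctly---this occurs precisely when $\calK_W = \calL$ is a linear subspace, in which case $\calK_P = \set{\bfzero}$ and $\cone(\empty) = \set{\bfzero}$ agree, so the identity holds trivially.
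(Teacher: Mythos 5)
Your proof is correct and follows essentially the same route as the paper's: both arguments hinge on \Cref{lemma:cone__decomposition_pointed_part_is_projection} to identify $\calK_P$ with $\proj_{\calL^\perp}(\calK_W)$ and on \Cref{lemma:project_cone_interchangeable} to swap $\cone$ with $\proj_{\calL^\perp}$, and both observe that the $\WinL$ contribution vanishes under projection. The only cosmetic difference is that the paper projects the generating set $W$ before taking the cone, whereas you take $\cone(\WnotinL)$ first and then project; you also invoke \Cref{lemma:cone__lineality_space__lineal_points} for the full equality $\calK_{\WinL} = \calL$ where the weaker (and immediate) inclusion $\calK_{\WinL} \subseteq \calL$ already suffices.
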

\begin{proof}
    We denote $\WinL = \set{ \bfw_i \in W \mid \bfw_i \in \calL }$, and note that $W = \WinL \cup \WnotinL$.
    Here $\WinL$ only contains vectors that lie in $\calL$, and $\WnotinL$ does not contain vectors that lie in $\calL$.
    Hence, we get
    \begin{align}
        \proj_{\calL^\perp} (W) &= \proj_{\calL^\perp} (\WinL) \cup \proj_{\calL^\perp} (\WnotinL) = \set{\bfzero} \cup \proj_{\calL^\perp} (\WnotinL).
    \end{align}

    \Cref{lemma:cone__decomposition_pointed_part_is_projection} states that $\calK_P = \proj_{\calL^\perp} (\calK_W)$, and thus we have
    \begin{align}
        \calK_P &= \proj_{\calL^\perp} (\cone(W)) \longeq{(a)} \cone \inparens{ \proj_{\calL^\perp} (W) } \longeq{(b)} \cone \inparens{ \proj_{\calL^\perp} (\WnotinL) } = \cone(\tilde W)
    \end{align}
    where equality $(a)$ follows from \Cref{lemma:project_cone_interchangeable}, and $(b)$ from the fact: $\cone(\set{\bfzero} \cup X) = \cone(X)$ for set $X$.
\end{proof}

\end{document}